\ifdefined\pdfoutput\pdfoutput=1\fi
\documentclass[11pt]{article}

\ifdefined\pdfmapfile
  \pdfmapfile{+lm.map}
  \pdfmapfile{+symbols.map}
\fi

\usepackage{flafter}
\usepackage[T1]{fontenc}
\usepackage{lmodern}
\usepackage[a4paper,margin=25mm]{geometry}
\usepackage{amsmath,amssymb,amsthm,mathtools}
\usepackage{booktabs,tabularx,array}
\usepackage{xcolor}
\usepackage{tikz}
\usepackage{float}
\usepackage{placeins}
\usetikzlibrary{arrows.meta,backgrounds,calc,decorations.pathreplacing,shapes.geometric}
\usepackage[colorlinks=true,linkcolor=blue,citecolor=blue,urlcolor=blue]{hyperref}
\numberwithin{equation}{section}

\newtheorem{theorem}{Theorem}[section]
\newtheorem{lemma}[theorem]{Lemma}
\newtheorem{proposition}[theorem]{Proposition}
\newtheorem{corollary}[theorem]{Corollary}
\theoremstyle{remark}

\newcommand{\Tr}{\operatorname{Tr}}
\newcommand{\OSR}{\operatorname{OSR}}
\newcommand{\SR}{\operatorname{SR}}
\newcommand{\End}{\operatorname{End}}
\newcommand{\id}{\mathbf 1}
\newcommand{\ket}[1]{\lvert #1\rangle}
\newcommand{\bra}[1]{\langle #1\rvert}

\newcommand{\cJ}{\mathcal J}
\newcommand{\Swap}{\mathsf P}

\definecolor{opentblue}{RGB}{38,98,156}
\definecolor{opentlightblue}{RGB}{225,237,248}
\definecolor{opentorange}{RGB}{205,104,35}
\definecolor{opentlightorange}{RGB}{250,231,215}
\definecolor{opentgray}{RGB}{105,111,118}
\tikzset{
  opent figure/.style={line cap=round,line join=round,font=\small},
  opent/wire/.style={draw=black!78,line width=0.75pt},
  opent/wire faded/.style={draw=black!22,line width=0.65pt},
  opent/gate/.style={
    rectangle,draw=opentblue,fill=opentlightblue,rounded corners=1.2pt,
    line width=0.8pt,minimum width=13mm,minimum height=5mm,
    inner sep=1pt,font=\small
  },
  opent/gate inverse/.style={
    opent/gate,draw=opentgray,fill=black!5
  },
  opent/gate faded/.style={
    rectangle,draw=black!20,fill=black!2,rounded corners=1.2pt,
    line width=0.55pt,minimum width=9.5mm,minimum height=4.2mm,
    inner sep=0.5pt,font=\scriptsize,text=black!35
  },
  opent/gate causal/.style={
    rectangle,draw=opentblue,fill=opentlightblue,rounded corners=1.2pt,
    line width=0.8pt,minimum width=9.5mm,minimum height=4.2mm,
    inner sep=0.5pt,font=\scriptsize
  },
  opent/operator/.style={
    rectangle,draw=opentorange,fill=opentlightorange,rounded corners=1pt,
    line width=0.9pt,minimum width=5.5mm,minimum height=5.5mm,
    inner sep=0.5pt,font=\small
  },
  opent/folded wire/.style={draw=opentblue!92!black,line width=1.05pt},
  opent/crossing/.style={
    circle,draw=opentblue!92!black,fill=opentblue!92!black,
    line width=0.6pt,minimum size=2.35mm,inner sep=0pt
  },
  opent/identity/.style={
    circle,draw=opentblue!92!black,fill=white,
    line width=0.9pt,minimum size=3.4mm,inner sep=0pt
  },
  opent/folded source/.style={
    rectangle,draw=opentorange,fill=opentlightorange,rounded corners=0.8pt,
    line width=0.9pt,minimum size=4.5mm,inner sep=0pt,font=\scriptsize
  },
  opent/flow arrow/.style={-{Latex[length=2mm]},draw=black!65,line width=0.7pt},
  opent/time arrow/.style={-{Latex[length=2mm]},draw=opentgray,line width=0.7pt},
  opent/region frame/.style={
    draw=opentorange,densely dashed,rounded corners=2pt,line width=0.8pt
  },
  opent/panel label/.style={font=\small\bfseries},
  opent/annotation/.style={font=\footnotesize,align=center,text=black!75}
}

\title{On the growth of operator entanglement in brickwork circuits with Yang--Baxter gates}
\author{Balázs Pozsgay\\
\small MTA-ELTE ``Momentum'' Integrable Quantum Dynamics Research Group\\
\small ELTE E\"otv\"os Lor\'and University, Budapest, Hungary\\
\small \href{mailto:pozsgay.balazs@ttk.elte.hu}{\texttt{pozsgay.balazs@ttk.elte.hu}}}

\begin{document}
\maketitle

\begin{abstract}
We study, in infinite volume, the operator entanglement of local operators in
one-dimensional brickwork circuits whose two-site gate satisfies the braid relation;
throughout this work, we call such a gate a Yang--Baxter gate.  We
establish upper bounds for several structured, overlapping classes of
Yang--Baxter gates.  We show that the operator Schmidt rank remains uniformly
bounded in time for all qubit Yang--Baxter gates and, in arbitrary local
dimension, for permutation gates obtained from non-degenerate Yang--Baxter
maps.  We also show that it grows at most polynomially for involutive
dual-unitary Yang--Baxter gates and for arbitrary phase dressings of
permutation gates obtained from non-degenerate Yang--Baxter maps.  These results
imply, respectively, constant and logarithmic upper bounds on the operator
entanglement.  Conversely, we construct a seven-state involutive
Yang--Baxter gate without dual unitarity and a one-site operator whose exact
operator Schmidt rank grows exponentially, although the corresponding
operator entropies remain undetermined.  Entanglement growth
in the general Yang--Baxter case remains open.
All proofs and selected examples were constructed by ChatGPT 5.6 Sol. 
\end{abstract}

\section{Introduction}
\label{sec:introduction}

Understanding the complexity of quantum many-body dynamics is a central problem in
contemporary theoretical physics. One manifestation of this complexity is the ability
of time evolution to generate entanglement from initially unentangled states. A
closely related question arises in the Heisenberg picture, where a local observable
spreads through the system and develops entanglement in operator space. The 
complexity of the evolved operator can be quantified by its operator Schmidt rank
(OSR) and by the associated operator-space entropies \cite{Zanardi2001}. When this complexity remains
low, the dynamics can be simulated efficiently with matrix-product-operator methods.
Generic chaotic dynamics typically produces an operator-space von Neumann entropy that
grows linearly in time, whereas tractable systems often exhibit logarithmic growth or
saturation.
When investigating these questions, often it is more fruitful to focus on systems with discrete time evolution.
One-dimensional quantum brickwork circuits belong to this class of systems.

We consider the following general question: In a brickwork circuit in infinite volume, what type of special
conditions should the local two-site gate satisfy, so that operator entanglement remains logarithmically growing or
perhaps even bounded? What types of mechanisms can exist that limit this growth?

There are various known examples in the literature for the constrained growth of operator entanglement, both for quantum
circuits but also for continuous-time systems. 

Clifford circuits provide an early example: they map every Pauli string
to a single Pauli string, an evolved Pauli operator has $\OSR=1$, while an arbitrary
one-site qubit operator has $\OSR\leq 4$ \cite{Gottesman1999}.
The same mechanism applies to qudit Clifford circuits of arbitrary local dimension
$d$: generalized Pauli strings remain single strings, and an arbitrary one-site
operator therefore has $\OSR\leq d^2$ \cite{HostensDehaeneDeMoor2005}.
Another early example is provided by free-fermionic systems, where the
operator-space entropies grow logarithmically in time
\cite{ProsenPizorn2007,PizornProsen2009}.

Polynomial-memory mechanisms were identified in the Rule~54 model. A
time-dependent matrix-product construction and operator-entanglement analysis yielded,
in particular, an exact $O(t^2)$ product decomposition for local operators
\cite{KlobasMedenjakProsenVanicat2019,AlbaDubailMedenjak2019}. Although Rule~54 is a
cellular circuit with a slightly different geometry, it provides a useful comparison to the brickwork circuits.

Dual-unitary circuits provide an analytically tractable setting for
local-operator entanglement.  For completely chaotic circuits,
Bertini, Kos, and Prosen obtained exact linear-growth results in
asymptotic light-cone-edge limits
 \cite{BertiniKosProsenI}.  For qubit
circuits with ultralocal solitons propagating in both directions, they
proved that the operator entanglement of every one-site local operator
remains bounded \cite{BertiniKosProsenII}.

A deterministic ``hard-core gas'' cellular automaton was 
introduced in \cite{MedenjakKlobasProsen2017}, and its transport properties were calculated exactly.
Operator spreading in the corresponding
quantum hard-core-gas family was studied later by Medenjak in Ref.~\cite{Medenjak2022},
where an exact time-dependent matrix-product representation yielded an
at-most-logarithmic upper bound on operator-space entanglement.
Medenjak obtained an exact time-dependent matrix-product representation (for a broader class of hard-core gas models)
whose auxiliary 
dimension grows linearly with time, giving an $O(t)$ exact-rank upper bound and
logarithmic operator entanglement \cite{Medenjak2022}.
These models are integrable; their algebraic structures are related to the so-called XXC models and
their generalizations, 
introduced earlier by Maassarani \cite{su3-xx,Maassarani1998XXC,Maassarani1999Multiplicity}. The two-site gates
satisfy the braid relations, and the circuits are in fact superintegrable \cite{GomborPozsgay2022}.

A different finite-memory mechanism was found by
Wang in circuits constructed from finite-dimensional $C^*$-Hopf algebras. These
circuits are dual-unitary, and their local observables admit exact matrix-product
representations with time-independent bond dimension; the exact OSR is therefore
bounded \cite{Wang2025}. The two-site gates in \cite{Wang2025} satisfy the braid relation in the Abelian cases. 

Most recently, Tan and Prosen showed that operator entanglement grows at most
logarithmically in a clean nonintegrable, semi-ergodic brickwork circuit with dual
unitarity \cite{TanProsen2026}. The evolution of a traceless one-site operator is
confined to a restricted subspace and maps to a sequential qutrit--qubit scattering
problem.

Taken together, these works
reveal several mechanisms for slow or bounded
growth, including free-fermion structure, Clifford solvability, 
 time-dependent matrix-product representations, dual unitarity with extra constraints, and Hopf-algebraic
structures. Their diversity suggests that no single mechanism accounts for all known
cases.

The appearance of the braid relation in both XXC-type circuits and Abelian
Hopf-algebra circuits raises the question of how strongly that relation alone
constrains operator entanglement.
If the two-site gate satisfies the braid relation, then the model has ballistically propagating conserved local operators
known as gliders \cite{GomborPozsgay2022}.  When the simplest gliders are not proportional to identity operators, their
products generate the exponentially large family of conserved quantities characteristic of a superintegrable circuit.
The gliders have time-independent Schmidt rank for co-moving bipartitions. However, their existence does not
imply that the dynamics in these models is trivial, simply because the full operator algebra is typically not spanned by
the gliders only, and the remaining operators can have very non-trivial dynamics.

In this work we ask the following main question: How does operator entanglement grow in circuits where the two-site gate
satisfies the braid relation? For such unitary gates we will use the name ``Yang--Baxter gate''. We also ask: What
additional algebraic structures are needed so that sublinear growth can be rigorously proven? Similarly, which
constraints guarantee that the operator entanglement remains bounded?

One of our central objects is the OSR itself.  In several families of
circuits we prove polynomial growth or boundedness of the OSR, which implies
logarithmic growth or boundedness for the von Neumann and R\'enyi operator
entropies.  We also construct a Yang--Baxter circuit in which the OSR grows
exponentially.  That result does not determine
the 
corresponding growth law
for the R\'enyi or von Neumann operator entropies, however, it signals that polynomially growing OSR is not a common
feature of all circuits with Yang--Baxter gates.

The paper consists of 12 sections and 4 appendices.
The next section introduces the problem, fixes the necessary definitions, and
states the main results. In Section~\ref{sec:rectangular-reduction} we perform the basic reductions needed for the
computation of the OSR. Later sections treat the following main families and
examples:
\begin{itemize}
\item Elementary bounded-rank constructions
  (Section~\ref{sec:elementary-bounded-rank}).
\item Yang--Baxter gates for qubits
  (Section~\ref{sec:qubit-classification}).
\item Dual-unitary and involutive Yang--Baxter gates
  (Section~\ref{sec:inv-polynomial-proof}).
\item The LPW normal-form family, introduced in
  Section~\ref{sec:normal-form-dynamics}.
\item Non-degenerate Yang--Baxter maps and their phase dressings
  (Section~\ref{sec:yb-maps}).
\item A counterexample showing that the braid relation alone does not
  guarantee polynomial growth of the operator Schmidt rank
  (Section~\ref{sec:braid-only}).
\end{itemize}

In order to make the manuscript more accessible, we include Figure~\ref{fig:section-dependencies} at the end of this
document. It explains the logical dependencies and possible reading paths through the
manuscript. 

\subsection{The use of AI in this work}

We made heavy use of generative AI in both the computations and manuscript preparation. We used ChatGPT 5.6 Sol.

The human author selected the specific research topic, and asked various questions from the LLM. All proofs and concrete
examples were constructed by the LLM. The author checked all the computations, and afterwards he guided the LLM to
prepare this manuscript. The author made a significant effort to make the manuscript accessible to human readers. This
involved multiple rewritings of the output of the LLM. All responsibility lies with the author.

There are three main technical contributions of the AI to this work: the proof for the dual-unitary involutive
Yang--Baxter gates, the various proofs for the permutation gates (bare and dressed), and the construction of the
exceptional gate with $D=7$ that 
produces exponentially growing OSR. The first and the last results were completely autonomously derived by the AI,
whereas in the case of the permutation gates the initial ideas of the author
also played a role. However, the actual proofs were again constructed by the AI. The most important contribution
of the human author was to select the main problem, and also to select the various particular cases that the AI could eventually
successfully attack. We note that earlier models of ChatGPT were not able to solve any of these concrete questions, and
all progress started when ChatGPT 5.6 Sol was released.

In certain cases we used a modified version of a special prompt published by OpenAI, which was used to solve the ``Cycle
double cover'' conjecture \cite{cdc}.

\section{The setting and main results}

\label{sec:problem-results}

In this section we fix the circuit conventions and state the main results.
In most of this work we focus on the exact operator Schmidt rank across a fixed spatial cut.
We first define this quantity, together with the different measures for the operator space entanglement.  We then introduce the
Yang--Baxter and dual-unitarity conditions imposed on the two-site gate and also discuss the integrability properties of
the circuits. Afterwards we
summarize the main results.

\subsection{Circuit and exact operator Schmidt rank}
\label{subsec:circuit-osr}

The local Hilbert space is $V\cong\mathbb C^D$, with $D>1$.  We consider a
homogeneous brickwork circuit built from a two-site unitary gate
$R\in U(V\otimes V)$.  One Floquet period consists of the two staggered
layers
\begin{align}
 U_{\rm even}&=\prod_{j\in\mathbb Z}R_{2j,2j+1},&
 U_{\rm odd}&=\prod_{j\in\mathbb Z}R_{2j-1,2j},&
 U_F&=U_{\rm odd}U_{\rm even}.
 \label{eq:setup-floquet}
\end{align}
The Heisenberg evolution of an operator is
\begin{equation}
 O(t)=U_F^{-t}OU_F^t .
 \label{eq:setup-heisenberg}
\end{equation}
Here and throughout the paper, $t$ counts complete Floquet periods.  Thus one
unit of $t$ consists of the two staggered layers $U_{\rm even}$ and
$U_{\rm odd}$.  We refer to the action of either layer separately as a
half-step.  Consequently, each branch of the Heisenberg network for $O(t)$
contains $2t$ brickwork layers.

Throughout the paper, $O$ is a one-site operator initially placed at site
zero.  Restricting to this case is sufficient because we are interested in upper
bounds on the OSR.  We argue in Section~\ref{sec:rectangular-reduction} that the placement of the operator and the
choice of the cut do not affect the core arguments. Furthermore, for these
OSR bounds it is enough to focus on one-site operators.
Indeed, an operator of width $w$, supported
on $w$ consecutive sites, can be expanded in a finite product basis; each
product term evolves into a product of $w$ evolved one-site operators.
Subadditivity of OSR under sums and submultiplicativity under products then
transfer any one-site upper bound to fixed $w$, without changing whether the
OSR growth is bounded, polynomial, or exponential.  We therefore state and prove
only one-site bounds below.

We work in infinite volume, taking the thermodynamic limit before the
long-time limit.  At every fixed $t$, all gates outside the causal window of
$O$ cancel exactly, and the problem reduces to a finite tensor network.

We place the spatial cut $c$ between sites $-1$ and $0$.  Across this cut,
the operator Schmidt decomposition is
\begin{equation}
 \begin{aligned}
 O&=\sum_{\mu=1}^{\chi}s_\mu
 O_{L,\mu}\otimes O_{R,\mu},
 \qquad s_\mu>0,\\
 \operatorname{Tr}(O_{L,\mu}^\dagger O_{L,\nu})
 &=\operatorname{Tr}(O_{R,\mu}^\dagger O_{R,\nu})
 =\delta_{\mu\nu}.
 \end{aligned}
 \label{eq:setup-osr}
\end{equation}
Thus the left and right operators are orthonormal in the Hilbert--Schmidt
inner product, and $s_\mu$ are the nonzero operator-Schmidt singular values.
Their number $\chi$ is the exact operator Schmidt rank, denoted by
$\OSR_c(O)$.  After vectorizing the operator, \eqref{eq:setup-osr} is the
ordinary Schmidt decomposition of a state in the doubled Hilbert space.

We set $\OSR_c(0)=0$.  Operator entropies, by contrast, require
Hilbert--Schmidt normalization and are therefore defined only for nonzero
operators.  Every operator-entropy statement below is understood with this
restriction.  For such an operator $O$, the normalized vectorized state has
Schmidt probabilities
$p_\mu=s_\mu^2/\sum_{\nu=1}^{\chi}s_\nu^2
=s_\mu^2/\|O\|_{\mathrm{HS}}^2$.  The R\'enyi operator entropies and the von
Neumann operator entropy are
\begin{align}
 S_\alpha^{\rm op}(O)
 &=\frac{1}{1-\alpha}\log\!\left(\sum_\mu p_\mu^\alpha\right),
 &&\alpha>0,\quad \alpha\ne1,
 \label{eq:setup-renyi}\\
 S_1^{\rm op}(O)&=-\sum_\mu p_\mu\log p_\mu,
 \label{eq:setup-von-neumann}\\
 S_0^{\rm op}(O)&=\log\OSR_c(O).
 \label{eq:setup-S0}
\end{align}
The Schmidt probabilities are supported on exactly
$\chi=\OSR_c(O)$ values.  For a probability distribution on $\chi$
outcomes, every R\'enyi entropy of nonnegative order is bounded by the entropy
of the uniform distribution.  Therefore
\begin{equation}
 S_0^{\rm op}(O)=\log\OSR_c(O),
 \qquad
 S_\alpha^{\rm op}(O)\leq\log\OSR_c(O),
 \quad \alpha>0.
 \label{eq:setup-entropy-rank}
\end{equation}
For $\alpha>0$, equality holds precisely when all nonzero
operator-Schmidt probabilities are equal.

We also note an elementary equivalence of brickwork circuits.  For a
one-site unitary $U$, replacing the gate $R$ by
\begin{equation}
 R'=(U\otimes U)R(U^\dagger\otimes U^\dagger).
 \label{eq:setup-onsite-gauge}
\end{equation}
conjugates the circuit, within every finite causal window, by the homogeneous
product of $U$ over all sites.  This product factorizes across the spatial
cut.  Hence, after replacing the one-site operator $O$ by $U^\dagger O U$,
the transformed evolution has the same exact OSR and the same
operator-Schmidt spectrum.  Therefore all operator-entanglement measures are
unchanged.  Furthermore, multiplying $R$ by an overall phase does not change
the Heisenberg evolution.

\subsection{Yang--Baxter gates}
\label{subsec:yb-gates}

The braid relation, or constant Yang--Baxter equation, for the two-site gate
$R$ is
\begin{equation}
 R_{12}R_{23}R_{12}=R_{23}R_{12}R_{23}
 \qquad\text{on }V^{\otimes3}.
 \label{eq:setup-braid}
\end{equation}
Homogeneous onsite unitary conjugation \eqref{eq:setup-onsite-gauge} preserves
this relation, because the intermediate one-site factors cancel on both
sides.  Multiplication by an overall phase also preserves it.
Throughout the paper, a \emph{Yang--Baxter gate} is a unitary two-site gate
$R\in U(V\otimes V)$ satisfying \eqref{eq:setup-braid}.  Thus ``gate''
already includes ordinary unitarity, while ``Yang--Baxter'' refers to the
braid relation.
Classifications of solutions to the braid relation include all two-state
solutions \cite{Hietarinta1992}, their unitary subclass \cite{Dye2003}, the
nonsingular upper-triangular and additive-charge-conserving three-state cases
\cite{Hietarinta1993UpperTriangular,HietarintaMartinRowell2024}, and the
charge-conserving class in arbitrary dimension \cite{MartinRowell2026}; only
Ref.~\cite{Dye2003} restricts the classification to unitary solutions.
More recently, Galindo and Rowell proposed a conjectural framework for
generating all unitary Yang--Baxter operators \cite{GalindoRowell2026}.  We
discuss its precise scope and its relation to the present problem later in
Subsection~\ref{subsec:involutive-monomial-relation}.

We now turn to set-theoretical Yang--Baxter maps, which provide a systematic
construction of unitary permutation gates.  Let $X$ be a set of $D$
local colours and identify $\{\ket{x}:x\in X\}$ with an orthonormal basis of
$V$.  A \emph{set-theoretical Yang--Baxter map} is a bijection
$r:X^2\to X^2$ satisfying the braid relation in set-theoretical form,
\begin{equation}
 r_{12}r_{23}r_{12}=r_{23}r_{12}r_{23}
 \qquad\text{on }X^3,
 \label{eq:setup-set-theoretical-braid}
\end{equation}
where $r_{12}=r\times\id_X$ and $r_{23}=\id_X\times r$.  It defines the
unitary permutation gate
\begin{equation}
 P_r\ket{x,y}=\ket{r(x,y)}.
 \label{eq:setup-permutation-gate}
\end{equation}
The map relation \eqref{eq:setup-set-theoretical-braid} is equivalent to the
gate relation \eqref{eq:setup-braid} for $P_r$.  Writing
$r(x,y)=(\lambda_x(y),\mu_y(x))$, the map is called \emph{non-degenerate} when
$\lambda_x:X\to X$ and $\mu_y:X\to X$ are permutations for every
$x,y\in X$.
Concrete examples and small-size enumerations can be found in the early work
\cite{EtingofSchedlerSoloviev1999}, the systematic enumeration
\cite{AkgunMerebVendramin2022}, and Ref.~\cite{GomborPozsgay2022}.

A \emph{phase-dressed Yang--Baxter-map gate} is constructed from a set-theoretical
Yang--Baxter map $r$ and unit-modulus numbers $\omega(x,y)$
via the formula
\begin{equation}
 R\ket{x,y}=\omega(x,y)\ket{r(x,y)},
 \qquad |\omega(x,y)|=1,
 \label{eq:setup-phase-dressed-map}
\end{equation}
We also call gates of the form \eqref{eq:setup-phase-dressed-map}
\emph{monomial gates}.  Throughout this work, the term ``monomial'' includes
the possible phase dressing $\omega$; a bare permutation gate is the special
case $\omega(x,y)=1$ for every $x,y\in X$.  We refer to $r$ as the underlying classical
map or ``support''.  The phase-dressed gate is dual-unitary when $r$ is
non-degenerate; dual unitarity is defined in
Subsection~\ref{subsec:dual-unitarity}.

The braid relation poses additional constraints on $\omega(x,y)$.
When these phase constraints hold, the full gate $R$ is itself a
Yang--Baxter gate.  However, throughout this work we use the term
``phase-dressed Yang--Baxter-map gate'' whenever the support $r$ satisfies
Eq.~\eqref{eq:setup-set-theoretical-braid}; the phases $\omega(x,y)$ need not
satisfy the additional constraints required for $R$ itself to obey the braid
relation.  The polynomial OSR bound proved later in
Subsection~\ref{subsec:phase-general} does not require these additional
constraints.

We shall also use involutivity.  A gate is involutive when $R^2=\id$, while
a set-theoretical Yang--Baxter map is involutive when $r^2=\id$.
Involutivity implies $R=R^\dagger$, and this can be understood as manifest time-reflection symmetry of the gate.

\subsection{Integrability}
\label{subsec:integrability}

The braid relation has striking dynamical consequences: it produces ballistically propagating gliders.  If the simplest
gliders are not proportional to identity operators, their products generate an exponentially growing number of conserved
quantities that form a non-Abelian symmetry algebra, and the circuit is superintegrable \cite{GomborPozsgay2022}.
In contrast to this exponentially large non-Abelian family, the traditional
notion of integrability is based on mutually commuting extensive charges with
local densities.  These charges are generated by a commuting family of transfer
matrices, and their number grows linearly with the system size.  For the pure
braid circuits considered here, this construction arises naturally when the
Yang--Baxter gate is involutive, because involutivity allows the constant gate to
be Baxterized.  In this subsection we discuss both structures.  We first describe
an exponentially growing family of gliders and then construct the commuting transfer matrices
and their local charges in the involutive case.

The \emph{gliders} are ballistically propagating local operators.
It was shown in \cite{GomborPozsgay2022} that the circuits in question have an exponentially growing number of algebraically
independent gliders. The paper \cite{GomborPozsgay2022} considered only the involutive case, but some of those results
can be 
extended also to the non-involutive case. As an example we consider the simplest gliders and their products.

With the shorthand $R_j=R_{j,j+1}$, define the two
three-site operators
\begin{align}
 \mathcal G^{\rm e}_{2k}
 &=R_{2k}^{\dagger}R_{2k+1}R_{2k},&
 \mathcal G^{\rm o}_{2k+1}
 &=R_{2k+1}R_{2k+2}R_{2k+1}^{\dagger}.
 \label{eq:setup-general-gliders}
\end{align}
Unitarity and the braid relation alone imply
\begin{align}
 U_F\mathcal G^{\rm e}_{2k}U_F^{-1}
 &=\mathcal G^{\rm e}_{2k+2},&
 U_F\mathcal G^{\rm o}_{2k+1}U_F^{-1}
 &=\mathcal G^{\rm o}_{2k-1}.
 \label{eq:setup-general-glider-translation}
\end{align}
Thus the two sublattices carry oppositely moving gliders.  These formulas
directly generalize the involutive construction presented in \cite{GomborPozsgay2022}.
A brief self-contained
verification is given in Appendix~\ref{app:general-gliders}.

Products of neighbouring gliders on either sublattice give longer operators
which translate with the same velocity and without changing their shape.
For every such density $g_j$, summing over its allowed translates gives a
conserved charge: time evolution merely relabels the terms in the sum.  To infer
superintegrability, we require the elementary gliders in
Eq.~\eqref{eq:setup-general-gliders} not to reduce to multiples of the identity.
Under this condition, their products generate an exponentially large family of
conserved charges.

The gliders discussed so far are only the simplest members of the full
family.  In the involutive case, Ref.~\cite{GomborPozsgay2022} constructs a
much larger class using representation-theoretic arguments.  For
noninvolutive gates, it is not known which members of that construction
remain gliders; to the best of our knowledge, this question has not yet been
investigated.  We do not need this finer classification here.  Our purpose is
only to establish the existence of an exponentially large family of conserved
gliders under the nontriviality condition stated above.

Gliders move rigidly under time evolution and therefore do not exhibit operator
spreading.  The glider sector nevertheless does not settle
the dynamics of a generic local observable.  The space of operators of range
$\ell$ has dimension $D^{2\ell}$, so there typically remain exponentially many local
operators which are not gliders.  Whether these operators spread and how
their operator entanglement grows are therefore nontrivial questions even in
a superintegrable Yang--Baxter circuit.

Let us now also discuss the traditional object in integrability: the commuting set of transfer matrices. These can be
introduced if the gate is involutive. In such a case the Yang--Baxter gate can be Baxterized: it can be lifted to a
spectral-parameter-dependent $R$-matrix. We introduce
\begin{equation}
 {
 \check{\mathcal R}(u)
 =\frac{\id+i u R}{1+i u}.}
 \label{eq:int-unitary-baxterization}
\end{equation}
For real $u$ it gives
\begin{equation}
 \check{\mathcal R}(u)^\dagger
 =\check{\mathcal R}(-u)
 =\check{\mathcal R}(u)^{-1},
 \qquad
 \check{\mathcal R}(0)=\id,
 \qquad
 \lim_{u\to\pm\infty}\check{\mathcal R}(u)=R.
 \label{eq:int-baxterization-properties}
\end{equation}
Thus the spectral-parameter-dependent matrix is unitary on the real
axis, is regular at the origin, and approaches the constant gate without an
additional scalar or phase. Furthermore, the braid relation implies that the $R$-matrix satisfies
\begin{equation}
 \check{\mathcal R}_{12}(u)
 \check{\mathcal R}_{23}(u+v)
 \check{\mathcal R}_{12}(v)
 =
 \check{\mathcal R}_{23}(v)
 \check{\mathcal R}_{12}(u+v)
 \check{\mathcal R}_{23}(u).
 \label{eq:int-checked-YBE}
\end{equation}
We stress that involutivity is required to prove both the properties in
Eq.~\eqref{eq:int-baxterization-properties} and the spectral-parameter
Yang--Baxter relation \eqref{eq:int-checked-YBE}.

Let
$\Swap$ denote the ordinary swap on $V\otimes V$ and define
\begin{equation}
 \mathcal R(u)=\Swap\,\check{\mathcal R}(u).
 \label{eq:int-vertex-R}
\end{equation}
Then $\mathcal R(u)$ satisfies the ordinary Yang--Baxter equation
\begin{align}
 &\mathcal R_{12}(u-v)\mathcal R_{13}(u-w)
 \mathcal R_{23}(v-w)
 \notag\\
 &\qquad=
 \mathcal R_{23}(v-w)\mathcal R_{13}(u-w)
 \mathcal R_{12}(u-v),
 \label{eq:int-vertex-YBE}
\end{align}
and the regularity condition is
\begin{equation}
 \mathcal R(0)=\Swap.
 \label{eq:int-vertex-regularity}
\end{equation}
The operators $\mathcal{R}(u)$ can be used to define the standard transfer matrices, and one obtains integrable quantum
circuits whose two-site gates are given by $\check{\mathcal R}(u)$ with some real fixed $u$
\cite{VanicatZadnikProsen2018}.  This is a standard procedure and it will be reviewed in Appendix~\ref{app:tms}.
However, our circuits are special because they use the actual constant gate, which is obtained in the
limit $u\to\infty$. In order to obtain non-trivial transfer matrices in this limit, a special scaling
procedure is required. We present it in Appendix~\ref{app:tms} and here we just cite the results. 

For $\sigma\in\{+,-\}$, we define the two transfer matrices
\begin{equation}
 \tau_\sigma(v)
 =\Tr_a\!
 \left[
 \mathcal L_{aL}^{(\sigma)}(v)
 \mathcal L_{a,L-1}^{(\sigma)}(v)
 \cdots
 \mathcal L_{a1}^{(\sigma)}(v)
 \right],
 \label{eq:int-endpoint-transfer-explicit}
\end{equation}
where
\begin{align}
 \mathcal L_{aj}^{(+)}(v)
 &=
 \begin{cases}
  \mathcal R_{aj}(v),&j\ \text{even},\\
  (\mathcal R_\infty)_{aj},&j\ \text{odd},
 \end{cases}
 &
 \mathcal L_{aj}^{(-)}(v)
 &=
 \begin{cases}
  (\mathcal R_\infty)_{aj},&j\ \text{even},\\
  \mathcal R_{aj}(v),&j\ \text{odd},
 \end{cases}
 \label{eq:int-endpoint-Lax-operators}
\end{align}
where we also used
\begin{equation}
 \mathcal R_\infty
 :=\lim_{u\to\pm\infty}\mathcal R(u)
 =\Swap R.
 \label{eq:int-vertex-infinite-limit}
\end{equation}
In Appendix~\ref{app:tms} we prove that
\begin{equation}
 [\tau_\sigma(v),\tau_{\sigma'}(w)]=0,
 \qquad
 \sigma,\sigma'\in\{+,-\}.
 \label{eq:int-endpoint-transfer-commutativity}
\end{equation}
and that the Floquet operator can be obtained as
\begin{equation}
 U_F=\tau_-(0)^{-1}\tau_+(0).
 \label{eq:int-endpoint-Floquet-ratio}
\end{equation}
Combining this result with
Eq.~\eqref{eq:int-endpoint-transfer-commutativity}, we obtain
\begin{equation}
 [U_F,\tau_\pm(v)]=0
 \qquad
 \text{for all }v.
 \label{eq:int-endpoint-U-commutes}
\end{equation}
This establishes the Yang--Baxter integrability of the pure braid circuit.

The transfer matrices generate local conserved charges.  We
define
\begin{equation}
 Q_{n,\infty}^{(\sigma)}
 =
 -i\left.
 \frac{d^n}{dv^n}
 \log\!\left[
 \tau_\sigma(v)\tau_\sigma(0)^{-1}
 \right]
 \right|_{v=0},
 \qquad
 \sigma\in\{+,-\},\quad n\geq1.
 \label{eq:int-endpoint-charges}
\end{equation}
For the first charges we obtain explicitly
\begin{equation}
{\begin{aligned}
 Q_{1,\infty}^{(+)}
 &=\sum_{k=1}^{L/2}
 \left(R_{2k}R_{2k+1}R_{2k}-\id\right),\\
 Q_{1,\infty}^{(-)}
 &=\sum_{k=1}^{L/2}
 \left(R_{2k-1}R_{2k}R_{2k-1}-\id\right).
 \end{aligned}}
 \label{eq:int-endpoint-first-charges}
\end{equation}
Apart from the additive terms, these extensive charges are simply sums over the simplest gliders.

It can be shown that the higher derivatives of the transfer matrices are also spatial sums of gliders. A general
proof can be given using the representation-theoretic ideas of \cite{GomborPozsgay2022}, which connect involutive Yang--Baxter gates with
symmetric groups. However, this statement lies outside the main developments of this paper, so we omit the
proof.

\subsection{Dual unitarity}
\label{subsec:dual-unitarity}

Let us use the matrix elements
\begin{equation}
 R^{cd}_{ab}=\bra{c,d}R\ket{a,b}.
 \label{eq:setup-indices}
\end{equation}
The reshuffled matrix is
\begin{equation}
 (R^\Gamma)_{(a,c),(b,d)}:=R^{cd}_{ab}.
 \label{eq:setup-reshuffle}
\end{equation}
We say that a gate is \emph{dual-unitary} if both $R$ and $R^\Gamma$ are
unitary.  In indices the second unitarity condition is
\begin{equation}
 \sum_{b,d}R^{cd}_{ab}\,\overline{R^{c'd}_{a'b}}
 =\delta_{aa'}\delta_{cc'}.
 \label{eq:setup-duality-indices}
\end{equation}
This second unitarity condition expresses unitary propagation after a
quarter-turn of the gate in space--time.

For both the permutation gate
$P_r$ in \eqref{eq:setup-permutation-gate} and the phase-dressed gate $R$ in
\eqref{eq:setup-phase-dressed-map}, the classical map $r$ is non-degenerate if
and only if the corresponding gate is dual-unitary
\cite{GomborPozsgay2022}.

The dynamical properties of circuits built from dual-unitary gates were
reviewed in Ref.~\cite{BertiniClaeysProsen2026}.

\subsection{Remarks on the classification of Yang--Baxter gates}
\label{subsec:involutive-monomial-relation}

In this work we focus on two special properties of Yang--Baxter gates:
dual unitarity and involutivity.  We also study several concrete
constructions.  These
include the separated phase-dressed exchanges introduced later in
Subsection~\ref{subsec:separated-phase-dressed}, the controlled-swap gates
introduced later in Subsection~\ref{sec:controlled-swap}, and the
phase-dressed Yang--Baxter-map gates of
Eq.~\eqref{eq:setup-phase-dressed-map}.  These structural assumptions and
concrete constructions overlap. However, currently there is no full classification available for Yang--Baxter gates
(except for the case of local dimension $D=2$, see Section~\ref{sec:qubit-classification}), and this has consequences
for our treatment of the problem.

We show later in Section~\ref{sec:inv-polynomial-proof} that combining the constraints of dual unitarity and involutivity helps us establish
polynomial bounds for the growth of the OSR. Later, in Section~\ref{sec:yb-maps}, we also find polynomial growth for
arbitrary phase-dressed Yang--Baxter-map gates with non-degenerate
support.  If the support is involutive and the full dressed gate satisfies
the braid relation, the sharper centralizer count of
Subsection~\ref{subsec:phase-involutive-support} gives the same bound as the
involutive result without requiring the full dressed gate to be involutive.
The general phase-dressing bound and this sharper estimate can have different
polynomial powers, and which one is smaller depends on the concrete
Yang--Baxter map.  Therefore, both estimates can be useful.

At present we are not aware of any dual-unitary and involutive Yang--Baxter gate that would be different from a
(possibly gauge transformed) permutation gate. We performed numerical searches and also literature searches and did not
find any numerical or exactly known example. 

A recent preprint by Galindo and Rowell
\cite{GalindoRowell2026} approaches the wider problem of classifying all
unitary Yang--Baxter operators.  They conjecture that, up to an overall phase
and a homogeneous local change of basis, every such operator is generated
from three sources: monomial solutions, group-type solutions, and solutions
from twisted group-algebra towers.  Here ``generated'' includes external
products, direct sums, and admissible internal orthogonal sums.  The authors
support this proposal by computational searches in Clifford groups and in a
polynomial-in-Pauli ansatz.  Even if this classification conjecture is true,
it does not settle the question whether
an involutive dual-unitary gate exists which is not of monomial type. 

We therefore treat the relation between the two families as unresolved and carry out the calculations
for them separately.  The direct proof for involutive
dual-unitary gates is given later in
Section~\ref{sec:inv-polynomial-proof}, whereas the phase-dressed
Yang--Baxter-map
calculations are given later in Section~\ref{sec:yb-maps}.  We repeat that even if every
involutive dual-unitary gate eventually turns out to be of monomial type, the
first proof remains useful because it does not require the construction of a
monomializing basis.  The applicable estimates are summarized below in
Eqs.~\eqref{eq:results-involutive-one},
\eqref{eq:results-phase-one}, and
\eqref{eq:results-phase-involutive}.

Finally, neither dual unitarity nor involutivity separately forces
monomiality.  The controlled-swap construction of
Subsection~\ref{sec:controlled-swap} contains dual-unitary Yang--Baxter gates
which are not involutive and, for suitable choices of the controlling
unitaries, are not of monomial type.  In the opposite direction, the
seven-state example constructed later in Section~\ref{sec:braid-only} is an
involutive Yang--Baxter gate which is neither dual-unitary nor of monomial
type.

\subsection{Main results}
\label{subsec:main-results}

We now summarize the main conclusions.

\begin{enumerate}
\item \emph{For separated phase-dressed exchanges, time-evolved one-site
operators have uniformly bounded operator Schmidt rank.}  Consider the
separated phase-dressed exchange gates $R_{f,g,q}$, introduced later in
Subsection~\ref{subsec:separated-phase-dressed}.  For arbitrary local
dimension $D$, every one-site operator $O$ initially placed at site zero
satisfies, for every $t\geq0$ and every spatial cut $c$,
\begin{equation}
 \OSR_c(O(t))\leq D(D-1)+1.
 \label{eq:results-separated-exchange}
\end{equation}
Consequently, for nonzero $O$, every nonnegative-R\'enyi operator entropy
is at most \(\log\!\bigl(D(D-1)+1\bigr)\), uniformly in time.
The bound uses only the separated form and therefore remains valid even when
the gate does not satisfy the braid relation.
This result is proved in
Section~\ref{sec:elementary-bounded-rank}.

\item \emph{For Yang--Baxter gates built from controlled one-site
unitaries, time-evolved one-site operators have uniformly bounded operator
Schmidt rank.}  The two orientations of this construction, $R_{\rm rc}$
and $R_{\rm lc}$, are defined later in
Subsection~\ref{sec:controlled-swap}.  For every one-site operator $O$
initially placed at site zero and every $t\geq1$,
\begin{align}
 R=R_{\rm rc}
 &\quad\Longrightarrow\quad
 \OSR_c(O(t))\leq D^2,
 \nonumber\\
 R=R_{\rm lc}
 &\quad\Longrightarrow\quad
 \OSR_c(O(t))\leq D^2-D+1,
 \label{eq:results-controlled-swap}
\end{align}
Consequently, every nonnegative-R\'enyi operator entropy of a nonzero $O$
is bounded uniformly in time.
This result is proved in
Section~\ref{sec:elementary-bounded-rank}.

\item \emph{Time-evolved operators in qubit Yang--Baxter circuits have
uniformly bounded Schmidt rank.}  Let $D=2$ and let $R$ be any Yang--Baxter
gate.  Then every one-site operator $O$
initially placed at site zero satisfies, for every $t\geq0$ and every spatial
cut $c$,
\begin{equation}
 \OSR_c(O(t))\leq4.
 \label{eq:results-qubit}
\end{equation}
Consequently, for nonzero $O$, every nonnegative-R\'enyi operator entropy is
at most $\log4$, uniformly in time.
This result is proved in Section~\ref{sec:qubit-classification}.

\item \emph{Involutive Yang--Baxter gates with dual unitarity yield
polynomial bounds on the operator Schmidt rank.}  For general local
dimension $D$, let $R$ be an involutive Yang--Baxter gate with dual
unitarity.  Then every one-site operator at site zero satisfies
\begin{equation}
 \OSR_c(O(t))\leq
 \binom{t+D^2-1}{D^2-1}\leq(t+1)^{D^2-1}.
 \label{eq:results-involutive-one}
\end{equation}
Consequently every nonnegative-R\'enyi operator entropy is at most
$(D^2-1)\log(t+1)$.
This result is proved in Section~\ref{sec:inv-polynomial-proof}.

\item \emph{LPW normal-form gates admit a sharp classification of the
possible growth powers.}  For a special class of gates, the maximal one-site operator Schmidt
rank is bounded, grows linearly, or grows quadratically in time, depending on
the block structure.  These alternatives are exhaustive and all three are
attained.  The class is defined and the result is proved in
Section~\ref{sec:normal-form-dynamics}.

\item \emph{Non-degenerate Yang--Baxter maps yield uniform bounds on the
operator Schmidt rank.}
For every finite non-degenerate set-theoretical Yang--Baxter map, with no
involutivity assumption, every one-site operator at site zero obeys
\begin{equation}
 \OSR_c(O(t))\leq \kappa_\lambda^4\kappa_\rho^2\leq(D!)^6.
 \label{eq:results-map-one}
\end{equation}
Here $\kappa_\lambda$ and $\kappa_\rho$ are two finite, gate-dependent
integers.  Their
precise definition, and in particular the bound
$\kappa_\lambda,\kappa_\rho\leq D!$, is given in
Section~\ref{sec:yb-maps}, where this result is also proved.

\item \emph{Arbitrary phase dressings of non-degenerate Yang--Baxter
maps yield polynomial bounds on the operator Schmidt rank.}  For every
phase-dressed Yang--Baxter-map gate with finite non-degenerate support,
whether or not the full dressed gate satisfies the braid relation, every
one-site operator at site zero satisfies
\begin{equation}
 \OSR_c(O(t))\leq \kappa_\lambda^4\kappa_\rho^2
 \binom{t+D\kappa_\lambda\kappa_\rho-1}
 {D\kappa_\lambda\kappa_\rho-1}^{\!2}
 \leq \kappa_\lambda^4\kappa_\rho^2
 (t+1)^{2D\kappa_\lambda\kappa_\rho-2}.
 \label{eq:results-phase-one}
\end{equation}
This result is proved in Section~\ref{sec:yb-maps}.

\item \emph{Braided phase dressings with involutive support obey the sharper
involutive bound.}  Let $R$ be a phase-dressed Yang--Baxter-map gate with
finite, non-degenerate, involutive support, and suppose that the full dressed
gate satisfies the braid relation.  The dressed gate itself need not be
involutive.  Then every one-site operator at site zero satisfies
\begin{equation}
 \OSR_c(O(t))\leq
 \binom{t+D^2-1}{D^2-1}\leq(t+1)^{D^2-1}
 \label{eq:results-phase-involutive}
\end{equation}
This result is proved in
Subsection~\ref{subsec:phase-involutive-support}.  On this class, both the
present estimate and the preceding general phase-dressing estimate apply, so
the smaller of the two upper bounds may be used.

\item \emph{Without dual unitarity, the operator Schmidt rank can grow
exponentially.}
There exist a fixed $D=7$ involutive Yang--Baxter gate without dual unitarity
and a fixed one-site operator such that
\begin{equation}
 \OSR_c(O(t))\geq2^{t-1},\qquad t\geq1.
 \label{eq:results-braid-only}
\end{equation}
This result is proved in Section~\ref{sec:braid-only}.
We only treat the OSR; the proof does not provide estimates for the operator
entropies.

\end{enumerate}

\section{Rearranging the circuit}
\label{sec:rectangular-reduction}

In this section we reduce the infinite Heisenberg circuit to a finite
rectangular braid, which will be used in all later arguments.  We first work with the
ordinary quantum circuit and isolate the causal network of the local
operator.  Afterwards we fold ket and bra indices into a single layer,
thereby isolating a finite rectangular braid that is easier to manipulate.
The entire reduction uses only
ordinary unitarity.
An equivalent formulation of this reduction was given in
Ref.~\cite{BertiniKosProsenI} (see also Ref.~\cite{JonayHuseNahum2018}).  In
that formulation, the surviving network is expressed as a rectangular corner
transfer matrix.

\subsection{Brickwork evolution and the causal network}
\label{subsec:ordinary-causal-network}

Throughout this section we use circuit diagrams to explain the algebraic
steps in the reduction.  Let us first fix the graphical conventions.  The gate $R$ acts on two
neighbouring copies of $V$ and is represented by a rectangle.  The two
staggered layers in \eqref{eq:setup-floquet} form one Floquet period, as
shown in Fig.~\ref{fig:ordinary-gate-brickwork}.  Time runs upwards in all
ordinary circuit figures in this subsection.

\begin{figure}[t]
\centering
\begin{tikzpicture}[opent figure,x=0.88cm,y=0.72cm]
  \begin{scope}[shift={(0,0)}]
    \node[opent/panel label] at (0.8,5.35) {(a) Two-site gate};
    \draw[opent/wire] (0.2,0.25) -- (0.2,4.75);
    \draw[opent/wire] (1.4,0.25) -- (1.4,4.75);
    \node[opent/gate,minimum width=18mm,minimum height=8mm] at (0.8,2.5) {$R$};
    \node[opent/annotation] at (0.2,-0.15) {$V$};
    \node[opent/annotation] at (1.4,-0.15) {$V$};
    \draw[opent/time arrow] (2.15,0.4) -- (2.15,4.55);
    \node[opent/annotation,rotate=90] at (2.48,2.5) {time};
  \end{scope}

  \begin{scope}[shift={(5.0,0)}]
    \node[opent/panel label] at (2.5,5.35) {(b) Brickwork circuit};
    \foreach \x in {0,1,2,3,4,5} {
      \draw[opent/wire] (\x,0.25) -- (\x,4.75);
      \node[opent/annotation] at (\x,-0.15) {$\x$};
    }
    \foreach \x in {0.5,2.5,4.5}
      \node[opent/gate] at (\x,0.95) {$R$};
    \foreach \x in {1.5,3.5}
      \node[opent/gate] at (\x,1.85) {$R$};
    \foreach \x in {0.5,2.5,4.5}
      \node[opent/gate] at (\x,2.95) {$R$};
    \foreach \x in {1.5,3.5}
      \node[opent/gate] at (\x,3.85) {$R$};
    \draw[decorate,decoration={brace,amplitude=4pt},draw=opentgray]
      (5.55,2.20) -- (5.55,0.58);
    \node[opent/annotation,anchor=west] at (5.85,1.39) {$U_F$};
    \node[opent/annotation,anchor=east] at (-0.25,0.95) {$U_{\rm even}$};
    \node[opent/annotation,anchor=east] at (-0.25,1.85) {$U_{\rm odd}$};
  \end{scope}
\end{tikzpicture}
\caption{Notations for quantum circuits.  A two-site gate is a rectangle with two
incoming and two outgoing legs.  The even and odd layers make one Floquet
period $U_F=U_{\rm odd}U_{\rm even}$.  The second period is included only to
make the staggering visible.}
\label{fig:ordinary-gate-brickwork}
\end{figure}

For a local operator $O$, the lower branch of the Heisenberg tensor network
is $U_F^t$, while the upper branch is
$U_F^{-t}=(U_F^t)^\dagger$.
Figure~\ref{fig:heisenberg-causal-network}(a) displays this double-layer
structure.

Ordinary unitarity now removes every gate pair outside the causal region of
the insertion.  This cancellation is exact, and at fixed $t$ it leaves a
finite tensor network.  In Fig.~\ref{fig:heisenberg-causal-network}(b), each
cancelled pair has been replaced by two straight vertical strands.  The
strands represent identity operators acting on the corresponding qudits.  The
braid relation has not been used at this stage.

\begin{figure}[t]
\centering
\begin{tikzpicture}[opent figure,x=0.58cm,y=0.62cm]
  \begin{scope}[shift={(0,0)}]
    \node[opent/panel label] at (4.5,7.55) {(a) Heisenberg evolution};
    \foreach \x in {0,1,2,3,4,5,6,7,8,9}
      \draw[opent/wire] (\x,0.10) -- (\x,7.00);

    \foreach \x in {0.5,2.5,4.5,6.5,8.5}
      \node[opent/gate causal] at (\x,0.65) {$R$};
    \foreach \x in {1.5,3.5,5.5,7.5}
      \node[opent/gate causal] at (\x,1.55) {$R$};
    \foreach \x in {0.5,2.5,4.5,6.5,8.5}
      \node[opent/gate causal] at (\x,2.45) {$R$};

    \node[opent/operator] at (5,3.55) {$O$};

    \foreach \x in {0.5,2.5,4.5,6.5,8.5}
      \node[opent/gate causal,draw=opentgray,fill=black!5] at (\x,4.65) {$R^\dagger$};
    \foreach \x in {1.5,3.5,5.5,7.5}
      \node[opent/gate causal,draw=opentgray,fill=black!5] at (\x,5.55) {$R^\dagger$};
    \foreach \x in {0.5,2.5,4.5,6.5,8.5}
      \node[opent/gate causal,draw=opentgray,fill=black!5] at (\x,6.45) {$R^\dagger$};

    \node[opent/annotation,anchor=east] at (-0.55,1.55) {$U_F^t$};
    \node[opent/annotation,anchor=east] at (-0.55,5.55) {$U_F^{-t}$};
    \node[opent/annotation] at (4.5,-0.40) {$O(t)=U_F^{-t}OU_F^t$};
  \end{scope}

  \begin{scope}[shift={(12.0,0)}]
    \node[opent/panel label] at (4.5,7.55) {(b) Finite causal network};
    \fill[opentlightorange,opacity=0.55]
      (5,3.55) -- (2.15,0.10) -- (7.85,0.10) -- cycle;
    \fill[opentlightorange,opacity=0.55]
      (5,3.55) -- (2.15,7.00) -- (7.85,7.00) -- cycle;
    \foreach \x in {0,1,2,3,4,5,6,7,8,9}
      \draw[opent/wire] (\x,0.10) -- (\x,7.00);

    \foreach \x in {2.5,4.5,6.5}
      \node[opent/gate causal] at (\x,0.65) {$R$};
    \foreach \x in {3.5,5.5}
      \node[opent/gate causal] at (\x,1.55) {$R$};
    \node[opent/gate causal] at (4.5,2.45) {$R$};
    \node[opent/operator] at (5,3.55) {$O$};
    \node[opent/gate causal,draw=opentgray,fill=black!5] at (4.5,4.65) {$R^\dagger$};
    \foreach \x in {3.5,5.5}
      \node[opent/gate causal,draw=opentgray,fill=black!5] at (\x,5.55) {$R^\dagger$};
    \foreach \x in {2.5,4.5,6.5}
      \node[opent/gate causal,draw=opentgray,fill=black!5] at (\x,6.45) {$R^\dagger$};
    \node[opent/annotation] at (4.5,-0.40)
      {cancelled pairs are replaced by straight strands};
  \end{scope}
\end{tikzpicture}
\caption{Evolution of a local operator and its causal structure.  Panel (a)
keeps the forward and backward circuits visible.  Ordinary unitarity removes
all gates which cannot affect the insertion.  Panel (b) displays the resulting
finite network.  The braid relation has not yet been used.}
\label{fig:heisenberg-causal-network}
\end{figure}

\subsection{Folding the operator network}
\label{subsec:folding-network}

We now fold the double-layer operator network into a single layer.  Put
$\mathcal W=\End(V)$.  Folding groups the ket and bra indices on each
physical site into one $D^2$-dimensional local space.  With the orientation
used below, the one-site vectorization is fixed by
\begin{equation}
 \mathfrak f\bigl(\ket{a'}\!\bra{a}\bigr)=\ket{a,a'}.
 \label{eq:centralizer-folding}
\end{equation}
We write $x_O:=\mathfrak f(O)$ for the folded vector associated with a
one-site operator $O$.
Graphically, an operator is first drawn in the ordinary way as a box on one
vertical physical wire, with one index below and one above.  Folding bends
the lower leg upwards and produces a cup with two outgoing indices.  In all
subsequent folded networks, we abbreviate this cup by an orange operator box
with one folded strand leaving vertically upwards.  These three stages are
shown in Fig.~\ref{fig:folding-dictionary}(a).

A distinguished folded vector is the vectorized identity,
\begin{equation}
 \iota:=\mathfrak f(\id)=\sum_{a=1}^D\ket{a,a}.
 \label{eq:folded-identity}
\end{equation}
It is denoted by the empty circle in
Fig.~\ref{fig:folding-dictionary}(b).

\begin{figure}[t]
\centering
\begin{tikzpicture}[opent figure,x=0.84cm,y=0.80cm]
  \begin{scope}[shift={(0,0)}]
    \node[opent/panel label] at (2.75,3.10) {(a) One-site folding};

    \draw[opent/wire] (0.35,0.55) -- (0.35,2.40);
    \node[opent/operator] at (0.35,1.48) {$O$};
    \node[opent/annotation] at (0.35,2.72) {$a$};
    \node[opent/annotation] at (0.35,0.22) {$a'$};

    \draw[opent/flow arrow] (0.95,1.48) -- (1.55,1.48);

    \draw[opent/folded wire]
      (1.95,2.15) .. controls (1.95,1.20) and (2.25,0.82) ..
      (2.70,0.82) .. controls (3.15,0.82) and (3.45,1.20) ..
      (3.45,2.15);
    \node[opent/folded source] at (2.70,0.87) {$O$};
    \node[opent/annotation] at (1.95,2.47) {$a$};
    \node[opent/annotation] at (3.45,2.47) {$a'$};

    \draw[opent/flow arrow] (3.88,1.48) -- (4.48,1.48);

    \draw[opent/folded wire] (4.90,0.87) -- (4.90,2.15);
    \node[opent/folded source] at (4.90,0.87) {$O$};
    \node[opent/annotation] at (4.90,2.47) {$(a,a')$};
  \end{scope}

  \begin{scope}[shift={(8.0,0)}]
    \node[opent/panel label] at (0.8,3.10) {(b) Identity vector};
    \draw[opent/folded wire] (0.8,0.67) -- (0.8,2.40);
    \node[opent/identity] at (0.8,0.67) {};
    \node[opent/annotation] at (0.8,0.18) {empty circle};
    \node[opent/annotation] at (0.8,-0.25) {$\iota=\mathfrak f(\id)$};
  \end{scope}
\end{tikzpicture}
\caption{Folding a one-site operator.  In panel (a), the lower physical leg
is bent upwards, so the folded operator is a cup with outgoing indices
$a$ and $a'$.  Later diagrams use the equivalent one-strand shorthand
shown on the right.  Panel (b) reserves an empty circle for the vectorized
identity $\iota$.}
\label{fig:folding-dictionary}
\end{figure}

A blue strand in the folded diagrams therefore carries the
$D^2$-dimensional space $\mathcal W$.  Most importantly, the operator
Schmidt rank across a spatial cut is now the ordinary Schmidt rank of the
folded vector across the corresponding cut.  The entanglement problem has
thus been converted into a one-layer state problem.

We next fold the local Heisenberg action $X\mapsto R^\dagger XR$.  Let
$\mathfrak f_2=\mathfrak f\otimes\mathfrak f$ denote the sitewise folding of
a two-site operator.  The folded Heisenberg gate $\widehat R_{\rm H}$ is
defined by
\[
 \widehat R_{\rm H}\,\mathfrak f_2(X)
 =\mathfrak f_2(R^\dagger XR).
\]
Thus the two physical gates entering the folded map are precisely
$R^\dagger$ and $R$.

We now derive the components directly from this definition.  Apply the
folded map to the two-site matrix unit
$\ket{a',b'}\!\bra{c,d}$.  Its folded vector is
\[
 \mathfrak f_2\bigl(\ket{a',b'}\!\bra{c,d}\bigr)
 =\ket{c,a'}\otimes\ket{d,b'}.
\]
Using the matrix-element convention \eqref{eq:setup-indices}, ordinary
matrix multiplication gives
\[
 R^\dagger\ket{a',b'}\!\bra{c,d}R
 =
 \sum_{a,b,c',d'}
 R_{ab}^{cd}\,\overline{R_{c'd'}^{a'b'}}
 \ket{c',d'}\!\bra{a,b}.
\]
Folding the matrix units on the right-hand side therefore yields
\[
 \widehat R_{\rm H}
 \bigl(\ket{c,a'}\otimes\ket{d,b'}\bigr)
 =
 \sum_{a,b,c',d'}
 R_{ab}^{cd}\,\overline{R_{c'd'}^{a'b'}}
 \ket{a,c'}\otimes\ket{b,d'}.
\]
Thus $(c,a')$ and $(d,b')$ are the incoming folded indices, while
$(a,c')$ and $(b,d')$ are the outgoing folded indices.  In components,
\begin{equation}
 (\widehat R_{\rm H})^{(a,c'),(b,d')}_{(c,a'),(d,b')}
 =R_{ab}^{cd}\,\overline{R_{c'd'}^{a'b'}}.
 \label{eq:centralizer-folded-gate}
\end{equation}
Here $R_{ab}^{cd}=\bra{c,d}R\ket{a,b}$ comes from the $R$ branch, whereas
$\overline{R_{c'd'}^{a'b'}}=\bra{c',d'}R^\dagger\ket{a',b'}$ comes from the
$R^\dagger$ branch.  This is the origin of the complex conjugate in
Eq.~\eqref{eq:centralizer-folded-gate}.

Figure~\ref{fig:folded-gate-dictionary} depicts this calculation and the
resulting order of the folded indices.
The subscript ${\rm H}$ emphasizes that this is the folded Heisenberg gate,
not a second physical gate.

From this point onward we draw
$\widehat R_{\rm H}$ in braid notation: two blue strands cross, and a small
filled circle marks the crossing.  This symbol is deliberately different
from the rectangular gate used for the original quantum circuit.

\begin{figure}[t]
\centering
\begin{tikzpicture}[opent figure,x=0.88cm,y=0.82cm]
  \draw[opent/wire] (0.0,0.20) -- (0.0,1.90);
  \draw[opent/wire] (1.0,0.20) -- (1.0,1.90);
  \node[opent/gate causal,draw=opentgray,fill=black!5] at (0.5,1.02) {$R$};
  \node[opent/annotation] at (0.0,-0.14) {$a$};
  \node[opent/annotation] at (1.0,-0.14) {$b$};
  \node[opent/annotation] at (0.0,2.18) {$c$};
  \node[opent/annotation] at (1.0,2.18) {$d$};

  \draw[opent/wire] (0.0,2.85) -- (0.0,4.55);
  \draw[opent/wire] (1.0,2.85) -- (1.0,4.55);
  \node[opent/gate causal] at (0.5,3.72) {$R^\dagger$};
  \node[opent/annotation] at (0.0,2.57) {$a'$};
  \node[opent/annotation] at (1.0,2.57) {$b'$};
  \node[opent/annotation] at (0.0,4.88) {$c'$};
  \node[opent/annotation] at (1.0,4.88) {$d'$};

  \draw[opent/flow arrow] (1.75,2.38) -- (2.85,2.38);

  \begin{scope}[shift={(3.65,0)}]
    \draw[opent/folded wire] (0.0,0.55) -- (1.5,4.35);
    \draw[opent/folded wire] (1.5,0.55) -- (0.0,4.35);
    \node[opent/crossing] at (0.75,2.45) {};
    \node[opent/annotation] at (0.0,0.12) {$(c,a')$};
    \node[opent/annotation] at (1.5,0.12) {$(d,b')$};
    \node[opent/annotation] at (0.0,4.79) {$(a,c')$};
    \node[opent/annotation] at (1.5,4.79) {$(b,d')$};
  \end{scope}
\end{tikzpicture}
\caption{Folding the local Heisenberg action.  The left-hand side shows the
physical gates $R$ and $R^\dagger$.  The ends facing the operator insertion
become the incoming ends at the bottom of the folded crossing, while the four
outer ends become its outgoing ends at the top.  The filled circle denotes
the resulting folded gate $\widehat R_{\rm H}$.}
\label{fig:folded-gate-dictionary}
\end{figure}

Taking the adjoint of \eqref{eq:setup-braid} shows that $R^\dagger$ satisfies
the same braid relation.  The folded crossing therefore inherits the local
braid relation.
Figure~\ref{fig:folded-braid-relation} shows the graphical identity that we
shall use from this point on.
\begin{figure}[t]
\centering
\begin{tikzpicture}[opent figure,x=0.60cm,y=0.52cm,baseline=-0.5ex]
  \draw[opent/folded wire] (0,0) -- (1,1);
  \draw[opent/folded wire] (1,0) -- (0,1);
  \draw[opent/folded wire] (2,0) -- (2,1);
  \node[opent/crossing] at (0.5,0.5) {};
  \draw[opent/folded wire] (0,1) -- (0,2);
  \draw[opent/folded wire] (1,1) -- (2,2);
  \draw[opent/folded wire] (2,1) -- (1,2);
  \node[opent/crossing] at (1.5,1.5) {};
  \draw[opent/folded wire] (0,2) -- (1,3);
  \draw[opent/folded wire] (1,2) -- (0,3);
  \draw[opent/folded wire] (2,2) -- (2,3);
  \node[opent/crossing] at (0.5,2.5) {};

  \node at (3.25,1.5) {$=$};

  \begin{scope}[shift={(4.5,0)}]
    \draw[opent/folded wire] (0,0) -- (0,1);
    \draw[opent/folded wire] (1,0) -- (2,1);
    \draw[opent/folded wire] (2,0) -- (1,1);
    \node[opent/crossing] at (1.5,0.5) {};
    \draw[opent/folded wire] (0,1) -- (1,2);
    \draw[opent/folded wire] (1,1) -- (0,2);
    \draw[opent/folded wire] (2,1) -- (2,2);
    \node[opent/crossing] at (0.5,1.5) {};
    \draw[opent/folded wire] (0,2) -- (0,3);
    \draw[opent/folded wire] (1,2) -- (2,3);
    \draw[opent/folded wire] (2,2) -- (1,3);
    \node[opent/crossing] at (1.5,2.5) {};
  \end{scope}
\end{tikzpicture}
\caption{The braid relation for folded crossings.  The two diagrams contain
the same three strands and the same three pairwise crossings, in a different
order.}
\label{fig:folded-braid-relation}
\end{figure}

Ordinary unitarity takes a particularly simple form after folding.  Since
unitary conjugation leaves the two-site identity invariant, we have
\begin{equation}
 \widehat R_{\rm H}(\iota\otimes\iota)=\iota\otimes\iota.
 \label{eq:folded-unitarity}
\end{equation}
Graphically, two identity vectors pass unchanged through a folded crossing,
as shown in Fig.~\ref{fig:folded-unitarity}.

\begin{figure}[t]
\centering
\begin{tikzpicture}[opent figure,x=0.90cm,y=0.90cm]
  \draw[opent/folded wire] (0,0.55) -- (1.2,2.55);
  \draw[opent/folded wire] (1.2,0.55) -- (0,2.55);
  \node[opent/crossing] at (0.6,1.55) {};
  \node[opent/identity] at (0,0.55) {};
  \node[opent/identity] at (1.2,0.55) {};

  \node at (2.1,1.55) {$=$};

  \draw[opent/folded wire] (3.0,0.55) -- (3.0,2.55);
  \draw[opent/folded wire] (4.2,0.55) -- (4.2,2.55);
  \node[opent/identity] at (3.0,0.55) {};
  \node[opent/identity] at (4.2,0.55) {};
\end{tikzpicture}
\caption{The local unitarity rule in the folded picture.  A crossing acts
trivially on two vectorized identities, so the two crossed strands may be
replaced by straight strands.}
\label{fig:folded-unitarity}
\end{figure}

Let us apply this local rule to the finite Heisenberg network.  Folding turns
Fig.~\ref{fig:heisenberg-causal-network}(b) into a single network of blue
strands and folded crossings.  Initially every site except the insertion
carries the identity vector $\iota$; the folded insertion
$x_O$ is shown as an orange box labelled by $O$, as in
Fig.~\ref{fig:folding-dictionary}(a).  Applying the local rule of
Fig.~\ref{fig:folded-unitarity}, equivalently
\eqref{eq:folded-unitarity}, layer by layer removes every crossing that has
not been reached by the insertion.  The crossings capable of transporting
$O$ outwards remain and form the expanding causal network in
Fig.~\ref{fig:folded-causal-simplification}.

\begin{figure}[t]
\centering
\begin{tikzpicture}[opent figure,x=0.74cm,y=0.90cm]
  \foreach \x in {0,1,2,3,6,7,8,9}
    \draw[opent/folded wire] (\x,0.35) -- (\x,1.30);
  \draw[opent/folded wire] (4,0.35) -- (5,1.30);
  \draw[opent/folded wire] (5,0.35) -- (4,1.30);
  \node[opent/crossing] at (4.5,0.825) {};

  \foreach \x in {0,1,2,7,8,9}
    \draw[opent/folded wire] (\x,1.30) -- (\x,2.25);
  \draw[opent/folded wire] (3,1.30) -- (4,2.25);
  \draw[opent/folded wire] (4,1.30) -- (3,2.25);
  \node[opent/crossing] at (3.5,1.775) {};
  \draw[opent/folded wire] (5,1.30) -- (6,2.25);
  \draw[opent/folded wire] (6,1.30) -- (5,2.25);
  \node[opent/crossing] at (5.5,1.775) {};

  \foreach \x in {0,1,8,9}
    \draw[opent/folded wire] (\x,2.25) -- (\x,3.20);
  \draw[opent/folded wire] (2,2.25) -- (3,3.20);
  \draw[opent/folded wire] (3,2.25) -- (2,3.20);
  \node[opent/crossing] at (2.5,2.725) {};
  \draw[opent/folded wire] (4,2.25) -- (5,3.20);
  \draw[opent/folded wire] (5,2.25) -- (4,3.20);
  \node[opent/crossing] at (4.5,2.725) {};
  \draw[opent/folded wire] (6,2.25) -- (7,3.20);
  \draw[opent/folded wire] (7,2.25) -- (6,3.20);
  \node[opent/crossing] at (6.5,2.725) {};

  \foreach \x in {0,9}
    \draw[opent/folded wire] (\x,3.20) -- (\x,4.15);
  \draw[opent/folded wire] (1,3.20) -- (2,4.15);
  \draw[opent/folded wire] (2,3.20) -- (1,4.15);
  \node[opent/crossing] at (1.5,3.675) {};
  \draw[opent/folded wire] (3,3.20) -- (4,4.15);
  \draw[opent/folded wire] (4,3.20) -- (3,4.15);
  \node[opent/crossing] at (3.5,3.675) {};
  \draw[opent/folded wire] (5,3.20) -- (6,4.15);
  \draw[opent/folded wire] (6,3.20) -- (5,4.15);
  \node[opent/crossing] at (5.5,3.675) {};
  \draw[opent/folded wire] (7,3.20) -- (8,4.15);
  \draw[opent/folded wire] (8,3.20) -- (7,4.15);
  \node[opent/crossing] at (7.5,3.675) {};

  \foreach \x in {0,1,2,3,4,6,7,8,9}
    \node[opent/identity] at (\x,0.35) {};
  \node[opent/folded source] at (5,0.35) {$O$};
\end{tikzpicture}
\caption{Causal simplification of the folded network.  The ten input
strands make the cancelled exterior visible: empty-circle identity inputs
continue vertically wherever the local rule of
Fig.~\ref{fig:folded-unitarity} removes a crossing.  Only the expanding
network of crossings reached by the folded local operator remains.}
\label{fig:folded-causal-simplification}
\end{figure}

We now label the strands of the simplified network, after the unitarity
cancellation has been performed.  At time $t$, the active input strands start
at the sites $-2t,\ldots,2t-1$.  For $0\leq j,k<2t$, we denote the strand
starting at site $-2t+j$ by $A_j$ and the strand starting at site $k$ by
$B_k$.  The label follows the strand through all subsequent crossings, and
the folded local operator is inserted on $B_0$.  For $t=2$, the active inputs
are therefore $A_0,A_1,A_2,A_3,B_0,B_1,B_2,B_3$, and the surviving crossing
pairs are $(B_k,A_j)$ with $0\leq k\leq j<4$.

We now isolate a rectangular braid, which is a central object in all
calculations below.  We first show the extraction for $t=2$ and the isolated
rectangle for $t=3$.  Lemma~\ref{lem:yb-rectangle} then states the general
result for arbitrary $t$.

For $t=2$, the rectangular piece is obtained as follows.
Split both lists after two entries and set
$A=(A_2,A_3)$ and $B=(B_0,B_1)$.  Each strand in $B$ crosses each strand in
$A$ exactly once, so these four crossings form a complete $2\times2$
block.  Figure~\ref{fig:rectangular-crossing-extraction} draws this rectangle
separately from the two remaining triangular pieces.

\begin{figure}[t]
\centering
\begin{tikzpicture}[opent figure,x=0.88cm,y=0.78cm]
  \begin{scope}

    \foreach \x in {0,1,2,5,6,7}
      \draw[opent/folded wire] (\x,0.35) -- (\x,1.25);
    \draw[opent/folded wire] (3,0.35) -- (4,1.25);
    \draw[opent/folded wire] (4,0.35) -- (3,1.25);
    \node[opent/crossing] at (3.5,0.80) {};

    \foreach \x in {0,1,6,7}
      \draw[opent/folded wire] (\x,1.25) -- (\x,2.15);
    \draw[opent/folded wire] (2,1.25) -- (3,2.15);
    \draw[opent/folded wire] (3,1.25) -- (2,2.15);
    \node[opent/crossing] at (2.5,1.70) {};
    \draw[opent/folded wire] (4,1.25) -- (5,2.15);
    \draw[opent/folded wire] (5,1.25) -- (4,2.15);
    \node[opent/crossing] at (4.5,1.70) {};

    \foreach \x in {0,1,2,5,6,7}
      \draw[opent/folded wire] (\x,2.15) -- (\x,3.05);
    \draw[opent/folded wire] (3,2.15) -- (4,3.05);
    \draw[opent/folded wire] (4,2.15) -- (3,3.05);
    \node[opent/crossing] at (3.5,2.60) {};

    \foreach \x in {0,3,4,7}
      \draw[opent/folded wire] (\x,3.05) -- (\x,3.95);
    \draw[opent/folded wire] (1,3.05) -- (2,3.95);
    \draw[opent/folded wire] (2,3.05) -- (1,3.95);
    \node[opent/crossing] at (1.5,3.50) {};
    \draw[opent/folded wire] (5,3.05) -- (6,3.95);
    \draw[opent/folded wire] (6,3.05) -- (5,3.95);
    \node[opent/crossing] at (5.5,3.50) {};

    \draw[opent/folded wire] (0,3.95) -- (1,4.85);
    \draw[opent/folded wire] (1,3.95) -- (0,4.85);
    \node[opent/crossing] at (0.5,4.40) {};
    \draw[opent/folded wire] (2,3.95) -- (3,4.85);
    \draw[opent/folded wire] (3,3.95) -- (2,4.85);
    \node[opent/crossing] at (2.5,4.40) {};
    \draw[opent/folded wire] (4,3.95) -- (5,4.85);
    \draw[opent/folded wire] (5,3.95) -- (4,4.85);
    \node[opent/crossing] at (4.5,4.40) {};
    \draw[opent/folded wire] (6,3.95) -- (7,4.85);
    \draw[opent/folded wire] (7,3.95) -- (6,4.85);
    \node[opent/crossing] at (6.5,4.40) {};

    \begin{scope}[on background layer]
      \draw[opent/region frame] (1.75,0.52) rectangle (5.25,2.92);
    \end{scope}
    \node[opent/annotation,fill=white,inner sep=1pt,text=opentorange]
      at (3.5,2.92) {rectangular core};
    \node[opent/annotation] at (1.5,5.15) {$\ell_2$};
    \node[opent/annotation] at (5.5,5.15) {$r_2$};

    \foreach \x in {0,1,2,3,5,6,7}
      \node[opent/identity] at (\x,0.35) {};
    \node[opent/folded source] at (4,0.35) {$O$};
    \foreach \x/\lab in
      {0/A_0,1/A_1,2/A_2,3/A_3,4/B_0,5/B_1,6/B_2,7/B_3}
      \node[opent/annotation] at (\x,-0.12) {$\lab$};
  \end{scope}
\end{tikzpicture}
\caption{The factorized crossing network for $t=2$.  The four crossings
between $A=(A_2,A_3)$ and $B=(B_0,B_1)$ form the complete $2\times2$
folded rectangular core.  The other six crossings form the two
triangular networks $\ell_2$ and $r_2$.}
\label{fig:rectangular-crossing-extraction}
\end{figure}

Figure~\ref{fig:rectangular-crossing-extraction} retains the original causal
labels to show which strands enter the rectangle.  Once the rectangle has
been isolated, we relabel its two length-$t$ blocks internally as
$A=(A_0,\ldots,A_{t-1})$ and $B=(B_0,\ldots,B_{t-1})$.

We next draw the rectangular piece in isolation for $t=3$.  The two source
blocks in Fig.~\ref{fig:isolated-three-by-three-block} are
\begin{equation}
 A=(A_0,A_1,A_2),\qquad B=(B_0,B_1,B_2).
 \label{eq:three-by-three-source-blocks}
\end{equation}
Each $B$ strand crosses each $A$ strand exactly once, giving nine crossings
in total.

\begin{figure}[t]
\centering
\begin{tikzpicture}[opent figure,x=0.95cm,y=0.82cm]
  \foreach \x in {0,1,4,5}
    \draw[opent/folded wire] (\x,0.40) -- (\x,1.30);
  \draw[opent/folded wire] (2,0.40) -- (3,1.30);
  \draw[opent/folded wire] (3,0.40) -- (2,1.30);
  \node[opent/crossing] at (2.5,0.85) {};

  \foreach \x in {0,5}
    \draw[opent/folded wire] (\x,1.30) -- (\x,2.20);
  \draw[opent/folded wire] (1,1.30) -- (2,2.20);
  \draw[opent/folded wire] (2,1.30) -- (1,2.20);
  \node[opent/crossing] at (1.5,1.75) {};
  \draw[opent/folded wire] (3,1.30) -- (4,2.20);
  \draw[opent/folded wire] (4,1.30) -- (3,2.20);
  \node[opent/crossing] at (3.5,1.75) {};

  \draw[opent/folded wire] (0,2.20) -- (1,3.10);
  \draw[opent/folded wire] (1,2.20) -- (0,3.10);
  \node[opent/crossing] at (0.5,2.65) {};
  \draw[opent/folded wire] (2,2.20) -- (3,3.10);
  \draw[opent/folded wire] (3,2.20) -- (2,3.10);
  \node[opent/crossing] at (2.5,2.65) {};
  \draw[opent/folded wire] (4,2.20) -- (5,3.10);
  \draw[opent/folded wire] (5,2.20) -- (4,3.10);
  \node[opent/crossing] at (4.5,2.65) {};

  \foreach \x in {0,5}
    \draw[opent/folded wire] (\x,3.10) -- (\x,4.00);
  \draw[opent/folded wire] (1,3.10) -- (2,4.00);
  \draw[opent/folded wire] (2,3.10) -- (1,4.00);
  \node[opent/crossing] at (1.5,3.55) {};
  \draw[opent/folded wire] (3,3.10) -- (4,4.00);
  \draw[opent/folded wire] (4,3.10) -- (3,4.00);
  \node[opent/crossing] at (3.5,3.55) {};

  \foreach \x in {0,1,4,5}
    \draw[opent/folded wire] (\x,4.00) -- (\x,4.90);
  \draw[opent/folded wire] (2,4.00) -- (3,4.90);
  \draw[opent/folded wire] (3,4.00) -- (2,4.90);
  \node[opent/crossing] at (2.5,4.45) {};

  \foreach \x in {0,...,5}
    \draw[opent/folded wire] (\x,4.90) -- (\x,5.65);

  \begin{scope}[on background layer]
    \draw[opent/region frame] (-0.30,0.58) rectangle (5.30,5.05);
  \end{scope}
  \node[opent/annotation,fill=white,inner sep=1pt,text=opentorange]
    at (2.5,5.05) {rectangular core};

  \foreach \x in {0,1,2,4,5}
    \node[opent/identity] at (\x,0.40) {};
  \node[opent/folded source] at (3,0.40) {$O$};
  \foreach \x/\lab in
    {0/A_0,1/A_1,2/A_2,3/B_0,4/B_1,5/B_2}
    \node[opent/annotation] at (\x,-0.10) {$\lab$};

  \foreach \x/\lab in
    {0/B_0,1/B_1,2/B_2,3/A_0,4/A_1,5/A_2}
    \node[opent/annotation] at (\x,6.02) {$\lab$};
\end{tikzpicture}
\caption{The isolated $3\times3$ folded rectangular core.  The
input order is $A_0,A_1,A_2,B_0,B_1,B_2$, with the folded local operator on
$B_0$.  Every $B$ strand crosses every $A$ strand once, and the common output
height displays the resulting order $B_0,B_1,B_2,A_0,A_1,A_2$.}
\label{fig:isolated-three-by-three-block}
\end{figure}

The two examples motivate the general definition.  Let $W_t$ denote the
ordinary unitary circuit associated with the complete $t\times t$ rectangular
braid.  Its input consists of two ordered blocks $A\mid B$, each containing
$t$ strands.  Every strand in $B$ crosses every strand in $A$ exactly once,
with $R$ acting at all $t^2$ crossings and with the crossing order inherited
from the causal network.  The output order is
$B_{\rm out}\mid A_{\rm out}$.

We now consider the Heisenberg evolution of a local operator under the
rectangular circuit $W_t$ alone.  We place the operator on $B_0$, take the
identity on every other input strand, and define the source operator and its
rectangular evolution by
\begin{align}
 S_t(O)&:=\id_A^{\otimes t}\otimes O_{B_0}\otimes
 \id_{B_1\cdots B_{t-1}},\nonumber\\
 \Phi_t(O)&:=W_t^\dagger S_t(O)W_t.
 \label{eq:yb-rectangle-operator}
\end{align}
These definitions are made directly in the ordinary operator circuit.  We
also need the same rectangular output in the folded picture.  Let
$\mathfrak f_{2t}=\mathfrak f^{\otimes 2t}$ be the sitewise folding of a
$2t$-site operator.  Since $\mathfrak f$ is a linear isomorphism, the formula
\begin{equation}
 \widehat\Phi_t(x_O)
 :=\mathfrak f_{2t}\bigl(\Phi_t(O)\bigr),
 \qquad x_O=\mathfrak f(O),
 \label{eq:folded-rectangle-output}
\end{equation}
defines a linear map
$\widehat\Phi_t:\mathcal W\to\mathcal W^{\otimes 2t}$ for every folded
one-site vector.  Equivalently, we can construct
$\widehat\Phi_t(x)$ directly in the folded picture.  We start from
\begin{equation*}
 \iota^{\otimes t}\otimes x\otimes\iota^{\otimes(t-1)}
\end{equation*}
in the input order $A\mid B$ and apply the complete $t\times t$ rectangular
network of folded crossings $\widehat R_{\rm H}$.  The output order is
$B_{\rm out}\mid A_{\rm out}$.  The equivalence of the two constructions
follows by applying the defining relation for $\widehat R_{\rm H}$ at every
crossing.  Thus $\Phi_t(O)$ and $\widehat\Phi_t(x_O)$ are the ordinary and
folded descriptions of one object.

The next lemma uses this construction for arbitrary $t$.  It shows that the
evolution under $W_t$ alone is sufficient to compute the operator-Schmidt rank
of the full Heisenberg operator and, after Hilbert--Schmidt normalization, its
complete nonzero operator-Schmidt spectrum.

\begin{lemma}[Rectangular-core identity]
\label{lem:yb-rectangle}
Let $R$ be any unitary two-site gate, and let $W_t$, $\Phi_t$, and
$\widehat\Phi_t$ be defined above.  For every one-site operator $O$ initially
at site zero and every $t\geq1$,
\begin{equation}
 \OSR_c(O(t))
 =\OSR_{B_{\rm out}\mid A_{\rm out}}(\Phi_t(O))
 =\SR_{B_{\rm out}\mid A_{\rm out}}(\widehat\Phi_t(x_O)).
 \label{eq:yb-rectangle-rank}
\end{equation}
More strongly, after Hilbert--Schmidt normalization, the complete nonzero
operator-Schmidt probability spectrum of the full fixed-cut Heisenberg
operator agrees with the Schmidt probability spectrum of
$\widehat\Phi_t(x_O)$, equivalently with the operator-Schmidt probability
spectrum of $\Phi_t(O)$.
\end{lemma}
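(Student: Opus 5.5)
The plan is to show that, in the folded picture, the full causal network for $O(t)$ factorizes as
\[
\mathfrak f(O(t)) = (\ell_t\otimes r_t)\,\widehat\Phi_t(x_O),
\]
up to trivial identity strands far from the light cone. Here $\ell_t$ and $r_t$ are the two triangular networks of folded crossings, as in Fig.~\ref{fig:rectangular-crossing-extraction}. Moreover, $\ell_t$ acts only on strands lying left of the cut $c$, and $r_t$ only on strands right of it. Since the vectorized identity factors trivially across $c$ as well, the Schmidt rank and the Schmidt spectrum of $\mathfrak f(O(t))$ are then those of $(\ell_t\otimes r_t)\widehat\Phi_t(x_O)$. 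By folding, these coincide with the operator-Schmidt rank and spectrum of $O(t)$.

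The first step is combinatorial. After the unitarity cancellation (Fig.~\ref{fig:folded-causal-simplification}), the surviving crossings are exactly the pairs $(B_k,A_j)$ with $0\le k\le j<2t$. I would list them with their heights and split them into three sets:
\begin{itemize}
\item $B_k$ with $k<t$ against $A_j$ with $j\ge t$: the rectangle;
\item $k\le j<t$: the left triangle $\ell_t$, among strands $A_0,\dots,A_{t-1},B_0,\dots,B_{t-1}$ after these have been exchanged;
\item $t\le k\le j$: the right triangle $r_t$.
\end{itemize}
I must check that every rectangle crossing lies causally below every triangle crossing that involves the same strands. Then the network is the rectangle followed by the triangles; crossings commute when disjoint, so only this causal order matters. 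Tracking positions, after the rectangle the strands $B_0,\dots,B_{t-1}$ sit at positions $-t,\dots,-1$ together with $A_0,\dots,A_{t-1}$ further left, i.e.\ entirely left of $c$, and symmetrically on the right. This is the step I expect to be the main obstacle: the height bookkeeping for general $t$ must be done cleanly. I would do it by induction on $t$, comparing $t$ with $t+1$ by adding one light-cone layer. Alternatively, one can give explicit formulas: strand $A_j$ moves right by one at each of its crossings, $B_k$ moves left, and the crossing $(B_k,A_j)$ occurs at half-step $j+k+1$ or similar. From these, every crossing of the triangles happens strictly after every rectangle crossing sharing a strand.

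The crossings removed by unitarity feed only identities $\iota$ into the inputs $A_0,\dots,A_{t-1}$ of the rectangle and into $B_1,\dots,B_{t-1}$. This matches the source $\iota^{\otimes t}\otimes x_O\otimes\iota^{\otimes(t-1)}$ of $\widehat\Phi_t$, and the order in which the rectangle crossings act is inherited from the causal network. Hence the rectangle output is exactly $\widehat\Phi_t(x_O)$ in order $B_{\rm out}\mid A_{\rm out}$.

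The final step is to note that $\ell_t$ and $r_t$ are invertible: each crossing $\widehat R_{\rm H}$ is the folded map $X\mapsto R^\dagger XR$, which is invertible, and indeed unitary in the Hilbert–Schmidt inner product. A local unitary $\ell_t\otimes r_t$ across the cut preserves the Schmidt rank and the full Schmidt spectrum. The identity strands outside contribute the tensor factor $\iota^{\otimes}$ on each side, which does not change the spectrum after normalization. The equality of $\OSR(\Phi_t(O))$ with $\SR(\widehat\Phi_t(x_O))$ is immediate from the sitewise folding isomorphism, exactly as for the full operator.
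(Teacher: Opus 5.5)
Your proposal is correct and is essentially the paper's own argument. The paper likewise splits the causal crossings $\{(B_k,A_j):0\le k\le j<2t\}$ into a left triangle, the $t\times t$ rectangle and a right triangle, and regroups them by commuting disjoint gates into $q_t=(\ell_t\otimes r_t)W_t^\dagger$. It then concludes because a conjugation local to the cut, together with the outer identity blocks, leaves the normalized operator-Schmidt spectrum unchanged. The only difference is that you work directly with folded vectors instead of folding at the end.

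One detail in your bookkeeping needs fixing. With these labels the crossing $(B_k,A_j)$ occurs at half-step $2t-j+k$, since the first crossing is $(B_0,A_{2t-1})$; your guess $j+k+1$ would reverse the order along $B_k$. With the correct formula, along $B_k$ ($k<t$) the rectangle crossings occupy half-steps $k+1,\dots,t+k$ and the triangle crossings occupy $t+k+1,\dots,2t$. The same ordering holds along $A_j$ ($j\ge t$), which is exactly what you need.
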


\begin{proof}
Ordinary unitarity reduces the infinite circuit to the finite causal window
$[-2t,2t-1]$.  With the strand labels introduced above, the $R^\dagger$
branch $q_t$ contains exactly the crossings
\begin{equation}
 \{(B_k,A_j):0\leq k\leq j<2t\}.
 \label{eq:yb-causal-crossings}
\end{equation}
Splitting both source lists after $t$ entries separates these crossings into
a left triangle, the complete $t\times t$ rectangle, and a right triangle.
The rectangle is formed by $A_t,\ldots,A_{2t-1}$ and
$B_0,\ldots,B_{t-1}$, which are the two blocks on which $W_t$ acts.  Gates
acting on disjoint strand pairs commute, so the three pieces can be grouped
without changing the crossing order along any strand.  Hence
\begin{equation}
 q_t=(\ell_t\otimes r_t)W_t^\dagger,
 \label{eq:yb-factorization}
\end{equation}
where $W_t^\dagger$ acts on the two rectangular blocks, while $\ell_t$ and
$r_t$ act on opposite sides of the fixed cut.  With the identity factors on
the two outer blocks understood, it follows that
\begin{equation}
 q_tS_t(O)q_t^\dagger
 =(\ell_t\otimes r_t)\Phi_t(O)(\ell_t^\dagger\otimes r_t^\dagger).
 \label{eq:yb-local-factorization}
\end{equation}
The conjugation on the right-hand side is local with respect to the cut and
therefore preserves all operator-Schmidt singular values.  After removing
this conjugation, the two outer blocks contribute only identity factors.
These factors do not change the Schmidt rank and multiply all unnormalized
singular values by the same Hilbert--Schmidt norm.  This proves the first
equality in Eq.~\eqref{eq:yb-rectangle-rank} and the equality of the
normalized nonzero operator-Schmidt spectra.  Finally, sitewise folding maps
every operator-Schmidt decomposition of $\Phi_t(O)$ to an ordinary Schmidt
decomposition of $\widehat\Phi_t(x_O)$ with the same coefficients.  This
proves the second equality in Eq.~\eqref{eq:yb-rectangle-rank} and completes
the proof.

Once the rectangle has been isolated, we relabel its $A$ strands as
$A_0,\ldots,A_{t-1}$.  This relabeling plays no role in the argument; it only
fixes the labels used in the isolated-rectangle diagrams and in all later
proofs involving $W_t$.
\end{proof}

The geometric simplifications above use the concrete convention that the
local operator is placed at site zero and the cut lies immediately to its
left.  Placing the operator on the odd sublattice gives the spatially
reflected computation; for a gate without reflection symmetry, this also
reverses its leg ordering, \(R\mapsto R_{21}:=\Swap R\Swap\), without
changing the geometric argument.  If the cut is placed at a different fixed
distance from the operator, the same unitarity cancellations and braid
rearrangements isolate an analogous, generally unequal-sided rectangular
core, with only the adjoining triangular pieces changed.

\section{Braid invariance of the output space}
\label{sec:centralizer-output-space}

In this section we establish a general restriction on the output of the
rectangular braid, which will be used later in
Section~\ref{sec:inv-polynomial-proof}.  The key point is that the rectangular
braid does not explore the full output space.  We may braid the outgoing $A$
strands further without changing $\widehat\Phi_t(x)$.  The same holds for the
outgoing strands $B_1,\ldots,B_{t-1}$.  Only braidings involving $B_0$ are
excluded, because this is the distinguished strand which carries the local
insertion.

These invariances confine the two output blocks to smaller subspaces.  Once
we know their dimensions, the Schmidt rank of $\widehat\Phi_t(x)$ cannot
exceed the dimension of either side.  The rectangular-core identity of
Lemma~\ref{lem:yb-rectangle} then converts this statement into an upper bound
on the exact OSR of the physical Heisenberg operator $O(t)$.  Thus it is
sufficient to control the intermediate rectangular output.

For a more transparent view of the mechanism, we provide a graphical
proof in Fig.~\ref{fig:centralizer-pull-through}.  We attach an additional crossing to
two outgoing strands and use the braid relation to pull it backwards through
the rectangle.  At the input, the crossing acts on two vectorized identities
and disappears by the folded unitarity rule
\eqref{eq:folded-unitarity}.

We now formalize this graphical proof by expressing the restriction in terms
of the centralizer of the braid-image algebra.  We use the folded space $\mathcal W$, crossing
$\widehat R_{\rm H}$, and identity vector $\iota$ introduced in
Subsection~\ref{subsec:folding-network}, together with the folded rectangular
output $\widehat\Phi_t$ defined in
Eq.~\eqref{eq:folded-rectangle-output}.  Define the algebra and its
centralizer by
\begin{align}
 \mathcal A_t(R)&=\operatorname{alg}\{R_1,\ldots,R_{t-1}\},
 \label{eq:centralizer-braid-algebra}\\
 \mathcal C_t(R)&=\mathcal A_t(R)'
 =\{X\in\End(V^{\otimes t}):[X,R_i]=0,\ 1\leq i<t\},
 \qquad c_t(R)=\dim\mathcal C_t(R).
 \label{eq:centralizer-definition}
\end{align}
Equation~\eqref{eq:centralizer-definition} defines a subspace of the ordinary
operator space.  To distinguish it from its folded counterpart, let
$\mathfrak f_t=\mathfrak f^{\otimes t}$ and define
\begin{equation}
 \widehat{\mathcal C}_t(R)
 :=\mathfrak f_t\bigl(\mathcal C_t(R)\bigr)
 \subset\mathcal W^{\otimes t},
 \qquad
 \dim\widehat{\mathcal C}_t(R)=c_t(R).
 \label{eq:centralizer-folded-space}
\end{equation}
The vectors in $\widehat{\mathcal C}_t(R)$ are precisely the folded tensors
invariant under every local folded crossing.  Indeed, the defining relation
for $\widehat R_{\rm H}$ maps this invariance to
$R_i^\dagger XR_i=X$, equivalently $[X,R_i]=0$.  For $t=0$, set
$\mathcal C_0(R)=\widehat{\mathcal C}_0(R)=\mathbb C$ and $c_0(R)=1$.

\begin{theorem}[Braid-invariant output space]
\label{thm:central-invariance}
For every Yang--Baxter gate, every $t\geq1$, and every $x\in\mathcal W$, the
folded rectangular output obeys
\begin{equation}
 \widehat\Phi_t(x)\in
 \bigl(\mathcal W\otimes\widehat{\mathcal C}_{t-1}(R)\bigr)
 \otimes\widehat{\mathcal C}_t(R)
 \label{eq:centralizer-output-space}
\end{equation}
across $B_{\rm out}\mid A_{\rm out}$.  Thus it is invariant under every
local folded crossing among the $A$ outputs and among the $B$ outputs other
than the strand carrying $x$.
\end{theorem}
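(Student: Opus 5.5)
The plan is to formalize the pull-through argument sketched in Fig.~\ref{fig:centralizer-pull-through}, working entirely in the folded picture. Since $\widehat{\mathcal C}_t(R)$ is exactly the subspace of $\mathcal W^{\otimes t}$ fixed by every local folded crossing $\widehat R_{{\rm H},i}$, $1\le i<t$, it suffices to prove invariance of $\widehat\Phi_t(x)$ under each such crossing on adjacent output strands. The relevant pairs are: adjacent $A_{\rm out}$ strands $(A_j,A_{j+1})$, and adjacent $B_{\rm out}$ strands $(B_k,B_{k+1})$ with $k\ge1$. Invariance under all of these places the $A$ block in $\widehat{\mathcal C}_t(R)$ and the block $B_1\cdots B_{t-1}$ in $\widehat{\mathcal C}_{t-1}(R)$, with $B_0$ unconstrained in $\mathcal W$. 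By linearity in the tensor factors, this gives \eqref{eq:centralizer-output-space}, because a vector in $\mathcal W^{\otimes 2t}$ fixed by a family of operators acting on disjoint tensor blocks lies in the tensor product of the fixed subspaces.

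The key step is a braid identity for the rectangular network. Write $\widehat W_t$ for the folded $t\times t$ rectangle, mapping input order $A\mid B$ to output $B_{\rm out}\mid A_{\rm out}$. Two facts are needed.

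\emph{First, the $A$ side.} A crossing $\sigma$ between adjacent output strands $A_j,A_{j+1}$ satisfies $\sigma\,\widehat W_t=\widehat W_t\,\sigma'$, where $\sigma'$ is the crossing between input strands $A_j,A_{j+1}$. This holds because two adjacent $A$ strands pass together through each $B$ strand. Each $B$ strand crosses $A_{j+1}$ and $A_j$ consecutively, or the crossings can be made consecutive by commuting gates on disjoint pairs, so the third Reidemeister move (the folded braid relation, Fig.~\ref{fig:folded-braid-relation}) slides $\sigma$ past that pair of crossings. Iterating over the $t$ strands of $B$, one column at a time, moves $\sigma$ to the input.

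\emph{Second, the $B$ side.} Symmetrically, a crossing between output strands $B_k,B_{k+1}$ is pulled back to a crossing between input strands $B_k,B_{k+1}$.

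For the pull-back to work, the inverse direction of the braid relation, or the relation applied with crossings in the stated order, must match the orientation of every crossing in $\widehat W_t$. All crossings are the same $\widehat R_{\rm H}$, with no inverses, so the relation $R_{12}R_{23}R_{12}=R_{23}R_{12}R_{23}$ applies directly after labeling positions. I would carry this out by induction on the number of strands crossed.

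At the input, the crossing acts on $\iota\otimes\iota$ in every case except one: when $k=0$ on the $B$ side, which is excluded. For the $A$ pairs and for the $B$ pairs with $k\ge1$, both input strands carry $\iota$. By \eqref{eq:folded-unitarity} the crossing then acts trivially, so $\sigma\widehat\Phi_t(x)=\widehat W_t\sigma'(\iota^{\otimes t}\otimes x\otimes\iota^{\otimes(t-1)})=\widehat\Phi_t(x)$.

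I expect the main obstacle to be bookkeeping rather than ideas. The difficulty is stating precisely the braid-word form of $\widehat W_t$, namely the crossing order inherited from the causal network, so that the positional indices of $\sigma$ before and after each slide are correct. The cleanest route is to write $\widehat W_t$ as the product of $t$ "column" words, each moving one $B$ strand leftward across the whole $A$ block, and to prove the one-column pull-through lemma once for adjacent $A$ strands. For adjacent $B$ strands, I would use the equivalent row decomposition, in which each $A$ strand moves across the whole $B$ block. The equivalence of these two decompositions follows from commuting gates on disjoint strands.
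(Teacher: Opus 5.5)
Your proposal is correct and follows essentially the same route as the paper: pull the extra folded crossing back through the rectangle with the folded braid relation, remove it at the input using folded unitarity on $\iota\otimes\iota$ (which fails only for the pair involving $B_0$), and intersect the invariances on the disjoint output blocks. Your column and row decompositions, together with the explicit check that the braid relation applies without inverses, just make precise the bookkeeping that the paper conveys through Fig.~\ref{fig:centralizer-pull-through}.
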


\begin{proof}
The proof pulls an additional folded crossing backwards through the
rectangle.  For compactness, denote a crossing acting on the indicated pair
of adjacent output legs by
\begin{equation}
 \widehat R_{{\rm H},A_i}
 :=(\widehat R_{\rm H})_{A_iA_{i+1}},
 \qquad
 \widehat R_{{\rm H},B_j}
 :=(\widehat R_{\rm H})_{B_jB_{j+1}}.
 \label{eq:centralizer-output-crossing-notation}
\end{equation}
Consider first two adjacent $A$ outputs.  Repeated use of the
braid relation moves their crossing through the complete rectangular network
until it acts on the corresponding $A$ inputs.  Both inputs carry $\iota$,
so Eq.~\eqref{eq:folded-unitarity} removes the crossing.  Therefore
\begin{equation}
 \widehat R_{{\rm H},A_i}\widehat\Phi_t(x)
 =\widehat\Phi_t(x),
 \qquad 0\leq i<t-1.
 \label{eq:centralizer-naturality-A}
\end{equation}
The same argument applies to two adjacent $B$ outputs provided that neither
strand descends from the distinguished input $B_0$.  Their inputs also both
carry $\iota$, and hence
\begin{equation}
 \widehat R_{{\rm H},B_j}\widehat\Phi_t(x)
 =\widehat\Phi_t(x),
 \qquad 1\leq j<t-1.
 \label{eq:centralizer-naturality-B}
\end{equation}
There is no corresponding identity involving $B_0$, because that input
carries the arbitrary vector $x$.  Figure~\ref{fig:centralizer-pull-through}
shows the first pull-through operation for $t=3$.

\begin{figure}[t]
\centering
\begin{tikzpicture}[opent figure,x=0.58cm,y=0.75cm]
  \def\drawsmallwthree#1{%
    \begin{scope}[shift={(0,#1)}]
      \foreach \x in {0,1,4,5}
        \draw[opent/folded wire] (\x,0) -- (\x,0.8);
      \draw[opent/folded wire] (2,0) -- (3,0.8);
      \draw[opent/folded wire] (3,0) -- (2,0.8);
      \node[opent/crossing] at (2.5,0.4) {};

      \foreach \x in {0,5}
        \draw[opent/folded wire] (\x,0.8) -- (\x,1.6);
      \draw[opent/folded wire] (1,0.8) -- (2,1.6);
      \draw[opent/folded wire] (2,0.8) -- (1,1.6);
      \node[opent/crossing] at (1.5,1.2) {};
      \draw[opent/folded wire] (3,0.8) -- (4,1.6);
      \draw[opent/folded wire] (4,0.8) -- (3,1.6);
      \node[opent/crossing] at (3.5,1.2) {};

      \draw[opent/folded wire] (0,1.6) -- (1,2.4);
      \draw[opent/folded wire] (1,1.6) -- (0,2.4);
      \node[opent/crossing] at (0.5,2.0) {};
      \draw[opent/folded wire] (2,1.6) -- (3,2.4);
      \draw[opent/folded wire] (3,1.6) -- (2,2.4);
      \node[opent/crossing] at (2.5,2.0) {};
      \draw[opent/folded wire] (4,1.6) -- (5,2.4);
      \draw[opent/folded wire] (5,1.6) -- (4,2.4);
      \node[opent/crossing] at (4.5,2.0) {};

      \foreach \x in {0,5}
        \draw[opent/folded wire] (\x,2.4) -- (\x,3.2);
      \draw[opent/folded wire] (1,2.4) -- (2,3.2);
      \draw[opent/folded wire] (2,2.4) -- (1,3.2);
      \node[opent/crossing] at (1.5,2.8) {};
      \draw[opent/folded wire] (3,2.4) -- (4,3.2);
      \draw[opent/folded wire] (4,2.4) -- (3,3.2);
      \node[opent/crossing] at (3.5,2.8) {};

      \foreach \x in {0,1,4,5}
        \draw[opent/folded wire] (\x,3.2) -- (\x,4.0);
      \draw[opent/folded wire] (2,3.2) -- (3,4.0);
      \draw[opent/folded wire] (3,3.2) -- (2,4.0);
      \node[opent/crossing] at (2.5,3.6) {};
    \end{scope}%
  }

  \begin{scope}[shift={(0,0)}]
    \node[opent/panel label] at (2.5,6.05)
      {(a) Cross the $A$ outputs};
    \drawsmallwthree{0.55}
    \foreach \x in {0,1,2,5}
      \draw[opent/folded wire] (\x,4.55) -- (\x,5.35);
    \draw[opent/folded wire] (3,4.55) -- (4,5.35);
    \draw[opent/folded wire] (4,4.55) -- (3,5.35);
    \node[opent/crossing] at (3.5,4.95) {};
    \begin{scope}[on background layer]
      \draw[opent/region frame] (-0.25,0.72) rectangle (5.25,4.42);
    \end{scope}
    \foreach \x in {0,1,2,4,5}
      \node[opent/identity] at (\x,0.55) {};
    \node[opent/folded source] at (3,0.55) {$O$};
    \foreach \x/\lab in
      {0/A_0,1/A_1,2/A_2,3/B_0,4/B_1,5/B_2}
      \node[opent/annotation] at (\x,-0.15) {$\lab$};
  \end{scope}

  \draw[opent/flow arrow] (5.75,2.90) -- (7.85,2.90)
    node[midway,above,font=\scriptsize,align=center] {braid\\relation};

  \begin{scope}[shift={(8.6,0)}]
    \node[opent/panel label] at (2.5,6.05)
      {(b) Pull the crossing below};
    \foreach \x in {2,3,4,5}
      \draw[opent/folded wire] (\x,0.35) -- (\x,1.15);
    \draw[opent/folded wire] (0,0.35) -- (1,1.15);
    \draw[opent/folded wire] (1,0.35) -- (0,1.15);
    \node[opent/crossing] at (0.5,0.75) {};
    \drawsmallwthree{1.15}
    \foreach \x in {0,...,5}
      \draw[opent/folded wire] (\x,5.15) -- (\x,5.35);
    \begin{scope}[on background layer]
      \draw[opent/region frame] (-0.25,1.32) rectangle (5.25,5.02);
    \end{scope}
    \foreach \x in {0,1,2,4,5}
      \node[opent/identity] at (\x,0.35) {};
    \node[opent/folded source] at (3,0.35) {$O$};
    \foreach \x/\lab in
      {0/A_0,1/A_1,2/A_2,3/B_0,4/B_1,5/B_2}
      \node[opent/annotation] at (\x,-0.35) {$\lab$};
  \end{scope}

  \draw[opent/flow arrow] (14.35,2.90) -- (16.45,2.90)
    node[midway,above,font=\scriptsize] {unitarity};

  \begin{scope}[shift={(17.2,0)}]
    \node[opent/panel label] at (2.5,6.05)
      {(c) Remove the crossing};
    \foreach \x in {0,...,5}
      \draw[opent/folded wire] (\x,0.35) -- (\x,1.15);
    \drawsmallwthree{1.15}
    \foreach \x in {0,...,5}
      \draw[opent/folded wire] (\x,5.15) -- (\x,5.35);
    \begin{scope}[on background layer]
      \draw[opent/region frame] (-0.25,1.32) rectangle (5.25,5.02);
    \end{scope}
    \foreach \x in {0,1,2,4,5}
      \node[opent/identity] at (\x,0.35) {};
    \node[opent/folded source] at (3,0.35) {$O$};
    \foreach \x/\lab in
      {0/A_0,1/A_1,2/A_2,3/B_0,4/B_1,5/B_2}
      \node[opent/annotation] at (\x,-0.35) {$\lab$};
  \end{scope}
\end{tikzpicture}
\caption{Graphical invariance argument for the $A$ strands leaving the
rectangular block.  Panel (a)
acts on the output with an additional folded crossing between $A_0$ and $A_1$.  Repeated
use of the braid relation moves it through the rectangular crossing block to
the corresponding inputs in panel (b).  Those inputs are both vectorized
identities, so folded unitarity removes
the crossing and gives panel (c), which is the original intermediate tensor.
The same argument applies to every adjacent pair of $A$ strands.  It also applies
to adjacent $B$ strands whose inputs are identities, but not to $B_0$, which
carries the local operator.}
\label{fig:centralizer-pull-through}
\end{figure}

Equation~\eqref{eq:centralizer-naturality-A} places the complete $A$ output
block in $\widehat{\mathcal C}_t(R)$.  Equation~
\eqref{eq:centralizer-naturality-B} places the identity-derived block
$B_1,\ldots,B_{t-1}$ in $\widehat{\mathcal C}_{t-1}(R)$, while the $B_0$ leg
remains in the unrestricted space $\mathcal W$.  Since the two sets of folded
crossings act on separate output blocks, imposing both invariances gives
their tensor-product intersection, which is precisely
Eq.~\eqref{eq:centralizer-output-space}.
\end{proof}

\begin{corollary}[Operator-Schmidt-rank bound]
\label{cor:centralizer-osr-bound}
For every Yang--Baxter gate and every one-site operator initially at site
zero,
\begin{equation}
 \OSR_c(O(t))\leq
 \min\{c_t(R),D^2c_{t-1}(R)\},\qquad t\geq1.
 \label{eq:centralizer-bound}
\end{equation}
\end{corollary}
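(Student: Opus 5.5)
The plan is to combine Lemma~\ref{lem:yb-rectangle} with Theorem~\ref{thm:central-invariance} and then use the elementary fact that the Schmidt rank of a bipartite vector is at most the dimension of any subspace containing its reduced support on either side.

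First, Lemma~\ref{lem:yb-rectangle} gives $\OSR_c(O(t))=\SR_{B_{\rm out}\mid A_{\rm out}}(\widehat\Phi_t(x_O))$ for every $t\geq1$. So it suffices to bound the Schmidt rank of the folded rectangular output across the cut between its $B$ block and its $A$ block.

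Second, Theorem~\ref{thm:central-invariance} places $\widehat\Phi_t(x_O)$ in $H_B\otimes H_A$, where
\[
H_B:=\mathcal W\otimes\widehat{\mathcal C}_{t-1}(R)
\quad\text{and}\quad
H_A:=\widehat{\mathcal C}_t(R).
\]
Take a Schmidt decomposition $\widehat\Phi_t(x_O)=\sum_\mu s_\mu u_\mu\otimes v_\mu$ with $s_\mu>0$ and orthonormal families $\{u_\mu\}$ and $\{v_\mu\}$.
\begin{itemize}
\item Contracting with $\overline{u_\mu}$ on the $B$ side shows that each $v_\mu$ is a linear image of a vector lying in $H_B\otimes H_A$ after the $B$ factor is contracted away. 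Hence each $v_\mu$ lies in $H_A$.
\item Symmetrically, each $u_\mu$ lies in $H_B$.
\end{itemize}
Orthonormality of the two families then bounds the number of terms by $\min\{\dim H_A,\dim H_B\}$.

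Third, the dimensions are computed directly. We have $\dim H_A=c_t(R)$ by Eq.~\eqref{eq:centralizer-folded-space}. Also $\dim H_B=\dim\mathcal W\cdot c_{t-1}(R)=D^2c_{t-1}(R)$, using $\dim\mathcal W=D^2$. For $t=1$ the convention $c_0(R)=1$ makes $H_B=\mathcal W$, and the bound reads $\min\{c_1(R),D^2\}$; here $c_1(R)=D^2$ since there are no generators.

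There is no real obstacle. The only point needing care is the subspace-containment step for Schmidt vectors, which is standard: the reduced density operator on each side is supported in the corresponding factor subspace, so its rank is at most that subspace's dimension.
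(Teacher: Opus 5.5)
Your proof is correct and follows the paper's argument. Lemma~\ref{lem:yb-rectangle} reduces the question to the Schmidt rank of $\widehat\Phi_t(x_O)$, and the containment in Theorem~\ref{thm:central-invariance} bounds that rank by the smaller sector dimension, $\min\{c_t(R),D^2c_{t-1}(R)\}$. Your explicit Schmidt-vector containment argument and the $t=1$ check ($c_0(R)=1$, $c_1(R)=D^2$) only fill in details the paper leaves implicit.
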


\begin{proof}
Lemma~\ref{lem:yb-rectangle} identifies the OSR with the Schmidt rank of
$\widehat\Phi_t(x_O)$.  Theorem~\ref{thm:central-invariance} confines
this vector to the two folded invariant sectors in
\eqref{eq:centralizer-output-space}.  Their dimensions are
$D^2c_{t-1}(R)$ and $c_t(R)$, respectively, so its Schmidt rank cannot exceed
the smaller of the two.
\end{proof}

The folded invariant sector need not coincide with the tensors actually
generated by evolving a one-site operator.  Equation~
\eqref{eq:centralizer-folded-space} maps the operator centralizer onto the
full symmetry-allowed folded sector, while the physical dynamics may explore
only a small part of it.

\section{Elementary bounded-rank constructions}
\label{sec:elementary-bounded-rank}

In this section we consider two simple constructions for which a
time-independent OSR bound can be established directly, without introducing
the techniques of later sections.
In both cases we use the rectangular-core identity of
Lemma~\ref{lem:yb-rectangle}, which identifies the physical fixed-cut OSR
with the operator Schmidt rank of the rectangular braid.

\subsection{Phase-dressed exchanges}
\label{subsec:separated-phase-dressed}

Let \(X\) be the set of \(D\) basis colours, and let
\(f,g\in\operatorname{Sym}(X)\).  Denote their permutation unitaries by
\begin{equation}
 F\ket{a}=\ket{f(a)},
 \qquad
 G\ket{a}=\ket{g(a)}.
 \label{eq:separated-one-site-rotations}
\end{equation}
For unit-modulus numbers \(q_{ab}\), define the diagonal unitary
\begin{equation}
 Q_q=\sum_{a,b\in X}q_{ab}E_{aa}\otimes E_{bb},
 \qquad E_{ab}=\ket{a}\!\bra{b},
 \label{eq:separated-diagonal-dressing}
\end{equation}
and the phase-dressed exchange
\begin{align}
 R_{f,g,q}
 &=(F\otimes G)\Swap Q_q,
 \nonumber\\
 R_{f,g,q}\ket{a,b}
 &=q_{ab}\ket{f(b),g(a)}.
 \label{eq:separated-gate}
\end{align}
We call this a \emph{separated phase-dressed exchange}.  The permutations \(f\) and \(g\)
need not be equal.

The SWAP gate is obtained by taking \(f=g=\id_X\) and \(q_{ab}=1\), while
arbitrary unit-modulus \(q_{ab}\) with \(f=g=\id_X\) give the phase-dressed
SWAP gates
\begin{equation}
 R_q:=\Swap Q_q.
 \label{eq:separated-phase-dressed-swap}
\end{equation}

For completeness, let us record when \eqref{eq:separated-gate} is a
Yang--Baxter gate.  Its classical support is
\begin{equation}
 r_{f,g}(a,b)=\bigl(f(b),g(a)\bigr).
 \label{eq:separated-support}
\end{equation}
The two sides of the set-theoretical braid relation send
\((a,b,c)\), respectively, to
\begin{align}
 r_{12}r_{23}r_{12}(a,b,c)
 &=\bigl(f^2(c),gf(b),g^2(a)\bigr),
 \nonumber\\
 r_{23}r_{12}r_{23}(a,b,c)
 &=\bigl(f^2(c),fg(b),g^2(a)\bigr).
 \label{eq:separated-support-braid-check}
\end{align}
Hence the support satisfies the braid relation if and only if
\begin{equation}
 fg=gf.
 \label{eq:separated-commuting-rotations}
\end{equation}
Assuming this condition, comparison of the scalar factors in the two braid
words gives the remaining phase constraint
\begin{equation}
 q_{ab}\,q_{g(a),c}\,q_{f(b),f(c)}
 =
 q_{bc}\,q_{a,f(c)}\,q_{g(a),g(b)}
 \qquad(a,b,c\in X).
 \label{eq:separated-phase-cocycle}
\end{equation}
Equations~\eqref{eq:separated-commuting-rotations} and
\eqref{eq:separated-phase-cocycle} are therefore equivalent to the braid
relation for \(R_{f,g,q}\).  For \(f=g=\id_X\), the three phases on the two
sides are the same factors in a different order, so
\eqref{eq:separated-phase-cocycle} is automatic, consistently with the
arbitrary phases allowed in the phase-dressed SWAP gates above.

We put forward the following statement: in every separated phase-dressed
exchange circuit, the OSR of an evolved one-site operator remains bounded
uniformly in time.  This statement is independent of the Yang--Baxter
relation.  Neither Eq.~\eqref{eq:separated-commuting-rotations} nor
Eq.~\eqref{eq:separated-phase-cocycle} enters the proof; these conditions are
needed only to embed the separated family into the class of Yang--Baxter
gates.  The statement will be proved in
Theorem~\ref{thm:separated-uniform-rank} below.

Before turning to the proof,
let us illustrate the physical mechanism.  In these circuits each matrix-unit
component of the one-site operator propagates ballistically.  A diagonal
one-site operator is simply transported, with its diagonal entries permuted.
For an off-diagonal component, crossings with other strands only leave behind
a string of diagonal one-site operators.  The component therefore remains a
single product string, so its Schmidt rank does not increase.

For the circuit with the bare SWAP, this statement is immediate: every
one-site operator propagates ballistically without changing and leaves no
trail.  Let us then illustrate the more general mechanism for the
phase-dressed SWAP gate \(R_q\) of
Eq.~\eqref{eq:separated-phase-dressed-swap}, where the diagonal trail is
nontrivial.  We now evaluate the rectangular-braid evolution \(\Phi_t\)
defined in Eq.~\eqref{eq:yb-rectangle-operator} for this gate.

For \(x\ne y\), define the one-site diagonal operator
\begin{equation}
 \Delta_{xy}
 :=
 \sum_{b\in X}q_{yb}\overline{q_{xb}}E_{bb}.
 \label{eq:separated-phase-trail}
\end{equation}
Direct conjugation gives
\begin{equation}
 R_q^\dagger(\id\otimes E_{xy})R_q
 =E_{xy}\otimes\Delta_{xy}.
 \label{eq:separated-first-crossing}
\end{equation}
The operator \(\Delta_{xy}\) which appears here is the first member of the
string of diagonal one-site operators produced along the ballistic
trajectory.

To obtain the complete diagonal trail, let the same off-diagonal matrix unit
cross \(n\) identity strands consecutively in the same orientation.  Repeated use of
\eqref{eq:separated-first-crossing} gives, up to the ordering of the output
sites,
\begin{equation}
 \underbrace{\id\otimes\id\otimes\cdots\otimes\id}
 _{n\text{ identity strands}}\otimes E_{xy}
 \longmapsto
 E_{xy}\otimes
 \underbrace{\Delta_{xy}\otimes\Delta_{xy}\otimes\cdots\otimes\Delta_{xy}}
 _{n\text{ diagonal factors}}.
 \label{eq:separated-consecutive-crossings}
\end{equation}
This formula is sufficient for the complete rectangular braid: the
off-diagonal factor crosses the identity strands one after another, while
every remaining crossing is between diagonal factors and merely exchanges
them.

This calculation establishes the product-string mechanism for the
phase-deformed swap.  We now state the bound for the full separated family.

\begin{theorem}[Uniform rank for separated phase-dressed exchanges]
\label{thm:separated-uniform-rank}
Let \(R_{f,g,q}\) be a gate of the separated form
\eqref{eq:separated-gate}; it need not satisfy the Yang--Baxter relation.
For every one-site operator \(O\), every
\(t\geq0\), and every spatial cut \(c\),
\begin{equation}
 \OSR_c(O(t))\leq D(D-1)+1.
 \label{eq:separated-uniform-bound}
\end{equation}
Consequently, for every nonzero \(O\) and every \(\alpha\geq0\),
\begin{equation}
 S_\alpha^{\rm op}(O(t))
 \leq\log\!\bigl(D(D-1)+1\bigr).
 \label{eq:separated-uniform-entropy}
\end{equation}
Both bounds are uniform in time.
\end{theorem}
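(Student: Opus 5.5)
Expand the one-site operator in matrix units, $O=\sum_{x,y}O_{yx}E_{xy}$, and split it as $O=O_{\rm diag}+\sum_{x\neq y}O_{yx}E_{xy}$. By subadditivity of the OSR under sums, it suffices to show two things. First, the evolved diagonal part $O_{\rm diag}(t)$ has OSR at most $1$ whenever it is a single product string; more generally, it lies in the span of products of diagonal one-site operators, which we will bound by $1$. Second, each evolved off-diagonal matrix unit $E_{xy}(t)$ is a single product string. Together these give $1+D(D-1)$. We will not use Lemma~\ref{lem:yb-rectangle} or any braid structure, since the theorem does not assume the Yang--Baxter relation. Instead we track the Heisenberg evolution gate by gate in the infinite-volume circuit: every gate acting on two identity factors cancels by unitarity, exactly as in Subsection~\ref{subsec:ordinary-causal-network}.

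The key local computation generalizes Eq.~\eqref{eq:separated-first-crossing} to $R=R_{f,g,q}$. For diagonal one-site operators $\Lambda,M$, the factor $Q_q$ commutes with $\Lambda\otimes M$, and $\Swap$ together with the permutation unitaries $F,G$ map diagonal operators to diagonal operators. Hence $R^\dagger(\Lambda\otimes M)R$ is again a product of two diagonal operators: $R^\dagger(\Lambda\otimes M)R=(G^\dagger MG)\otimes(F^\dagger\Lambda F)$ in suitable slot order. For a matrix unit next to a diagonal operator, a direct computation gives $R^\dagger(\Lambda\otimes E_{xy})R=E_{x'y'}\otimes\Delta$ with $\Delta$ diagonal, where $x'=f^{-1}(x)$ and $y'=f^{-1}(y)$, and the phases $q$ enter $\Delta$ as in Eq.~\eqref{eq:separated-phase-trail} dressed by $\Lambda$. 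There is an analogous formula for $E_{xy}\otimes\Lambda$ with $g$. Because $f$ and $g$ are bijections, $x'\neq y'$, so the component stays off-diagonal. The main point to verify carefully is that $\Swap Q_q$ applied to $E_{xy}$ on one leg and a diagonal operator on the other really yields a single product and not a sum. This holds because $Q_q$ is diagonal: conjugation multiplies $\ket{a,b}\!\bra{a',b'}$ by the scalar $\overline{q_{ab}}q_{a'b'}$, and with $b=b'$ summed against a diagonal operator this factorizes as a function of $b$ times a fixed $E_{xy}$.

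Using these two rules, an induction over the layers shows the following. At every half-step, $E_{xy}(t)$ is a product $\bigotimes_j P_j$ in which exactly one site carries an off-diagonal matrix unit and every other site carries a diagonal operator; the identity counts as diagonal. The same induction shows that $O_{\rm diag}(t)$ is a product of diagonal operators. A product operator has OSR $1$ across any cut. Since $O_{\rm diag}$ itself is a single site operator, it gives one term, and each of the $D(D-1)$ off-diagonal units gives one term. This yields Eq.~\eqref{eq:separated-uniform-bound} for every cut and every $t$. The entropy bound Eq.~\eqref{eq:separated-uniform-entropy} then follows from Eq.~\eqref{eq:setup-entropy-rank}.

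The main obstacle is bookkeeping rather than concept. The bound must hold for every cut, not only the cut at the insertion. It must also hold in infinite volume, where gates far from the support act on identity pairs; these give $\id\otimes\id$ because $R$ is unitary, which is consistent with the diagonal rule. The case where both legs carry off-diagonal operators never arises, since only one off-diagonal factor exists in each string. I would therefore state the induction hypothesis as \emph{the operator is a product of one-site factors, at most one of them off-diagonal}. That hypothesis is closed under conjugation by every gate of the layer, and checking this closure is the step to write out in full.
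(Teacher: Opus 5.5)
Your proposal is correct and follows essentially the same route as the paper: the local rules $R^\dagger(\Lambda\otimes M)R=(G^\dagger MG)\otimes(F^\dagger\Lambda F)$ and $R^\dagger(\Lambda\otimes E_{xy})R=E_{x'y'}\otimes\Delta$ keep the diagonal part, and each of the $D(D-1)$ off-diagonal matrix units, as single product strings in the full circuit, so subadditivity gives $D(D-1)+1$ for every cut. One harmless slip: the relabelled unit is $E_{g^{-1}(x),g^{-1}(y)}$, not $E_{f^{-1}(x),f^{-1}(y)}$, because the second leg is conjugated by $G$, as your own diagonal formula shows; only bijectivity is used, so the conclusion is unaffected.
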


\begin{proof}
The proof has two steps.  We first extend the product-string mechanism from
the phase-deformed swap to the full gate \(R_{f,g,q}\), and then expand the
one-site source into at most \(D(D-1)+1\) such strings.

Let \(D_0\) be any diagonal one-site operator.  Direct conjugation by
\(R_{f,g,q}\) has the structural form
\begin{align*}
 R_{f,g,q}^{\dagger}(D_0\otimes E_{xy})R_{f,g,q}
 &=E_{g^{-1}(x),g^{-1}(y)}\otimes D'_{xy},\\
 R_{f,g,q}^{\dagger}(E_{xy}\otimes D_0)R_{f,g,q}
 &=D''_{xy}\otimes E_{f^{-1}(x),f^{-1}(y)},
\end{align*}
where \(D'_{xy}\) and \(D''_{xy}\) are again diagonal one-site
operators.  The phase ratios only change their diagonal entries.  If both
incoming operators are diagonal, the phases cancel and
\[
 R_{f,g,q}^{\dagger}(D_1\otimes D_2)R_{f,g,q}
 =(G^\dagger D_2G)\otimes(F^\dagger D_1F).
\]
It follows that a product string with exactly one off-diagonal matrix unit
and diagonal factors elsewhere remains a single product string after every
brickwork layer.  The off-diagonal factor moves from strand to strand and
its labels are rotated, while the other factors remain diagonal.  No sum of
product strings is created.  The evolution of any off-diagonal source
\(E_{xy}\), with \(x\ne y\), therefore has OSR one across every cut.

The diagonal part of the source also contributes only one product operator.
Indeed, a diagonal one-site operator is transported along a single
trajectory, with its diagonal entries permuted by \(F\) and \(G\), and no
phase trail is produced.  Finally, an arbitrary one-site operator is the sum
of its diagonal part and at most \(D(D-1)\) off-diagonal matrix units.  The
subadditivity of the OSR gives Eq.~\eqref{eq:separated-uniform-bound}, and
Eq.~\eqref{eq:setup-entropy-rank} gives
Eq.~\eqref{eq:separated-uniform-entropy}.

The bounded-rank mechanism uses only the separated form of the gate.  The
conditions \eqref{eq:separated-commuting-rotations} and
\eqref{eq:separated-phase-cocycle} are needed to place this family among the
Yang--Baxter gates, but they do not enter the rank estimate itself.
\end{proof}

A concrete member of this family was first analyzed from the viewpoint of
operator entanglement in Ref.~\cite{BertiniKosProsenII}, as a dual-unitary
Trotterization of the XXZ chain.  In the computational basis
$(\ket{00},\ket{01},\ket{10},\ket{11})$, its two-site gate is, up to an
overall phase,
\begin{equation}
 R_K=
 \begin{pmatrix}
  1&0&0&0\\
  0&0&e^{2iK}&0\\
  0&e^{2iK}&0&0\\
  0&0&0&1
 \end{pmatrix},
 \qquad K\in\mathbb R.
 \label{eq:qubit-dual-unitary-XXZ}
\end{equation}
This is a phase-dressed swap and hence a qubit Yang--Baxter gate.
Reference~\cite{BertiniKosProsenII} showed that the operator
entanglement of every evolved one-site operator remains bounded.  This result
is consistent with Theorem~\ref{thm:separated-uniform-rank}, which gives
\(\OSR_c(O(t))\leq3\) for every evolved one-site operator in this circuit.
The same gate was later considered in Ref.~\cite{GiudiceEtAl2022}, where the
focus was instead on the temporal entanglement of the influence matrix.

\subsection{Controlled-swap Yang--Baxter gates}
\label{sec:controlled-swap}

We now consider a family in which one incoming state is transmitted
ballistically through a crossing and, at the same time, selects a unitary
rotation of the other state.  These gates are simple enough that the
rectangular circuit of Section~\ref{sec:rectangular-reduction} can be
evaluated directly.  The resulting bounds are uniform in time and do not
require a classification of the allowed one-site unitaries.

There are two possible ways to construct such a controlled-swap gate,
depending on whether the incoming state transmitted without modification
travels to the left or to the right.  The two choices are related by spatial
reflection.  If we fixed one gate orientation, a complete analysis of
one-site operator entanglement would have to treat the two possible
placements of the source at \(t=0\), on the even and odd sublattices.  We use
the equivalent convention of keeping the source on the distinguished strand \(B_0\) of
Eq.~\eqref{eq:yb-rectangle-operator} and carrying out the proof for both gate
orientations.

Let \(X\) be a set of \(D\) basis labels and let
\(\{u_a:a\in X\}\subset U(V)\) be a collection of one-site unitaries.  We
define the right-controlled and left-controlled gates by
\begin{align}
 R_{\rm rc}\ket{a,b}
 &=\ket b\otimes u_b\ket a,
 &
 R_{\rm lc}\ket{a,b}
 &=u_a\ket b\otimes\ket a.
 \label{eq:controlled-swap-two-orientations}
\end{align}
Thus, for \(R_{\rm rc}\), the right incoming label \(b\) is transmitted to
the left output and controls the rotation of \(a\).  For \(R_{\rm lc}\), the
left incoming label \(a\) is transmitted to the right output and controls
the rotation of \(b\).  The two gates obey
\begin{equation}
 R_{\rm lc}=\Swap R_{\rm rc}\Swap.
 \label{eq:controlled-swap-reflection}
\end{equation}
Both gates are unitary and dual-unitary for arbitrary choices of the
unitaries \(u_a\).

\subsubsection{The braid constraint}
\label{subsec:controlled-swap-braid}

The braid relation has a particularly transparent form in this family.
For \(R_{\rm lc}\), applying the two braid words to
\(\ket{a,b,c}\) gives
\begin{align}
 R_{{\rm lc},12}R_{{\rm lc},23}R_{{\rm lc},12}\ket{a,b,c}
 &=
 \sum_x (u_a)_{xb}\,
 u_xu_a\ket c\otimes\ket{x,a},
 \nonumber\\
 R_{{\rm lc},23}R_{{\rm lc},12}R_{{\rm lc},23}\ket{a,b,c}
 &=
 \sum_x (u_a)_{xb}\,
 u_au_b\ket c\otimes\ket{x,a}.
 \label{eq:controlled-swap-two-braid-words}
\end{align}
Comparing the components proportional to \((u_a)_{xb}\) gives
\begin{equation}
 (u_a)_{xb}\bigl(u_xu_a-u_au_b\bigr)=0,
 \qquad a,b,x\in X.
 \label{eq:controlled-swap-braid-condition}
\end{equation}
The reflected gate \(R_{\rm rc}\) satisfies the same condition, with the
label of the controlling input renamed.  Hence either gate is a
Yang--Baxter gate if and only if
\begin{equation}
 (u_a)_{xb}\ne0
 \quad\Longrightarrow\quad
 u_x=u_au_bu_a^{-1}.
 \label{eq:controlled-swap-conjugation-rule}
\end{equation}

This is the group-type compatibility condition of
Ref.~\cite{GalindoRowell2014}.  Let us define
\begin{equation}
 \mathcal S=\{u_a:a\in X\}
 \label{eq:controlled-swap-support-set}
\end{equation}
be the set of distinct unitaries occurring among the labels.

The following lemma was proved by Galindo and Rowell in the language of
group-type braided vector spaces and Yetter--Drinfeld modules; see
Proposition~4.2 and Subsection~4.2 of Ref.~\cite{GalindoRowell2014}.  For
completeness, Appendix~\ref{app:controlled-swap-conjugation} repeats the
direct calculation in our notation and braid convention.

\begin{lemma}[Finite conjugation memory]
\label{lem:controlled-swap-finite-conjugation-memory}
Suppose that the two braid words in
Eq.~\eqref{eq:controlled-swap-two-braid-words} agree for every
\(a,b,c\in X\), and let \(G=\langle u_a:a\in X\rangle\).  Then
\begin{equation}
 g\mathcal Sg^{-1}=\mathcal S,
 \qquad g\in G.
 \label{eq:controlled-swap-conjugation-memory}
\end{equation}
\end{lemma}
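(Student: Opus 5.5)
The plan is to reduce the statement to the generators. Set \(\mathcal S=\{u_a:a\in X\}\). It suffices to show \(u_a\mathcal S u_a^{-1}\subseteq\mathcal S\) for every \(a\in X\). Since \(\mathcal S\) is finite and conjugation by \(u_a\) is injective, this inclusion is automatically an equality. It therefore also holds for \(u_a^{-1}\), and then for every word in the generators and their inverses, hence for all of \(G\).

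To prove the generator statement, fix \(a,b\in X\). We want a label \(x\) with \(u_x=u_au_bu_a^{-1}\). The braid words in Eq.~\eqref{eq:controlled-swap-two-braid-words} agree for every \(a,b,c\), so we have Eq.~\eqref{eq:controlled-swap-braid-condition}, i.e.\ the implication \eqref{eq:controlled-swap-conjugation-rule}. To see this directly: for fixed \(a,b,c\), the vectors \(\ket x\otimes\ket a\) in the last two factors are linearly independent for distinct \(x\). Comparing coefficients therefore gives \((u_a)_{xb}(u_xu_a-u_au_b)\ket c=0\) for every \(c\), hence the operator identity. Now \(u_a\) is unitary, so the column \(b\) of \(u_a\) is a unit vector. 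Hence there exists at least one \(x\in X\) with \((u_a)_{xb}\neq0\). For this \(x\), Eq.~\eqref{eq:controlled-swap-conjugation-rule} yields \(u_au_bu_a^{-1}=u_x\in\mathcal S\). As \(b\) ranges over \(X\), this gives \(u_a\mathcal Su_a^{-1}\subseteq\mathcal S\). Note that the argument is phrased in terms of the distinct unitaries, so repeated labels \(u_b=u_{b'}\) cause no trouble.

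Finally, we extend to \(G\). The set \(\{g\in G:g\mathcal Sg^{-1}=\mathcal S\}\) is closed under products and inverses, so it is a subgroup of \(G\). It contains every generator \(u_a\), because injectivity of conjugation and finiteness of \(\mathcal S\) turn the inclusion into equality. Hence this subgroup is all of \(G\).

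The only mildly delicate step is the nonvanishing of some entry in each column, which is exactly where unitarity (indeed, mere invertibility) of \(u_a\) enters. Everything else is bookkeeping.
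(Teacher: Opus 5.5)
Your proposal is correct and follows essentially the same route as the paper's proof in Appendix~C. Both arguments use unitarity of $u_a$ to find a nonzero entry $(u_a)_{xb}$, apply the conjugation rule to get $u_au_bu_a^{-1}=u_x\in\mathcal S$, use injectivity plus finiteness of $\mathcal S$ to upgrade the inclusion to a permutation, and then extend to all words in the $u_a^{\pm1}$. The paper additionally reformulates the rule as $u_aV_v\subseteq V_{u_avu_a^{-1}}$ on graded coordinate subspaces, which is not needed for the conclusion, while you re-derive the rule directly from the braid words.
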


We stress that Eq.~\eqref{eq:controlled-swap-conjugation-memory}
follows from the braid relation, but the converse is false.  This
conjugation property will play an important role in what follows.

\subsubsection{Right-controlled gates}
\label{subsec:controlled-swap-right}

We first treat \(R_{\rm rc}\), for which the right incoming label is the
ballistic control.  Let
\begin{equation*}
 A=(a_0,\ldots,a_{t-1}),
 \qquad
 B=(b_0,\ldots,b_{t-1})
\end{equation*}
be the two input words of the complete rectangular braid.  Every \(B\)
strand passes unchanged through the complete \(A\) block.  Each \(A\)
strand, on the other hand, encounters the same ordered sequence of
controlled rotations.  Consequently,
\begin{align}
 W_t^{\rm rc}\ket{A,B}
 &=\ket B\otimes g(B)^{\otimes t}\ket A,
 &
 g(B)&=u_{b_{t-1}}\cdots u_{b_1}u_{b_0}.
 \label{eq:controlled-swap-right-block-action}
\end{align}
After restoring the block order with \(\Swap_{\rm blk}\), this is the
controlled unitary
\begin{equation}
 \widehat W_t^{\rm rc}
 :=\Swap_{\rm blk}W_t^{\rm rc}
 =\sum_{B\in X^t}g(B)^{\otimes t}\otimes\ket B\!\bra B.
 \label{eq:controlled-swap-right-controlled-block}
\end{equation}

The block exchange sends the source \(S_t(O)\) of
Eq.~\eqref{eq:yb-rectangle-operator} to \(O\) on the first site of the
\(A\) block.  Conjugating it by
Eq.~\eqref{eq:controlled-swap-right-controlled-block} therefore gives,
with identity factors on the remaining \(A\) sites understood,
\begin{equation}
 \Phi_t^{\rm rc}(O)
 =\sum_{B\in X^t}
 \bigl[g(B)^{\dagger}Og(B)\bigr]_{A_0}
 \otimes\ket B\!\bra B.
 \label{eq:controlled-swap-right-evolved-source}
\end{equation}
Equation~\eqref{eq:controlled-swap-right-evolved-source} already gives the
desired time-independent upper bound.  Define
\[
 \mathcal V_t(O)
 :=\operatorname{span}
 \bigl\{g(B)^{\dagger}Og(B):B\in X^t\bigr\}
 \subseteq\operatorname{End}(V),
\]
and put \(r_t=\dim\mathcal V_t(O)\).  If
\(K_1,\ldots,K_{r_t}\) is a basis of \(\mathcal V_t(O)\), we may write
\[
 g(B)^{\dagger}Og(B)=\sum_{\mu=1}^{r_t}c_\mu(B)K_\mu.
\]
Regrouping the terms in
Eq.~\eqref{eq:controlled-swap-right-evolved-source} then gives
\[
 \Phi_t^{\rm rc}(O)
 =\sum_{\mu=1}^{r_t}[K_\mu]_{A_0}\otimes
 \left(\sum_{B\in X^t}c_\mu(B)\ket B\!\bra B\right).
\]
Thus the operator Schmidt rank is at most
\(r_t\leq\dim\operatorname{End}(V)=D^2\).  The conjugates
\(g(B)^{\dagger}Og(B)\) span \(\mathcal V_t(O)\), so the coefficient matrix
\(c_\mu(B)\) has row rank \(r_t\).  Since the projectors
\(\ket B\!\bra B\) are linearly independent, the right tensor factors in
the last display are also linearly independent.  This gives
\begin{equation}
 \OSR\bigl(\Phi_t^{\rm rc}(O)\bigr)
 =r_t
 \leq D^2.
 \label{eq:controlled-swap-right-rank}
\end{equation}
If \(O\) is traceless, all its conjugates remain in the traceless subspace
and \(D^2\) can be replaced by \(D^2-1\).

It is worth stressing what Eq.~\eqref{eq:controlled-swap-right-rank} does
not count.  The number of distinct words \(g(B)\) may grow rapidly with
\(t\), and may even grow exponentially.  Nevertheless, every word appears
only through its adjoint action on one fixed one-site operator.  All these
conjugates belong to the \(D^2\)-dimensional space
\(\operatorname{End}(V)\), independently of the number of group words.  In
particular, the bound in Eq.~\eqref{eq:controlled-swap-right-rank} does not
use the braid relation.  However, the Yang--Baxter structure will be used for
the left-controlled gates in
Subsubsection~\ref{subsec:controlled-swap-left}.

\subsubsection{Left-controlled gates}
\label{subsec:controlled-swap-left}

The reflected orientation is slightly less immediate.  Now every \(A\)
label is transmitted unchanged, while it applies a rotation to each
crossing \(B\) strand.  The rectangular action is
\begin{align}
 W_t^{\rm lc}\ket{A,B}
 &=f(A)^{\otimes t}\ket B\otimes\ket A,
 &
 f(A)&=u_{a_0}u_{a_1}\cdots u_{a_{t-1}}.
 \label{eq:controlled-swap-left-block-action}
\end{align}
Equivalently, after restoring the input block order,
\begin{equation}
 \widehat W_t^{\rm lc}
 =\sum_{A\in X^t}\ket A\!\bra A\otimes f(A)^{\otimes t}.
 \label{eq:controlled-swap-left-controlled-block}
\end{equation}

Write \(A=(a,\alpha)\), where
\(\alpha=(a_1,\ldots,a_{t-1})\), and define
\begin{equation}
 h(\alpha)=u_{a_1}\cdots u_{a_{t-1}}.
 \label{eq:controlled-swap-left-background-word}
\end{equation}
When the local source connects \(a\) on the ket history to \(c\) on the
bra history, all other control labels agree.  Thus
\begin{equation}
 f(a,\alpha)^{\dagger}f(c,\alpha)
 =h(\alpha)^{\dagger}u_a^{\dagger}u_ch(\alpha).
 \label{eq:controlled-swap-left-relative-word}
\end{equation}
Expanding \(O=\sum_{a,c}O_{ac}\ket a\!\bra c\) gives the exact rectangular
expression
\begin{align}
 \Phi_t^{\rm lc}(O)
 ={}&\sum_{a,c\in X}\sum_{\alpha\in X^{t-1}}
 O_{ac}\ket{a,\alpha}\!\bra{c,\alpha}
 \nonumber\\[-2mm]
 &\hspace{22mm}\otimes
 \left[h(\alpha)^{\dagger}u_a^{\dagger}u_ch(\alpha)\right]^{\otimes t}.
 \label{eq:controlled-swap-left-evolved-source}
\end{align}

Without further assumptions, Eq.~\eqref{eq:controlled-swap-left-evolved-source}
already gives a polynomial bound.  Every operator in the second tensor
factor has the form \(q^{\otimes t}\), with
\(q\in\operatorname{End}(V)\).  Their span is contained in the symmetric
tensor power of the \(D^2\)-dimensional one-site operator space, and hence
\begin{equation}
 \OSR\bigl(\Phi_t^{\rm lc}(O)\bigr)
 \leq
 \binom{t+D^2-1}{D^2-1}.
 \label{eq:controlled-swap-left-no-braid-bound}
\end{equation}
We record this observation only to distinguish the two orientations.  For a
Yang--Baxter gate, the same expression yields a much stronger conclusion.

Indeed, Eq.~\eqref{eq:controlled-swap-conjugation-memory} implies
that for every background word \(h(\alpha)\) there are
\(u_x,u_y\in\mathcal S\) such that
\begin{equation}
 h(\alpha)^{\dagger}u_ah(\alpha)=u_x,
 \qquad
 h(\alpha)^{\dagger}u_ch(\alpha)=u_y.
 \label{eq:controlled-swap-left-two-conjugates}
\end{equation}
Therefore every relative unitary in
Eq.~\eqref{eq:controlled-swap-left-evolved-source} belongs to the fixed
finite set
\begin{equation}
 \mathcal Q
 :=\{u_x^{\dagger}u_y:u_x,u_y\in\mathcal S\}.
 \label{eq:controlled-swap-relative-set}
\end{equation}
Grouping together all histories with the same value \(q\in\mathcal Q\)
turns Eq.~\eqref{eq:controlled-swap-left-evolved-source} into a product
decomposition
\begin{equation}
 \Phi_t^{\rm lc}(O)
 =\sum_{q\in\mathcal Q}K_{q,t}(O)\otimes q^{\otimes t},
 \label{eq:controlled-swap-left-finite-decomposition}
\end{equation}
where \(K_{q,t}(O)\) is the sum of the corresponding operators
\(O_{ac}\ket{a,\alpha}\!\bra{c,\alpha}\) on the \(A\) block.  Its explicit
form is immaterial for the rank estimate.  If
\(s=|\mathcal S|\leq D\), all \(s\) diagonal pairs \(u_x^{\dagger}u_x\)
give the identity, while the ordered pairs with \(x\ne y\) give at most
\(s(s-1)\) additional values.  Consequently,
\begin{equation}
 |\mathcal Q|
 \leq s(s-1)+1
 \leq D^2-D+1.
 \label{eq:controlled-swap-relative-count}
\end{equation}

We summarize the result.

\begin{theorem}[Uniform rank for controlled-swap Yang--Baxter gates]
\label{thm:controlled-swap-uniform-rank}
Let \(R\) be a Yang--Baxter gate of either form in
Eq.~\eqref{eq:controlled-swap-two-orientations}, and let \(O\) be an
arbitrary one-site operator initially placed at site zero.  Then, for every
\(t\geq1\),
\begin{align}
 R=R_{\rm rc}
 &\quad\Longrightarrow\quad
 \OSR_c(O(t))\leq D^2,
 \label{eq:controlled-swap-theorem-right}\\
 R=R_{\rm lc}
 &\quad\Longrightarrow\quad
 \OSR_c(O(t))\leq|\mathcal Q|
 \leq D^2-D+1.
 \label{eq:controlled-swap-theorem-left}
\end{align}
For \(R_{\rm rc}\) and traceless \(O\), the first bound improves to
\(D^2-1\).  In particular, every nonnegative-R\'enyi operator entropy is
bounded uniformly in time.
\end{theorem}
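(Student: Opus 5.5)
The plan is to assemble the bounds from the two subsubsections via the rectangular-core identity. By Lemma~\ref{lem:yb-rectangle}, for every $t\geq1$ we have $\OSR_c(O(t))=\OSR_{B_{\rm out}\mid A_{\rm out}}(\Phi_t(O))$. Hence it suffices to bound the operator Schmidt rank of the rectangular output $\Phi_t(O)$ for each orientation. The block exchange $\Swap_{\rm blk}$ and the regrouping of legs into the two blocks act separately on the two sides of the cut, so they do not change the rank. We may therefore work with the controlled forms \eqref{eq:controlled-swap-right-controlled-block} and \eqref{eq:controlled-swap-left-controlled-block}.

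For $R_{\rm rc}$, we first verify the block action \eqref{eq:controlled-swap-right-block-action}. Each $B$ label is transmitted unchanged through the $A$ block. Each $A$ strand meets $b_0,\ldots,b_{t-1}$ in this order, and the crossing order is inherited from the causal network, so it accumulates $g(B)$. Conjugating the source then gives \eqref{eq:controlled-swap-right-evolved-source}. Its first tensor factors lie in $\mathcal V_t(O)\subseteq\End(V)$, and the regrouping argument gives rank $r_t\leq D^2$. If $O$ is traceless, its conjugates stay traceless, so $r_t\leq D^2-1$. This orientation does not use the braid relation.

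For $R_{\rm lc}$, we use \eqref{eq:controlled-swap-left-evolved-source}, obtained by the analogous tracking in which each $B$ strand accumulates $f(A)$. The key step is Lemma~\ref{lem:controlled-swap-finite-conjugation-memory}. Since $h(\alpha)\in G$, it conjugates $u_a$ and $u_c$ into elements $u_x,u_y\in\mathcal S$. The relative word $h^\dagger u_a^\dagger u_c h$ therefore lies in the finite set $\mathcal Q$ of \eqref{eq:controlled-swap-relative-set}. Grouping the histories by the value $q\in\mathcal Q$ yields \eqref{eq:controlled-swap-left-finite-decomposition}, which is a sum of $|\mathcal Q|$ product operators across the cut. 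Subadditivity then gives $\OSR\leq|\mathcal Q|$. The count \eqref{eq:controlled-swap-relative-count} gives $|\mathcal Q|\leq s(s-1)+1\leq D^2-D+1$: all diagonal pairs collapse to $\id$, and at most $s(s-1)$ off-diagonal ordered pairs remain.

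The main obstacle is checking that the crossing order in $W_t$ really produces the stated products $g(B)$ and $f(A)$ in the correct order, and that the two histories share the background $\alpha$. Only the label on $A_0=B_0$'s partner strand differs between the bra and ket sides, because the source sits on a single site. In the left-controlled case, the source appears on the first $A$ site after the block exchange. I would check this by induction on the rows of the rectangle. The final entropy statement follows from \eqref{eq:setup-entropy-rank}.
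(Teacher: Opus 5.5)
Your proposal is correct and follows the paper's own argument. You reduce to the rectangular core via Lemma~\ref{lem:yb-rectangle}, bound the right-controlled case by the dimension of the span of the conjugates $g(B)^\dagger O g(B)$ in $\End(V)$, use Lemma~\ref{lem:controlled-swap-finite-conjugation-memory} to collapse every relative word $h(\alpha)^\dagger u_a^\dagger u_c h(\alpha)$ into the finite set $\mathcal Q$ in the left-controlled case, and conclude with Eq.~\eqref{eq:setup-entropy-rank}. One small wording fix: $\Swap_{\rm blk}$ does not act separately on the two sides of the cut but exchanges them, which is harmless because relabelling the two tensor factors leaves the operator Schmidt rank unchanged (in the paper it is simply absorbed by moving the source to the first site of the left block).
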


\begin{proof}
Equations~\eqref{eq:controlled-swap-right-rank} and
\eqref{eq:controlled-swap-left-finite-decomposition} give the respective
rank bounds for the rectangular cores.  Lemma~\ref{lem:yb-rectangle}
identifies these ranks with the physical fixed-cut ranks
\(\OSR_c(O(t))\).  Equation~\eqref{eq:setup-entropy-rank} then gives the
entropy statement.
\end{proof}

\section{Yang--Baxter gates on qubits}
\label{sec:qubit-classification}

In this section we prove that the OSR of every time-evolved one-site operator
in a qubit Yang--Baxter circuit remains bounded uniformly in time.  The proof
has two steps.  We first restate Dye's classification of qubit Yang--Baxter
gates~\cite{Dye2003} in Theorem~\ref{thm:qubit-Dye-classification}, using the
braid convention of this paper.  We then reduce the classified gates to three
elementary types of dynamics: separated phase-dressed exchanges, one
Clifford gate, and the scalar identity.

\begin{theorem}[Dye's classification of qubit Yang--Baxter gates
\cite{Dye2003}]
\label{thm:qubit-Dye-classification}
Let \(R\in U(\mathbb C^2\otimes\mathbb C^2)\) satisfy the braid relation.
In the ordered basis
\(\{\ket{00},\ket{01},\ket{10},\ket{11}\}\), every such gate belongs to at
least one of the following five, not necessarily disjoint, families.  Each
family has the form
\begin{equation}
 R=k(Q\otimes Q)\widehat S
 (Q^{-1}\otimes Q^{-1})\Swap,
 \qquad |k|=1,
 \qquad
 Q=\begin{pmatrix}a&b\\c&d\end{pmatrix}\in GL(2,\mathbb C).
 \label{eq:app-Dye-form}
\end{equation}
The algebraic Yang--Baxter representative \(\widehat S\) and the
corresponding restrictions are as follows.  The parametrization assumes
\(d\ne0\) in families 1--4 and \(a\ne0\) in family 3; the family-2
constraint then guarantees that its remaining denominators are nonzero.
\begin{enumerate}
\item In family 1,
\begin{equation}
 \widehat S_1=
 \begin{pmatrix}
  1&0&0&0\\
  0&p&0&0\\
  0&0&q&0\\
  0&0&0&r
 \end{pmatrix},
 \qquad |p|=|q|=|r|=1,
 \qquad c=-\frac{a\overline b}{\overline d}.
 \label{eq:qubit-Dye-family-one}
\end{equation}

\item In family 2,
\begin{equation}
 \widehat S_2=
 \begin{pmatrix}
  0&0&0&p\\
  0&0&1&0\\
  0&1&0&0\\
  q&0&0&0
 \end{pmatrix},
 \qquad |pq|=1,
 \qquad c\ne-\frac{a\overline b}{\overline d},
 \label{eq:qubit-Dye-family-two}
\end{equation}
where
\begin{align}
 p&=
 \frac{(|b|^2+|d|^2)(\overline a b+\overline c d)}
      {(|a|^2+|c|^2)(a\overline b+c\overline d)},
 \nonumber\\
 q&=
 \frac{(|a|^2+|c|^2)(a\overline b+c\overline d)}
      {(|b|^2+|d|^2)(\overline a b+\overline c d)}.
 \label{eq:qubit-Dye-family-two-parameters}
\end{align}

\item In family 3, \(\widehat S_3\) has the same matrix form as
\(\widehat S_2\), while
\begin{equation}
 |pq|=1,
 \qquad
 |p|^2=\frac{|d|^4}{|a|^4},
 \qquad
 |q|^2=\frac{|a|^4}{|d|^4},
 \qquad
 c=-\frac{a\overline b}{\overline d}.
 \label{eq:qubit-Dye-family-three}
\end{equation}

\item In family 4,
\begin{equation}
 \widehat S_4=\frac1{\sqrt2}
 \begin{pmatrix}
  1&0&0&1\\
  0&1&1&0\\
  0&1&-1&0\\
  -1&0&0&1
 \end{pmatrix},
 \qquad
 c=-\frac{a\overline b}{\overline d},
 \qquad |a|=|d|.
 \label{eq:qubit-Dye-family-four}
\end{equation}

\item In family 5,
\begin{equation}
 \widehat S_5=\Swap=
 \begin{pmatrix}
  1&0&0&0\\
  0&0&1&0\\
  0&1&0&0\\
  0&0&0&1
 \end{pmatrix},
 \label{eq:qubit-Dye-family-five}
\end{equation}
and there is no further restriction on \(Q\).
\end{enumerate}
\end{theorem}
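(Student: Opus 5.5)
This statement is a restatement of Dye's classification, so the plan is to import it rather than reprove it from scratch. The work is to translate Dye's result into the conventions of this paper and check that the translation is faithful.

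Dye works with the algebraic Yang--Baxter equation $S_{12}S_{13}S_{23}=S_{23}S_{13}S_{12}$ for $S=\Swap R$ or $R\Swap$, depending on convention. I would first fix the dictionary: if $R$ satisfies the braid relation \eqref{eq:setup-braid}, then $S:=R\Swap$ satisfies the algebraic YBE, and conversely. Unitarity of $R$ is equivalent to unitarity of $S$. This explains the trailing $\Swap$ in \eqref{eq:app-Dye-form}.

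Next I would recall Hietarinta's list of all invertible $4\times4$ solutions of the algebraic YBE up to the equivalences $S\mapsto k(Q\otimes Q)S(Q\otimes Q)^{-1}$, transposition and index relabelling. Dye's theorem selects those orbits containing a unitary element and records which $Q$ make $k(Q\otimes Q)\widehat S(Q\otimes Q)^{-1}$ unitary. I would cite that result \cite{Dye2003} directly and then do two checks.

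\begin{itemize}
\item \textbf{Representatives.} The representatives $\widehat S_1,\dots,\widehat S_5$ must coincide with Dye's after the swap convention is applied. Where Dye's matrices differ by a basis permutation or transposition, I would absorb this into $Q$, for example $Q\mapsto Q\sigma_x$.
\item \textbf{Conditions on $Q$.} The stated restrictions should be exactly Dye's unitarity conditions. For family 1 the condition $c=-a\bar b/\bar d$ says the columns of $Q$ are orthogonal. Then $Q$ is a unitary times a diagonal matrix, and a diagonal unitary solution conjugated by a diagonal matrix stays unitary. This I would verify directly. For family 5, $\Swap$ commutes with $Q\otimes Q$, so there is no restriction. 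For family 4, $|a|=|d|$ together with orthogonality makes $Q$ proportional to a unitary, which gives unitarity immediately.
\end{itemize}

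The main obstacle is families 2 and 3. In these, $Q$ need not be proportional to a unitary, and the parameters $p,q$ in \eqref{eq:qubit-Dye-family-two-parameters} are tuned so that the non-unitary conjugation still yields a unitary matrix.

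For these families I would compute $R R^\dagger$ from the explicit form in the restricted two-dimensional blocks spanned by $\{\ket{00},\ket{11}\}$ and $\{\ket{01},\ket{10}\}$. The requirement that it equal $\id$ should reproduce precisely the formulas for $p$ and $q$ in family 2. In the orthogonal-column case, it should reproduce the modulus constraints of family 3.

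The condition $d\neq0$ (and $a\neq0$ in family 3) is a parametrization artifact. I would note that the complementary cases are recovered by the relabelling $Q\mapsto Q\sigma_x$, so the families as stated still exhaust all qubit Yang--Baxter gates. Since the families need not be disjoint, no uniqueness claim is required, and the proof is complete once every unitary braid solution has been placed in at least one family.
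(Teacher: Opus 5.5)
Your core strategy is the paper's own: it gives no independent proof. It cites Dye and translates conventions via $S=R\Swap$ (or $\Swap R$, since both solve the algebraic equation), which produces the trailing $\Swap$. The problem is the verification you add on top of the citation.

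\textbf{The $d=0$ relabelling fails for family 4.} You claim that the excluded case $d=0$ is recovered by $Q\mapsto Q\sigma_x$. That relabelling must be paired with $\widehat S\mapsto(\sigma_x\otimes\sigma_x)\widehat S(\sigma_x\otimes\sigma_x)$. For $\widehat S_1$ and $\widehat S_2$ this only permutes the parameters, up to a scalar absorbed into $k$, so families 1--3 are fine. It does not preserve $\widehat S_4$.
\begin{itemize}
\item One checks that $\widehat S_4\Swap=B_4=(\id+iY\otimes X)/\sqrt2$.
\item The matrix $Q=\sigma_x$ has orthogonal columns and $|a|=|d|=0$, and it yields the Yang--Baxter gate $R=B_4^\dagger=(\id-iY\otimes X)/\sqrt2$.
\item Suppose $k(U\otimes U)B_4(U\otimes U)^\dagger=B_4^\dagger$ with $U$ unitary. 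Comparing traces gives $k=1$. Then $UYU^\dagger\otimes UXU^\dagger=-Y\otimes X$ forces $U\propto X$ or $U\propto Y$, so $U_{22}=0$.
\item The operator Schmidt rank of $B_4^\dagger$ is $2$. Gates in families 1--3 have rank $4$, and family 5 gives multiples of $\id$.
\end{itemize}
So once $d\neq0$ is imposed, this gate lies in none of the five families. Exhaustiveness therefore does not follow from relabelling. It holds only if the constraints are read in their $d$-free form, $\overline a b+\overline c d=0$ and $|a|=|d|$, and you should state it that way.

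\textbf{The family-2 check cannot be done blockwise.} $\widehat S_2$ preserves $\operatorname{span}\{\ket{00},\ket{11}\}$ and $\operatorname{span}\{\ket{01},\ket{10}\}$. However, $(Q\otimes Q)\widehat S_2(Q\otimes Q)^{-1}$ does not preserve them when the columns of $Q$ are not orthogonal. The correct unitarity condition is $\widehat S^\dagger(H\otimes H)\widehat S=H\otimes H$ with $H=Q^\dagger Q$.
\begin{itemize}
\item The in-block entries give only $|p|=h_{22}/h_{11}$, $|q|=h_{11}/h_{22}$ and $p\overline q=(h_{12}/\overline{h_{12}})^2$. This leaves the phase of $pq$ free.
\item The cross-block entries do the real work. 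For example, the $(\ket{00},\ket{01})$ element gives $\overline q\,h_{22}\overline{h_{12}}=h_{11}h_{12}$. When $h_{12}=\overline a b+\overline c d\neq0$, these force $pq=1$ and the stated formula for $p$.
\item When $h_{12}=0$ the cross-block conditions are vacuous. This is why family 3 retains free phases.
\end{itemize}

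\textbf{Transposition cannot be absorbed into $Q$ in general.} Transposition is not a $Q\otimes Q$ conjugation. For $\widehat S_4$ it happens to be realized by $Q=\mathrm{diag}(1,i)$, but that does not hold for an arbitrary representative.
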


We present Dye's solutions in the braid convention
\eqref{eq:setup-braid} used in this paper, which accounts for the final swap
in \eqref{eq:app-Dye-form}.

Dye obtains this list by starting from Hietarinta's classification of
algebraic Yang--Baxter solutions~\cite{Hietarinta1992} and subsequently
imposing the conditions under which their \(GL(2,\mathbb C)\) transforms are
unitary.  This procedure proves that the
list is exhaustive, but it does not make the resulting unitary families
disjoint.  Once unitarity restricts the parameters, representatives
originating from different algebraic cases may describe the same unitary
gates.  In particular, family 2 is contained, up to homogeneous onsite
unitary conjugation, in a special locus of family 1; this inclusion is shown
explicitly in Proposition~\ref{prop:qubit-Dye-dynamical-reduction}.  We
nevertheless retain Dye's numbering in order to keep the connection with the
original classification transparent.

To analyze the form \eqref{eq:app-Dye-form}, we could always use the polar
decomposition \(Q=UH\), with \(U\) unitary
and \(H\) positive, and absorb \(H\otimes H\) into the representative
\(\widehat S\).  This isolates the physically harmless onsite unitary
\(U\), but replaces the five fixed representatives above by matrices that
depend on \(H\).  The \(GL(2,\mathbb C)\) form \eqref{eq:app-Dye-form} is the
most compact way to state Dye's classification.  The
family-specific reduction relevant for the dynamics is given below in
Proposition~\ref{prop:qubit-Dye-dynamical-reduction}.

The next proposition extracts exactly the information needed for operator
entanglement.

\begin{proposition}[Dynamical reduction of Dye's five families]
\label{prop:qubit-Dye-dynamical-reduction}
Up to an overall phase and the homogeneous onsite unitary conjugation
\eqref{eq:setup-onsite-gauge}, the five families in
Theorem~\ref{thm:qubit-Dye-classification} reduce as follows:
\begin{enumerate}
\item families 1 and 2 are phase-dressed swaps, while family 3 is a
phase-dressed swap complemented by simultaneous spin flips;
\item family 4 is represented by the Clifford gate \(B_4\) in
\eqref{eq:app-Dye-four};
\item family 5 is the identity gate.
\end{enumerate}
In the first item the two permutation gates are \(\Swap\) and
\((X\otimes X)\Swap\), respectively.  Their classical maps are
\begin{equation}
 r_{\rm sw}(a,b)=(b,a),\qquad
 r_{\rm comp}(a,b)=(1-b,1-a),
 \qquad a,b\in\{0,1\}.
 \label{eq:app-qubit-classical-maps}
\end{equation}
\end{proposition}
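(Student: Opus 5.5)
The plan is to treat each family separately, working directly from the form \eqref{eq:app-Dye-form}. The first step is to write $Q=UH$ (polar decomposition), with $U$ unitary and $H$ positive definite. Since $\Swap$ commutes with $U\otimes U$, we get
\[
R=k(U\otimes U)\bigl[(H\otimes H)\widehat S(H^{-1}\otimes H^{-1})\Swap\bigr](U^\dagger\otimes U^\dagger).
\]
By \eqref{eq:setup-onsite-gauge}, the outer conjugation and the phase $k$ can be discarded. It therefore suffices to determine $H$ in each family and compute $T:=(H\otimes H)\widehat S(H^{-1}\otimes H^{-1})\Swap$. The constraint $c=-a\overline b/\overline d$ is exactly the statement that the two columns of $Q$ are orthogonal, since $\overline a b+\overline c d=0$. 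So in families 1, 3 and 4 we have $Q=U\,\mathrm{diag}(\alpha,\beta)$ with $\alpha=(|a|^2+|c|^2)^{1/2}$ and $\beta=(|b|^2+|d|^2)^{1/2}$. I will fix $H=\mathrm{diag}(\alpha,\beta)$ and then compute each case.

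For family 1, $\widehat S_1$ is diagonal and commutes with $H\otimes H$. Hence $T=\widehat S_1\Swap$, a phase-dressed swap $\Swap Q_q$ after relabeling. For family 3, conjugating the anti-diagonal $\widehat S_2$-type matrix by $H\otimes H$ rescales the corner entries to $p\alpha^2/\beta^2$ and $q\beta^2/\alpha^2$. The middle entries are unchanged. Writing $|a|,|d|$ in terms of $\alpha,\beta$, the modulus condition in \eqref{eq:qubit-Dye-family-three} should make these corner entries unimodular. Multiplying by $\Swap$ then gives the monomial gate $\ket{a,b}\mapsto\omega\ket{1-b,1-a}$ on the $00/11$ sector, with the $01,10$ entries kept. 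This is $(X\otimes X)\Swap$ dressed by a diagonal phase, possibly after a further diagonal gauge. The family-4 case works the same way: $H\otimes H$ conjugation multiplies the $(1,4),(4,1)$ entries by $\alpha^2/\beta^2$ and its inverse. Unitarity of $R$ forces $\alpha=\beta$, consistent with $|a|=|d|$. So $T=\widehat S_4\Swap=:B_4$, and I would verify directly that $B_4$ maps Paulis to Paulis, i.e.\ is Clifford. For family 5, $T=\Swap\Swap=\id$ for every $H$.

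The main obstacle is family 2. There the columns of $Q$ are not orthogonal, so $H$ is not diagonal and the conjugation mixes entries. The first thing I would try is to show that $\widehat S_2$ has a basis-change-invariant meaning. Namely, $\widehat S_2\Swap$ is the diagonal matrix $\mathrm{diag}(p,1,1,q)$ composed with the flip $\ket{ab}\mapsto\ket{\bar a\bar b}$ on the $00,11$ sector. I would then compute the eigenvectors of the unitary $R$ explicitly. The goal is an orthonormal onsite basis $\{e_0,e_1\}$, obtained by a unitary $V$, in which $R$ is diagonal times $\Swap$, i.e.\ a member of family 1 with $p=q$ locus constraints.

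Concretely, I expect $H$ to be built from the parameters in \eqref{eq:qubit-Dye-family-two-parameters} so that $T$ preserves a product basis $\{f_0,f_1\}^{\otimes2}$ up to swap. Here $f_\pm$ are eigenvectors of the one-site operator $H\sigma^xH^{-1}$, which is not Hermitian in general. Unitarity of $T$ should force it to be normal, hence unitarily diagonalizable. If this guess fails, the fallback is brute force: impose $T^\dagger T=\id$ on the $4\times4$ matrix with $H$ a general positive $2\times2$ matrix. Since $|pq|=1$ is fixed, this should pin $H$ up to scale. Then diagonalize the resulting $T$ and exhibit the gauge $V\otimes V$ that brings it to $\Swap Q_q$.
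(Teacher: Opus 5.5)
\textbf{Families 1, 3, 4, 5.} Your treatment follows the paper's proof. Orthogonality of the columns of $Q$ gives $Q=U\operatorname{diag}(\alpha,\beta)$. The diagonal factor then commutes with $\widehat S_1$ and $\Swap$, and it rescales the corners of $\widehat S_3$ to unimodular values. In family 4 it is forced to be scalar; indeed $\widehat S_4\Swap$ is literally $B_4$. In family 5 the two swaps cancel.

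\textbf{Family 2.} This is the genuine gap. You leave it as a guess plus a fallback, and neither would go through as written.

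\emph{The fallback is ill-posed.} $H$ is not a free positive matrix to be pinned down by imposing $T^\dagger T=\id$. It is the positive part of the given $Q$, and $p,q$ are already functions of $Q$. What has to be shown is that every admissible $Q$ yields a gauge-equivalent phase-dressed swap.

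\emph{The guess targets the wrong operator.} You propose diagonalizing $H\sigma^xH^{-1}$. But $(H\sigma^xH^{-1})^{\otimes 2}=(H\otimes H)\widehat S_2(H^{-1}\otimes H^{-1})$ holds only when $\widehat S_2=\sigma^x\otimes\sigma^x$, i.e.\ $p=q=1$. In family 2 one has $|p|=(|b|^2+|d|^2)/(|a|^2+|c|^2)$, which is generally different from $1$.

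\textbf{The missing idea.} The explicit formulas \eqref{eq:qubit-Dye-family-two-parameters} give $q=p^{-1}$ exactly, not merely $|pq|=1$, which is all you used. Choose $s$ with $s^2=p$ and set $C=\begin{pmatrix}0&s\\ s^{-1}&0\end{pmatrix}$. Then $\widehat S_2=C\otimes C$, and hence
\[
R=k(B\otimes B)\Swap,\qquad B=QCQ^{-1}.
\]
This product form holds even though the columns of $Q$ are not orthogonal, so the non-diagonal positive part of $Q$ never has to be handled.

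Unitarity of $R$ gives $(B^\dagger B)\otimes(B^\dagger B)=\id$. Since $B^\dagger B$ is positive, this forces $B^\dagger B=\id$, so $B$ is unitary. This is exactly the normality you were hoping for, but it is a property of $B=QCQ^{-1}$, equivalently of $HCH^{-1}$, not of $H\sigma^xH^{-1}$. It only becomes visible after the factorization.

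Write $B=V\operatorname{diag}(e^{i\phi_0},e^{i\phi_1})V^\dagger$ and remove the onsite unitary $V$. What remains is $\operatorname{diag}(e^{i\phi_0},e^{i\phi_1})^{\otimes2}\Swap$. Up to an overall phase, this is a phase-dressed swap on the family-1 locus $p=q=\tau$, $r=\tau^2$, with $\tau=e^{i(\phi_1-\phi_0)}$. This is the inclusion of family 2 in family 1 that you were aiming for.
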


\begin{proof}
In family 1, the constraint on \(Q\) makes the two columns of \(Q\)
orthogonal.  Its polar decomposition may therefore be written
\(Q=U\Delta\), with \(U\) unitary and \(\Delta\) positive diagonal.  The
factor \(\Delta\otimes\Delta\) commutes with the diagonal representative
\(\widehat S_1\) and with \(\Swap\).  Removing the allowed onsite unitary
\(U\) from \eqref{eq:app-Dye-form} leaves a phase-dressed swap.

In family 2, \eqref{eq:qubit-Dye-family-two-parameters} gives \(q=p^{-1}\).
Choose \(s\in\mathbb C^\times\) with \(s^2=p\) and define
\begin{equation}
 C=\begin{pmatrix}0&s\\s^{-1}&0\end{pmatrix}.
 \label{eq:qubit-Dye-family-two-C}
\end{equation}
A direct tensor product then gives \(\widehat S_2=C\otimes C\).  Therefore,
with
\begin{equation}
 B=QCQ^{-1},
 \label{eq:qubit-Dye-family-two-B}
\end{equation}
the physical gate takes the factorized form
\begin{equation}
 R=k(B\otimes B)\Swap.
 \label{eq:qubit-Dye-family-two-factorized}
\end{equation}
Since \(R\), \(k\), and \(\Swap\) are unitary,
\((B^\dagger B)\otimes(B^\dagger B)=\id\).  The matrix \(B^\dagger B\) is
positive, so this identity forces \(B^\dagger B=\id\).  Hence \(B\) is
unitary.  Writing
\(B=V\operatorname{diag}(e^{i\alpha},e^{i\beta})V^\dagger\) and removing
the allowed onsite unitary \(V\) leaves a phase-dressed swap.  After also
removing the overall phase, its family-1 parameters lie on the locus
\(p=q=t\), \(r=t^2\), where \(t=e^{i(\beta-\alpha)}\).  This proves the
inclusion stated after Theorem~\ref{thm:qubit-Dye-classification}.

The same polar-decomposition argument as in family 1 turns family 3 into a
phase-dressed gate whose classical map is the complemented swap.  This
proves the first item.

For family 4, the restrictions on \(Q\) imply
\begin{equation}
 Q^\dagger Q=(|a|^2+|b|^2)\id.
 \label{eq:app-Dye-four-Q}
\end{equation}
Thus \(Q\) is a scalar multiple of a unitary matrix, and the scalar cancels
from \eqref{eq:app-Dye-form}.  After the allowed onsite change of basis,
\(\widehat S_4\Swap\) becomes, up to an overall phase,
\begin{equation}
 B_4=\frac1{\sqrt2}
 \begin{pmatrix}
  1&0&0&1\\
  0&1&1&0\\
  0&-1&1&0\\
  -1&0&0&1
 \end{pmatrix}
 =\frac{\id+iY\otimes X}{\sqrt2}
 =\exp\!\left(\frac{i\pi}{4}Y\otimes X\right).
 \label{eq:app-Dye-four}
\end{equation}
In the standard Jordan--Wigner convention,
\(Y\otimes X=i\gamma_1\gamma_3\), with
\(\gamma_1=X\otimes\id\) and \(\gamma_3=Z\otimes X\).  Thus \(B_4\) is
the exponential of a Majorana bilinear and, at the angle \(\pi/4\), a
Clifford gate.
Finally, \(\widehat S_5=\Swap\), and \(Q\otimes Q\) commutes with
\(\Swap\).  The two swaps in \eqref{eq:app-Dye-form} therefore cancel,
leaving the scalar identity.  This proves the remaining two items.
\end{proof}

We can now state the uniform bound.

\begin{theorem}[Uniform rank for qubit Yang--Baxter gates]
\label{thm:qubit-uniform}
Let \(D=2\) and let \(R\) be any Yang--Baxter gate.  For every one-site
operator \(O\) initially placed at site zero, every \(t\geq0\), and every
spatial cut \(c\),
\begin{equation}
 \OSR_c(O(t))\leq4.
 \label{eq:qubit-uniform-bound}
\end{equation}
Consequently, for every nonzero \(O\) and every \(\alpha\geq0\),
\begin{equation}
 S_\alpha^{\rm op}(O(t))\leq\log4.
 \label{eq:qubit-uniform-entropy}
\end{equation}
Both bounds are uniform in time.
\end{theorem}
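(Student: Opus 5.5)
The plan is a case analysis along Proposition~\ref{prop:qubit-Dye-dynamical-reduction}. By Theorem~\ref{thm:qubit-Dye-classification}, every qubit Yang--Baxter gate lies in one of Dye's five families. The reduction of the proposition uses only an overall phase and a homogeneous onsite unitary conjugation \eqref{eq:setup-onsite-gauge}. By the remark following \eqref{eq:setup-onsite-gauge}, neither operation changes the exact OSR or the operator-Schmidt spectrum, once $O$ is replaced by $U^\dagger OU$. It therefore suffices to bound $\OSR_c(O(t))$ for three representative dynamics:
\begin{itemize}
\item phase-dressed gates with classical map $r_{\rm sw}$ or $r_{\rm comp}$;
\item the Clifford gate $B_4$;
\item the identity.
\end{itemize}
The case $t=0$ is trivial, since a one-site operator has OSR at most $1$. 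For each $t\ge1$, the answer does not depend on the cut placement up to the reflection conventions of Section~\ref{sec:rectangular-reduction}, because each argument below is cut-independent.

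\emph{Families 1, 2, 3.} A phase-dressed swap is $R_{f,g,q}$ with $f=g=\id_X$. The complemented swap $(X\otimes X)\Swap Q_q$ is $R_{f,g,q}$ with $f=g$ the bit flip. Both are therefore separated phase-dressed exchanges of the form \eqref{eq:separated-gate}. Theorem~\ref{thm:separated-uniform-rank} then applies with $D=2$ and gives $\OSR_c(O(t))\le D(D-1)+1=3\le4$ for every cut and every $t$.

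\emph{Family 5.} The gate is a scalar multiple of the identity. Hence $O(t)=O$ and the OSR is $1$.

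\emph{Family 4.} After the allowed gauge, the gate is $B_4=\exp(i\pi Y\otimes X/4)$. This is a Clifford gate, so conjugation by it maps every Pauli string to a phase times a single Pauli string. The Floquet operator $U_F$ is a product of such gates, so $U_F^{-t}PU_F^t$ is again $\pm$ or $\pm i$ times one Pauli string for every one-site Pauli $P\in\{\id,X,Y,Z\}$. A single Pauli string is a product operator across any cut and has OSR $1$. Expanding $O=\sum_{P}o_PP$ in the four one-site Paulis and using subadditivity of OSR under sums gives $\OSR_c(O(t))\le4$.

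The main point to check is the Clifford claim. Since $(Y\otimes X)^2=\id$, we have $B_4=(\id+iY\otimes X)/\sqrt2$. For a Pauli string $P$ that anticommutes with $Y\otimes X$, conjugation gives $B_4^\dagger PB_4=\pm i(Y\otimes X)P$, which is again a Pauli string up to a phase. For a string that commutes with $Y\otimes X$, $P$ is left invariant. This is exactly the mechanism of \cite{Gottesman1999} quoted in the introduction. Taking the maximum of the four case bounds ($3$, $4$, $1$) yields \eqref{eq:qubit-uniform-bound}. The entropy bound \eqref{eq:qubit-uniform-entropy} then follows from \eqref{eq:setup-entropy-rank}.
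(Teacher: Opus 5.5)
Your proposal is correct and follows the paper's proof essentially step for step. You reduce via Proposition~\ref{prop:qubit-Dye-dynamical-reduction}, using gauge and phase invariance of the operator-Schmidt spectrum; you apply Theorem~\ref{thm:separated-uniform-rank} with $D=2$ to families 1--3 to get the bound $3$; you use single-Pauli-string propagation under the Clifford gate $B_4$ plus subadditivity to get $4$; and you note that family 5 is the scalar identity. Your explicit check of $B_4^\dagger P B_4$ and your separate treatment of $t=0$ are minor additions that the paper leaves implicit.
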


\begin{proof}
Proposition~\ref{prop:qubit-Dye-dynamical-reduction} leaves three cases.
Throughout the case analysis, the overall phase and homogeneous onsite
conjugation used in that proposition do not change the operator-Schmidt
spectrum, as explained after Eq.~\eqref{eq:setup-onsite-gauge}.

In families 1--3, the two classical supports in
Eq.~\eqref{eq:app-qubit-classical-maps} are separated exchanges of the form
\eqref{eq:separated-support}.  For the ordinary swap we have
\(f=g=\id_X\), while for the complemented swap both \(f\) and \(g\) are the
bit flip.  Theorem~\ref{thm:separated-uniform-rank} therefore applies and
gives
\begin{equation}
 \OSR_c(O(t))\leq D(D-1)+1=3.
 \label{eq:qubit-separated-bound}
\end{equation}

In family 4, the gate \(B_4\) is Clifford.  Standard Pauli propagation maps
each one-site Pauli operator to a single Pauli string
\cite{Gottesman1999}.  Expanding an arbitrary one-site operator in the four
Pauli basis elements therefore expresses its evolution as a sum of at most
four product operators.  Its OSR is consequently at most four across every
cut.  In family 5, the gate is the identity, so the local operator is
stationary and has OSR one.

These cases exhaust Dye's classification.  Taking their largest bound gives
Eq.~\eqref{eq:qubit-uniform-bound}, and
Eq.~\eqref{eq:setup-entropy-rank} gives
Eq.~\eqref{eq:qubit-uniform-entropy}.
\end{proof}

The exhaustiveness of this conclusion is specific to qubits: the separated
exchange theorem holds in every local dimension \(D\), but no analogue of
Dye's classification is available for higher \(D\).

For a concrete qubit example, see the phase-dressed swap \(R_K\) in
Eq.~\eqref{eq:qubit-dual-unitary-XXZ} and its discussion in
Subsection~\ref{subsec:separated-phase-dressed}; see also
Refs.~\cite{BertiniKosProsenII,GiudiceEtAl2022}.

\section{Involutive Yang--Baxter gates}
\label{sec:involutive-unitary}

In this section we discuss involutive Yang--Baxter gates.  As recalled in
Subsection~\ref{subsec:yb-gates}, involutivity makes a unitary gate Hermitian
and lets the braid-group representation factor naturally through the
symmetric group.  These gates also admit the natural Baxterization reviewed
in Subsection~\ref{subsec:integrability}, so the corresponding circuits are
integrable in the commuting-transfer-matrix sense discussed there.

Lechner, Pennig, and Wood (LPW) classified these gates through the compatible
symmetric-group representations generated on all tensor powers
\cite{LechnerPennigWood2019}.  Their classification theorem is central to
our argument, and we state it below as
Theorem~\ref{thm:inv-LPW-classification}.
Their notion of equivalence is much coarser than an onsite change of basis,
and this distinction will matter for the entanglement problem.

We do not establish new results in this section.  Instead, we connect the
existing literature to familiar integrable models and to the
representation-theoretic structures needed for the operator-entanglement
bounds derived later in Sections~\ref{sec:inv-polynomial-proof}
and~\ref{sec:normal-form-dynamics}.

\subsection{The LPW invariant and equivalence relation}
\label{subsec:inv-LPW-invariant}

For $n\geq2$, put
\begin{equation}
 R_i=\id^{\otimes(i-1)}\otimes R\otimes
 \id^{\otimes(n-i-1)}.
 \label{eq:inv-Ri}
\end{equation}
The braid relation gives a representation of the braid group, and
$R_i^2=\id$ makes it factor through the symmetric group:
\begin{equation}
 \rho_R^{(n)}:S_n\longrightarrow U(V^{\otimes n}),
 \qquad \rho_R^{(n)}((i,i+1))=R_i.
 \label{eq:inv-Sn-representation}
\end{equation}
The normalized characters are
\begin{equation}
 \chi_R^{(n)}(\sigma)=D^{-n}
 \operatorname{Tr}_{V^{\otimes n}}\rho_R^{(n)}(\sigma).
 \label{eq:inv-character}
\end{equation}
LPW call two gates equivalent when they have the same local dimension and
the same characters at every tensor level.  Since $S_n$ is finite, this is
equivalent to unitary equivalence of $\rho_R^{(n)}$ and
$\rho_S^{(n)}$ for every fixed $n$.  Explicitly, equivalence means that for
each $n\geq2$ there is a unitary $U_n\in U(V^{\otimes n})$ such that
\begin{equation}
 U_n\rho_R^{(n)}(\sigma)=\rho_S^{(n)}(\sigma)U_n,
 \qquad \sigma\in S_n.
 \label{eq:inv-LPW-intertwiner}
\end{equation}
We stress that $U_n$ may depend on $n$ and need not factor into onsite
transformations.  LPW equivalence need not preserve locality with respect to
a spatial cut.

The remarkable point is that this all-level equivalence is determined by a
single $D\times D$ matrix.  Define the unnormalized right partial trace
\begin{equation}
 T_R=(\operatorname{id}\otimes\operatorname{Tr})R,
 \qquad (T_R)_a^{\ c}=\sum_bR_{ab}^{cb}.
 \label{eq:inv-partial-trace}
\end{equation}
It is Hermitian because $R$ is Hermitian.  The LPW partial-trace theorem
says that, for two involutive Yang--Baxter gates $R$ and $S$,
\begin{equation}
 R\sim S
 \quad\Longleftrightarrow\quad
 T_R\text{ and }T_S\text{ are unitarily similar}.
 \label{eq:inv-LPW-equivalence}
\end{equation}
Thus the spectrum of one $D\times D$ Hermitian matrix determines the
equivalence class of the entire compatible family
\eqref{eq:inv-Sn-representation}.  This statement is much stronger than a
classification at one tensor level, but weaker than a classification of
the two-site matrices under local basis changes.

LPW also go beyond this equivalence criterion and construct a standard
representative of every class.  We call these representatives signed-block
normal forms and introduce them, together with the corresponding
classification theorem, in
Subsection~\ref{subsec:inv-LPW-normal-form}.

\subsection{Signed-block normal forms and integer partitions}
\label{subsec:inv-LPW-normal-form}

We now introduce the signed-block normal forms, which are the standard
representatives of the LPW equivalence classes.  The possible spectra in
\eqref{eq:inv-LPW-equivalence} are highly restricted and encoded by these
blocks.  Choose positive integers
\begin{equation}
 d_1^+,\ldots,d_p^+,
 \qquad d_1^-,\ldots,d_q^-,
 \qquad
 \sum_{i=1}^p d_i^++\sum_{j=1}^q d_j^-=D.
 \label{eq:inv-block-sizes}
\end{equation}
Let
\begin{equation}
 V=\bigoplus_{i=1}^pV_i^+\oplus
   \bigoplus_{j=1}^qV_j^-,
 \qquad \dim V_i^+=d_i^+,
 \quad \dim V_j^-=d_j^-.
 \label{eq:inv-block-decomposition}
\end{equation}
It is convenient to combine the positive and negative summands into blocks
$V_\alpha$ with signs $\varepsilon_\alpha\in\{+1,-1\}$.  For each choice of
the integers $d_i^\pm$ in \eqref{eq:inv-block-sizes}, we denote the
corresponding LPW normal form by $N=N(d_1^+,\ldots,d_p^+;d_1^-,\ldots,d_q^-)$.
It is defined by the action
\begin{equation}
 N(u\otimes v)=
 \begin{cases}
  \varepsilon_\alpha u\otimes v,
    &u,v\in V_\alpha,\\
  v\otimes u,
    &u\in V_\alpha,\ v\in V_\beta,\ \alpha\ne\beta.
 \end{cases}
 \label{eq:inv-normal-form-action}
\end{equation}
In words, vectors belonging to different blocks exchange places, while two
vectors in the same block do not exchange and acquire the sign of that
block.

Directly from \eqref{eq:inv-normal-form-action},
\begin{equation}
 T_N|_{V_i^+}=d_i^+\id,
 \qquad
 T_N|_{V_j^-}=-d_j^-\id.
 \label{eq:inv-normal-partial-trace}
\end{equation}

The following classification theorem was proven by LPW
\cite{LechnerPennigWood2019}.
\begin{theorem}[LPW normal-form classification \cite{LechnerPennigWood2019}]
\label{thm:inv-LPW-classification}
Every involutive Yang--Baxter gate is LPW-equivalent to exactly one
signed-block normal form \eqref{eq:inv-normal-form-action}, up to reordering
blocks with the same sign.  Equivalently, its partial trace has eigenvalues
\begin{equation}
 +d_i^+\ \text{with multiplicity }d_i^+,
 \qquad
 -d_j^-\ \text{with multiplicity }d_j^-.
 \label{eq:inv-LPW-spectrum}
\end{equation}
\end{theorem}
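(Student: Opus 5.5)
This is the Lechner--Pennig--Wood classification theorem. My plan is to reconstruct its proof from the partial-trace criterion \eqref{eq:inv-LPW-equivalence}, taking that criterion as the established input. The argument splits into three tasks. First, show that every signed-block normal form $N$ is an involutive Yang--Baxter gate and compute $T_N$. Second, show that the spectrum of $T_R$ for an arbitrary involutive Yang--Baxter gate has the stated form. Third, combine these with \eqref{eq:inv-LPW-equivalence} to obtain existence and uniqueness.

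For the first task, I check directly from \eqref{eq:inv-normal-form-action} that $N^2=\id$ and that $N$ satisfies the braid relation. On a product vector $u\otimes v\otimes w$ with each factor in a single block, both braid words act as the same signed permutation. The checking reduces to a case analysis on which of the three blocks coincide; this is a routine finite verification. The partial trace \eqref{eq:inv-normal-partial-trace} is then immediate. So the normal form has spectrum \eqref{eq:inv-LPW-spectrum}, and distinct multisets of signed block sizes give non-similar $T_N$. With \eqref{eq:inv-LPW-equivalence}, this proves uniqueness up to reordering blocks of equal sign.

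For existence, it suffices to show that for any involutive Yang--Baxter gate $R$, the Hermitian matrix $T_R$ has eigenvalues that are nonzero integers $\pm m$, each occurring with multiplicity exactly $m$. Given that, one picks the matching $N$, $T_N$ is unitarily similar to $T_R$, and \eqref{eq:inv-LPW-equivalence} gives $R\sim N$. To prove the spectral claim, I would use the tower of representations $\rho_R^{(n)}$ and the normalized characters \eqref{eq:inv-character}. The character of a cycle factorizes through iterated partial traces, since $\operatorname{Tr}_{n}\bigl(R_{n-1}X\bigr)=T_R$ inserted at site $n-1$ for $X$ independent of site $n$. Hence the normalized character of an $n$-cycle equals $D^{-n}\operatorname{Tr}(T_R^{\,n-1})$. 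By multiplicativity over disjoint cycles, the whole character is a function of the power sums of $T_R/D$.

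The family $\chi_R^{(n)}$, $n\ge2$, is then a character of the infinite symmetric group $S_\infty$ that is positive definite and extremal (factorial). Thoma's theorem forces
\[
\chi(\text{$n$-cycle})=\sum_i\alpha_i^n-(-1)^n\sum_j\beta_j^n,\qquad \sum_i\alpha_i+\sum_j\beta_j\le1.
\]
Matching this with $D^{-n}\operatorname{Tr}T_R^{\,n-1}$ for all $n$ identifies the nonzero eigenvalues of $T_R$. Each eigenvalue $\lambda$ must have multiplicity $|\lambda|$, with $\alpha=\lambda/D$ for positive $\lambda$ and $\beta=|\lambda|/D$ for negative $\lambda$. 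Finite dimensionality of $V^{\otimes n}$ then forces the Thoma parameters to be rational with finitely many nonzero entries. The trace normalization $\chi^{(1)}=1$ forces $\sum\alpha+\sum\beta=1$, which rules out zero eigenvalues and gives $\sum_i d_i^++\sum_j d_j^-=D$.

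The main obstacle is this spectral step. Applying Thoma's theorem needs extremality (factoriality) of the character, which must be justified from $R\in\End(V\otimes V)$, for instance via the product structure of the character across tensor levels. Converting the Thoma data into the integrality and multiplicity statements also requires care with multiplicities. I would try first to prove the integrality directly, by examining the $S_n$-isotypic decomposition of $\rho_R^{(n)}$ for large $n$ and counting Young diagrams that fit in hook shapes.
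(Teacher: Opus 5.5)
The paper gives no proof of this statement; it quotes it from LPW. Your architecture is the right one: characters of $S_\infty$, Thoma parameters read off from $\operatorname{spec}T_R$, and integrality from finite dimension. But the step everything rests on is not justified. After tracing out site $n$ you have $R_1\cdots R_{n-2}(T_R)_{n-1}$. The next partial trace is over the site that now carries $T_R$, so your one-step identity no longer applies. Your derivation never uses the braid relation, and without it the formula fails: for the Pauli matrix $Z$, the gate $R=Z\otimes\id$ is involutive, yet $\operatorname{Tr}(R_1R_2)=0$ while $\operatorname{Tr}T_R^2=8$.

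The missing ingredient is the intertwining relation $R(\id\otimes T_R)R=T_R\otimes\id$. Apply $\operatorname{Tr}_{23}$ to $R_1R_2R_1=R_2R_1R_2$. Cyclicity of $\operatorname{Tr}_{23}$ against $R_2$ and $R_2^2=\id$ give $\operatorname{Tr}_2[R(\id\otimes T_R)R]=D\,T_R$. Since everything is Hermitian, $\|R(\id\otimes T_R)R-T_R\otimes\id\|_{\mathrm{HS}}^2=D\operatorname{Tr}T_R^2-2D\operatorname{Tr}T_R^2+D\operatorname{Tr}T_R^2=0$. Hence $\operatorname{Tr}_2[R(\id\otimes T_R^{m})]=T_R^{m+1}$, and your induction closes.

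Extremality, by contrast, is not a real obstacle. A permutation whose cycles occupy consecutive intervals acts as a tensor product of cycle operators, so $\chi_R$ is multiplicative, which is Thoma's criterion for extremality.

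Two later steps also need repair. First, $\chi^{(1)}=1$ does not force $\sum_i\alpha_i+\sum_j\beta_j=1$. Thoma's normalization is $\sum_i\alpha_i+\sum_j\beta_j+\gamma=1$, and your matching gives exactly $\gamma=\dim\ker T_R/D$, because the traces $\operatorname{Tr}T_R^{n-1}$ with $n\geq2$ do not see zero eigenvalues. Excluding them needs finite dimensionality. If $\gamma>0$, every irreducible representation of $S_n$ occurs in $\rho_R^{(n)}$. That is impossible, because $\sum_{\lambda\vdash n}\dim\lambda$ (the number of involutions in $S_n$) eventually exceeds $D^n$.

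Second, the matching only shows that a nonzero eigenvalue $\lambda$ has multiplicity $k_\lambda|\lambda|$, where $k_\lambda\in\mathbb{Z}_{>0}$ counts blocks of equal size; it is not ``exactly $|\lambda|$''. Moreover $|\lambda|$ is a priori only rational, so integrality is a genuine step that you leave open. It can be closed along the lines you suggest. Put $a_i=D\alpha_i$ and $b_j=D\beta_j$. The multiplicities of the irreducibles $(n)$ and $(1^n)$ in $\rho_R^{(n)}$ are the Taylor coefficients of $\prod_j(1+b_jx)/\prod_i(1-a_ix)$ and $\prod_i(1+a_ix)/\prod_j(1-b_jx)$. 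These are integer power series, so Fatou's lemma makes the $a_i,b_j$ algebraic integers, hence integers.
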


For the normal form $N$ supplied by the theorem,
\eqref{eq:inv-LPW-intertwiner} holds with $S=N$.  Equivalently, since the
adjacent transpositions generate $S_n$, the same $U_n$ intertwines every
local generator:
\begin{equation}
 U_nR_k=N_kU_n,
 \qquad k=1,\ldots,n-1,
 \label{eq:inv-LPW-generator-intertwiner}
\end{equation}
where $N_k$ is defined from $N$ in the same way as $R_k$ is defined from
$R$ in \eqref{eq:inv-Ri}.

The following elementary examples illustrate the meaning of the blocks and
their signs.

\begin{enumerate}
\item A single positive block of size $D$ gives $N=\id_{V\otimes V}$ and
$T_N=D\id_V$.  It corresponds to the pair of partitions $((D),\varnothing)$.
A single negative block gives $-\id_{V\otimes V}$ and
$(\varnothing,(D))$.

\item Splitting $V$ into $D$ positive one-dimensional blocks gives the
ordinary swap $\Swap$.  Indeed, unequal basis labels are exchanged, while an
equal pair is fixed.  Here $T_{\Swap}=\id$ and the pair of partitions is
$((1^D),\varnothing)$.

\item In dimension two there are five LPW classes.  Their normal
representatives and partial-trace spectra are
\begin{equation}
\begin{array}{c|c|c|c}
 (d^+;d^-)&\text{representative}&\operatorname{spec}T_N
 &\text{dual unitarity}\ 
 \\ \hline
 (2;-)& \id &(2,2)&\text{no}\\
 (-;2)&-\id&(-2,-2)&\text{no}\\
 (1,1;-)&\Swap&(1,1)&\text{yes}\\
 (-;1,1)&N_{--}&(-1,-1)&\text{yes}\\
 (1;1)&N_{+-}&(1,-1)&\text{yes}
\end{array}
 \label{eq:inv-D2-classes}
\end{equation}
The last three representatives exchange unequal labels and attach the
indicated signs to equal-label pairs.  In the ordered basis
$(\ket{00},\ket{01},\ket{10},\ket{11})$ they are explicitly
\begin{align}
 \Swap&=
 \begin{pmatrix}
  1&0&0&0\\
  0&0&1&0\\
  0&1&0&0\\
  0&0&0&1
 \end{pmatrix},
 &
 N_{--}&=
 \begin{pmatrix}
  -1&0&0&0\\
  0&0&1&0\\
  0&1&0&0\\
  0&0&0&-1
 \end{pmatrix},
 \nonumber\\[1mm]
 N_{+-}&=
 \begin{pmatrix}
  1&0&0&0\\
  0&0&1&0\\
  0&1&0&0\\
  0&0&0&-1
 \end{pmatrix}.
 \label{eq:inv-D2-matrices}
\end{align}
In the occupation-number convention, with $\ket{0}$ empty and $\ket{1}$
occupied, the mixed-sign representative is precisely the standard
fermionic swap gate,
$N_{+-}=\mathrm{fSWAP}=\Swap\,\mathrm{CZ}$: the minus sign on
$\ket{11}$ is the fermionic exchange sign \cite{JiangEtAl2018}.
Equivalently, $N_{+-}$ is the graded permutation on $\mathbb C^{1|1}$.

\item As a three-dimensional example, let $V=V_+\oplus V_-$ with
$\dim V_+=2$ and $\dim V_-=1$.  The normal form is the identity on
$V_+\otimes V_+$, minus the identity on $V_-\otimes V_-$, and swap on the
two mixed sectors.  Its partial-trace spectrum is $(2,2,-1)$.
\end{enumerate}

\subsection{All-positive normal forms as XXC and multiplicity models}
\label{subsec:inv-XXC-multiplicity}

The all-positive LPW normal forms are closely related to known integrable
lattice models.  To make this relation explicit, set $q=0$, write $b=p$ for the number of
blocks, choose bases
$X_\alpha=\{e_{\alpha r}\}_{r=1}^{d_\alpha}$ of $V_\alpha$, and let
$P_\alpha$ be the corresponding projectors.  With
$E_{\alpha r,\beta s}=\ket{e_{\alpha r}}\bra{e_{\beta s}}$, define the
operators
\begin{align}
 \mathsf P_{\rm M}^{(2)}
 &=\sum_{\alpha=1}^{b}P_\alpha\otimes P_\alpha,
 \label{eq:inv-Maassarani-P2}\\
 \mathsf P_{\rm M}^{(3)}
 &=\sum_{\alpha\ne\beta}
   \sum_{r=1}^{d_\alpha}\sum_{s=1}^{d_\beta}
   E_{\beta s,\alpha r}\otimes E_{\alpha r,\beta s},
 \qquad
 \mathsf P_{\rm M}^{(1)}=\id-\mathsf P_{\rm M}^{(2)}.
 \label{eq:inv-Maassarani-P3}
\end{align}
The first operator is the identity on equal-block sectors and the second
exchanges states from distinct blocks.  Consequently the literal
all-positive LPW representative is exactly
\begin{equation}
 N_+=\mathsf P_{\rm M}^{(2)}+\mathsf P_{\rm M}^{(3)}.
 \label{eq:inv-Maassarani-N}
\end{equation}

Equation~\eqref{eq:inv-Maassarani-N} is exactly the unit-twist constant
generator in the XXC and multiplicity constructions of
Refs.~\cite{Maassarani1998XXC,Maassarani1999Multiplicity}.  For two blocks
it is the positive-sign rational XXC generator of type $d_1+d_2$.  For
arbitrary $b$ it is the multiplicity $A_{b-1}$ generator denoted
$(d_1,\ldots,d_b;b,D)$ in
Ref.~\cite{Maassarani1999Multiplicity}: each of the $b$ parent colours is
replaced by $d_\alpha$ copies, copies of one colour do not exchange, and
copies of distinct colours do exchange.

Ref.~\cite{GomborPozsgay2022} uses precisely the same constant block rule
for arbitrary partitions.  For the two-block partition $1+(D-1)$, with $0$
denoting the singleton block, the constant rule is
\begin{equation}
 N_+\ket{x,y}=\begin{cases}
  \ket{y,x},&x=0\text{ or }y=0,\\
  \ket{x,y},&x,y\ne0.
 \end{cases}
 \label{eq:literature-XXC}
\end{equation}
The type $1+2$ gate is the charged hard-core gas rule of
Ref.~\cite{MedenjakKlobasProsen2017}, and the full $1+(D-1)$ family is the
quantum hard-core gas treated in Ref.~\cite{Medenjak2022}.  In dimension
four, the types $1+3$, $1+1+2$, and $2+2$ are among the concrete maps
discussed in Ref.~\cite{GomborPozsgay2022}.

The XXC and multiplicity models discussed above have positive signs on all
blocks.

\subsection{Dual unitarity collapses all blocks to singletons}
\label{subsec:inv-DU-collapse}

We now impose dual unitarity.  Its entire effect on the LPW data follows from
one norm identity.  In the reshuffling convention
\eqref{eq:setup-reshuffle}, vectorization gives
\begin{equation}
 \mathfrak f(T_R)=R^\Gamma\mathfrak f(\id)
 \label{eq:inv-vec-partial-trace}
\end{equation}
follows immediately from \eqref{eq:inv-partial-trace}.  Hence unitarity of
$R^\Gamma$ fixes the Hilbert--Schmidt norm:
\begin{equation}
 \|T_R\|_{\mathrm{HS}}^2
 =\|\mathfrak f(\id)\|^2=D.
 \label{eq:inv-DU-partial-trace-norm}
\end{equation}
This is the only point at which dual unitarity enters the LPW argument.

\begin{proposition}[Singleton collapse]
\label{prop:inv-singleton-collapse}
If an involutive Yang--Baxter gate is dual-unitary, every positive
and negative LPW block has dimension one.  Conversely, a signed-block LPW
normal form is dual-unitary if and only if all its blocks are
one-dimensional.
\end{proposition}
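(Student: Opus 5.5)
Combine the norm identity \eqref{eq:inv-DU-partial-trace-norm} with the spectral constraint \eqref{eq:inv-LPW-spectrum} supplied by Theorem~\ref{thm:inv-LPW-classification}. For the forward direction, let $R$ be an involutive dual-unitary Yang--Baxter gate. Theorem~\ref{thm:inv-LPW-classification} says that $T_R$ has eigenvalue $\pm d_\alpha$ with multiplicity $d_\alpha$ for each block $\alpha$. Since $T_R$ is Hermitian, its squared Hilbert--Schmidt norm equals the sum of the squared eigenvalues, so
\begin{equation*}
 \|T_R\|_{\mathrm{HS}}^2=\sum_{\alpha} d_\alpha\cdot d_\alpha^2=\sum_\alpha d_\alpha^3 .
\end{equation*}
Dual unitarity fixes this quantity to $D=\sum_\alpha d_\alpha$ by \eqref{eq:inv-DU-partial-trace-norm}. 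Hence $\sum_\alpha d_\alpha(d_\alpha^2-1)=0$. Every term is nonnegative because $d_\alpha\ge1$, so every $d_\alpha=1$. The only points to verify are that $T_R$ is Hermitian, which holds because $R=R^\dagger$, and that the vectorization identity \eqref{eq:inv-vec-partial-trace} correctly matches the reshuffling convention. I would check the latter directly in indices: $(R^\Gamma\iota)_{(a,c)}=\sum_b R^{cb}_{ab}=(T_R)_a^{\ c}$.

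For the converse, compute $N^\Gamma$ directly from \eqref{eq:inv-normal-form-action} in a basis adapted to the blocks. Assume first that all blocks are singletons, with signs $\varepsilon_x$, $x\in X$. Then $N\ket{x,y}=\ket{y,x}$ for $x\ne y$ and $N\ket{x,x}=\varepsilon_x\ket{x,x}$. So $N$ is a monomial gate whose support is the swap map $r(x,y)=(y,x)$, phase-dressed by $\omega(x,x)=\varepsilon_x$ and $\omega(x,y)=1$ otherwise. The swap support is non-degenerate, since $\lambda_x=\mu_y=\id_X$. By the criterion quoted in Subsection~\ref{subsec:dual-unitarity}, $N$ is therefore dual-unitary. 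Alternatively, $N$ is a separated phase-dressed exchange $\Swap Q_q$, and one can check directly that its reshuffle is a diagonal unitary, with entries $q_{ab}$.

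Conversely, suppose some block $V_\alpha$ has dimension $d_\alpha\ge2$. There are two ways to show that $N$ is not dual-unitary. The quickest is to reuse the norm computation: by \eqref{eq:inv-normal-partial-trace}, $\|T_N\|_{\mathrm{HS}}^2=\sum_\alpha d_\alpha^3>D$. This contradicts \eqref{eq:inv-DU-partial-trace-norm}, which every dual-unitary gate must satisfy. That suffices, and it avoids the explicit reshuffled matrix. The main subtlety is not computational. It is that the equivalence in the first part is only LPW equivalence, not an onsite change of basis. I would emphasize that the argument never needs $R$ to be locally equivalent to $N$. It uses only that $T_R$ is unitarily similar to $T_N$, and therefore has the same spectrum and the same Hilbert--Schmidt norm.
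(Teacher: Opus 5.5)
Your proof is correct and follows the paper's argument: the LPW spectrum gives $\|T_R\|_{\mathrm{HS}}^2=\sum_\alpha d_\alpha^3$, the norm constraint $\|T_R\|_{\mathrm{HS}}^2=D$ then forces $\sum_\alpha d_\alpha(d_\alpha^2-1)=0$ (for the forward direction and for necessity in the normal form), and sufficiency comes from recognizing the singleton normal form as a phase-dressed swap. One small slip: in the paper's convention $(R^\Gamma)_{(a,c),(b,d)}=R^{cd}_{ab}$, the reshuffle of $\Swap Q_q$ is $Q_q\Swap$, which is a monomial unitary rather than a diagonal one; this does not affect your conclusion.
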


\begin{proof}
Using \eqref{eq:inv-LPW-spectrum} and
\eqref{eq:inv-DU-partial-trace-norm},
\begin{equation}
 D=\|T_R\|_{\mathrm{HS}}^2
 =\sum_i(d_i^+)^3+\sum_j(d_j^-)^3.
 \label{eq:inv-cube-sum}
\end{equation}
On the other hand, \eqref{eq:inv-block-sizes} gives
$D=\sum_i d_i^++\sum_jd_j^-$.  Subtraction yields
\begin{equation}
 \sum_i d_i^+\bigl((d_i^+)^2-1\bigr)
 +\sum_j d_j^-\bigl((d_j^-)^2-1\bigr)=0.
 \label{eq:inv-positive-cube-difference}
\end{equation}
Every summand is a nonnegative integer and vanishes only for a block of
size one.  This proves the first statement and the necessary direction for
the normal form itself.  Conversely, if all blocks of $N$ are
one-dimensional, then \eqref{eq:inv-normal-form-action} is the phase-dressed
SWAP gate $R_q$ of Eq.~\eqref{eq:separated-phase-dressed-swap}, with
$q_{aa}=\varepsilon_a$ and $q_{ab}=1$ for $a\ne b$.
Its reshuffling is again a monomial unitary, so it is dual-unitary.
\end{proof}

The proposition collapses the classification drastically.  First,
\begin{equation}
 \operatorname{spec}T_R\subset\{+1,-1\},
 \qquad T_R^2=\id.
 \label{eq:inv-partial-trace-involution}
\end{equation}
Second, the two partitions are $(1^p)$ and $(1^q)$ with
$p+q=D$.  Thus only $D+1$ of the LPW equivalence classes can contain a
gate with dual unitarity.

The converse in Proposition~\ref{prop:inv-singleton-collapse} applies to the
signed singleton normal form itself, not to every gate in its LPW class.  LPW
equivalence does not preserve dual unitarity, so a gate without dual
unitarity may have singleton LPW data and hence the same type of normal form.
In particular, LPW equivalence does not imply that the physical gate $R$ is
related to that normal form by a homogeneous onsite conjugation, or even by a
fixed two-site similarity.  The tensor-level intertwiners $U_n$ in
\eqref{eq:inv-LPW-intertwiner} may be nonlocal across the spatial cut and can
therefore change operator entanglement.  For a general representative
$R\sim N$, we use the normal form only to count the dimension of the
representation-theoretic commutant in the proof of
Theorem~\ref{thm:inv-polynomial-rank}; that dimension is invariant under
arbitrary unitary equivalence.

\section{Polynomial operator Schmidt rank for involutive gates with dual unitarity}
\label{sec:inv-polynomial-proof}

In this section we prove that the operator Schmidt rank of an evolved
one-site operator grows at most polynomially in circuits with an involutive
dual-unitary Yang--Baxter gate.  Recall that a complete classification of gates with
these two properties remains open, as discussed in
Subsection~\ref{subsec:involutive-monomial-relation}.  More concretely, it is
not known whether every such gate can be brought, by a suitable local change
of basis, to an involutive phase-dressed monomial gate of the family studied
later in Section~\ref{sec:yb-maps}.  We nevertheless prove the bounds
directly from dual unitarity and involutivity.  This argument remains useful
even if every gate in the class is eventually shown to have such a monomial
form.

We combine the geometric and
representation-theoretic arguments developed in the previous sections.
Dual unitarity first gives the partial-trace identity
\eqref{eq:inv-DU-partial-trace-norm}, which determines the possible LPW
normal forms of these gates: every block is a signed singleton.  The rest of
the proof has three ingredients.  The rectangular
reduction of Section~\ref{sec:rectangular-reduction} puts one Schmidt leg in
the braid-invariant output space of
Theorem~\ref{thm:central-invariance}.  The LPW
classification identifies the dimension of that space with the commutant of
a signed singleton exchange.  Finally, in this section we count this
commutant by occupation numbers of $D^2$ ket--bra pair types.

For the signed singleton normal form $N$ itself,
Theorem~\ref{thm:normal-form-label-grouping} will show that the one-site
rank remains bounded independently of time.  The polynomial estimate in the
theorem below is needed for a general involutive dual-unitary gate, which is
only LPW-equivalent to that matrix.

\begin{theorem}[Polynomial rank for involutive gates with dual unitarity]
\label{thm:inv-polynomial-rank}
Let $R$ be an involutive Yang--Baxter gate with dual unitarity.  Then every
one-site operator initially at site zero obeys
\begin{equation}
 \OSR_c(O(t))\leq
 \binom{t+D^2-1}{D^2-1}\leq(t+1)^{D^2-1}.
 \label{eq:inv-main-rank-bound}
\end{equation}
Consequently, for every $\alpha\geq0$,
\begin{equation}
 S_\alpha^{\rm op}(O(t))\leq(D^2-1)\log(t+1).
 \label{eq:inv-main-entropy-bound}
\end{equation}
\end{theorem}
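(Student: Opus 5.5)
The argument will chain Corollary~\ref{cor:centralizer-osr-bound}, which gives $\OSR_c(O(t))\leq c_t(R)$, with a direct count of $c_t(R)=\dim\mathcal C_t(R)$. Since $R$ is involutive, the algebra $\mathcal A_t(R)$ is exactly the image of the group algebra $\mathbb C S_t$ under $\rho_R^{(t)}$. Hence $\mathcal C_t(R)$ is the commutant of the representation $\rho_R^{(t)}$. The dimension of a commutant depends only on the multiplicities of the irreducibles, so it is invariant under the unitary equivalence \eqref{eq:inv-LPW-intertwiner}. Theorem~\ref{thm:inv-LPW-classification} therefore gives $c_t(R)=c_t(N)$, where $N$ is the LPW normal form of $R$. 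By Proposition~\ref{prop:inv-singleton-collapse}, dual unitarity forces $N$ to be a signed singleton normal form. This is the phase-dressed swap with $N\ket{a,b}=\ket{b,a}$ for $a\neq b$ and $N\ket{a,a}=\varepsilon_a\ket{a,a}$. The locality issues raised after the proposition do not matter here, because only a dimension is being transported.

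It remains to compute $c_t(N)$ for the signed singleton swap. The representation $\rho_N^{(t)}$ acts on basis words $\ket{x_1\cdots x_t}$ as a signed permutation: $\sigma$ permutes the letters and multiplies by a sign. On pairs of equal letters in $\ket{a,a}$, that sign is the product of the corresponding $\varepsilon_a$. In other words, it is the graded (super) permutation representation on $\mathbb C^{p|q}$, with the $+$ singletons even and the $-$ singletons odd. For an operator $X=\sum X_{\mathbf y,\mathbf x}\ket{\mathbf y}\!\bra{\mathbf x}$, the commutation condition $[X,N_i]=0$ for all $i$ says the following. $X$ is invariant under the simultaneous signed permutation action of $S_t$ on the index pairs $(x_k,y_k)$, that is, on the $D^2$ ket--bra pair types. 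Each $S_t$-orbit of pair-sequences carries at most a one-dimensional space of invariant coefficient vectors. If the stabilizer acts by a nontrivial sign, the space is zero. So $c_t(N)$ is at most the number of orbits. An orbit is determined by the occupation numbers $(n_{(a,b)})$ with $\sum n_{(a,b)}=t$, and there are $\binom{t+D^2-1}{D^2-1}$ such multisets. This gives the first inequality in \eqref{eq:inv-main-rank-bound}.

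The elementary bound $\binom{t+m}{m}\leq (t+1)^m$ with $m=D^2-1$ follows from writing $\binom{t+m}{m}=\prod_{k=1}^m \frac{t+k}{k}$ and noting that each factor satisfies $\frac{t+k}{k}\leq t+1$. The entropy bound \eqref{eq:inv-main-entropy-bound} then follows from \eqref{eq:setup-entropy-rank}.

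The main obstacle is the orbit count. One must check that the sign cocycle of $\rho_N$ is compatible with the orbit description, i.e. that conjugation by the signed permutation matrices sends matrix units to $\pm$ matrix units within a single orbit. Only then does each orbit contribute dimension at most one rather than more. I would verify this first on the generators $N_i$: conjugation by $N_i$ maps $\ket{\mathbf y}\!\bra{\mathbf x}$ to $\pm\ket{s_i\mathbf y}\!\bra{s_i\mathbf x}$. Since the group acts by monomial matrices, the commutant is spanned by orbit sums with consistent signs, and the bound follows.
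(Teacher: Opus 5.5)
Your proposal is correct and follows essentially the same route as the paper. You use Corollary~\ref{cor:centralizer-osr-bound} to reduce the rank to $c_t(R)$, transport this dimension via LPW equivalence to the signed singleton normal form forced by Proposition~\ref{prop:inv-singleton-collapse}, and bound the commutant by at most one invariant per orbit of the monomial adjoint action, i.e.\ by the $D^2$ joint occupation numbers. One harmless slip: for $q\geq2$ the signed singleton form is not literally the super-permutation on $\mathbb C^{p|q}$, since it exchanges distinct odd letters without a sign. It is only LPW-equivalent to it, but your count uses only the signed-permutation structure, so nothing depends on this.
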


\begin{proof}
Let $\mathcal C_t(R)$ be the centralizer defined in
\eqref{eq:centralizer-braid-algebra}--\eqref{eq:centralizer-definition}.
To count its dimension, we pass to the auxiliary circuit in which the gate
$R$ is replaced by the signed singleton gate $N_{p,q}$ obtained from
\eqref{eq:inv-normal-form-action} by taking $p$ positive and $q$ negative
one-dimensional blocks.  This is possible because, for every $t$, the
generator-level relation \eqref{eq:inv-LPW-generator-intertwiner} shows that
conjugation by $U_t$ maps
$\mathcal A_t(R)$ onto $\mathcal A_t(N_{p,q})$ and hence maps
$\mathcal C_t(R)$ onto $\mathcal C_t(N_{p,q})$.  Thus this replacement leaves
the centralizer dimension unchanged.  The advantage is that $N_{p,q}$ acts
as a signed exchange in a product basis, which makes its centralizer easy to
count.

Choose the one-site basis adapted to the singleton blocks of $N_{p,q}$.
After folding, the vectorized matrix units
$\mathfrak f^{\otimes t}(\ket{x}\bra{y})$, with
$x,y\in\{1,\ldots,D\}^t$, form a computational product basis of
$\mathcal W^{\otimes t}$.  The idea is elementary: we take any computational-basis
matrix unit $M_{x,y}:=\ket{x}\bra{y}$, construct its orbit under the adjoint
action of $S_t$, and average its transforms over the group.  The singleton
normal form simultaneously permutes the ket and bra labels and may attach a
sign.  Its adjoint action is therefore
\begin{equation}
 \operatorname{Ad}\rho_{N_{p,q}}^{(t)}(\sigma)
 (M_{x,y})
 =\zeta(\sigma;x,y)M_{\sigma x,\sigma y},
 \qquad |\zeta(\sigma;x,y)|=1.
 \label{eq:inv-adjoint-monomial}
\end{equation}
The corresponding group average is
\begin{equation}
 M_{x,y}^{S_t}:=\frac{1}{t!}\sum_{\sigma\in S_t}
 \operatorname{Ad}\rho_{N_{p,q}}^{(t)}(\sigma)(M_{x,y}).
 \label{eq:inv-orbit-average}
\end{equation}
Reindexing the sum shows that $M_{x,y}^{S_t}$ is invariant under the adjoint
action and hence belongs to the commutant.  If it is nonzero, it spans the
invariant subspace supported on the orbit of $M_{x,y}$, because invariance
fixes all coefficients on that orbit up to an overall scale.  The group
average can vanish if the stabilizer signs render the sum zero, in which case
that orbit contributes no invariant direction.  Thus each orbit contributes
either one or zero invariant operators.  Since we seek only an upper bound on
the centralizer dimension, the possible cancellations cause no problem: we
may ignore the signs and count the orbits of the underlying simultaneous
permutation action.

The orbit under simultaneous permutation of positions in the pair $(x,y)$
forgets the order of the sites.  It remembers only how often each
ket--bra pair $(a,b)$ occurs, namely the $D^2$ joint
occupation numbers
\begin{equation}
 n_{ab}=\#\{k:(x_k,y_k)=(a,b)\},
 \qquad n_{ab}\geq0,
 \qquad \sum_{a,b}n_{ab}=t.
 \label{eq:inv-joint-occupations}
\end{equation}
There are $\binom{t+D^2-1}{D^2-1}$ such possibilities.  Consequently
\begin{equation}
 c_t(R)=c_t(N_{p,q})
 \leq\binom{t+D^2-1}{D^2-1}.
 \label{eq:inv-commutant-bound}
\end{equation}
The OSR consequence of the braid-invariant output-space theorem,
Corollary~\ref{cor:centralizer-osr-bound}, gives
$\OSR_c(O(t))\leq c_t(R)$.  Together with
\eqref{eq:inv-commutant-bound}, this proves
\eqref{eq:inv-main-rank-bound}.  Since
$S_\alpha^{\rm op}\leq\log\OSR$ for every $\alpha\geq0$, it also proves
\eqref{eq:inv-main-entropy-bound}.
\end{proof}

We stress that the orbit count is intentionally coarse: we need only a
uniform polynomial upper bound, and the sign-dependent cancellations can
reduce the centralizer dimension.  For the LPW normal forms analyzed
directly in Section~\ref{sec:normal-form-dynamics}, the actual growth powers
of the operator Schmidt rank are determined there.

\section{Operator entanglement of LPW normal-form circuits}
\label{sec:normal-form-dynamics}

In this section the physical gate itself is the normal form $R=N$ of
\eqref{eq:inv-normal-form-action}; dual unitarity is
not assumed.  We ask how much information a one-site insertion can carry
through the rectangular braid.  The answer is much smaller than the full LPW
centralizer suggests.  We derive the quadratic upper bound in
Subsection~\ref{subsec:normal-form-quadratic} and establish the sharp growth
classification in Subsection~\ref{subsec:normal-form-sharpness}.  Depending
on the block structure, the one-site operator Schmidt rank remains bounded or
grows linearly or quadratically with time.  No higher polynomial power is
possible for an LPW normal form, and all three behaviours are attained.

This section generalizes the operator-spreading analysis of Medenjak
\cite{Medenjak2022} within the same model family.  The model studied there is
the all-positive $1+(D-1)$ normal form: it has one empty state and several
particle colours, particles move through empty sites, and their colours form
a single file.  Here we treat the full signed-block family, with arbitrary
block sizes and signs.  Medenjak found that the information needed to describe
an evolved local operator grows at most linearly with time.  This agrees with
our linear branch.  The full family also contains a quadratic branch, but no
higher power.  Rule~54 has a different local dynamics, yet its
exact construction likewise uses $O(t^2)$ terms for a local operator
\cite{AlbaDubailMedenjak2019}.  Thus the same maximal power occurs in both
settings, although the mechanisms are different.

Before deriving the physical rank bound, let us return to the centralizer
method developed in Section~\ref{sec:centralizer-output-space} and used in
Section~\ref{sec:inv-polynomial-proof}.  For these normal forms, the
centralizer does not give the right count.  It bounds the space in which the
rectangular output may live, but a local source need not explore that entire
space.  For the one-block normal form, $N=\pm\id$, every local observable is stationary
although $c_t(N)=D^{2t}$.  This mismatch is not restricted to the identity
gate.  For the all-positive $2+2$ normal form, a direct cycle-index count gives
\begin{equation}
 c_t(N)=4^t\binom{t+3}{3}.
 \label{eq:normal-form-22-centralizer}
\end{equation}
By contrast, we prove later in
Subsection~\ref{subsec:normal-form-quadratic} that every one-site operator has
rank $O(t^2)$, and in Subsection~\ref{subsec:normal-form-sharpness} that the
rank of a matrix unit connecting the two blocks grows quadratically with
time; see
\eqref{eq:normal-form-XXC-22-quadratic}.  Thus the
symmetry-allowed space can be exponentially larger than the part reached by a
local source.  We therefore analyze the source-reachable space directly.  We
first derive the blockwise single-file rule and then count the Schmidt
directions reached by a local matrix unit.  Finally, we construct matrix units
that attain the linear and quadratic bounds.

\subsection{Blockwise single-file dynamics}
\label{subsec:normal-form-single-file}

Let us first recall the signed-block notation of
Subsection~\ref{subsec:inv-LPW-normal-form}.  We write
\begin{equation*}
 V=\bigoplus_\alpha V_\alpha,
 \qquad
 d_\alpha:=\dim V_\alpha,
 \qquad
 \varepsilon_\alpha\in\{+1,-1\}.
\end{equation*}
The collection of integers $\{d_\alpha\}_\alpha$ is the union, with
multiplicities, of $\{d_i^+\}_{i=1}^p$ and $\{d_j^-\}_{j=1}^q$ introduced in
Eqs.~\eqref{eq:inv-block-sizes}--\eqref{eq:inv-block-decomposition}.  In this
notation the normal-form gate acts as
\begin{equation*}
 N(u\otimes v)=
 \begin{cases}
  \varepsilon_\alpha u\otimes v,
    &u,v\in V_\alpha,\\
  v\otimes u,
    &u\in V_\alpha,\ v\in V_\beta,\ \alpha\ne\beta.
 \end{cases}
\end{equation*}

For each block $V_\alpha$, choose a basis labelled by a set $X_\alpha$, so
that $|X_\alpha|=d_\alpha$, with the sets $X_\alpha$ mutually disjoint.  We write
$X=\bigsqcup_\alpha X_\alpha$ for the resulting basis labels of $V$.  For a
word $A$, let $A_\alpha$ be the ordered subword formed by its letters in
$X_\alpha$, and let
$n_\alpha(A)=|A_\alpha|$.  We use the equal-length rectangular circuit $W_t$
introduced in Section~\ref{sec:rectangular-reduction}.  Its input is
$A\mid B$, with $A,B\in X^t$, and it moves the right word $B$ to the left
through the left word $A$, producing the output order
$B_{\rm out}\mid A_{\rm out}$.

The normal-form rule has a simple consequence.  Within each block the colours
form a single file: they never overtake one another.  This is the direct
blockwise extension of the single-file constraint for charged hard-core gases
\cite{KrajnikEtAl2024}.  For the partition $1+(D-1)$, the singleton is the
vacancy and this is the usual single-file dynamics of coloured particles.  In
a general partition, colours from distinct blocks pass through one another,
but every block retains its own internal order.  Thus the sequence of block
labels of $B$ emerges on the left and that of $A$ on the right, while the
rectangle only splits each ordered colour string into a left and a right part.

\begin{lemma}[Blockwise single-file action]
\label{lem:normal-form-single-file}
For the normal-form gate $N$, the equal-length rectangular circuit is
Hermitian, $W_t^\dagger=W_t$, and acts as
\begin{equation}
 W_t\ket{A,B}
 =\left(\prod_\alpha
   \varepsilon_\alpha^{n_\alpha(A)n_\alpha(B)}\right)
   \ket{L,R},
 \qquad
 L_\alpha R_\alpha=A_\alpha B_\alpha,
 \quad |L_\alpha|=n_\alpha(B),
 \quad |R_\alpha|=n_\alpha(A).
 \label{eq:normal-form-single-file-action}
\end{equation}
Here $L$ has the block-label word of $B$, while $R$ has the block-label word
of $A$.  Consequently, the conjugation action of $W_t$ on a
computational-basis matrix unit is obtained by applying
\eqref{eq:normal-form-single-file-action} independently to its ket and bra
words.
\end{lemma}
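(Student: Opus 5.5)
The normal form $N$ is a signed monomial gate in the block-adapted product basis: $N\ket{x,y}=\varepsilon_\alpha\ket{x,y}$ if $x,y\in X_\alpha$, and $N\ket{x,y}=\ket{y,x}$ if $x$ and $y$ lie in distinct blocks. Every word in the generators $N_i$ is therefore also a signed permutation of basis words. I would prove \eqref{eq:normal-form-single-file-action} by induction on the crossings of $W_t$, tracking two things at once: the positions of the letters, and the accumulated sign.

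For the positions, each crossing in the rectangle involves one strand from the $B$ block and one from the $A$ block. The two letters travel along these strands, and they either exchange places or stay put:
\begin{itemize}
\item If the letters lie in different blocks, they exchange positions, so each letter rides along its own strand.
\item If the letters lie in the same block, they do not move. Equivalently, the letters swap strands.
\end{itemize}
Consequently, the sequence of block labels follows the strands. The $B$ strands end up on the left, so the left output carries the block-label word of $B$ and the right output carries that of $A$. Within a fixed block $\alpha$, two letters of the same block never change their relative order:
\begin{itemize}
\item when they meet at a crossing, they stay in place;
\item when they are apart, crossings with letters of other blocks only move each of them past foreign letters.
\end{itemize}
The concatenated word $A_\alpha B_\alpha$ is therefore preserved as an ordered string. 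It is redistributed into the block-$\alpha$ slots of the output. There are $n_\alpha(B)$ such slots on the left and $n_\alpha(A)$ on the right, which gives $L_\alpha R_\alpha=A_\alpha B_\alpha$ with the stated lengths.

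For the sign, a factor $\varepsilon_\alpha$ arises exactly at those crossings where the strand from $B$ and the strand from $A$ both carry block-$\alpha$ letters. Here I would use that the block-label word is carried along the strands. Every $B$ strand crosses every $A$ strand exactly once, and the block label of a strand never changes. Hence the number of same-block-$\alpha$ crossings is exactly $n_\alpha(A)n_\alpha(B)$. This gives the prefactor $\prod_\alpha\varepsilon_\alpha^{n_\alpha(A)n_\alpha(B)}$. I expect this step to be the main obstacle. It requires the strand-tracking invariant (block label attached to strand, letter identity possibly changing) to be set up carefully, so that the induction over crossings is valid in the order inherited from the causal network. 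The argument must not depend on that order, and it does not: the count uses only the fact that each $(B,A)$ strand pair meets once.

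For Hermiticity, note that $N$ is real and symmetric, so $N^\dagger=N$. Then $W_t^\dagger$ is the same crossing network read in reverse order. Viewing the action \eqref{eq:normal-form-single-file-action} as a map $(A,B)\mapsto(L,R)$, it suffices to check that this signed map is an involution. This holds once the blocks are identified with strand positions: applying it to $(L,R)$ has $L$ carrying the $B$-labels and $R$ the $A$-labels, so the relabeling inverts itself, and the sign, which is symmetric in the two counts, returns with the same value. I would confirm that the reversed crossing order produces the same signed permutation; a cleaner route to this would be the normal-form relation \eqref{eq:inv-LPW-generator-intertwiner}. Finally, since $W_t$ is a signed permutation matrix, $W_t^\dagger\ket{x}\!\bra{y}W_t=W_t\ket{x}\!\bra{y}W_t$ factorizes into the independent actions on the ket word and the bra word. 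The complex conjugate of a sign is the sign itself, which gives the last claim.
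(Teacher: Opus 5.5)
Your proof is correct. The derivation of the action formula follows the paper's argument. Letters of distinct blocks exchange, while letters of one block stay put and pick up $\varepsilon_\alpha$. Hence block-label words change sides, each $A_\alpha B_\alpha$ keeps its internal order, and the sign counts equal-block collisions.

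Your strand bookkeeping is the careful version of the paper's one-line count. The paper says that every $A$ letter crosses every $B$ letter once, which holds for strands but not literally for letters. For example, take $t=2$ with all four letters in one block. Nothing ever moves, so the letters starting on $A_1$ and $B_0$ meet at both crossings acting on the two middle sites. Attaching the block label to the strand, as you do, is exactly what makes the count $n_\alpha(A)n_\alpha(B)$ rigorous.

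Where you genuinely differ is Hermiticity. The paper uses that $N^2=\id$ and the braid relation make the representation factor through $S_{2t}$. In that representation $W_t$ is the image of the block-exchange involution, so $W_t^2=\id$ and unitarity give $W_t^\dagger=W_t$ without any computation. You instead verify $W_t^2=\id$ from the explicit formula. Applied to $(L,R)$, it returns the block-label words of $A$ and then $B$, with subwords $L'_\alpha R'_\alpha=A_\alpha B_\alpha$ and $|L'_\alpha|=n_\alpha(A)$. This forces the output to be $(A,B)$, and the symmetric sign squares to one. Your route is self-contained and never invokes the braid relation, while the paper's route does not need the formula at all.

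Two side remarks in your last paragraph are off target:
\begin{itemize}
\item Once $W_t^2=\id$ is established, $W_t^\dagger=W_t^{-1}=W_t$ is immediate, so nothing about the reversed crossing order remains to be confirmed.
\item The intertwiner relation \eqref{eq:inv-LPW-generator-intertwiner} is not a cleaner route. For the physical gate $R=N$ it holds with $U_n=\id$ and carries no information. The relevant structure is the symmetric-group factorization the paper uses.
\end{itemize}
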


\begin{proof}
Every letter of $A$ crosses every letter of $B$ once.  Letters from distinct
blocks exchange positions, while letters from the same block do not exchange
and contribute the sign $\varepsilon_\alpha$.  Hence the block-label words
exchange sides, but the ordered string $A_\alpha B_\alpha$ is preserved and
split after $n_\alpha(B)$ letters.  There are
$n_\alpha(A)n_\alpha(B)$ equal-block collisions, which gives the phase.
Finally, since $N^2=\id$, its braid representation factors through the
symmetric group.  The circuit $W_t$ represents the permutation exchanging
the two length-$t$ blocks.  This permutation is an involution, and unitarity
therefore gives $W_t^\dagger=W_t$.  The conjugation statement follows by
applying \eqref{eq:normal-form-single-file-action} separately to the ket and
bra words.
\end{proof}

The physical content is that the rectangle redistributes a pre-existing
ordered colour string but never creates a new ordering.  For example, if
$A_\alpha=a_1a_2$ and $B_\alpha=b_1$, then the two $\alpha$-subwords at the
outputs are $a_1$ and $a_2b_1$.  We call the position between the two output
subwords the \emph{split position} of the $\alpha$-string.  Within this block,
we only need to track this split position.  In this example, the split has
moved one position to the left.

\subsection{Active-block memory and the upper bound}
\label{subsec:normal-form-quadratic}

Subsection~\ref{subsec:normal-form-single-file} determined the output of one
computational-basis word.  Equation~\eqref{eq:normal-form-single-file-action}
defines a separate split position for every block, but this is not yet a
count of Schmidt terms.  The calculation has three steps.  We first expand
the local operator and the identity inputs in computational-basis matrix
units.  We then evolve every term separately using the single-file rule.
Finally, we resum the output terms by grouping them according to the residual
split data that still correlate the two outputs.  Once these data are fixed,
each group factorizes across the cut, so the number of groups bounds the
operator Schmidt rank.

We evolve each matrix unit separately and sum the resulting
operators at the end.  An arbitrary one-site operator is a sum of at most
$D^2$ matrix units, so this termwise calculation can change the final
prefactor but not the growth power.

Fix $x\in X_\alpha$ and $y\in X_\beta$.  When $\alpha=\beta$, we allow
$x=y$; thus this case includes both diagonal and off-diagonal matrix units.
The first step gives
\begin{align}
 \id_A^{\otimes t}
 &=\sum_{A\in X^t}\ket{A}\bra{A},&
 E_{xy}\otimes\id^{\otimes(t-1)}
 &=\sum_{C\in X^{t-1}}\ket{xC}\bra{yC}.
 \label{eq:normal-form-source-expansion}
\end{align}
Thus we have to evolve input matrix units
$\ket{A,xC}\bra{A,yC}$.  The ket and bra words differ only at the
distinguished site, where they carry $x$ and $y$, respectively; when $x=y$,
the two words coincide.  If $\alpha=\beta$, the distinguished matrix unit
involves a single active block.  If $\alpha\ne\beta$, it involves two active
blocks, namely those containing $x$ and $y$.

To resum the terms in Eq.~\eqref{eq:normal-form-source-expansion}, we now
identify which input data can still correlate the two output factors.  Data
that can be summed independently on the two sides require no label.  We
collect only the remaining data into a candidate Schmidt label.  The
spectator and active blocks play different roles in this separation.

Consider first a spectator block
$\delta\notin\{\alpha,\beta\}$.  Its ket and bra strings coincide, so the
single-file rule splits the same ordered colour string on the two layers.
After fixing the block-label words, the sum over its internal colours
factorizes between the two outputs.  A spectator block therefore carries no
information across the Schmidt cut.

An active block, by contrast, can retain information about the distinguished
matrix-unit insertion.  If $\alpha=\beta$, the inserted ket and bra entries
lie in the same output factor, and we only need their position relative to
the split.  This gives at most $O(t)$ possibilities.  If
$\alpha\ne\beta$, the ket and bra splits in each active
block are displaced by one.  Either the difference again stays in one output
factor, or one internal colour lies in different output factors on the two
layers.  In the latter case we retain this colour in addition to the split
position.  A non-singleton active block therefore contributes at most
$O(td_\gamma)$ labels.  For a singleton block the transferred colour is
fixed and the block contributes no growing label.  Since a matrix unit has at
most two active blocks, this explains the quadratic upper bound in
Theorem~\ref{thm:normal-form-label-grouping}.
We call the retained split position and, when needed, the transferred colour
the \emph{active-block memory}.

This memory count is not by itself a Schmidt decomposition.  We must also
show that, after the retained data are fixed, the remaining left and right
output choices are independent.  In other words, after a label is fixed, the
remaining left and right output choices must vary independently, and the
collision signs must be absorbed into one output factor.  The following
theorem makes this factorization precise.

\begin{theorem}[Active-block Schmidt decomposition]
\label{thm:normal-form-label-grouping}
Fix $x\in X_\alpha$ and $y\in X_\beta$.  The spectator blocks can be
summed independently across the output cut.  The remaining active-block data
can be grouped into a finite label set $\Lambda_{\alpha\beta}(t)$ such that
\begin{equation}
 \Phi_t(E_{xy})=
 \sum_{\lambda\in\Lambda_{\alpha\beta}(t)}
 L_\lambda\otimes R_\lambda.
 \label{eq:normal-form-grouped-product}
\end{equation}
Consequently,
\begin{align}
 \OSR(\Phi_t(E_{xy}))
 &\leq
 \begin{cases}
  1,&d_\alpha=1,\\
  2t,&d_\alpha>1,
 \end{cases}
 &&x,y\in X_\alpha,
 \label{eq:normal-form-same-block-bound}\\
 \OSR(\Phi_t(E_{xy}))
 &\leq q_\alpha(t)q_\beta(t),
 &&x\in X_\alpha,
 \quad y\in X_\beta,
 \quad\alpha\ne\beta,
 \label{eq:normal-form-different-block-bound}
\end{align}
where
\begin{equation}
 q_\gamma(t)=
 \begin{cases}
  1,&d_\gamma=1,\\
  t(d_\gamma+1),&d_\gamma>1.
 \end{cases}
 \label{eq:normal-form-active-label-count}
\end{equation}
\end{theorem}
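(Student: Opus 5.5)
The plan is to compute $\Phi_t(E_{xy})$ explicitly from Lemma~\ref{lem:normal-form-single-file}. Since $W_t^\dagger=W_t$, we have $\Phi_t(E_{xy})=W_t S_t(E_{xy})W_t$. Inserting the expansion \eqref{eq:normal-form-source-expansion}, this is a sum over $A\in X^t$ and $C\in X^{t-1}$ of $W_t\ket{A,xC}\bra{A,yC}W_t$. I apply \eqref{eq:normal-form-single-file-action} separately to the ket word $(A,xC)$ and the bra word $(A,yC)$.

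\emph{Block-label words and signs.} I organise each term by the block-label words of $A$ and of $C$. The left output carries the label word of $xC$ on the ket layer and of $yC$ on the bra layer, and the right output carries that of $A$ on both layers. These label words therefore already separate across the cut: $C$'s data go left and $A$'s go right. The lengths $n_\delta(B)$ are determined by left data and the lengths $n_\delta(A)$ by right data.

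Next the collision signs. For a spectator block $\delta\notin\{\alpha,\beta\}$, the ket and bra signs $\varepsilon_\delta^{n_\delta(A)n_\delta(B)}$ coincide and cancel, since $\varepsilon_\delta^2=1$. When $\alpha=\beta$ the active signs also cancel. When $\alpha\ne\beta$, the ket and bra counts satisfy $n^{\rm ket}_\alpha(B)=n^{\rm bra}_\alpha(B)+1$, and symmetrically for $\beta$. The leftover sign is therefore $\varepsilon_\alpha^{n_\alpha(A)}\varepsilon_\beta^{n_\beta(A)}$, which depends only on right-side data and can be absorbed into $R_\lambda$.

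\emph{Spectator blocks.} For a spectator block the ket and bra $\delta$-strings are the same word $A_\delta C_\delta$. Summing over its colours ranges over all words of length $n_\delta(A)+n_\delta(C)$. The output splits such a word into a prefix of fixed length $n_\delta(B)$ on the left and the remainder on the right, so the sum over all words factorises as (sum over prefixes) $\otimes$ (sum over suffixes), with ket equal to bra on each side. Hence spectators need no label.

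\emph{Active blocks, $\alpha=\beta$.} The ket string is $A_\alpha xC_\alpha$ and the bra string is $A_\alpha yC_\alpha$. They differ only at position $n_\alpha(A)+1$, and the split is common to both layers. The label records on which side the differing letter lands and its position inside that side, which is at most $n_\alpha(B)\le t$ on the left or $n_\alpha(A)\le t$ on the right. This gives at most $2t$ labels. With the label fixed, the remaining letters on each side are free words, so the sum factorises as in the spectator case. If $d_\alpha=1$ then $x=y$ and the differing letter carries no information, so no label is needed and the rank is $1$.

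\emph{Active blocks, $\alpha\ne\beta$.} In block $\alpha$ the ket string has the extra letter $x$ and its split is one later than the bra split. Either $x$ lands on the left, in which case the ket-left string is the bra-left string with $x$ inserted at a recorded position. Or $x$ lands on the right, in which case exactly one letter of the bra-right prefix sits on the left on the ket layer. Then I record its position together with its colour $z\in X_\alpha$, which must appear on both sides. This gives at most $t+td_\alpha\le t(d_\alpha+1)=q_\alpha(t)$ labels. If $d_\alpha=1$ the transferred colour is forced, and moving it amounts to a length bookkeeping already fixed by the label words, so $q_\alpha=1$. Block $\beta$ is treated symmetrically, and the labels combine as a product, which gives \eqref{eq:normal-form-different-block-bound}.

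The main obstacle is to check rigorously that, once all labels are fixed, the residual sum is genuinely a product. The block-label words, the split lengths, the pinned letters and the absorbed signs must all be consistent with independent sums on the two sides, with no hidden constraint linking $n_\alpha(A)$ and $n_\alpha(B)$ beyond what the label encodes. I would verify this first in the $\alpha\ne\beta$ case with $x$ landing on the right, which is the delicate configuration.
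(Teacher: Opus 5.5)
Your route is the paper's (Appendix~\ref{app:normal-form-label-grouping}). You expand into the terms $\ket{A,xC}\bra{A,yC}$ and apply the single-file rule to the ket and bra words separately. You sum spectator blocks through the bijection $(A_\delta,C_\delta)\leftrightarrow(L_\delta,R_\delta)$ and absorb the residual sign $\varepsilon_\alpha^{a_\alpha}\varepsilon_\beta^{a_\beta}$ into the right factor, with $a_\gamma=n_\gamma(A)$ and $k_\gamma=n_\gamma(C)$. For $\alpha\ne\beta$ your labels are exactly the paper's $(\mathsf{local},a_\gamma)$ and $(\mathsf{transfer},k_\gamma,u_\gamma)$, reading ``its position'' as the split position $k_\gamma+1$ of the transferred colour. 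There the factorization does go through.

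\textbf{Gap.} The step that fails as written is the same-block case $\alpha=\beta$, $d_\alpha>1$, when the differing letter lands in $A_{\rm out}$. The letter then sits at position $p=a_\alpha-k_\alpha$ of $R_\alpha$, and you label by $p$. But $k_\alpha$ is left data, since $|L_\alpha|=k_\alpha+1$ is read off the block-label word of $xC$, while $a_\alpha=|R_\alpha|$ is right data. A left unit and a right unit carrying the label $p$ come from a common source term only if $a_\alpha=k_\alpha+p$. The realized pairs therefore do not form a product set. For instance, as soon as another block exists and $t\geq2$, both $(k_\alpha,a_\alpha)=(0,1)$ and $(1,2)$ carry $p=1$. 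Yet a left unit with $|L_\alpha|=1$ never meets a right unit with $|R_\alpha|=2$. So the sum over this label is not $(\sum F)\otimes(\sum G)$. This is precisely the hidden constraint between $n_\alpha(A)$ and $n_\alpha(B)$ you worried about, and it appears here, not in the $\alpha\ne\beta$ configuration you planned to check first.

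\textbf{Repair.} Use the paper's label $(\mathsf A,k_\alpha)$. Measure the position of the differing letter from the outer (right) end of $R_\alpha$, which ends in $xC_\alpha$ on the ket layer and $yC_\alpha$ on the bra layer. Then every left unit with $|L_\alpha|=k_\alpha+1$ combines with every right unit with $|R_\alpha|=a>k_\alpha$ whose differing letter is at position $a-k_\alpha$, and $(A,C)$ is reconstructed uniquely. Your left-branch label, the position $a_\alpha+1$ counted from the start of $L_\alpha$, is already of this type and equals the paper's $(\mathsf B,a_\alpha)$. The rule is that each label must fix the length owned by the opposite output factor. Since $k_\alpha$ still takes at most $t$ values, the bound $2t$ and everything downstream are unchanged.
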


The proof is presented in
Appendix~\ref{app:normal-form-label-grouping}.

With the matrix-unit decomposition in hand, we now sum over the at most
$D^2$ components of the initial operator.  This gives the physical rank and
entropy bounds.

\begin{corollary}[Quadratic one-site upper bound for LPW normal forms]
\label{cor:normal-form-quadratic}
Let the physical brickwork gate be the signed-block normal form
$R=N$.  For every one-site operator initially at site zero and every
$t\geq1$,
\begin{align}
 \OSR_c(O(t))&\leq C_Dt^2,
 & C_D&:=D^2\bigl((D+1)^2+2\bigr),
 \label{eq:normal-form-quadratic-bound}\\
 S_\alpha^{\rm op}(O(t))&\leq2\log t+\log C_D,
 &&\alpha\geq0,
 \label{eq:normal-form-log-entropy}
\end{align}
\end{corollary}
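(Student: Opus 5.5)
The plan is to combine the rectangular-core identity of Lemma~\ref{lem:yb-rectangle} with the matrix-unit bounds of Theorem~\ref{thm:normal-form-label-grouping}, and then sum over the components of the source. First, Lemma~\ref{lem:yb-rectangle} applies to any unitary gate, in particular to $R=N$. It identifies $\OSR_c(O(t))$ with $\OSR_{B_{\rm out}\mid A_{\rm out}}(\Phi_t(O))$. Next, expand $O=\sum_{x,y\in X}O_{xy}E_{xy}$ in the block-adapted basis, which is at most $D^2$ matrix units. The map $O\mapsto\Phi_t(O)$ is linear, and the OSR is subadditive under sums. Therefore $\OSR(\Phi_t(O))\leq\sum_{x,y}\OSR(\Phi_t(E_{xy}))$, and it suffices to bound every term uniformly by $((D+1)^2+2)t^2$.

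For $x,y$ in the same block, \eqref{eq:normal-form-same-block-bound} gives at most $\max(1,2t)\leq 2t\leq 2t^2$ for $t\geq1$. For $x\in X_\alpha$, $y\in X_\beta$ with $\alpha\ne\beta$, \eqref{eq:normal-form-different-block-bound} gives at most $q_\alpha(t)q_\beta(t)$. Each factor satisfies $q_\gamma(t)\leq t(d_\gamma+1)\leq t(D+1)$, including the singleton case, since then $q_\gamma=1\leq t(D+1)$. Hence every cross-block term is at most $(D+1)^2t^2$. In either case a single matrix unit contributes at most $((D+1)^2+2)t^2$, and summing over at most $D^2$ units gives $\OSR_c(O(t))\leq C_Dt^2$. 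This proves \eqref{eq:normal-form-quadratic-bound}.

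The entropy bound \eqref{eq:normal-form-log-entropy} follows for nonzero $O$ directly from \eqref{eq:setup-entropy-rank}: $S_\alpha^{\rm op}\leq\log\OSR_c\leq\log C_D+2\log t$.

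There is no genuine obstacle here, because all the hard work is in Theorem~\ref{thm:normal-form-label-grouping}. The only points to check are three. The loose constant has to absorb both branches: adding the same-block and cross-block bounds is wasteful, but it is valid. The estimate must cover $t\geq1$ only, so that $2t\leq 2t^2$ holds. Finally, the basis change to the block-adapted labels is already built into the definition of $N$, so no gauge argument is needed.
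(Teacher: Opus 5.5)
Your proposal is correct and follows essentially the same route as the paper: the rectangular-core identity, expansion into at most $D^2$ matrix units, subadditivity, the per-unit bounds of Theorem~\ref{thm:normal-form-label-grouping} with $q_\gamma(t)\leq t(d_\gamma+1)$, and finally $S_\alpha^{\rm op}\leq\log\OSR$. The only cosmetic difference is that the paper first groups the terms by blocks, using $\sum_\alpha d_\alpha^2\leq D^2$ and $\sum_\alpha d_\alpha(d_\alpha+1)\leq D(D+1)$, whereas you bound each matrix unit uniformly by $((D+1)^2+2)t^2$. Both arrive at the same constant $C_D$.
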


\begin{proof}
Lemma~\ref{lem:yb-rectangle} allows us to compute the physical Schmidt rank
from the rectangular circuit.  Expand the initial operator as
$O=\sum_{x,y}o_{xy}E_{xy}$.  By linearity of $\Phi_t$ and subadditivity of
the operator Schmidt rank,
\begin{equation}
 \OSR(\Phi_t(O))
 \leq\sum_{x,y:o_{xy}\ne0}\OSR(\Phi_t(E_{xy})).
 \label{eq:normal-form-matrix-unit-reduction}
\end{equation}
This sum contains at most $D^2$ terms, independently of time.  Applying
Theorem~\ref{thm:normal-form-label-grouping} to every matrix unit, and using
$q_\alpha(t)\leq t(d_\alpha+1)$, gives
\begin{align*}
 \OSR(\Phi_t(O))
 &\leq 2t\sum_\alpha d_\alpha^2
 +t^2\sum_{\alpha\ne\beta}
 d_\alpha d_\beta(d_\alpha+1)(d_\beta+1)\\
 &\leq 2D^2t
 +t^2\left(\sum_\alpha d_\alpha(d_\alpha+1)\right)^2\\
 &\leq D^2\bigl((D+1)^2+2\bigr)t^2.
\end{align*}
Here we used $t\geq1$, $\sum_\alpha d_\alpha^2\leq D^2$, and
$\sum_\alpha d_\alpha(d_\alpha+1)\leq D(D+1)$.  This proves
\eqref{eq:normal-form-quadratic-bound}.  The entropy estimate follows from
$S_\alpha^{\rm op}\leq\log\OSR$.
\end{proof}

\subsection{Sharpness and growth classification}
\label{subsec:normal-form-sharpness}

Subsection~\ref{subsec:normal-form-quadratic} gives the relevant upper bounds.
We now show that the linear and quadratic powers are attained.  For these
matching lower bounds, we reverse the preceding strategy: instead of grouping
source terms, we project onto selected output matrix units that retain one or
two independent split positions.

\begin{theorem}[Sharpness of the active-block bounds]
\label{thm:normal-form-sharp-powers}
If exactly one block $V_\alpha$ is non-singleton and at least one other block
is present, there exist distinct $x,y\in X_\alpha$ such that, for every
$t\geq1$,
\begin{equation*}
 \OSR_c(E_{xy}(t))\geq t.
\end{equation*}
If at least two blocks $V_\alpha$ and $V_\beta$ are non-singletons, there
exist $x\in X_\alpha$ and $y\in X_\beta$ such that, for every $t\geq1$,
\begin{equation*}
 \OSR_c(E_{xy}(t))\geq\frac{t(t+1)}{2}.
\end{equation*}
\end{theorem}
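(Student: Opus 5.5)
The plan is to reduce both statements to lower bounds on the Schmidt rank of the rectangular output $\Phi_t(E_{xy})$. By Lemma~\ref{lem:yb-rectangle} we have $\OSR_c(E_{xy}(t))=\OSR_{B_{\rm out}\mid A_{\rm out}}(\Phi_t(E_{xy}))$. By Lemma~\ref{lem:normal-form-single-file}, $\Phi_t(E_{xy})$ is an explicit signed sum of output matrix units, obtained by applying the blockwise single-file rule separately to the ket words $A,xC$ and the bra words $A,yC$ of Eq.~\eqref{eq:normal-form-source-expansion}. To bound the rank from below, I will choose linear functionals $\ell_j$ on the left output factor and $r_k$ on the right output factor, each given by the overlap with one fixed output matrix unit. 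The Schmidt rank is then at least the rank of the matrix $M_{jk}=(\ell_j\otimes r_k)\bigl(\Phi_t(E_{xy})\bigr)$. The aim is to choose these functionals so that $M$ is diagonal, or at least triangular, with entries $\pm1$ on the diagonal.

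For the first statement, take distinct $x,y\in X_\alpha$ and a second block $\beta$. I pick one letter $b\in X_\beta$ and one fixed colour $z\in X_\alpha$; if $z$ must differ from both $x$ and $y$ and $d_\alpha=2$, I will simply use $z=x$ instead. For $k=0,\ldots,t-1$, let $r_k$ project onto the right output matrix unit whose ket and bra words both equal a fixed word with exactly $k$ letters from $X_\alpha$, all equal to $z$, and $t-k$ letters equal to $b$. Similarly, for $j=1,\ldots,t$, let $\ell_j$ project onto a left output unit whose ket and bra $\alpha$-substrings agree except at position $j$, where the ket carries $x$ and the bra carries $y$.

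The single-file rule places the distinguished letter at position $n_\alpha(A)+1$ of the concatenated string $A_\alpha B_\alpha$. The letter therefore lands in the left factor precisely at $\alpha$-position $n_\alpha(A)+1$, provided $n_\alpha(B)>n_\alpha(A)$. The right block-label word equals that of $A$, which forces $n_\alpha(A)=k$. Hence $M_{jk}\neq0$ only if $j=k+1$.

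It remains to check that the diagonal entries are nonzero. The block-label words of both outputs, together with the ordered colour strings, determine the input pair $(A,C)$ uniquely. The entry is therefore a single collision sign $\pm1$ of the form in Eq.~\eqref{eq:normal-form-single-file-action}, and no cancellation can occur. The resulting $t\times t$ matrix is diagonal and invertible, which gives $\OSR\geq t$.

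For the second statement, take $x\in X_\alpha$ and $y\in X_\beta$ with $d_\alpha,d_\beta\geq2$. Now the ket and bra each carry one extra letter in a different active block. I index the right functionals by pairs $(k,l)$ with $k,l\geq0$ and $k+l\leq t-1$: the right output word has $k$ letters from $\alpha$, $l$ letters from $\beta$, and the remaining letters from a third block if one exists. If no third block exists, I absorb the remaining letters into $\alpha$ or $\beta$ by shifting the labels, which I expect to need some care. The left functionals $\ell_{(k,l)}$ detect an off-diagonal left unit whose ket has the inserted $x$ at $\alpha$-position $k+1$ and whose bra has $y$ at $\beta$-position $l+1$. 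Since the ket and bra splits in the two active blocks are displaced by one, the transferred colours must be fixed as well. This is exactly where non-singletonness is needed: distinct colours make the two split positions separately visible. The same uniqueness argument then gives a diagonal matrix indexed by $t(t+1)/2$ pairs.

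The main obstacle is this uniqueness step. I must show that fixing the chosen output matrix units on both sides pins down a single input history, so that the matrix entry is a single sign rather than a possibly cancelling sum. I must also show that the off-diagonal pattern really forces $j=k+1$, respectively the pair $(k,l)$ on both sides. I would first try to secure this by choosing all spectator and filler colours monochromatic, so that the ordered strings carry no hidden freedom. If that fails, I would settle for a triangular, rather than diagonal, structure.
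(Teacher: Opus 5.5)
Your linear-growth argument is sound and is essentially the paper's. Pairing $\Phi_t(E_{xy})$ with fixed output matrix units is equivalent to the paper's Hilbert--Schmidt projections $\Pi_B\otimes\Pi_A$. The paper's choice $A_k=x^kz^{t-k}$, $C=x^{t-1}$, with $z$ in another block, is your construction with $\alpha$-filler colour $x$. Since $W_t$ is a signed permutation, distinct input units give distinct output units, so the coefficient matrix is a permuted diagonal of signs.

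The quadratic part, however, has a genuine gap, exactly at the point you deferred. The hypothesis allows $\alpha$ and $\beta$ to be the only blocks; this includes the paper's basic $2+2$ example. There $n_\alpha(A)+n_\beta(A)=t$ and $n_\alpha(C)+n_\beta(C)=t-1$, which causes two problems:
\begin{itemize}
\item your right labels $(k,l)=(n_\alpha(A),n_\beta(A))$ obey $k+l=t$, so they collapse to a single parameter;
\item keeping $x$ and $y$ both in $B_{\rm out}$ requires $n_\alpha(A)\le n_\alpha(C)$ and $n_\beta(A)\le n_\beta(C)$, which sum to $t\le t-1$.
\end{itemize}
So no term has a diagonal right unit with both distinguished letters read on the left, and your selected matrix is zero. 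Relabelling the filler letters cannot repair this. The second independent parameter must come from $C$, and one distinguished letter is forced across the cut. When a third block $\gamma$ exists, your scheme does work, for example with $C=u^kv^{t-1-k}$ and right word $u^kv^lw^{t-k-l}$, $w\in X_\gamma$. That case, however, does not cover the theorem.

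The missing idea is to use the transfer branch in one active block. Take distinct $x,u\in X_\alpha$ and $y,v\in X_\beta$. For $0\le a\le k\le t-1$ choose $A=u^av^{t-a}$ and $C=u^kv^{t-1-k}$. Then $x$ stays in the left ket at $\alpha$-position $a+1$, while $y$ crosses into the right bra. The term becomes $\pm L_{a,k}\otimes R_{a,k}$ with $L_{a,k}=\ket{u^axu^{k-a}v^{t-1-k}}\bra{vu^kv^{t-1-k}}$ and $R_{a,k}=\ket{u^av^{t-a}}\bra{u^av^{k-a}yv^{t-1-k}}$.

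The right units are now off-diagonal, so you must drop the ansatz that the right functionals record only block counts. For a cross term $L_{a,k}\otimes R_{a',k'}$:
\begin{itemize}
\item the position of $x$ in the preserved ket $\alpha$-string forces $a=a'$;
\item the position of $y$ in the preserved bra $\beta$-string then forces $k=k'$.
\end{itemize}
This gives $t(t+1)/2$ matched orthonormal pairs without needing a third block.
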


\begin{proof}
By Lemma~\ref{lem:yb-rectangle}, it is enough to work with the output
$\Phi_t(E_{xy})$ of the rectangular circuit.  We isolate a part of this
operator by projecting the two output operator spaces onto spans of selected
computational-basis matrix units.  Equivalently, we expand
$\Phi_t(E_{xy})$ in products of output matrix units and discard every
component outside the two selected spans.  If $\Pi_B$ and $\Pi_A$ denote
these Hilbert--Schmidt projections, then
\begin{equation}
 \OSR\bigl((\Pi_B\otimes\Pi_A)\Phi_t(E_{xy})\bigr)
 \leq \OSR\bigl(\Phi_t(E_{xy})\bigr),
 \label{eq:normal-form-local-projection}
\end{equation}
because applying a linear map separately to the two factors cannot increase
the number of terms in a product decomposition.  If the selected part is a
sum of $r$ matched products $F_i\otimes G_i$, with nonzero coefficients and
with the two families of matrix units separately orthonormal, then its phases
can be absorbed into one family.  The resulting sum is an operator Schmidt
decomposition of rank $r$.
Equation~\eqref{eq:normal-form-local-projection} therefore gives the same
lower bound for the full output.  For a basis letter $w$, we write $w^r$ for
the word formed by $r$ copies of $w$, with $w^0$ the empty word.
Moreover, $W_t$ is a signed permutation in the computational basis.  Thus
different input matrix units give different output matrix units, and a
selected contribution cannot be cancelled by another input term.

First assume that exactly one block $V_\alpha$ is non-singleton and that there
is another block.  Choose distinct $x,y\in X_\alpha$ in the non-singleton
block and one letter $z\in X_\beta$ from a different block.  In the expansion
of the evolved $E_{xy}$, select for $0\leq k<t$ the input words
\begin{align}
 A_k&=x^kz^{t-k},& C&=x^{t-1}.
 \label{eq:normal-form-linear-words}
\end{align}
The two $B$ words are $x^t$ on the ket layer and $yx^{t-1}$ on the bra
layer.  Their block occupations agree, so the ket and bra signs cancel.  The
single-file rule therefore maps this input term to $F_k\otimes G_k$, where
\begin{align}
 F_k&=\ket{x^t}\bra{x^kyx^{t-k-1}},&
 G_k&=\ket{A_k}\bra{A_k}.
 \label{eq:normal-form-linear-output-units}
\end{align}
Let $\Pi_B$ and $\Pi_A$ project onto the spans of the $F_k$ and $G_k$,
respectively.  A term that survives both projections would have output
$F_k\otimes G_l$ for some $k$ and $l$.  The operator $G_l$ fixes
$A=A_l=x^lz^{t-l}$.  On the bra layer, the single-file rule preserves the
ordered $\alpha$-string.  The input string is therefore
$x^lyx^{t-1}$, whereas the selected output factors give
$x^kyx^{t-1-k+l}$.  Since $x\ne y$, the position of $y$ forces $k=l$.
The remaining letters then fix $C=x^{t-1}$, so the surviving term is
precisely the one selected above.  Hence
\begin{equation}
 (\Pi_B\otimes\Pi_A)\Phi_t(E_{xy})
 =\sum_{k=0}^{t-1}F_k\otimes G_k.
 \label{eq:normal-form-linear-projection}
\end{equation}
The $F_k$ are distinct matrix units, and so are the $G_k$.  They form two
orthonormal families, and the selected part therefore has operator Schmidt
rank $t$.  Equation~\eqref{eq:normal-form-local-projection} and
Lemma~\ref{lem:yb-rectangle} give
$\OSR_c(E_{xy}(t))\geq t$.  Together with the upper bound $2t$, this proves
linear growth.

For the quadratic case, choose two non-singleton blocks and distinct letters
$x,u\in X_\alpha$ and $y,v\in X_\beta$.  For each
$0\leq a\leq k\leq t-1$, select the input words
\begin{equation}
 A^{(a)}=u^av^{t-a},
 \qquad
 C^{(k)}=u^kv^{t-1-k}.
 \label{eq:normal-form-quadratic-words}
\end{equation}
The word $A^{(a)}$ contains $a$ letters from block $\alpha$ and $t-a$
letters from block $\beta$.  The ket--bra sign is therefore
$\varepsilon_\alpha^a\varepsilon_\beta^{t-a}$, and the single-file rule maps
the input term to
$\varepsilon_\alpha^a\varepsilon_\beta^{t-a}
 L_{a,k}\otimes R_{a,k}$, where
\begin{align}
 L_{a,k}
 &=\ket{u^axu^{k-a}v^{t-1-k}}
   \bra{vu^kv^{t-1-k}},\nonumber\\
 R_{a,k}
 &=\ket{u^av^{t-a}}
   \bra{u^av^{k-a}yv^{t-1-k}}.
 \label{eq:normal-form-quadratic-output-units}
\end{align}
Now let $\Pi_B$ and $\Pi_A$ project onto the spans of the $L_{a,k}$ and
$R_{a,k}$, respectively.  Consider a term that survives with output
$L_{a,k}\otimes R_{a',k'}$.  The block words and the position of $x$ in
$L_{a,k}$ recover $(a,k)$.  Independently, the block words and the position
of $y$ in $R_{a',k'}$ recover $(a',k')$.  For a term in the source expansion,
the ket and bra layers must have the common words $A$ and $C$ displayed in
\eqref{eq:normal-form-source-expansion}.  More explicitly, the preserved
ket $\alpha$-string is
$u^axu^{k-a+a'}$ in the selected outputs and $u^{a'}xu^k$ at the input.
The position of $x$ forces $a=a'$.  The preserved bra $\beta$-string then
has $t-k+k'-a$ letters $v$ before $y$ in the selected outputs, but $t-a$
such letters at the input.  The position of $y$ therefore forces $k=k'$.
Conversely, every pair $0\leq a\leq k<t$ occurs through the input term in
\eqref{eq:normal-form-quadratic-words}.  The selected part is therefore
\begin{equation}
 (\Pi_B\otimes\Pi_A)\Phi_t(E_{xy})
 =\sum_{0\leq a\leq k<t}
   \varepsilon_\alpha^a\varepsilon_\beta^{t-a}
   L_{a,k}\otimes R_{a,k}.
 \label{eq:normal-form-quadratic-projection}
\end{equation}
Every displayed coefficient is $\pm1$ and may be absorbed into
$R_{a,k}$.  The two families in
\eqref{eq:normal-form-quadratic-projection} are orthonormal by the recovery
observation above.  Their common number is
\begin{equation}
 \#\{(a,k):0\leq a\leq k<t\}
 =\frac{t(t+1)}{2}.
 \label{eq:normal-form-quadratic-lower}
\end{equation}
It follows from \eqref{eq:normal-form-local-projection} and
Lemma~\ref{lem:yb-rectangle} that
$\OSR_c(E_{xy}(t))\geq t(t+1)/2$.  Combining this lower bound with
\eqref{eq:normal-form-different-block-bound} proves quadratic growth.
\end{proof}

The simplest quadratic example is the all-positive $2+2$ normal form.  For
$x$ and $y$ chosen in its two different blocks, the preceding proof gives
\begin{equation}
 \frac{t(t+1)}{2}
 \leq\OSR_c(E_{xy}(t))\leq9t^2.
 \label{eq:normal-form-XXC-22-quadratic}
\end{equation}
Thus quadratic growth already occurs in an elementary XXC normal-form
circuit.

The linear branch recovers familiar integrable models.  For the
all-positive partition $1+(D-1)$, the gate is the two-block XXC generator of
Ref.~\cite{Maassarani1998XXC}, equivalently the $1+(D-1)$ multiplicity model
of Ref.~\cite{Maassarani1999Multiplicity}.  Its $1+2$ member is the charged
hard-core gas of Ref.~\cite{MedenjakKlobasProsen2017}, while the general
$1+(D-1)$ case is the quantum hard-core gas studied in
Ref.~\cite{Medenjak2022}.  Medenjak's linearly growing construction is
therefore the case with exactly one non-singleton block in our
classification.  Combining
Theorems~\ref{thm:normal-form-label-grouping}
and~\ref{thm:normal-form-sharp-powers} extends this result to the full family:
the maximal one-site rank is bounded when there is only one block or when all
blocks are singletons, grows linearly when there is exactly one
non-singleton block and at least one other block, and grows quadratically when
there are at least two non-singleton blocks.

\section{Non-degenerate Yang--Baxter maps and phase dressings}
\label{sec:yb-maps}

In this section we analyze non-degenerate permutation Yang--Baxter gates and
arbitrary phase-dressed Yang--Baxter-map gates.  The bare gate sends each two-site
computational-basis state to a unique computational-basis state, while a
phase dressing also multiplies the output by a phase.  In the latter class
the support satisfies the braid relation, while the full dressed gate need
not do so.  We exploit this common
basis-state structure to derive upper bounds on the growth of the OSR.
For involutive supports whose phase dressing preserves the braid relation,
Subsection~\ref{subsec:phase-involutive-support} will combine this structure
with the centralizer method of Section~\ref{sec:centralizer-output-space} to
obtain a sharper bound.

Our key method is to use the so-called ``guitar coordinates'' or ``guitar map''; this map is introduced in 
Subsection~\ref{subsec:guitar}.
The construction goes back to Etingof, Schedler, and Soloviev, who
showed that the nontrivial symmetric-group action associated with an
involutive non-degenerate Yang--Baxter map is conjugate to the ordinary permutation
action \cite{EtingofSchedlerSoloviev1999}.
The extension of the construction beyond involutive solutions was developed
by Soloviev and by Lu, Yan, and Zhu
\cite{Soloviev2000,LuYanZhu2000}.
The terminology ``guitar map'' and
a useful graphical formulation are discussed in Sections~5--6 of
Ref.~\cite{LebedVendramin2017}. 

\subsection{Set-theoretical Yang--Baxter maps}
\label{subsec:yb-definition}

Let $r$ be a set-theoretical Yang--Baxter map on the finite set $X$ of $D$
local colours, in the sense of Subsection~\ref{subsec:yb-gates}.
We write its components as
\begin{equation}
 r:X^2\longrightarrow X^2,
 \qquad r(x,y)=\bigl(\lambda_x(y),\mu_y(x)\bigr)
 \label{eq:yb-components}
\end{equation}
and assume it is non-degenerate, so $\lambda_x:X\to X$ and
$\mu_y:X\to X$ are permutations for every $x,y\in X$.  We define the
corresponding physical permutation gate by its action on the computational
basis elements:
\begin{equation}
 R\ket{x,y}=\ket{r(x,y)}.
 \label{eq:yb-permutation-gate}
\end{equation}
This gate is unitary, and its reshuffling has one nonzero entry in every row
and column precisely because of non-degeneracy, so it is dual-unitary.
Unit-modulus phase dressings do not destroy dual unitarity.

\tikzset{
  opent/guitar wire/.style={
    draw=black!78,line width=0.9pt,line cap=round,line join=round
  },
  opent/guitar endpoint/.style={circle,fill=black!78,inner sep=1.45pt},
  opent/guitar interaction/.style={
    circle,draw=opentblue!92!black,fill=opentblue!92!black,
    minimum size=2.6mm,inner sep=0pt,line width=0.6pt
  },
  opent/guitar permutation/.style={
    circle,draw=opentblue!92!black,fill=white,
    minimum size=3.8mm,inner sep=0pt,line width=0.85pt
  }
}

\newcommand{\GuitarDX}{1.28}
\newcommand{\GuitarDY}{0.86}
\newcommand{\GuitarLeft}{0.36}
\newcommand{\GuitarBottom}{-1.08}
\newcommand{\GuitarGap}{0.105}

\newcommand{\DrawGuitarNetwork}[3][]{%
  \begin{scope}[#1]
    \pgfmathtruncatemacro{\GuitarN}{#3}
    \foreach \i in {1,...,\GuitarN} {
      \pgfmathsetmacro{\gyi}{(\GuitarN-\i)*\GuitarDY}
      \pgfmathsetmacro{\gxi}{\i*\GuitarDX}
      \ifnum\i=1
        \draw[opent/guitar wire] (\GuitarLeft,\gyi) -- (\gxi,\gyi);
      \else
        \draw[opent/guitar wire]
          (\GuitarLeft,\gyi) -- ({\GuitarDX-\GuitarGap},\gyi);
        \pgfmathtruncatemacro{\GuitarMiddle}{\i-2}
        \ifnum\GuitarMiddle>0
          \foreach \j in {1,...,\GuitarMiddle} {
            \pgfmathsetmacro{\gxa}{\j*\GuitarDX+\GuitarGap}
            \pgfmathsetmacro{\gxb}{(\j+1)*\GuitarDX-\GuitarGap}
            \draw[opent/guitar wire] (\gxa,\gyi) -- (\gxb,\gyi);
          }
        \fi
        \pgfmathsetmacro{\gxlast}{(\i-1)*\GuitarDX+\GuitarGap}
        \draw[opent/guitar wire] (\gxlast,\gyi) -- (\gxi,\gyi);
      \fi
    }
    \foreach \i in {1,...,\GuitarN} {
      \pgfmathsetmacro{\gyi}{(\GuitarN-\i)*\GuitarDY}
      \pgfmathsetmacro{\gxi}{\i*\GuitarDX}
      \draw[opent/guitar wire] (\gxi,\GuitarBottom) -- (\gxi,\gyi);
      \coordinate (#2-in-\i) at (\gxi,\GuitarBottom);
      \coordinate (#2-out-\i) at (\GuitarLeft,\gyi);
      \coordinate (#2-turn-\i) at (\gxi,\gyi);
    }
  \end{scope}%
}

\newcommand{\LabelGuitarNetwork}[2]{%
  \pgfmathtruncatemacro{\GuitarN}{#2}
  \foreach \i in {1,...,\GuitarN} {
    \node[below=2pt] at (#1-in-\i) {$x_{\i}$};
    \node[left=3pt] at (#1-out-\i) {$z_{\i}$};
  }%
}

\newcommand{\DrawRVertex}[5][]{%
  \begin{scope}[#1,shift={(#2,#3)}]
    \draw[opent/guitar wire] ({-#4},{-#5}) -- (#4,#5);
    \draw[opent/guitar wire] (#4,{-#5}) -- ({-#4},#5);
    \node[opent/guitar interaction] at (0,0) {};
  \end{scope}%
}

\newcommand{\SwapDetourSpan}{0.30}
\newcommand{\SwapDetourLift}{0.12}

%
\newcommand{\DrawPVertex}[5][]{%
  \begin{scope}[#1,shift={(#2,#3)}]
    \pgfmathsetmacro{\pw}{#4}
    \pgfmathsetmacro{\ph}{#5}
    \pgfmathsetmacro{\pa}{\SwapDetourSpan}
    \pgfmathsetmacro{\pl}{\SwapDetourLift}

    \draw[opent/guitar wire]
      ({-\pw},{\ph})
      --
      ({\pw},{-\ph});

    %
    \draw[
      white,
      line width=3.1pt,
      line cap=round,
      line join=round
    ]
      ({-\pa*\pw},{-\pa*\ph})
      .. controls
        ({-0.55*\pa*\pw},{-0.55*\pa*\ph})
        and
        ({(-0.35*\pa-\pl)*\pw},
         {(-0.35*\pa+\pl)*\ph})
      ..
      ({-\pl*\pw},{\pl*\ph})
      .. controls
        ({(0.35*\pa-\pl)*\pw},
         {(0.35*\pa+\pl)*\ph})
        and
        ({0.55*\pa*\pw},{0.55*\pa*\ph})
      ..
      ({\pa*\pw},{\pa*\ph});

    \draw[opent/guitar wire]
      ({-\pw},{-\ph})
      --
      ({-\pa*\pw},{-\pa*\ph})
      .. controls
        ({-0.55*\pa*\pw},{-0.55*\pa*\ph})
        and
        ({(-0.35*\pa-\pl)*\pw},
         {(-0.35*\pa+\pl)*\ph})
      ..
      ({-\pl*\pw},{\pl*\ph})
      .. controls
        ({(0.35*\pa-\pl)*\pw},
         {(0.35*\pa+\pl)*\ph})
        and
        ({0.55*\pa*\pw},{0.55*\pa*\ph})
      ..
      ({\pa*\pw},{\pa*\ph})
      --
      ({\pw},{\ph});
  \end{scope}%
}

\subsection{The guitar map for involutive non-degenerate Yang--Baxter maps}
\label{subsec:guitar}

We first introduce the guitar map for the involutive case and connect it to the LPW
classification in Theorem~\ref{thm:inv-LPW-classification}.

If we have an involutive Yang--Baxter map, then \eqref{eq:yb-permutation-gate} 
yields an involutive permutation operator. It follows that $R$ gives a representation of the symmetric group.
We assumed that $r$ is non-degenerate, therefore $R$ is dual-unitary.  In this case the conditions of the LPW theorem
apply, and the representation is unitarily equivalent to another one given by a ``normal form'' gate $N$. Below we show
that in our case the normal form gate is simply the SWAP gate. Therefore, the LPW theorem implies that the
representation in question is unitarily equivalent to the natural permutation representation on the qudits.

The guitar map is an explicit realization of the LPW intertwiner $U_n$ in
\eqref{eq:inv-LPW-intertwiner} for this class \cite{Deadman2024}. We will see
that the guitar map takes the form of an MPO with fixed bond dimension, and
we will use this special property to obtain strict bounds for the OSR.

Later, in Subsection~\ref{subsec:guitar-general}, we also treat the guitar map for noninvolutive but still non-degenerate Yang--Baxter maps. These
still yield dual-unitary permutation gates, but they are not involutive, so the LPW theorem no longer applies.
Nevertheless, the guitar map can be constructed in essentially the same way.

In Subsection~\ref{subsec:phase-general} we will also consider
arbitrary phase-dressed versions of such Yang--Baxter-map gates.

We first determine the LPW class directly.

\begin{proposition}[LPW class of an involutive non-degenerate map]
\label{prop:yb-involutive-LPW-class}
Let $r$ be a finite non-degenerate involutive Yang--Baxter map on $X$, with
$|X|=D$, and let $R$ be its permutation matrix.  Then
\begin{equation}
 T_R=(\operatorname{id}\otimes\operatorname{Tr})R=\id_{\mathbb C^X}.
 \label{eq:yb-involutive-partial-trace}
\end{equation}
Consequently, the LPW normal form of $R$ is the ordinary swap $\Swap$, with
LPW data
\begin{equation}
 \bigl((1^D),\varnothing\bigr).
 \label{eq:yb-involutive-LPW-data}
\end{equation}
\end{proposition}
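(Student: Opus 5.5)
We compute the partial trace directly from the definition \eqref{eq:inv-partial-trace}. For the permutation gate, $R^{cb}_{ab}=\bra{c,b}R\ket{a,b}$ equals $1$ exactly when $r(a,b)=(c,b)$, that is, when $\lambda_a(b)=c$ and $\mu_b(a)=b$. Hence
\begin{equation*}
 (T_R)_a^{\ c}=\#\{b\in X:\lambda_a(b)=c,\ \mu_b(a)=b\}.
\end{equation*}
Since $\lambda_a$ is a permutation, the first condition fixes $b=\lambda_a^{-1}(c)$. So $(T_R)_a^{\ c}\in\{0,1\}$, and it equals $1$ iff $\mu_{\lambda_a^{-1}(c)}(a)=\lambda_a^{-1}(c)$. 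The claim $T_R=\id$ therefore reduces to a statement about the components of $r$: for each $a$, the equation $\mu_b(a)=b$ has exactly one solution $b$, and this solution satisfies $\lambda_a(b)=a$.

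To prove this we use involutivity. The relation $r^2=\id$ says that $r(x,y)=(u,v)$ implies $r(u,v)=(x,y)$. The standard consequence for involutive non-degenerate maps (Etingof--Schedler--Soloviev, or Rump) is $\mu_y(x)=\lambda^{-1}_{\lambda_x(y)}(x)$. We also invoke the known fact that the map $T(y)=\lambda_y^{-1}(y)$ is a bijection of $X$; this is the diagonal-map lemma, which uses the braid relation together with non-degeneracy.

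Now suppose $\mu_b(a)=b$, and set $c=\lambda_a(b)$. Then $r(a,b)=(c,b)$, and involutivity gives $r(c,b)=(a,b)$, so $\lambda_c(b)=a$. Combining this with $\lambda_a(b)=c$ and the formula for $\mu$ is what should force $c=a$. Equivalently, we must show that $r(a,b)=(c,b)$ implies $c=a$, and that for each $a$ there is exactly one such $b$.

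Existence and uniqueness of $b$ come from the bijectivity of the diagonal map: the fixed-point condition $\mu_b(a)=b$ rewrites as $b=T'(a)$ for a bijection $T'$ built from the $\mu$'s. Applying the diagonal lemma to the opposite (equivalently, involutive-inverse) solution shows that $y\mapsto\mu_y^{-1}(y)$ is bijective, which gives exactly one $b$ with $\mu_b(a)=b$.

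The step I expect to be the main obstacle is forcing $c=a$. My first attempt is to evaluate both braid words on $(a,b,b)$ or on $(a,b,\cdot)$. Alternatively, one can use the identity $\lambda_{\lambda_x(y)}\lambda_{\mu_y(x)}=\lambda_x\lambda_y$ with $x=a$ and $y=b$, which gives $\lambda_c\lambda_b=\lambda_a\lambda_b$ and hence $\lambda_c=\lambda_a$. Then $\lambda_a(b)=c$ and $\lambda_c(b)=a$ together give $c=a$.

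With $T_R=\id$ established, the rest is immediate. Its spectrum is $+1$ with multiplicity $D$. By Theorem~\ref{thm:inv-LPW-classification} together with \eqref{eq:inv-LPW-spectrum}, this is realized only by $D$ positive blocks of size one. The LPW data are therefore $((1^D),\varnothing)$, and the normal form is $\Swap$, as listed among the examples after the classification theorem.
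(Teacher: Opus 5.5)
Your argument is correct, but it takes a heavier route than the paper's. You obtain the off-diagonal vanishing from the braid relation, via $\lambda_x\lambda_y=\lambda_u\lambda_v$ in Eq.~\eqref{eq:yb-lambda-product}. You obtain existence and uniqueness of the closing colour from the literature's diagonal-map lemma, applied to the opposite solution $(x,y)\mapsto(\mu_x(y),\lambda_y(x))$. (Your parenthetical ``involutive-inverse'' is not the right object here, since $r^{-1}=r$ does not exchange the roles of $\lambda$ and $\mu$.)

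The paper needs only involutivity and the two non-degeneracies. The step you flag as the main obstacle is immediate: $r(a,b)=(c,b)$ and $r(c,b)=(a,b)$ give $\mu_b(a)=b=\mu_b(c)$, and injectivity of $\mu_b$ forces $c=a$. Once $c=a$ is known, uniqueness of $b$ is automatic, since $b=\lambda_a^{-1}(a)$. Existence then follows in one line from your own formula $\mu_y(x)=\lambda^{-1}_{\lambda_x(y)}(x)$ evaluated at $y=\lambda_a^{-1}(a)$. The paper instead writes $r(a,b_a)=(a,d)$, applies $r$ once more, and uses injectivity of $\lambda_a$.

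What the paper's route buys is that $T_R=\id$ holds for every involutive non-degenerate bijection of $X^2$. The braid relation enters only when Theorem~\ref{thm:inv-LPW-classification} is invoked. Bijectivity of $y\mapsto\lambda_y^{-1}(y)$ even comes out as a by-product rather than being needed as input. Your route is valid given the cited lemma, but it imports an external result and an extra hypothesis that the computation does not use. The final step, reading off the LPW data $((1^D),\varnothing)$ and the normal form $\Swap$ from the spectrum of $T_R$, is the same in both.
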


\begin{proof}
In the computational basis, a matrix element of the partial trace counts the
colours which close the second strand:
\begin{equation}
 (T_R)_a^{\ c}
 =\#\bigl\{b\in X:\ r(a,b)=(c,b)\bigr\}.
 \label{eq:yb-partial-trace-count}
\end{equation}
Suppose that $r(a,b)=(c,b)$.  Involutivity gives $r(c,b)=(a,b)$.
Comparing the second components of these two collisions yields
\begin{equation}
 \mu_b(a)=b=\mu_b(c).
\end{equation}
Right non-degeneracy makes $\mu_b$ injective, and therefore $a=c$.

It remains to show that for each $a$ there is exactly one closing colour.  Let
$b_a=\lambda_a^{-1}(a)$.  We can write $r(a,b_a)=(a,d)$ for some $d$.
Applying $r$ once more gives $r(a,d)=(a,b_a)$, and comparison of the first
components gives
\begin{equation}
 \lambda_a(d)=a=\lambda_a(b_a).
\end{equation}
Left non-degeneracy implies $d=b_a$.  Thus $(a,b_a)$ is a fixed point of $r$,
and it is the unique pair contributing to
\eqref{eq:yb-partial-trace-count}.  Hence $T_R=\id$.  The LPW normal-form
classification, Theorem~\ref{thm:inv-LPW-classification}, then gives $D$
positive one-dimensional blocks, which is precisely $\Swap$.
\end{proof}

LPW therefore guarantees, at every tensor level $n$, a unitary $U_n$
satisfying \eqref{eq:inv-LPW-generator-intertwiner} with the normal form
$N=\Swap$.  For the present set-theoretical class, the guitar map supplies a
particularly useful explicit choice of $U_n$ for every $n$.  It permutes the
computational basis and can be evaluated by a left-to-right scan carrying only
one permutation of the finite colour set.

We now introduce the guitar map and study its properties. In the remainder of this subsection we work purely
at the level of the classical maps. The extension to linear operators follows directly by lifting these classical
permutations to operators.

The guitar map for length $N$ is a classical map $J_N: X^N\to X^N$ with the
following intertwining property, the classical counterpart of
\eqref{eq:inv-LPW-generator-intertwiner}:
\begin{equation}
 J_Nr_{j,j+1}=p_{j,j+1}J_N, \qquad  j=1,\dots,N-1,
 \label{eq:yb-involutive-guitar-intertwiner}
\end{equation}
where $p_{j,j+1}$ is the classical swap operation.

We now construct the map.  We begin with two sites.  Let
\begin{equation}
 r(x,y)=(u,v),
 \qquad u=\lambda_x(y),\qquad v=\mu_y(x),
 \label{eq:yb-involutive-local-collision}
\end{equation}
and define
\begin{equation}
 J_2(x,y)=\bigl(x,\lambda_x(y)\bigr)=(x,u).
 \label{eq:yb-two-site-J}
\end{equation}
The second new coordinate is the colour that emerges on the left.  Figure~\ref{fig:yb-J2-diagonal-check} records
this elementary 
change of coordinates.  The diagonal arrow is only a bookkeeping device: it
shows that the second entry $y$ is replaced by $u=\lambda_x(y)$, while the
first entry $x$ is carried along unchanged.

\begin{figure}[t]
\centering
\begin{tikzpicture}[opent figure,x=1cm,y=1cm]
  \node at (1.25,2.55) {$(a)$};
  \node[opent/guitar endpoint,label=below:$x$] (Jax) at (0,0) {};
  \node[opent/guitar endpoint,label=below:$y$] (Jay) at (2.5,0) {};
  \node[opent/guitar endpoint,label=above:$u$] (Jau) at (0,2.0) {};
  \node[opent/guitar endpoint,label=above:$v$] (Jav) at (2.5,2.0) {};
  \draw[opent/flow arrow,shorten <=4pt,shorten >=4pt]
    (Jay) -- (Jau) node[pos=.55,above right=-1pt] {$J_2$};
  \node at (1.25,-0.72) {$J_2(x,y)=(x,u)$};

  \begin{scope}[shift={(5.2,0)}]
    \node at (1.25,2.55) {$(b)$};
    \node[opent/guitar endpoint,label=below:$u$] (Jbu) at (0,0) {};
    \node[opent/guitar endpoint,label=below:$v$] (Jbv) at (2.5,0) {};
    \node[opent/guitar endpoint,label=above:$x$] (Jbx) at (0,2.0) {};
    \node[opent/guitar endpoint,label=above:$y$] (Jby) at (2.5,2.0) {};
    \draw[opent/flow arrow,shorten <=4pt,shorten >=4pt]
      (Jbv) -- (Jbx) node[pos=.55,above right=-1pt] {$J_2$};
    \node at (1.25,-0.72) {$J_2(u,v)=(u,x)$};
  \end{scope}

  \node[align=center] at (3.85,-1.52)
    {$J_2r(x,y)=(u,x)=pJ_2(x,y)$};
\end{tikzpicture}
\caption{The two-site guitar coordinates and the direct involutive check.
Panel (a) uses the first output $u$ of $r(x,y)=(u,v)$ as the second guitar
coordinate.  In panel (b), involutivity gives $r(u,v)=(x,y)$, so the same
construction returns $J_2(u,v)=(u,x)$.  The two dressed colours are therefore
exchanged by the collision.}
\label{fig:yb-J2-diagonal-check}
\end{figure}

Since $r$ is involutive, $r(u,v)=(x,y)$. The second diagram in Figure~\ref{fig:yb-J2-diagonal-check} shows the
application of $J_2$ in this configuration, giving
\begin{equation}
  J_2(u,v)=(u,x).
\end{equation}
It follows that
\begin{equation}
 J_2r(x,y)=J_2(u,v)=\bigl(u,\lambda_u(v)\bigr)
 =(u,x)=pJ_2(x,y),
 \label{eq:yb-two-site-intertwiner}
\end{equation}
where $p(x,u)=(u,x)$ is the ordinary flip on $X^2$. Thus we obtain the $N=2$ case of relation
\eqref{eq:yb-involutive-guitar-intertwiner}. The map $J_2$ is a bijection by
nondegeneracy. 

For $N>2$, it is useful to represent the change of
coordinates as a triangular network. Our graphical notation is inspired by standard conventions in integrability and
the tensor-network literature, especially by Ref.~\cite{LebedVendramin2017}, which introduced closely related notation.

In the diagrams the wires carry the colours $1,\dots,D$, and the crossings can change the colours.
When we draw the triangular diagrams,
the incoming variables are always placed on the bottom or on the right end of the diagrams, and the
output variables are given on the left. This notation is specifically convenient for the guitar map, and will not be
used elsewhere in this document.

We use three different crossings: the guitar-map recolouring defined by $J_2$, the ordinary flip,
and the actual Yang--Baxter map.
The graphical notation is summarized in Figure~\ref{fig:yb-guitar-local-notation}.  At a
guitar crossing, the uninterrupted vertical strand is the controlling colour
and is not modified.  The horizontal strand passes underneath it, and its colour is transformed.
The
horizontal strand is read from right to left.  We draw the physical
Yang--Baxter map $r$ as a genuine crossing with a filled vertex.  By contrast,
the ordinary flip $p$ is drawn as a virtual crossing: the strands are
only permuted and there is no additional colour interaction.

\begin{figure}[t]
\centering
\begin{tikzpicture}[opent figure,x=1cm,y=1cm]
  \begin{scope}[shift={(0,0)}]
    \draw[opent/guitar wire] (0,-0.9) -- (0,0.9);
    \draw[opent/guitar wire] (-1.15,0) -- (-.12,0);
    \draw[opent/guitar wire] (.12,0) -- (1.15,0);
    \node[below] at (0,-0.9) {$a$};
    \node[above] at (0,0.9) {$a$};
    \node[left] at (-1.15,0) {$\lambda_a(b)$};
    \node[right] at (1.15,0) {$b$};
    \node[align=center] at (0,-1.85) {guitar\\recolouring};
  \end{scope}

  \begin{scope}[shift={(4.1,0)}]
    \DrawPVertex{0}{0}{.70}{.70}
    \node at (0,-1.55) {ordinary flip $p$};
  \end{scope}

  \begin{scope}[shift={(7.8,0)}]
    \DrawRVertex{0}{0}{.70}{.70}
    \node at (0,-1.55) {Yang--Baxter map $r$};
  \end{scope}
\end{tikzpicture}
\caption{Local graphical conventions.  In the guitar vertex the vertical
colour $a$ is unchanged, while the underpassing colour $b$ is replaced by
$\lambda_a(b)$.  The open circle denotes a pure permutation of the two
strands, whereas the filled circle denotes the actual local Yang--Baxter
interaction.}
\label{fig:yb-guitar-local-notation}
\end{figure}

The triangular picture of the guitar map $J_2$ is shown in Fig.~\ref{fig:yb-guitar-J2-triangle}.  The $x_2$ strand turns to the left and passes under the uninterrupted
$x_1$ strand, so it acquires precisely the permutation $\lambda_{x_1}$. The intertwining relation
in Eq.~\eqref{eq:yb-two-site-intertwiner} is shown in Fig.~\ref{fig:J2-intertwining-graphical}.

\begin{figure}[t]
\centering
\begin{tikzpicture}[opent figure]
  \DrawGuitarNetwork{gJtwo}{2}
  \LabelGuitarNetwork{gJtwo}{2}
\end{tikzpicture}
\caption{Triangular graphical notation for $J_2$, read from the bottom to the
left.  The first colour is unchanged, $z_1=x_1$.  The second strand passes
under the first one and therefore leaves as
$z_2=\lambda_{x_1}(x_2)$.}
\label{fig:yb-guitar-J2-triangle}
\end{figure}

\newcommand{\JTwoSep}{1.55}         
\newcommand{\JTwoMid}{0.775}        
\newcommand{\JTwoStub}{0.78}        
\newcommand{\JTwoGap}{0.11}         
\newcommand{\JTwoTop}{0.95}         
\newcommand{\JTwoMidY}{0.00}        
\newcommand{\JTwoBottom}{0.95}      
\newcommand{\JTwoRHalfY}{0.55}      
\newcommand{\JTwoPHalfY}{0.55}      
\newcommand{\JTwoHalfX}{0.775}      

\newcommand{\DrawJTwo}[1]{%
  \begin{scope}[shift={#1}]
    \draw[opent/guitar wire] (0,\JTwoTop) -- (\JTwoSep,\JTwoTop);
    \draw[opent/guitar wire] (\JTwoSep,\JTwoTop) -- (\JTwoSep,\JTwoMidY-\JTwoBottom);

    \draw[opent/guitar wire] (0,\JTwoMidY) -- (\JTwoSep-\JTwoGap,\JTwoMidY);
    \draw[opent/guitar wire] (\JTwoSep+\JTwoGap,\JTwoMidY) -- (2*\JTwoSep,\JTwoMidY);
    \draw[opent/guitar wire] (2*\JTwoSep,\JTwoMidY) -- (2*\JTwoSep,\JTwoMidY-\JTwoBottom);

    \node[left]  at (0,\JTwoTop) {$z_1$};
    \node[left]  at (0,\JTwoMidY) {$z_2$};
    \node[below] at (\JTwoSep,\JTwoMidY-\JTwoBottom) {$x_1$};
    \node[below] at (2*\JTwoSep,\JTwoMidY-\JTwoBottom) {$x_2$};
  \end{scope}
}

\begin{figure}[t]
\centering
\begin{tikzpicture}[x=1cm,y=1cm,baseline=(current bounding box.center)]

  \begin{scope}[shift={(-4.2,0)}]
    \node at (1.55,1.55) {$J_2r$};

    \DrawJTwo{(0,0)}

    \DrawRVertex
      {1.5*\JTwoSep}
      {\JTwoMidY-\JTwoBottom-\JTwoRHalfY}
      {\JTwoHalfX}
      {\JTwoRHalfY}
  \end{scope}

  \node at (0,0) {$=$};

  \begin{scope}[shift={(2.1,0)}]
    \node at (1.55,1.55) {$pJ_2$};

    \DrawJTwo{(0,0)}

    \DrawPVertex
      {-\JTwoHalfX}
      {0.5*(\JTwoTop+\JTwoMidY)}
      {\JTwoHalfX}
      {0.5*(\JTwoTop-\JTwoMidY)}
  \end{scope}

\end{tikzpicture}
\caption{The elementary two-site intertwining relation \(J_2r=pJ_2\), read from bottom to top.}
\label{fig:J2-intertwining-graphical}
\end{figure}

For $N=3$ we introduce the guitar map with the explicit formula
\begin{equation}
 J_3(x_1,x_2,x_3)=
 \bigl(x_1,\lambda_{x_1}(x_2),
 \lambda_{x_1}\lambda_{x_2}(x_3)\bigr).
 \label{eq:yb-involutive-J3}
\end{equation}
Writing the output as $(z_1,z_2,z_3)$ we can draw this map as the triangular network in
Figure~\ref{fig:yb-guitar-J3-triangle}.  The third strand first passes under
$x_2$ and then under $x_1$.  Since the picture is read from the bottom to the
left, the order of these two crossings gives
$z_3=\lambda_{x_1}\lambda_{x_2}(x_3)$, in agreement with
\eqref{eq:yb-involutive-J3}.

\begin{figure}[t]
\centering
\begin{tikzpicture}[opent figure]
  \DrawGuitarNetwork{gJthree}{3}
  \LabelGuitarNetwork{gJthree}{3}
\end{tikzpicture}
\caption{Triangular graphical notation for $J_3$.  The uninterrupted
vertical strands carry the controlling colours.  Each horizontal underpass
applies one further $\lambda$-permutation, so
$z_1=x_1$, $z_2=\lambda_{x_1}(x_2)$, and
$z_3=\lambda_{x_1}\lambda_{x_2}(x_3)$.}
\label{fig:yb-guitar-J3-triangle}
\end{figure}

For $N=3$, Eq.~\eqref{eq:yb-involutive-guitar-intertwiner} reduces to the two identities
\begin{equation}
  p_{12}J_3=J_3r_{12},\qquad  p_{23}J_3=J_3r_{23}.
\end{equation}
They are depicted in Figs.~\ref{fig:yb-guitar-J3-first} and~\ref{fig:yb-guitar-J3-second}.
We prove them diagrammatically using two ``sliding moves''. 

\begin{figure}[t]
\centering
\begin{tikzpicture}[opent figure,x=1cm,y=1cm]
  \DrawGuitarNetwork{gFirstProofStart}{3}
  \pgfmathsetmacro{\gFirstProofCX}{1.5*\GuitarDX}
  \pgfmathsetmacro{\gFirstProofCY}{\GuitarBottom-0.60}
  \DrawRVertex
    {\gFirstProofCX}{\gFirstProofCY}{0.5*\GuitarDX}{0.60}
  \draw[opent/guitar wire]
    (gFirstProofStart-in-3) -- ++(0,-1.20);
  \node[below=2pt] at ($(gFirstProofStart-in-1)+(0,-1.20)$) {$x_1$};
  \node[below=2pt] at ($(gFirstProofStart-in-2)+(0,-1.20)$) {$x_2$};
  \node[below=2pt] at ($(gFirstProofStart-in-3)+(0,-1.20)$) {$x_3$};
  \node at (2.15,2.45) {$J_3r_{12}$};

  \node at (4.55,0.35) {$=$};

  \begin{scope}[shift={(5.20,0)}]
    \pgfmathsetmacro{\gProofXOne}{\GuitarDX}
    \pgfmathsetmacro{\gProofXTwo}{2*\GuitarDX}
    \pgfmathsetmacro{\gProofXThree}{3*\GuitarDX}
    \pgfmathsetmacro{\gProofYOne}{2*\GuitarDY}
    \pgfmathsetmacro{\gProofYTwo}{\GuitarDY}
    \pgfmathsetmacro{\gProofRBuffer}{0.14}
    \pgfmathsetmacro{\gProofRBottom}{\gProofRBuffer}
    \pgfmathsetmacro{\gProofRTop}{\GuitarDY-\gProofRBuffer}
    \pgfmathsetmacro{\gProofRHalfY}{0.5*\GuitarDY-\gProofRBuffer}

    \draw[opent/guitar wire]
      (\gProofXOne,\GuitarBottom) -- (\gProofXOne,\gProofRBottom);
    \draw[opent/guitar wire]
      (\gProofXTwo,\GuitarBottom) -- (\gProofXTwo,\gProofRBottom);
    \draw[opent/guitar wire]
      (\gProofXThree,\GuitarBottom) -- (\gProofXThree,0);

    \draw[opent/guitar wire]
      (\GuitarLeft,0) -- ({\GuitarDX-\GuitarGap},0);
    \draw[opent/guitar wire]
      ({\GuitarDX+\GuitarGap},0)
      -- ({2*\GuitarDX-\GuitarGap},0);
    \draw[opent/guitar wire]
      ({2*\GuitarDX+\GuitarGap},0) -- (3*\GuitarDX,0);

    \draw[opent/guitar wire]
      (\GuitarLeft,\gProofYOne) -- (\gProofXOne,\gProofYOne);
    \draw[opent/guitar wire]
      (\gProofXOne,\gProofRTop) -- (\gProofXOne,\gProofYOne);
    \draw[opent/guitar wire]
      (\gProofXTwo,\gProofRTop) -- (\gProofXTwo,\gProofYTwo);
    \draw[opent/guitar wire]
      (\GuitarLeft,\gProofYTwo)
      -- ({\GuitarDX-\GuitarGap},\gProofYTwo);
    \draw[opent/guitar wire]
      ({\GuitarDX+\GuitarGap},\gProofYTwo)
      -- (\gProofXTwo,\gProofYTwo);

    \DrawRVertex
      {1.5*\GuitarDX}{0.5*\GuitarDY}
      {0.5*\GuitarDX}{\gProofRHalfY}

    \node[below=2pt] at (\gProofXOne,\GuitarBottom) {$x_1$};
    \node[below=2pt] at (\gProofXTwo,\GuitarBottom) {$x_2$};
    \node[below=2pt] at (\gProofXThree,\GuitarBottom) {$x_3$};
    \node at (2.15,2.45) {after one sliding move};
  \end{scope}

  \node at (9.84,0.35) {$=$};

  \DrawGuitarNetwork[shift={(11.7,0)}]{gFirstProofFinal}{3}
  \pgfmathsetmacro{\gFirstProofPY}{1.5*\GuitarDY}
  \pgfmathsetmacro{\gFirstProofPX}{11.7+\GuitarLeft-0.71}
  \DrawPVertex
    {\gFirstProofPX}{\gFirstProofPY}{0.71}{0.5*\GuitarDY}
  \draw[opent/guitar wire]
    (gFirstProofFinal-out-3) -- ++(-1.42,0);
  \node[below=2pt] at (gFirstProofFinal-in-1) {$x_1$};
  \node[below=2pt] at (gFirstProofFinal-in-2) {$x_2$};
  \node[below=2pt] at (gFirstProofFinal-in-3) {$x_3$};
  \node at (13.85,2.45) {$p_{12}J_3$};
\end{tikzpicture}
\caption{Graphical proof of $J_3r_{12}=p_{12}J_3$.  The physical crossing is
inserted on the first two bare inputs.  We first apply one sliding move, shown
in Fig.~\ref{fig:guitar-slide-r}, and then use the defining two-site identity
$J_2r=pJ_2$ shown in Fig.~\ref{fig:J2-intertwining-graphical}.}
\label{fig:yb-guitar-J3-first}
\end{figure}

\begin{figure}[t]
\centering
\begin{tikzpicture}[opent figure,x=1cm,y=1cm]
  \DrawGuitarNetwork{gSecondProofStart}{3}
  \draw[opent/guitar wire]
    (gSecondProofStart-in-1) -- ++(0,-1.20);
  \pgfmathsetmacro{\gSecondCX}{2.5*\GuitarDX}
  \pgfmathsetmacro{\gSecondCY}{\GuitarBottom-0.60}
  \DrawRVertex{\gSecondCX}{\gSecondCY}{0.5*\GuitarDX}{0.60}
  \node[below=2pt] at ($(gSecondProofStart-in-1)+(0,-1.20)$) {$x_1$};
  \node[below=2pt] at ($(gSecondProofStart-in-2)+(0,-1.20)$) {$x_2$};
  \node[below=2pt] at ($(gSecondProofStart-in-3)+(0,-1.20)$) {$x_3$};
  \node at (2.15,2.45) {$J_3r_{23}$};

  \node at (4.55,0.35) {$=$};

  \begin{scope}[shift={(5.20,0)}]
    \pgfmathsetmacro{\gSecondProofXOne}{\GuitarDX}
    \pgfmathsetmacro{\gSecondProofXTwo}{2*\GuitarDX}
    \pgfmathsetmacro{\gSecondProofXThree}{3*\GuitarDX}
    \pgfmathsetmacro{\gSecondProofYOne}{2*\GuitarDY}
    \pgfmathsetmacro{\gSecondProofYTwo}{\GuitarDY}
    \pgfmathsetmacro{\gSecondProofPCX}{1.5*\GuitarDX}
    \pgfmathsetmacro{\gSecondProofPCY}{0.5*\GuitarDY}
    \pgfmathsetmacro{\gSecondProofPHalfX}{0.46}

    \draw[opent/guitar wire]
      (\gSecondProofXOne,\GuitarBottom)
      -- (\gSecondProofXOne,\gSecondProofYOne);
    \draw[opent/guitar wire]
      (\gSecondProofXTwo,\GuitarBottom)
      -- (\gSecondProofXTwo,\gSecondProofYTwo);
    \draw[opent/guitar wire]
      (\gSecondProofXThree,\GuitarBottom)
      -- (\gSecondProofXThree,0);

    \draw[opent/guitar wire]
      (\GuitarLeft,\gSecondProofYOne)
      -- (\gSecondProofXOne,\gSecondProofYOne);
    \draw[opent/guitar wire]
      (\GuitarLeft,\gSecondProofYTwo)
      -- ({\GuitarDX-\GuitarGap},\gSecondProofYTwo);
    \draw[opent/guitar wire]
      ({\GuitarDX+\GuitarGap},\gSecondProofYTwo)
      -- ({\gSecondProofPCX-\gSecondProofPHalfX},\gSecondProofYTwo);
    \draw[opent/guitar wire]
      (\GuitarLeft,0) -- ({\GuitarDX-\GuitarGap},0);
    \draw[opent/guitar wire]
      ({\GuitarDX+\GuitarGap},0)
      -- ({\gSecondProofPCX-\gSecondProofPHalfX},0);

    \DrawPVertex
      {\gSecondProofPCX}{\gSecondProofPCY}
      {\gSecondProofPHalfX}{0.5*\GuitarDY}

    \draw[opent/guitar wire]
      ({\gSecondProofPCX+\gSecondProofPHalfX},\gSecondProofYTwo)
      -- (\gSecondProofXTwo,\gSecondProofYTwo);
    \draw[opent/guitar wire]
      ({\gSecondProofPCX+\gSecondProofPHalfX},0)
      -- ({2*\GuitarDX-\GuitarGap},0);
    \draw[opent/guitar wire]
      ({2*\GuitarDX+\GuitarGap},0)
      -- (\gSecondProofXThree,0);

    \node[below=2pt] at (\gSecondProofXOne,\GuitarBottom) {$x_1$};
    \node[below=2pt] at (\gSecondProofXTwo,\GuitarBottom) {$x_2$};
    \node[below=2pt] at (\gSecondProofXThree,\GuitarBottom) {$x_3$};
    \node at (2.15,2.45) {after the two-site identity};
  \end{scope}

  \node at (9.84,0.35) {$=$};

  \DrawGuitarNetwork[shift={(11.7,0)}]{gSecondProofFinal}{3}
  \draw[opent/guitar wire]
    (gSecondProofFinal-out-1) -- ++(-1.42,0);
  \pgfmathsetmacro{\gSecondPY}{0.5*\GuitarDY}
  \pgfmathsetmacro{\gSecondPX}{11.7+\GuitarLeft-0.71}
  \DrawPVertex{\gSecondPX}{\gSecondPY}{0.71}{0.5*\GuitarDY}
  \node[below=2pt] at (gSecondProofFinal-in-1) {$x_1$};
  \node[below=2pt] at (gSecondProofFinal-in-2) {$x_2$};
  \node[below=2pt] at (gSecondProofFinal-in-3) {$x_3$};
  \node at (13.85,2.45) {$p_{23}J_3$};
\end{tikzpicture}
\caption{Graphical proof of $J_3r_{23}=p_{23}J_3$.  The physical crossing is
inserted on the last two bare inputs.  We first use the defining two-site
identity $J_2r=pJ_2$ shown in Fig.~\ref{fig:J2-intertwining-graphical}, and
then apply one sliding move, shown in Fig.~\ref{fig:guitar-slide-p}.}
\label{fig:yb-guitar-J3-second}
\end{figure}

\newcommand{\SlideSep}{1.55}         
\newcommand{\SlideMid}{0.775}        
\newcommand{\SlideStub}{0.78}        
\newcommand{\SlideGap}{0.11}         

\newcommand{\SlideRHalfY}{0.55}      
\newcommand{\SlideTop}{\SlideRHalfY} 
\newcommand{\SlideBot}{\SlideRHalfY} 

\newcommand{\SlideDoubleY}{0.30}     
\newcommand{\SlideSwapHalfY}{\SlideDoubleY} 
\newcommand{\SlideSwapHalfX}{0.70}   

\newcommand{\DrawSingleGuitarCross}[2]{%
  \draw[opent/guitar wire] (#1,\SlideTop) -- (#1,-\SlideBot);
  \draw[opent/guitar wire] (#1-\SlideStub,#2) -- (#1-\SlideGap,#2);
  \draw[opent/guitar wire] (#1+\SlideGap,#2) -- (#1+\SlideStub,#2);
}

\newcommand{\DrawDoubleGuitarCross}[1]{%
  \draw[opent/guitar wire] (#1,0.92) -- (#1,-0.92);
  \draw[opent/guitar wire] (#1-\SlideStub,\SlideDoubleY) -- (#1-\SlideGap,\SlideDoubleY);
  \draw[opent/guitar wire] (#1+\SlideGap,\SlideDoubleY) -- (#1+\SlideStub,\SlideDoubleY);
  \draw[opent/guitar wire] (#1-\SlideStub,-\SlideDoubleY) -- (#1-\SlideGap,-\SlideDoubleY);
  \draw[opent/guitar wire] (#1+\SlideGap,-\SlideDoubleY) -- (#1+\SlideStub,-\SlideDoubleY);
}

\begin{figure}[t]
\centering
\begin{tikzpicture}[x=1cm,y=1cm,baseline=(current bounding box.center)]

  \begin{scope}[shift={(0,0)}]
    \DrawSingleGuitarCross{0}{0}
    \DrawSingleGuitarCross{\SlideSep}{0}

    \DrawRVertex{\SlideMid}{-(\SlideBot+\SlideRHalfY)}{\SlideMid}{\SlideRHalfY}
  \end{scope}

  \node at (3.60,0) {$=$};

  \begin{scope}[shift={(5.20,0)}]
    \DrawRVertex{\SlideMid}{\SlideTop+\SlideRHalfY}{\SlideMid}{\SlideRHalfY}

    \DrawSingleGuitarCross{0}{0}
    \DrawSingleGuitarCross{\SlideSep}{0}
  \end{scope}

\end{tikzpicture}
\caption{Sliding the local Yang--Baxter interaction through two guitar crossings.}
\label{fig:guitar-slide-r}
\end{figure}

\begin{figure}[t]
\centering
\begin{tikzpicture}[x=1cm,y=1cm,baseline=(current bounding box.center)]

  \begin{scope}[shift={(-2.90,0)}]
    \DrawDoubleGuitarCross{0}

    \draw[opent/guitar wire]
      (\SlideGap,\SlideDoubleY) -- ({1.48-\SlideSwapHalfX},\SlideDoubleY);
    \draw[opent/guitar wire]
      (\SlideGap,-\SlideDoubleY) -- ({1.48-\SlideSwapHalfX},-\SlideDoubleY);

    \DrawPVertex{1.48}{0}{\SlideSwapHalfX}{\SlideSwapHalfY}
  \end{scope}

  \node at (0,0) {$=$};

  \begin{scope}[shift={(2.90,0)}]
    \DrawPVertex{-1.48}{0}{\SlideSwapHalfX}{\SlideSwapHalfY}

    \draw[opent/guitar wire]
      ({-1.48+\SlideSwapHalfX},\SlideDoubleY) -- (-\SlideGap,\SlideDoubleY);
    \draw[opent/guitar wire]
      ({-1.48+\SlideSwapHalfX},-\SlideDoubleY) -- (-\SlideGap,-\SlideDoubleY);

    \DrawDoubleGuitarCross{0}
  \end{scope}

\end{tikzpicture}
\caption{Sliding the ordinary swap through two guitar crossings on a single vertical strand.}
\label{fig:guitar-slide-p}
\end{figure}

The first move (depicted in Fig.~\ref{fig:guitar-slide-r}) tells us that a Yang--Baxter map acting on two vertical
lines can be pulled through an additional 
horizontal wire. Since the colours of the vertical lines are not modified by the crossing with the horizontal line, the
only non-trivial effect here is the transformation of the colour of the horizontal line. 
Using again the notation $r(x,y)=(u,v)$, we need to prove the relation
\begin{equation}
 \lambda_x\lambda_y=\lambda_u\lambda_v.
 \label{eq:yb-lambda-product}
\end{equation}
This identity is simply one component of the original Yang--Baxter relation for $r$. Consider the action of
$r_{12}r_{23}r_{12}$ on a triple $(x,y,z)$: the first component of the result becomes $\lambda_u\lambda_v(z)$. Consider
now the action $r_{23}r_{12}r_{23}$ on $(x,y,z)$: the first component becomes $\lambda_x\lambda_y(z)$. This proves
\eqref{eq:yb-lambda-product} and thereby proves the sliding move in Fig.~\ref{fig:guitar-slide-r}.

The second sliding move (depicted in Fig.~\ref{fig:guitar-slide-p}) tells us that permutations on horizontal lines
can be pulled through the crossings on vertical lines. This step follows in a straightforward way, because crossing with
the vertical line applies the same colour transformation for both horizontal lines, and the colour of the vertical line
is not changed in either case.

We can then use these sliding moves to prove the desired intertwining relations for $J_3$.

We now define the transformation on an arbitrary word
$x=(x_1,\ldots,x_N)$. The final formula is
\begin{equation}
 J_N(x)=\bigl(x_1,\lambda_{x_1}(x_2),
 \lambda_{x_1}\lambda_{x_2}(x_3),\ldots\bigr).
 \label{eq:yb-guitar-expanded}
\end{equation}
Alternatively, we obtain the same result
by a left-to-right
scan carrying one running permutation $g$:
\begin{align}
 g_0&=1,&
 J_N(x)_k&=g_{k-1}(x_k),&
 g_k&=g_{k-1}\lambda_{x_k}.
 \label{eq:yb-guitar}
\end{align}
The same scan also gives the inverse.  If $(z_1,z_2,\dots,z_N)=J_N(x_1,x_2,\dots,x_N)$ then we get
\begin{equation}
 x_k=g_{k-1}^{-1}(z_k),
 \qquad
 g_k=g_{k-1}\lambda_{x_k}.
 \label{eq:yb-guitar-inverse}
\end{equation}
Thus $J_N$ is a bijection.  This argument uses only left
non-degeneracy.

\begin{proposition}[Explicit LPW intertwiner]
\label{prop:yb-involutive-guitar}
For every $N\geq2$ and every $i=1,\ldots,N-1$, Eq.~\eqref{eq:yb-involutive-guitar-intertwiner} holds.
Equivalently, the permutation unitary
\begin{equation}
 \cJ_N\ket{x_1,\ldots,x_N}
 =\ket{J_N(x_1,\ldots,x_N)}
 \label{eq:yb-linearized-guitar}
\end{equation}
satisfies
\begin{equation}
 \cJ_NR_i=\Swap_i\cJ_N.
 \label{eq:yb-linearized-involutive-intertwiner}
\end{equation}
Hence $U_N=\cJ_N$ is an explicit choice in
\eqref{eq:inv-LPW-generator-intertwiner} for the swap normal form.
\end{proposition}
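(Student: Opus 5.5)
The plan is to verify the classical intertwining relation $J_N r_{i,i+1}=p_{i,i+1}J_N$ directly from the scan formula \eqref{eq:yb-guitar}. The linear statement \eqref{eq:yb-linearized-involutive-intertwiner} then follows by lifting to permutation unitaries, since $\cJ_N$, $R_i$ and $\Swap_i$ are the permutation matrices of $J_N$, $r_{i,i+1}$ and $p_{i,i+1}$. The identification $U_N=\cJ_N$ in \eqref{eq:inv-LPW-generator-intertwiner} is immediate: $\cJ_N$ is unitary because $J_N$ is a bijection, as shown via \eqref{eq:yb-guitar-inverse}.

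Fix $i$ and a word $x=(x_1,\ldots,x_N)$, and write $r(x_i,x_{i+1})=(u,v)$ with $u=\lambda_{x_i}(x_{i+1})$. Compare the scans of $x$ and of $x'=r_{i,i+1}(x)$, which agree outside positions $i,i+1$.

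For $k<i$, the running permutations coincide, so the outputs agree at those positions. At position $i$, both scans carry the same $g_{i-1}$. Hence $J_N(x')_i=g_{i-1}(u)=g_{i-1}\lambda_{x_i}(x_{i+1})=J_N(x)_{i+1}$. At position $i+1$ we get $J_N(x')_{i+1}=g_{i-1}\lambda_u(v)$. Involutivity gives $r(u,v)=(x_i,x_{i+1})$, so $\lambda_u(v)=x_i$, and therefore this equals $g_{i-1}(x_i)=J_N(x)_i$. This is precisely the two-site computation \eqref{eq:yb-two-site-intertwiner} conjugated by $g_{i-1}$, so positions $i,i+1$ are swapped.

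For $k>i+1$, we need the running permutations after step $i+1$ to agree: $g_{i-1}\lambda_u\lambda_v=g_{i-1}\lambda_{x_i}\lambda_{x_{i+1}}$. This is exactly the identity \eqref{eq:yb-lambda-product}, i.e.\ the first sliding move, already derived from the first component of the braid relation. With equal running permutations and equal letters, all later outputs coincide, giving $J_N(x')=p_{i,i+1}J_N(x)$.

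The only step with real content is the tail agreement $\lambda_u\lambda_v=\lambda_{x_i}\lambda_{x_{i+1}}$. It is the point where the braid relation enters, and I would make sure the component bookkeeping in $r_{12}r_{23}r_{12}$ versus $r_{23}r_{12}r_{23}$ is stated cleanly rather than only graphically. Everything else is routine and uses only involutivity at the swapped pair and left non-degeneracy for bijectivity.
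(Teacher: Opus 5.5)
Your proof is correct and is an algebraic transcription of the paper's diagrammatic argument. The tail agreement $\lambda_u\lambda_v=\lambda_{x_i}\lambda_{x_{i+1}}$ is the first sliding move (Fig.~\ref{fig:guitar-slide-r}), positions $i,i+1$ are the identity $J_2r=pJ_2$, and conjugation by the common prefix $g_{i-1}$ is the second sliding move (Fig.~\ref{fig:guitar-slide-p}); working with the scan formula \eqref{eq:yb-guitar} simply spells out the paper's ``slide up, apply $J_2r=pJ_2$, slide left'' argument for general $N$ and $i$.
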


\begin{proof}
We provide a diagrammatic proof.  All required steps are already displayed for $N=3$ in
Figs.~\ref{fig:yb-guitar-J3-first} and~\ref{fig:yb-guitar-J3-second}.  For a
general index $i$, we slide the physical crossing upwards until it reaches the
corresponding two-site block, apply $J_2r=pJ_2$, and then slide the resulting
permutation crossing to the left.  This gives $J_Nr_i=p_iJ_N$, and
linearization gives Eq.~\eqref{eq:yb-linearized-involutive-intertwiner}.
\end{proof}

It is important to distinguish $J_N$ from an onsite change of basis.
The final $k$th coordinate depends on all colours to its left; therefore, $J_N$
typically does not factor into one-site transformations.  Its nonlocality is,
however, very restricted. The definition in Eq.~\eqref{eq:yb-guitar} implies that for the coordinate $x_k$ all the dependence on
the variables to the left of it is compressed into the permutation $g_{k-1}$. This observation allows us to present the
guitar map as a ``matrix product type'' mapping, such that the linearized version $\cJ_N$ actually becomes a Matrix
Product Operator (MPO).

This alternative notation for the guitar map is presented in Fig.~\ref{fig:yb-guitar-MPO}. This diagram simply
reflects the definition in Eq.~\eqref{eq:yb-guitar} and it is not a convenient representation for proving the intertwining
relation. However, we can read off the matrix product structure, and we observe a finite bond dimension.
\begin{figure}[t]
\centering
\begin{tikzpicture}[
  opent figure,
  x=1cm,y=1cm,
  guitar tensor/.style={
    opent/gate,minimum width=10mm,minimum height=8mm,
    font=\small
  },
  memory arrow/.style={
    -{Latex[length=1.8mm]},draw=opentblue!92!black,line width=0.85pt
  },
  physical arrow/.style={
    -{Latex[length=1.7mm]},draw=black!78,line width=0.75pt
  }
]
  \node[guitar tensor] (A1) at (0,0) {$A$};
  \node[guitar tensor] (A2) at (1.85,0) {$A$};
  \node at (3.15,0) {$\cdots$};
  \node[guitar tensor] (ANm) at (4.45,0) {$A$};
  \node[guitar tensor] (AN) at (6.30,0) {$A$};

  \draw[memory arrow] (-1.45,0)
    node[left,font=\small] {$g_0=1$} -- (A1.west);
  \draw[memory arrow] (A1.east) --
    node[above,font=\scriptsize] {$g_1$} (A2.west);
  \draw[memory arrow] (A2.east) -- (2.70,0);
  \draw[memory arrow] (3.60,0) --
    node[above,pos=0,xshift=-1mm,font=\scriptsize] {$g_{N-2}$} (ANm.west);
  \draw[memory arrow] (ANm.east) --
    node[above,font=\scriptsize] {$g_{N-1}$} (AN.west);
  \draw[memory arrow] (AN.east) -- (7.60,0)
    node[right,font=\small] {$g_N$};

  \draw[physical arrow] (0,-1.05)
    node[below,font=\small] {$x_1$} -- (A1.south);
  \draw[physical arrow] (A1.north) -- (0,1.05)
    node[above,font=\small] {$z_1$};
  \draw[physical arrow] (1.85,-1.05)
    node[below,font=\small] {$x_2$} -- (A2.south);
  \draw[physical arrow] (A2.north) -- (1.85,1.05)
    node[above,font=\small] {$z_2$};
  \draw[physical arrow] (4.45,-1.05)
    node[below,font=\small] {$x_{N-1}$} -- (ANm.south);
  \draw[physical arrow] (ANm.north) -- (4.45,1.05)
    node[above,font=\small] {$z_{N-1}$};
  \draw[physical arrow] (6.30,-1.05)
    node[below,font=\small] {$x_N$} -- (AN.south);
  \draw[physical arrow] (AN.north) -- (6.30,1.05)
    node[above,font=\small] {$z_N$};

  \node[opent/annotation,align=center] at (3.15,-1.68)
  {$z_k=g_{k-1}(x_k)$,\qquad
   $g_k=g_{k-1}\lambda_{x_k}$,\qquad
   $g_k\in K_\lambda$};
\end{tikzpicture}
\caption{Finite-memory matrix-product representation of the guitar
map $\cJ_N$.  The vertical legs carry the input and output colours,
while the horizontal leg carries the accumulated single site rotation.  All
information about the prefix is compressed into one element
$g\in K_\lambda$.}
\label{fig:yb-guitar-MPO}
\end{figure}

This bond dimension is given by counting the possibilities for the local permutations $g_k$. 
To formalize this we introduce the subgroup $K_\lambda$ of the symmetric group on $D$ elements  given by
\begin{equation}
 K_\lambda=\langle\lambda_x:x\in X\rangle
 \leq\operatorname{Sym}(X).
 \label{eq:yb-Klambda}
\end{equation}
Across any spatial cut, the left and right parts of
Figure~\ref{fig:yb-guitar-MPO} share only the bond label $g$.  The bond
dimension is therefore $|K_\lambda|$, independently of $N$.

This special property is not contained in the abstract LPW classification.  It is the extra information
which makes the guitar intertwiner useful for the operator-entanglement
problem.  Subsection~\ref{subsec:permutation-osr} converts this finite bond
label into the corresponding operator-Schmidt-rank bound.

\subsection{The general noninvolutive case}
\label{subsec:guitar-general}

We now remove the involutivity constraint while retaining nondegeneracy.
In this case the LPW theorem no longer applies; nevertheless, there is an
appropriate guitar map which considerably simplifies the action of the Yang--Baxter maps.
This extension of the guitar construction was developed
by Soloviev and by Lu, Yan, and Zhu
\cite{Soloviev2000,LuYanZhu2000}.  A modern graphical treatment, including
the associated shelf or rack and the guitar intertwiner, is given in
Ref.~\cite{LebedVendramin2017}.

In this subsection we discuss the differences from the involutive case, but we will not present the proofs in
the same detail as in the previous subsection.

The guitar map is defined via the same construction as before, with identical formulas. In particular, the two-site map
$J_2$ is defined via \eqref{eq:yb-guitar}, and also the definition \eqref{eq:yb-guitar-expanded}
and the corresponding triangular networks and MPOs (for example,
Figures~\ref{fig:yb-guitar-J2-triangle} and
\ref{fig:yb-guitar-J3-triangle}) are unchanged.

What changes is the local vertex which emerges after the pull-through.  It is no longer necessarily the
permutation $p$: a residual colour rotation remains. We now determine this residual interaction.

Write
\begin{equation}
 r(a,b)=\bigl(\lambda_a(b),\mu_b(a)\bigr)=(u,v).
 \label{eq:yb-rho-raw-collision}
\end{equation}
The coordinate $u$ is the colour which emerges on the left.
For fixed $a$ and $u$, non-degeneracy reconstructs the bare incoming colour as
$b=\lambda_a^{-1}(u)$.

The composition of $J_2$ and $r$ gives
\begin{equation}
  J_2r(a,b)=J_2(u,v)=\bigl(u,\lambda_u(v)\bigr).
\end{equation}
We write this result as the composition of a two-site map $s$ and $J_2$:
\begin{equation}
  sJ_2(a,b)=s(a,u).
\end{equation}
Comparing these two expressions gives
\begin{equation}
  s(a,u)=\bigl(u,\lambda_u(v)\bigr).
\end{equation}
We separate this map into a permutation and a controlled rotation:
\begin{equation}
  s(a,u)=(u,\rho_u(a)).
  \label{eq:yb-derived-controlled-swap}
\end{equation}
Substituting $v=\mu_b(a)$ and $b=\lambda^{-1}_a(u)$ gives
\begin{equation}
  \rho_u(a)=\lambda_u\left(\mu_{\lambda^{-1}_a(u)}(a))\right).
\end{equation}
Thus we obtain the explicit form of $s=J_2rJ_2^{-1}$.
If $r$ is involutive, then $r(u,v)=(a,b)$ and the formula gives $\rho_u(a)=a$.

After linearization, Eq.~\eqref{eq:yb-derived-controlled-swap} is
precisely the right-controlled-swap gate
introduced in Subsubsection~\ref{subsec:controlled-swap-right}.  More
explicitly, the one-site unitary $u_b$ in
Eq.~\eqref{eq:controlled-swap-two-orientations} is now the permutation
matrix $P_{\rho_b}$ defined by
$P_{\rho_b}\ket a=\ket{\rho_b(a)}$.  Thus the local scattering mechanism in
guitar coordinates is the controlled-swap mechanism already encountered in
Section~\ref{sec:controlled-swap}.  The new ingredient is that we reach this
form through the non-onsite guitar transformation.

The general guitar-map intertwining theorem is Proposition~6.2(4) of
Ref.~\cite{LebedVendramin2017}; translated to our conventions, it reads
\begin{equation}
 J_Nr_{j,j+1}=s_{j,j+1}J_N, \qquad  j=1,\dots,N-1,
 \label{eq:yb-non-involutive-guitar-intertwiner}
\end{equation}
After linearization, the same equation gives a unitary equivalence of the two
braid-group representations at every tensor level.  In contrast to the
involutive case, this is not an LPW equivalence to $\Swap$. Instead, we obtain unitary equivalence with a circuit that
performs controlled single-site rotations.

We note that the map $s$ also satisfies the Yang--Baxter equation, which follows from the intertwining property and the
invertibility of $J_N$.

We define the finite permutation group $K_\lambda$ in the same way as above
and another group $K_\rho$ by
\begin{equation}
 K_\rho=\langle\rho_x:x\in X\rangle
 \quad\leq\operatorname{Sym}(X).
\end{equation}
Their cardinalities are denoted by
\begin{equation}
 \kappa_\lambda=|K_\lambda|,
 \quad
 \kappa_\rho=|K_\rho|.
 \label{eq:yb-KH}
\end{equation}
The elements of $K_\rho$ are the possible net rotations
accumulated from the single-site controlled permutations.
For an involutive map, the identity $\rho_u(a)=a$ above holds for every
$a,u\in X$.  Hence $\rho_u=\id_X$ for every $u$, so
$K_\rho=\{\id_X\}$, $\kappa_\rho=1$, and $s$ reduces to the ordinary swap.

The next subsection,
Subsection~\ref{subsec:permutation-osr}, converts these two finite memories
into a time-independent bound on the operator Schmidt rank.

\subsection{OSR bounds for permutation gates}
\label{subsec:permutation-osr}

In this subsection we treat the general, not necessarily involutive case and
indicate explicitly how the argument specializes when $r$ is involutive.  We
turn the two finite colour memories identified above into a uniform bound on
operator entanglement. The guitar transformation carries one frame label in
$K_\lambda$, while the dynamics in guitar coordinates carries one net colour
permutation in $K_\rho$. Neither memory grows with the block length $t$. The
proof has three steps: we bound the coordinate change, the controlled
permutations, and finally combine the two estimates in the order
``dress, scatter, undress.''

The scattering step has a direct connection with
Section~\ref{sec:controlled-swap}, because the gate in
Eq.~\eqref{eq:yb-derived-controlled-swap} is a controlled permutation.  We
therefore reuse the controlled-swap calculation for this step.  The
additional work comes from the two guitar transformations, which dress the
local source before the scattering and return the result to the physical
colours afterwards.

Let us first recall the coordinate change. The scans
\eqref{eq:yb-guitar} and \eqref{eq:yb-guitar-inverse}, together with the
matrix-product representation in Figure~\ref{fig:yb-guitar-MPO}, pass only
one auxiliary state $g\in K_\lambda$ from one site to the next. Therefore
\begin{equation}
 \OSR(\cJ_N),\ \OSR(\cJ_N^{-1})
 \leq |K_\lambda|=\kappa_\lambda,
 \label{eq:yb-J-rank}
\end{equation}
uniformly in $N$.

We next consider the scattering of two length-$t$ words in guitar
coordinates,
\begin{equation*}
 A=(a_1,\ldots,a_t),
 \qquad
 B=(b_1,\ldots,b_t).
\end{equation*}
Let $\Gamma_t^s$ move the $B$ block to the left through the $A$ block. At
each crossing,
\begin{equation*}
 s(a,b)=\bigl(b,\rho_b(a)\bigr),
\end{equation*}
so the left-moving colour $b$ is transmitted unchanged and applies the
permutation $\rho_b$ to the colour that it crosses. The complete $B$ block
therefore acts on $A$ only through the ordered product
\begin{equation}
 \Gamma_t^s(A,B)=\bigl(B,h(B)A\bigr),
 \qquad
 h(B)=\rho_{b_t}\rho_{b_{t-1}}\cdots\rho_{b_1}\in K_\rho.
 \label{eq:yb-rack-block}
\end{equation}
Equation~\eqref{eq:yb-rack-block} is the permutation specialization of
the right-controlled block action
Eq.~\eqref{eq:controlled-swap-right-block-action}.
In the expression above, $h(B)$ acts componentwise on the word $A$. Thus the
full scattering history of $B$ is compressed into the single finite
permutation $h(B)$.
For an involutive map, $K_\rho=\{\id_X\}$, and hence
$h(B)=\id_X$ for every word $B$.  Equation~\eqref{eq:yb-rack-block} then
reduces to $\Gamma_t^s(A,B)=(B,A)$: the scattering is the ordinary block swap
and carries no dynamical memory.  Accordingly, the factor $\kappa_\rho^2$
below is absent.

This classical rule has a direct operator form. Let
$\mathcal H_t=(\mathbb C^X)^{\otimes t}$. For $u\in K_\rho$, define the
componentwise permutation $U_u$ and the projector onto words with total
permutation $u$ by
\begin{equation}
 U_u\ket{A}=\ket{uA},
 \qquad
 \Pi_u=\sum_{B:\,h(B)=u}\ket{B}\!\bra{B}.
 \label{eq:yb-rack-projectors}
\end{equation}
Here
\begin{equation*}
 uA=\bigl(u(a_1),\ldots,u(a_t)\bigr).
\end{equation*}
Let $\Swap_{\rm blk}\ket{A,B}=\ket{B,A}$ denote the exchange of the two
complete blocks. After restoring the output order to $A\mid B$, the block
collision becomes
\begin{equation}
 \widehat\Gamma_t^s
 :=\Swap_{\rm blk}\Gamma_t^s
 =\sum_{u\in K_\rho}U_u\otimes\Pi_u.
 \label{eq:yb-rack-controlled}
\end{equation}
This is a controlled permutation: the $B$ block selects, through the value
of $h(B)$, which permutation $U_u$ acts on the complete $A$ block. The
block exchange $\Swap_{\rm blk}$ is used here only to identify the output
order $B_{\rm out}\mid A_{\rm out}$ with $A\mid B$; exchanging the names of
the two tensor factors does not change operator Schmidt rank.

At this point the present argument departs from the direct calculation
of Subsubsection~\ref{subsec:controlled-swap-right}.  There the controlled
unitary acts on a one-site operator in the target block, and orthogonality of
the control projectors removes the independent ket and bra control labels.
Here the guitar transformation dresses the physical source.  The resulting
operator $M$, introduced explicitly later in
Eq.~\eqref{eq:yb-middle-rank}, need not be local or factorized across the two
blocks.  We must therefore apply the controlled-unitary calculation to an
operator of arbitrary Schmidt rank.

Suppose now that an operator on the output blocks has a Schmidt
decomposition across $B_{\rm out}\mid A_{\rm out}$,
\begin{equation}
 M=\sum_{q=1}^{\chi}L_q\otimes R_q,
 \qquad
 \chi=\OSR(M),
 \label{eq:yb-source-Schmidt}
\end{equation}
where $L_q$ acts on $B_{\rm out}$ and $R_q$ acts on $A_{\rm out}$. Using
\eqref{eq:yb-rack-controlled}, we obtain the literal product decomposition
\begin{align}
 (\Gamma_t^s)^{-1}M\Gamma_t^s
 &= (\widehat\Gamma_t^s)^{-1}
    (\Swap_{\rm blk}M\Swap_{\rm blk})\widehat\Gamma_t^s
    \nonumber\\
 &=\sum_{q=1}^{\chi}\sum_{u,v\in K_\rho}
 \bigl(U_u^{-1}R_qU_v\bigr)
 \otimes
 \bigl(\Pi_uL_q\Pi_v\bigr).
 \label{eq:yb-rack-product-decomposition}
\end{align}
If $M$ were a one-site operator on the target block tensored with the
identity on the control block, then
$\Pi_u\Pi_v=\delta_{u,v}\Pi_u$ would reduce this expression to the
one-index conjugation formula of
Eq.~\eqref{eq:controlled-swap-right-evolved-source}.  For the dressed source,
however, the off-diagonal sectors $\Pi_uL_q\Pi_v$ with $u\ne v$ can survive.

Thus we obtain
\begin{equation}
 \OSR\!\left((\Gamma_t^s)^{-1}M\Gamma_t^s\right)
 \leq \chi |K_\rho|^2
 =\chi\kappa_\rho^2.
 \label{eq:yb-rack-conjugation-bound}
\end{equation}
The square has a simple meaning. After vectorization, the ket and bra
histories may carry independent permutations $u$ and $v$. Equivalently, one
factor $|K_\rho|$ comes from the controlled permutation and the other from
its inverse.

We can now state the uniform bound for the original permutation circuit.

\begin{theorem}[Uniform OSR bound for bare non-degenerate maps]
\label{thm:yb-map}
For every finite invertible non-degenerate Yang--Baxter map, every
one-site operator initially placed at site zero, and every $t\geq1$, the
associated unphased permutation circuit satisfies
\begin{equation}
 \OSR_c(O(t))
 \leq \kappa_\lambda^4\kappa_\rho^2
 =|K_\lambda|^4|K_\rho|^2
 \leq(D!)^6.
 \label{eq:yb-map-bound}
\end{equation}
If $r$ is involutive, then $\kappa_\rho=1$, and the bound specializes to
\begin{equation}
 \OSR_c(O(t))\leq\kappa_\lambda^4\leq(D!)^4.
 \label{eq:yb-map-involutive-bound}
\end{equation}
\end{theorem}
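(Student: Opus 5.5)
The plan is the order ``dress, scatter, undress'' announced at the start of Subsection~\ref{subsec:permutation-osr}, applied to the rectangular core of Lemma~\ref{lem:yb-rectangle}. By that lemma it suffices to bound $\OSR_{B_{\rm out}\mid A_{\rm out}}(\Phi_t(O))$, where $\Phi_t(O)=W_t^\dagger S_t(O)W_t$ and $W_t$ is the complete $t\times t$ braid of $R$ acting on the $2t$ strands in input order $A\mid B$. The first step is to conjugate $W_t$ into guitar coordinates. Applying the intertwiner \eqref{eq:yb-non-involutive-guitar-intertwiner} at every crossing of the rectangle (each crossing is some $r_{j,j+1}$ on the current word of length $2t$) gives the linearized identity
\begin{equation*}
 W_t=\cJ_{2t}^{-1}\,\Gamma_t^s\,\cJ_{2t},
\end{equation*}
where $\Gamma_t^s$ is the same rectangular network built from $s$. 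Consequently
\begin{equation*}
 \Phi_t(O)=\cJ_{2t}^{-1}(\Gamma_t^s)^{-1}M\,\Gamma_t^s\cJ_{2t},
 \qquad
 M:=\cJ_{2t}S_t(O)\cJ_{2t}^{-1}.
\end{equation*}
Unitarity and the permutation structure let me treat all inverses and adjoints uniformly.

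The core of the argument is a counting lemma. If $K$ is an operator whose MPO bond dimension across a cut is $\kappa$, then conjugating by $K$ multiplies OSR by at most $\kappa^2$. Writing $K=\sum_g K^L_g\otimes K^R_g$ and $K^{-1}=\sum_{g'}\tilde K^L_{g'}\otimes\tilde K^R_{g'}$, the product $KXK^{-1}$ expands into at most $\kappa^2\,\OSR(X)$ product terms. This is where \eqref{eq:yb-J-rank} enters. One subtlety: the input cut $A\mid B$ and the output cut $B_{\rm out}\mid A_{\rm out}$ are both at the middle position $t$ of the length-$2t$ word, so the same bond of the guitar MPO is used for both conjugations.

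The estimates then chain as follows. The source $S_t(O)$ has OSR one across $A\mid B$. Conjugating by $\cJ_{2t}$ gives $\OSR(M)\le\kappa_\lambda^2$. Next, \eqref{eq:yb-rack-conjugation-bound} gives
\begin{equation*}
 \OSR\bigl((\Gamma_t^s)^{-1}M\Gamma_t^s\bigr)\le\kappa_\lambda^2\kappa_\rho^2
\end{equation*}
across $B_{\rm out}\mid A_{\rm out}$. Finally, undressing by $\cJ_{2t}^{-1}$ contributes another factor $\kappa_\lambda^2$, for a total of $\kappa_\lambda^4\kappa_\rho^2$.

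It remains to record the limiting cases. Because $K_\lambda$ and $K_\rho$ are subgroups of $\operatorname{Sym}(X)$, we have $\kappa_\lambda,\kappa_\rho\le D!$. In the involutive case $\rho_u=\id_X$ for every $u$, so $\kappa_\rho=1$ and $\Gamma_t^s$ is the bare block swap, which has no memory. This yields \eqref{eq:yb-map-involutive-bound}.

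The main obstacle I expect is justifying $W_t=\cJ_{2t}^{-1}\Gamma_t^s\cJ_{2t}$ with the correct geometry. The intertwiner relates $r_{j,j+1}$ to $s_{j,j+1}$ at fixed positions, so I must check that the crossing order of the rectangle is reproduced exactly, and that after the block exchange the guitar frame used for undressing is again $\cJ_{2t}$ on the same $2t$ sites. Since both words have length $2t$, the intertwining applies gate by gate and I expect this to go through. A second point is the MPO bond counting in \eqref{eq:yb-J-rank} at a cut located in the middle of the word. There the left part depends only on the prefix and passes $g_t\in K_\lambda$ to the right, which gives the bound $\kappa_\lambda$ for both $\cJ$ and $\cJ^{-1}$.
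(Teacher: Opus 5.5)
Your proposal is correct and follows essentially the paper's own proof. The guitar intertwiner gives $W_t=\cJ_{2t}^{-1}\Gamma_t^s\cJ_{2t}$, the dressed source $M$ satisfies $\OSR(M)\le\kappa_\lambda^2$ by the finite-bond MPO, the controlled-permutation scattering contributes $\kappa_\rho^2$, undressing contributes a final $\kappa_\lambda^2$, and $\kappa_\rho=1$ in the involutive case. Your observation that every conjugation is measured across the same middle cut of the $2t$-site word is exactly what makes this submultiplicativity chain legitimate.
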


\begin{proof}
  Applying the guitar intertwiner 
  at every
crossing of the rectangular core gives
\begin{equation}
 W_t=\cJ_{2t}^{-1}\Gamma_t^s\cJ_{2t}.
 \label{eq:yb-rectangle-guitar}
\end{equation}
Let $S_t(O)$ be the source operator defined in
\eqref{eq:yb-rectangle-operator}. Dressing it once gives
\begin{equation}
 M=\cJ_{2t}S_t(O)\cJ_{2t}^{-1},
 \qquad
 \OSR(M)\leq\kappa_\lambda^2,
 \label{eq:yb-middle-rank}
\end{equation}
because $\OSR(S_t(O))=1$ and operator Schmidt rank is submultiplicative under
products. Substituting \eqref{eq:yb-rectangle-guitar} into
$\Phi_t(O)=W_t^\dagger S_t(O)W_t$ gives the complete
``dress, scatter, undress'' form
\begin{equation}
 \Phi_t(O)
 =\cJ_{2t}^{-1}
  (\Gamma_t^s)^{-1}M\Gamma_t^s
  \cJ_{2t}.
 \label{eq:yb-dress-scatter-undress}
\end{equation}
Equations~\eqref{eq:yb-J-rank},
\eqref{eq:yb-rack-conjugation-bound}, and
\eqref{eq:yb-middle-rank} now give
\begin{equation}
 \OSR(\Phi_t(O))
 \leq
 \underbrace{\kappa_\lambda^2}_{\text{dress the source}}
 \underbrace{\kappa_\rho^2}_{\text{controlled permutations}}
 \underbrace{\kappa_\lambda^2}_{\text{return to bare colours}}
 =\kappa_\lambda^4\kappa_\rho^2.
 \label{eq:yb-rank-budget}
\end{equation}
Lemma~\ref{lem:yb-rectangle} identifies this rank with the physical
fixed-cut rank $\OSR_c(O(t))$. Finally,
$K_\lambda,K_\rho\leq\operatorname{Sym}(X)$, and hence
$\kappa_\lambda,\kappa_\rho\leq D!$. This proves
\eqref{eq:yb-map-bound}.  For involutive $r$, the identity
$\kappa_\rho=1$ gives \eqref{eq:yb-map-involutive-bound}.
\end{proof}

The absence of phases is essential to the present finite-memory argument.
In Subsection~\ref{subsec:phase-general} we add arbitrary unit-modulus
phase dressings.
The endpoint group element is then no longer sufficient, but the additional
history still compresses to finitely many occupation numbers, leading to a
polynomial rather than a uniform bound.

\subsection{Arbitrary phase dressings of non-degenerate Yang--Baxter maps}
\label{subsec:phase-general}

We now add arbitrary unit-modulus phases to the non-degenerate permutation gates of the
previous subsection.

Without phases, we proved in Subsection~\ref{subsec:permutation-osr} that
there are only finitely many Schmidt components, with a bound independent of
time.  After the phases are added, each of these components can split into
many independent contributions.  The special properties of the guitar map
nevertheless allow us to absorb the full phase dependence into the operators
on one side of the cut, leaving only a polynomial number of product terms
across the cut.  This organization does not restrict the possible accumulated
phases and is independent of their concrete values: the only Yang--Baxter
input is the braid relation for the underlying classical map $r$.

In guitar coordinates the permutation map still acts as
$s(a,b)=(b,\rho_b(a))$, so the complete $B$ block rotates the $A$ block by
one element of the finite group $K_\rho$.  However, the similarity transformation
by the guitar map does not, in general, turn the phased gate into a strictly
local two-site gate: although the permutation support remains local, the phase
at a crossing depends on the inverse-scan memory carried by the prefix to its
left.
This nonlocality is mild.  For fixed $A$, we will show that the total
phase depends only on the occupation numbers of finitely many state-labelled
``row types''.  Their number is fixed by the gate and independent of time,
which is the origin of the polynomial bound below.

To determine the phase associated with a single crossing, let $z,w$ be
the two incoming colours in guitar coordinates, and let $g\in K_\lambda$ be
the inverse-scan memory immediately to their left.  If $x,y$ are the
corresponding original colours, the forward guitar scan gives
\begin{equation*}
 z=g(x),
 \qquad
 w=g\lambda_x(y).
\end{equation*}
Inverting these relations, we obtain
\begin{equation}
 x=g^{-1}(z),
 \qquad
 y=\lambda_{g^{-1}(z)}^{-1}\!\bigl(g^{-1}(w)\bigr).
 \label{eq:yb-recover-pair}
\end{equation}
For later use, define the phase recovered from the guitar coordinates by
\begin{equation}
 \Omega(g;z,w)
 :=
 \omega\!\left(
 g^{-1}(z),
 \lambda_{g^{-1}(z)}^{-1}\!\bigl(g^{-1}(w)\bigr)
 \right).
 \label{eq:yb-local-recovered-phase}
\end{equation}
Thus the phase at a crossing is determined by the finite data $(g,z,w)$.
The complete prefix to the left of the crossing enters only through $g$.

\begin{lemma}[One-crossing guitar conjugation]
\label{lem:yb-phased-local-conjugation}
Let $N\geq2$, let $R_i$ act on sites $i,i+1$, and let
$\mathbf z=(z_1,\ldots,z_N)$ be a word in guitar coordinates.  Denote by
$g\in K_\lambda$ the inverse-scan memory immediately to the left of these
sites.  Then
\begin{equation}
 \cJ_NR_i\cJ_N^{-1}\ket{\mathbf z}
 =
 \Omega(g;z_i,z_{i+1})\ket{s_i(\mathbf z)},
 \label{eq:yb-phased-local-conjugation}
\end{equation}
where $s_i$ applies the derived map $s$ to sites $i,i+1$.  Moreover, the
inverse-scan memory immediately to the right of the two sites is the same
before and after the crossing.
\end{lemma}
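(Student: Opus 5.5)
The plan is a direct computation with the scan formulas \eqref{eq:yb-guitar} and \eqref{eq:yb-guitar-inverse}, applied to a single crossing. Start from a guitar word $\mathbf z$ and let $\mathbf x=J_N^{-1}(\mathbf z)$ be its bare word. The inverse scan is a left-to-right recursion, so $x_1,\ldots,x_{i-1}$ depend only on $z_1,\ldots,z_{i-1}$. The memory $g=g_{i-1}=\lambda_{x_1}\cdots\lambda_{x_{i-1}}$ is therefore determined by the prefix alone. Reading off the next two steps of the scan gives $x_i=g^{-1}(z_i)$ and $x_{i+1}=\lambda_{x_i}^{-1}(g^{-1}(z_{i+1}))$, which is exactly \eqref{eq:yb-recover-pair}. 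The phased gate acts as
\[
R_i\ket{\mathbf x}=\omega(x_i,x_{i+1})\ket{r_i(\mathbf x)},
\]
and by definition \eqref{eq:yb-local-recovered-phase} the scalar here equals $\Omega(g;z_i,z_{i+1})$. The phase is a scalar, so it passes through $\cJ_N$. We are then left with showing $J_N r_i J_N^{-1}(\mathbf z)=s_i(\mathbf z)$, which is the classical intertwiner \eqref{eq:yb-non-involutive-guitar-intertwiner}. That gives \eqref{eq:yb-phased-local-conjugation}.

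Because the rest of this subsection relies on the intertwiner and on the memory statement, I would check both directly rather than only citing Ref.~\cite{LebedVendramin2017}. Write $r(x_i,x_{i+1})=(u,v)$. Sites $1,\ldots,i-1$ are untouched, so the new guitar coordinates there agree with the old ones, and the memory entering site $i$ is still $g$.

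At site $i$ the new coordinate is $g(u)=g\lambda_{x_i}(x_{i+1})=z_{i+1}$. At site $i+1$ it is $g\lambda_u(v)$. Using $s(a,u)=(u,\lambda_u(v))$ from \eqref{eq:yb-derived-controlled-swap}, this should equal the second component of $s$ acting on the pair $(z_i,z_{i+1})$ in memory $g$. The step I expect to need care is that $s$ was derived with trivial memory to the left. Here we have $(g(a),g(u))$ with $a=x_i$, and we need the relation to be $g$-equivariant. Concretely, we need $\rho_{g(u)}(g(a))=g\lambda_u(v)$ for $g\in K_\lambda$.

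My first attempt at this would avoid the explicit formula. Lemma~\ref{lem:yb-phased-local-conjugation} follows from the intertwiner proved in Ref.~\cite{LebedVendramin2017} (Prop.~6.2(4)), and that intertwiner holds for arbitrary prefixes. This already contains the needed equivariance, so it suffices to cite it for the support part.

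The memory statement comes from the sliding identity \eqref{eq:yb-lambda-product}, $\lambda_{x_i}\lambda_{x_{i+1}}=\lambda_u\lambda_v$. It implies that the memory after site $i+1$, namely $g\lambda_{x_i}\lambda_{x_{i+1}}$, is unchanged by the crossing. Sites to the right are then scanned with the same memory, so their guitar coordinates are unchanged as well. This also confirms that only sites $i,i+1$ of $\mathbf z$ are modified.
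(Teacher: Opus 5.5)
Your proposal is correct and follows the paper's proof essentially step for step. You recover $(x_i,x_{i+1})$ by the inverse scan to identify the phase as $\Omega(g;z_i,z_{i+1})$, invoke the general intertwiner \eqref{eq:yb-non-involutive-guitar-intertwiner} at position $i$ for the support, and use the sliding identity \eqref{eq:yb-lambda-product} for the memory statement. The paper likewise cites the intertwiner without proof. The $K_\lambda$-equivariance $\rho_{g(u)}\circ g=g\circ\rho_u$ that you flag is exactly what that intertwiner encodes, and for $g=\lambda_w$ you can also check it directly from the second component of the braid relation for $r$ combined with \eqref{eq:yb-lambda-product}.
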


\begin{proof}
Let $\mathbf x=J_N^{-1}(\mathbf z)$ and write $x=x_i$, $y=x_{i+1}$.
The inverse scan in Eq.~\eqref{eq:yb-guitar-inverse} gives precisely
Eq.~\eqref{eq:yb-recover-pair}, so the phase produced by $R_i$ is
$\Omega(g;z_i,z_{i+1})$.  For the support, write $r(x,y)=(u,v)$.  The
incoming pair leaves the scan memory
$g\lambda_x\lambda_y$, while the outgoing pair leaves
$g\lambda_u\lambda_v$.  These two elements agree by
Eq.~\eqref{eq:yb-lambda-product}.  Hence the guitar coordinates of the
suffix are unchanged, and the general guitar intertwiner
Eq.~\eqref{eq:yb-non-involutive-guitar-intertwiner} changes only the selected
pair, from $(z_i,z_{i+1})$ to $s(z_i,z_{i+1})$.  Multiplying this support
action by the recovered phase proves
Eq.~\eqref{eq:yb-phased-local-conjugation} and the memory statement.
\end{proof}

Lemma~\ref{lem:yb-phased-local-conjugation} allows us to propagate a single
memory in $K_\lambda$ through the complete rectangle.  Although the phase of
a conjugated crossing depends on the prefix, no information beyond this
finite memory is required.

We now sketch the full procedure for the complete rectangle.  Let
$A=(a_1,\ldots,a_t)$ and $B=(b_1,\ldots,b_t)$ be the incoming words
in guitar coordinates.  We evaluate the rectangle in $t$ rows.  With every
row we associate a \emph{row type}.  It depends only on the variables in
$B$, and it is nonlocal in the sense that the type of row $j$ depends on the
prefix $(b_1,\ldots,b_j)$.  This nonlocality is nevertheless carried entirely
by finite memories inherited from the guitar map.

We now define the row types.  In row $j$, the colour $b_j$ passes from
right to left through the current $A$ block.  At the beginning of this row,
let $k_j\in K_\lambda$ be the inverse-scan memory carried by the emitted
prefix $(b_1,\ldots,b_{j-1})$, and let
$h_j\in K_\rho$ be the rotation already applied to $A$.  Initially
$k_1=h_1=1$.  Completing row $j$ gives
\begin{equation}
 k_{j+1}=k_j\lambda_{k_j^{-1}(b_j)},
 \qquad
 h_{j+1}=\rho_{b_j}h_j.
 \label{eq:yb-type-edge}
\end{equation}
The values $(k_j,h_j,b_j)$ contain the finite memory needed for this
row.  We define its \emph{row type} by
$\alpha_j=(k_j,h_j,b_j)\in K_\lambda\times K_\rho\times X$.  There are at
most $m=D|K_\lambda||K_\rho|$ such types, which we label by
$\alpha=1,\ldots,m$.

For fixed $A$ and a row type $\alpha=(k,h,b)$, let $F_\alpha(A)$ be the
total scalar phase accumulated as $b$ crosses
$h(a_t),h(a_{t-1}),\ldots,h(a_1)$, starting with inverse-scan memory $k$.

For a word $B$, let $n_\alpha(B)$ be the number of rows of type $\alpha$.
Multiplying the row phases gives
\begin{equation}
 Q_t(A,B)
 =
 \prod_{\alpha=1}^{m}F_\alpha(A)^{\,n_\alpha(B)}.
 \label{eq:yb-phase-factorization}
\end{equation}
This is the decisive compression step.  The state labels $k$ and $h$ have
already absorbed the noncommutative effect of the preceding order.  Once the
rows are refined to these state-labelled types, their factors are scalars and
commute.  The phase factor $Q_t(A,B)$ may still assume exponentially many
different values as $A$ varies.  The compression concerns only its dependence
on the word $B$: for fixed $A$, the total phase depends on $B$ only through
the occupation numbers $n_\alpha(B)$, not the order of the rows.

The occupation vector $n(B)=(n_1(B),\ldots,n_m(B))$ has nonnegative integer
components with total $t$.  Hence the number of possible values is at most
\begin{equation}
 H_t
 =
 \binom{t+m-1}{m-1}
 =
 \binom{t+D|K_\lambda||K_\rho|-1}
 {D|K_\lambda||K_\rho|-1}.
 \label{eq:yb-summary-count}
\end{equation}
Not every vector counted here must be realized by a word $B$.  For fixed
local dimension and a fixed gate, $H_t$ is a polynomial in $t$.

The occupation vector contains the complete phase memory, but we do not
use its row-type counts to determine the final rotation of $A$.  Instead, we retain the
endpoint $h_{t+1}(B)\in K_\rho$ as a separate finite label and define
\begin{equation}
 \sigma(B):=\bigl(n(B),h_{t+1}(B)\bigr),
 \qquad
 \mathcal S_t:=\{\sigma(B):B\in X^t\}.
 \label{eq:yb-phase-summary}
\end{equation}
Since $h_{t+1}(B)$ has at most $|K_\rho|$ possible values,
\begin{equation}
 |\mathcal S_t|\leq |K_\rho|H_t.
 \label{eq:yb-phase-summary-count}
\end{equation}
Thus the phase history contributes polynomial memory, while the
possibly noncommutative rotation history contributes only a finite endpoint.

We now turn the scan into an operator identity for the complete rectangle.

\begin{lemma}[Phased rectangular action]
\label{lem:yb-phased-rectangle-action}
Let $W_t$ be the complete phase-dressed $t\times t$ rectangular circuit and
define its expression in guitar coordinates by
\begin{equation}
 \Gamma_t:=\cJ_{2t}W_t\cJ_{2t}^{-1},
 \qquad
 W_t=\cJ_{2t}^{-1}\Gamma_t\cJ_{2t}.
 \label{eq:yb-phased-rectangle-conjugation}
\end{equation}
Then, for every pair of words $A,B\in X^t$,
\begin{equation}
 \Gamma_t\ket{A,B}
 =
 Q_t(A,B)\ket{B,h_{t+1}(B)A}.
 \label{eq:yb-phased-rectangle}
\end{equation}
\end{lemma}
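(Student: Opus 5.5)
The plan is to factor the phase-dressed rectangle as a product of $t^2$ conjugated crossings and apply Lemma~\ref{lem:yb-phased-local-conjugation} to each of them in turn. Insert $\cJ_{2t}^{-1}\cJ_{2t}$ between consecutive gates. This gives $\Gamma_t=\prod(\cJ_{2t}R_i\cJ_{2t}^{-1})$ in the crossing order of $W_t$. Each factor maps a basis word in guitar coordinates to a phase times the basis word obtained by applying $s$ on the affected pair. The support part is therefore exactly the phaseless computation. By the intertwiner \eqref{eq:yb-non-involutive-guitar-intertwiner} and the block rule \eqref{eq:yb-rack-block}, the final word is $(B,h(B)A)$, with $h(B)=\rho_{b_t}\cdots\rho_{b_1}=h_{t+1}(B)$ by \eqref{eq:yb-type-edge}. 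It remains to show that the accumulated scalar equals $Q_t(A,B)$.

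For the phase, I will organize the crossings row by row, taking the order in which the crossings of $B_j$ precede those of $B_{j+1}$. This is legitimate because disjoint gates commute and the crossing order along every strand is preserved. I will prove by induction on $j$ that, at the start of row $j$, the current guitar word is
\[
(b_1,\ldots,b_{j-1},\,h_j(a_1),\ldots,h_j(a_t),\,b_j,\ldots,b_t),
\]
and that the inverse-scan memory just to the left of the block $h_j(A)$ equals $k_j$. The word statement follows from $s(a,b)=(b,\rho_b(a))$: when $b_j$ passes through $h_j(A)$, it rotates each entry by $\rho_{b_j}$ and then sits just after $b_{j-1}$. For the memory statement, I read the inverse scan of the prefix $(b_1,\ldots,b_{j-1})$ directly. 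Appending the guitar letter $b_j$ to a prefix with memory $k_j$ recovers the original colour $k_j^{-1}(b_j)$ and updates the memory to $k_j\lambda_{k_j^{-1}(b_j)}$, which is exactly \eqref{eq:yb-type-edge}.

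Within row $j$, the letter $b_j$ crosses $h_j(a_t),\ldots,h_j(a_1)$ from right to left. At each crossing, Lemma~\ref{lem:yb-phased-local-conjugation} supplies the phase $\Omega(g;z,w)$, where $g$ is the memory to the left of the pair. The main obstacle is to show that these memories are determined by $(k_j,h_j,b_j)$ and the entries of $A$ alone, independently of the rest of $B$. This is where the memory-preservation clause of the lemma enters. The memory to the left of the crossing pair equals $k_j$ composed with the scan over the entries $h_j(a_1),\ldots,h_j(a_{i})$ that $b_j$ has not yet reached. These entries are untouched by earlier crossings of the current row, because crossings to the right leave left memories alone. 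The suffix invariance of the memory under each crossing guarantees that the scan to the right of $b_j$ remains consistent as $b_j$ moves left. The row phase is then a function $F_{\alpha_j}(A)$ of the row type $\alpha_j$ and of $A$, which matches the definition given above.

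Multiplying the row phases gives $\prod_j F_{\alpha_j}(A)$. Since the factors are commuting scalars, grouping equal types turns this product into \eqref{eq:yb-phase-factorization}. Together with the support computation, this proves \eqref{eq:yb-phased-rectangle}.
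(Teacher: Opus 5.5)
Your proposal is correct and follows essentially the same route as the paper. Both proofs run a row-by-row induction that applies Lemma~\ref{lem:yb-phased-local-conjugation} at every crossing, starting each row from the word $(b_1,\ldots,b_{j-1},h_jA,b_j,\ldots,b_t)$, updating the prefix memory $k_j$ by Eq.~\eqref{eq:yb-type-edge}, and collecting the phase $F_{\alpha_j}(A)$ from row $j$. Your explicit check that the within-row memories depend only on $k_j$ and the not-yet-crossed entries $h_j(a_1),\ldots$ spells out a step the paper leaves implicit; the suffix-invariance clause is not really needed there, since only memories to the left of a crossing enter its phase.
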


\begin{proof}
Crossings on disjoint strands commute, and evaluating the rectangle one
$B$ strand at a time preserves the order of the crossings along every strand.
We may therefore proceed row by row.  At the beginning of row $j$, suppose
that the current word is
$\ket{b_1,\ldots,b_{j-1},h_jA,b_j,\ldots,b_t}$ and that the accumulated phase
is $\prod_{\ell=1}^{j-1}F_{\alpha_\ell}(A)$.  This holds for $j=1$ because
$k_1=h_1=1$.  Applying
Lemma~\ref{lem:yb-phased-local-conjugation} throughout row $j$ moves $b_j$ to
the emitted prefix, changes the current $A$ block from $h_jA$ to
$\rho_{b_j}h_jA=h_{j+1}A$, contributes $F_{\alpha_j}(A)$, and updates the
scan memory to $k_{j+1}$ according to Eq.~\eqref{eq:yb-type-edge}.  Induction
over the rows gives
$\Gamma_t\ket{A,B}=[\prod_{j=1}^tF_{\alpha_j}(A)]
\ket{B,h_{t+1}(B)A}$.  Equation~\eqref{eq:yb-phase-factorization} identifies
the product with $Q_t(A,B)$ and proves
Eq.~\eqref{eq:yb-phased-rectangle}.
\end{proof}

We now use Eq.~\eqref{eq:yb-phase-summary-count} to bound the operator
Schmidt rank.

For $\sigma=(n,h)\in\mathcal S_t$, define
\begin{align}
 \Pi_\sigma
 &=
 \sum_{B:\,\sigma(B)=\sigma}\ket{B}\!\bra{B},
 &
 V_\sigma\ket{A}
 &=
 \left[\prod_{\alpha=1}^{m}F_\alpha(A)^{n_\alpha}\right]
 \ket{hA}.
 \label{eq:yb-phase-control-data}
\end{align}
Each $V_\sigma$ is a phase-permutation and hence unitary.  Since the basis
vectors $\ket{A,B}$ span the two-block space,
Lemma~\ref{lem:yb-phased-rectangle-action} is equivalent, after restoring
the output order with the block exchange of the previous subsection, to the
controlled identity
\begin{equation}
 \widehat\Gamma_t
 :=
 \Swap_{\rm blk}\Gamma_t
 =
 \sum_{\sigma\in\mathcal S_t}V_\sigma\otimes\Pi_\sigma.
 \label{eq:yb-phase-controlled}
\end{equation}
Thus the rectangle is a controlled unitary with at most
$|K_\rho|H_t$ control sectors.  The block exchange only relabels the two
tensor factors and does not change operator Schmidt rank.

We can now reuse the controlled-unitary calculation of
Eq.~\eqref{eq:yb-rack-product-decomposition}.  If $M$ has operator Schmidt
rank $\chi$ across the two blocks, expanding the two controlled sums in the
conjugation by $\Gamma_t$ gives at most one product operator for each original
Schmidt term and each pair $\sigma,\sigma'\in\mathcal S_t$.  Consequently,
\begin{equation}
 \OSR\!\left(\Gamma_t^{-1}M\Gamma_t\right)
 \leq
 \chi |\mathcal S_t|^2
 \leq
 \chi |K_\rho|^2H_t^2.
 \label{eq:yb-phase-rank-lemma}
\end{equation}
The square appears because the row and column indices of an operator, or
equivalently its ket and bra histories, can carry independent summaries.
Both the phases and the support permutations are retained exactly inside the
operators $V_\sigma$.

We can now state the resulting bound for the physical Heisenberg operator.

\begin{theorem}[Phase dressings of non-degenerate Yang--Baxter maps]
\label{thm:yb-phase}
Let $R$ be a phase-dressed Yang--Baxter-map gate of the form
\eqref{eq:setup-phase-dressed-map}, where the support $r$
is finite and non-degenerate and the unit-modulus phases $\omega(x,y)$
are arbitrary.  Then every one-site operator initially placed
at site zero satisfies, for every $t\geq1$,
\begin{equation}
 \OSR_c(O(t))
 \leq
 |K_\lambda|^4|K_\rho|^2
 \binom{t+D|K_\lambda||K_\rho|-1}
 {D|K_\lambda||K_\rho|-1}^{\!2}.
 \label{eq:yb-phase-bound}
\end{equation}
For fixed local dimension and a fixed gate this is polynomial in $t$.
Consequently, every nonnegative-R\'enyi operator entropy is $O(\log t)$.
\end{theorem}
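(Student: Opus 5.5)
The proof will follow the ``dress, scatter, undress'' structure of Theorem~\ref{thm:yb-map}, with the bare scattering step replaced by the phased controlled identity \eqref{eq:yb-phase-controlled}. First, Lemma~\ref{lem:yb-rectangle} reduces the physical fixed-cut rank $\OSR_c(O(t))$ to $\OSR(\Phi_t(O))$ for the complete phase-dressed rectangle $W_t$. Lemma~\ref{lem:yb-rectangle} needs only unitarity, and $R$ is unitary even when the dressed gate is not braided. By Lemma~\ref{lem:yb-phased-rectangle-action} and \eqref{eq:yb-phased-rectangle-conjugation}, we have $W_t=\cJ_{2t}^{-1}\Gamma_t\cJ_{2t}$. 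Hence
\begin{equation*}
 \Phi_t(O)=W_t^\dagger S_t(O)W_t
 =\cJ_{2t}^{-1}\,\Gamma_t^{-1}M\Gamma_t\,\cJ_{2t},
 \qquad M=\cJ_{2t}S_t(O)\cJ_{2t}^{-1},
\end{equation*}
where we use that $\cJ_{2t}$ and $\Gamma_t$ are unitary phase-permutations.

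Second, the three factors are bounded separately. For the dressing, the guitar map $\cJ_{2t}$ involves only the support $r$, so \eqref{eq:yb-J-rank} still gives $\OSR(\cJ_{2t}^{\pm1})\leq\kappa_\lambda$ across the cut $B\mid A$. Submultiplicativity then gives $\OSR(M)\leq\kappa_\lambda^2$, since $\OSR(S_t(O))=1$. For the scattering, we apply \eqref{eq:yb-phase-rank-lemma} with $\chi\leq\kappa_\lambda^2$. This is where the row-type compression enters. The phases depend on $B$ only through the occupation vector $n(B)$ together with the endpoint $h_{t+1}(B)$, so $\widehat\Gamma_t$ is a controlled unitary with at most $|K_\rho|H_t$ sectors. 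Expanding both controlled sums in $\Gamma_t^{-1}M\Gamma_t$ gives at most $\chi|K_\rho|^2H_t^2$ product terms. Undressing multiplies the rank by at most $\kappa_\lambda^2$ more. Together these yield \eqref{eq:yb-phase-bound}.

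Third, for polynomiality, we note that $H_t=\binom{t+m-1}{m-1}\leq(t+1)^{m-1}$ with $m=D\kappa_\lambda\kappa_\rho$ fixed, so $H_t$ is polynomial in $t$. The entropy statement then follows from \eqref{eq:setup-entropy-rank}.

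The main obstacle is the validity of the scattering step, namely \eqref{eq:yb-phase-factorization} and Lemma~\ref{lem:yb-phased-rectangle-action}. The point to check carefully is that each row phase $F_\alpha(A)$ depends on $B$ only through the row type $(k_j,h_j,b_j)$. This requires the scan memory immediately to the left of each crossing in row $j$ to be determined by $k_j$ and the already-rotated $A$ letters. I would verify this using the invariance of the suffix memory in Lemma~\ref{lem:yb-phased-local-conjugation}, together with the fact that the emitted prefix memory evolves by \eqref{eq:yb-type-edge}. This verification uses only the braid relation for $r$ via \eqref{eq:yb-lambda-product}; it never uses a braid relation for $\omega$, which is why arbitrary phases are allowed. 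A secondary check is that the projectors $\Pi_\sigma$ are orthogonal and sum to the identity on the $B$ block, so that \eqref{eq:yb-phase-controlled} is an exact identity and not merely an inequality.
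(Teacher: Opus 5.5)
Your proposal is correct and follows the paper's proof essentially step for step. You use the rectangular reduction and the dress--scatter--undress factorization $\Phi_t(O)=\cJ_{2t}^{-1}\Gamma_t^{-1}M\Gamma_t\cJ_{2t}$, then bound $\OSR(M)\leq\kappa_\lambda^2$ via \eqref{eq:yb-J-rank}, apply the controlled-unitary count \eqref{eq:yb-phase-rank-lemma} with $|\mathcal S_t|\leq|K_\rho|H_t$, and pick up a final factor $\kappa_\lambda^2$ from undressing. The extra check you sketch is exactly what underlies Lemma~\ref{lem:yb-phased-rectangle-action} in the paper: the prefix memory at each crossing of row $j$ is fixed by $k_j$ and the rotated letters $h_jA$.
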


\begin{proof}
The proof has the same outer structure as Theorem~\ref{thm:yb-map}.  Let
$S_t(O)$ be the one-site source in the rectangular network and put
$M=\cJ_{2t}S_t(O)\cJ_{2t}^{-1}$.  Equation~\eqref{eq:yb-J-rank} gives
$\OSR(M)\leq|K_\lambda|^2$.  The global conjugation identity
Eq.~\eqref{eq:yb-phased-rectangle-conjugation}, proved crossing by crossing
in Lemma~\ref{lem:yb-phased-rectangle-action}, identifies the physical
rectangle with $\cJ_{2t}^{-1}\Gamma_t\cJ_{2t}$.  The central conjugation
$\Gamma_t^{-1}M\Gamma_t$ therefore multiplies the rank by at most
$|K_\rho|^2H_t^2$, according to Eq.~\eqref{eq:yb-phase-rank-lemma}, while
the two outer guitar transformations contribute another factor
$|K_\lambda|^2$.  Submultiplicativity gives
$\OSR(\Phi_t(O))\leq|K_\lambda|^4|K_\rho|^2H_t^2$.  Substituting
Eq.~\eqref{eq:yb-summary-count} and using the rectangular reduction of
Lemma~\ref{lem:yb-rectangle} proves Eq.~\eqref{eq:yb-phase-bound}.
\end{proof}

We stress that this is an upper bound; it does not determine the actual
power of $t$.  Without phases, the complete $B$ block is summarized by the
single endpoint $h(B)\in K_\rho$, and the OSR remains uniformly bounded.

\subsection{Sharper bound for involutive supports}
\label{subsec:phase-involutive-support}

We now consider the smaller class in which the support $r$ is involutive
and the full phase-dressed gate $R$ satisfies the braid relation.  The full
gate need not be involutive.  We combine the centralizer restriction of
Section~\ref{sec:centralizer-output-space} with the guitar map introduced in
Subsection~\ref{subsec:guitar}.

The computation has two steps.  First, we show that the phases can only
reduce the contribution of each support orbit to the centralizer.  Second,
we use the guitar map $J_k$ to identify these support orbits with ordinary
simultaneous-permutation orbits.  Here the guitar map is used only to count
the centralizer dimension; it is not applied to the physical rectangular
circuit.

\begin{lemma}[Centralizer bound from an involutive support]
\label{lem:yb-phased-involutive-centralizer}
Let $R$ be a phase-dressed Yang--Baxter-map gate of the form
\eqref{eq:setup-phase-dressed-map}, where the support $r$ is finite,
non-degenerate, and involutive.  The phases are arbitrary and the full gate
is not assumed to satisfy the braid relation.  Then the centralizer defined
in Eq.~\eqref{eq:centralizer-definition} obeys
\begin{equation}
 c_k(R)\leq
 \binom{k+D^2-1}{D^2-1}.
 \label{eq:yb-phased-involutive-centralizer}
\end{equation}
\end{lemma}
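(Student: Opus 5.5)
We bound $c_k(R)$ by decomposing the centralizer according to orbits of the support, in the same spirit as the orbit count in the proof of Theorem~\ref{thm:inv-polynomial-rank}.

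\textbf{Step 1: monomial action.} Each generator $R_i$ is a monomial unitary on $V^{\otimes k}$: it sends $\ket{\mathbf x}$ to a phase times $\ket{r_i(\mathbf x)}$. Conjugation by $R_i$ therefore maps a matrix unit $\ket{\mathbf x}\bra{\mathbf y}$ to a unit-modulus scalar times $\ket{r_i\mathbf x}\bra{r_i\mathbf y}$.

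Let $\Gamma$ be the group of permutations of $X^k$ generated by $r_1,\ldots,r_{k-1}$. Because the support is involutive and satisfies the braid relation, $\Gamma$ is a quotient of $S_k$. The group generated by the $R_i$ may be infinite, or not factor through $S_k$, since the braid relation of the full gate is not assumed. This does not matter here.

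\textbf{Step 2: at most one invariant per orbit.} Let $X$ be in the centralizer and write $X=\sum X_{\mathbf x,\mathbf y}\ket{\mathbf x}\bra{\mathbf y}$. The condition $R_i^\dagger X R_i=X$ reads
\[
X_{r_i\mathbf x,r_i\mathbf y}=\zeta_i(\mathbf x,\mathbf y)\,X_{\mathbf x,\mathbf y},\qquad |\zeta_i(\mathbf x,\mathbf y)|=1 .
\]
Consider an orbit $\mathcal O$ of $\Gamma$ acting diagonally on $X^k\times X^k$. Inside $\mathcal O$, any coefficient determines every other coefficient along a chain of generators. The restriction of the centralizer to $\mathcal O$ is therefore at most one-dimensional; inconsistent phases around loops can only make it zero. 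Hence
\[
c_k(R)\le \#\{\text{diagonal }\Gamma\text{-orbits on }X^k\times X^k\}.
\]

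\textbf{Step 3: count the orbits via the guitar map.} By Proposition~\ref{prop:yb-involutive-guitar}, $J_kr_i=p_iJ_k$, so $J_k$ conjugates $\Gamma$ into the ordinary place-permutation action of $S_k$ on $X^k$. The pair map $J_k\times J_k$ then carries diagonal $\Gamma$-orbits bijectively onto diagonal $S_k$-orbits on $X^k\times X^k$.

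An $S_k$-orbit of a pair $(\mathbf z,\mathbf w)$ is the multiset of its letter pairs $(z_j,w_j)$. It is therefore fixed by $D^2$ occupation numbers summing to $k$, exactly as in Eq.~\eqref{eq:inv-joint-occupations}. There are $\binom{k+D^2-1}{D^2-1}$ such vectors, which gives Eq.~\eqref{eq:yb-phased-involutive-centralizer}.

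\textbf{Main obstacle.} The delicate point is Step 2. We must check that invariance under the generators alone, without any group structure on the phased operators, already propagates coefficients along entire support orbits. This holds because the orbit graph is generated by the $r_i$, and each $r_i$ relates coefficients through one unit-modulus factor. We must also confirm that the guitar map is a bijection of $X^k$, which follows from the inverse scan in Eq.~\eqref{eq:yb-guitar-inverse}, so that $J_k\times J_k$ maps orbits onto orbits bijectively.
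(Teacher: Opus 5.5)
Your proposal is correct and follows the paper's proof essentially step by step. Conjugation by each $R_i$ is monomial on matrix units, and invariance under the individual generators allows at most one invariant direction per diagonal support orbit, since the phases can only remove it. These orbits are counted by transporting them with $J_k\times J_k$ to simultaneous place-permutation orbits, which are labelled by the $D^2$ joint occupation numbers. Your remark that only generator-wise invariance is needed, with no group average over the phased operators, is exactly why the lemma does not require the full dressed gate to satisfy the braid relation.
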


\begin{proof}
For words $x,y\in X^k$, write $M_{x,y}=\ket{x}\!\bra{y}$.  Let $r_i$ act
as the classical map $r$ on entries $i,i+1$, and let $R_i$ be the
corresponding gate on these sites.  Then
\begin{equation}
 R_iM_{x,y}R_i^\dagger
 =\omega_i(x)\overline{\omega_i(y)}M_{r_i x,r_i y},
 \qquad
 \omega_i(x)=\omega(x_i,x_{i+1}).
 \label{eq:yb-phased-adjoint-monomial}
\end{equation}
Thus conjugation by the generators is monomial on the matrix-unit basis.
The orbit argument is the same as the one following
Eq.~\eqref{eq:inv-adjoint-monomial} in the proof of
Theorem~\ref{thm:inv-polynomial-rank}.  On each orbit obtained by applying
the same support generator $r_i$ to both words, one coefficient determines
all the others.  The phases can make the conditions around a closed path
inconsistent and thereby remove the invariant operator, but they cannot
produce more than one invariant operator on the orbit.

We now use the guitar map to count the support orbits.  Recall from
Subsection~\ref{subsec:guitar} that $J_k$ is a bijection and satisfies
Eq.~\eqref{eq:yb-involutive-guitar-intertwiner}.  Applying $J_k$ separately
to the two words $x$ and $y$ turns the simultaneous action of the support
generators into the simultaneous exchange of adjacent positions.  These are
the orbits counted earlier by the $D^2$ joint occupation numbers in
Eq.~\eqref{eq:inv-joint-occupations}, now for the pair
$(J_k(x),J_k(y))$.  Their number is
$\binom{k+D^2-1}{D^2-1}$, which proves
Eq.~\eqref{eq:yb-phased-involutive-centralizer}.
\end{proof}

The lemma bounds an abstract centralizer for every choice of phases.  To
turn this dimension count into a restriction on the evolved operator, we
must now require the full gate to satisfy the braid relation.

\begin{theorem}[Braided phase dressings with involutive support]
\label{thm:yb-phased-involutive-support}
Let $R$ satisfy the assumptions of
Lemma~\ref{lem:yb-phased-involutive-centralizer}, and suppose in addition
that the full phase-dressed gate $R$ satisfies the braid relation.  Then
every one-site operator initially placed at site zero obeys
\begin{equation}
 \OSR_c(O(t))\leq
 \binom{t+D^2-1}{D^2-1}\leq(t+1)^{D^2-1}.
 \label{eq:yb-phased-involutive-bound}
\end{equation}
Consequently, every nonnegative-R\'enyi operator entropy is at most
$(D^2-1)\log(t+1)$.
\end{theorem}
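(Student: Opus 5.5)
The plan is to chain three results already in place. The braid hypothesis on the full dressed gate is exactly what Corollary~\ref{cor:centralizer-osr-bound} needs, and the involutive-support hypothesis is what Lemma~\ref{lem:yb-phased-involutive-centralizer} needs. First, since $R$ is unitary and satisfies the braid relation, it is a Yang--Baxter gate in the sense of Subsection~\ref{subsec:yb-gates}. Theorem~\ref{thm:central-invariance} therefore applies to it: the folded rectangular output $\widehat\Phi_t(x_O)$ has its $A_{\rm out}$ leg in $\widehat{\mathcal C}_t(R)$. Corollary~\ref{cor:centralizer-osr-bound} then gives
\begin{equation*}
 \OSR_c(O(t))\leq c_t(R).
\end{equation*}
No involutivity of $R$ itself is used here. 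The pull-through argument only needs the braid relation for $R$ and $R^\dagger$, together with folded unitarity \eqref{eq:folded-unitarity}.

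Second, I would invoke Lemma~\ref{lem:yb-phased-involutive-centralizer} with $k=t$, which gives $c_t(R)\leq\binom{t+D^2-1}{D^2-1}$. Before relying on it, I would check the one point I consider delicate: that lemma counts orbits of the simultaneous support action on pairs of words $(x,y)$. The orbit must be generated by the $r_i$, and these generate a group only because $r$ is involutive: $r_i^2=\id$ makes $\langle r_i\rangle$ a quotient of $S_t$ acting on $X^t$. The guitar intertwiner \eqref{eq:yb-involutive-guitar-intertwiner} conjugates this action, applied simultaneously to $x$ and $y$, into the simultaneous permutation of positions. Its orbits are therefore labelled by joint occupation numbers $n_{ab}$ of the pairs $(J_t(x)_k,J_t(y)_k)$. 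Phase dressing only multiplies matrix units by unimodular scalars, as in \eqref{eq:yb-phased-adjoint-monomial}. Consequently each orbit supports at most one invariant direction. This gives the claimed count.

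Third, the elementary estimate $\binom{t+D^2-1}{D^2-1}\leq(t+1)^{D^2-1}$ follows by writing the binomial as $\prod_{j=1}^{D^2-1}(t+j)/j$ and using $(t+j)/j\leq t+1$ for each $j\geq1$. The entropy statement then follows from \eqref{eq:setup-entropy-rank}, since every R\'enyi entropy of order $\alpha\geq0$ is at most $\log\OSR_c(O(t))$.

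The main obstacle is the orbit argument, specifically the claim that a monomial action contributes at most one invariant operator per orbit. I would verify it directly. Suppose $X$ is in the centralizer. Then commuting with $R_i$ forces the coefficient of $M_{r_ix,r_iy}$ to equal $\omega_i(x)\overline{\omega_i(y)}$ times the coefficient of $M_{x,y}$. Connectivity of the orbit under the generators then fixes every coefficient on the orbit from a single one. Everything else is a direct concatenation of earlier results.
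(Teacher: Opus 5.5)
Your proposal is correct and follows the paper's proof exactly: Corollary~\ref{cor:centralizer-osr-bound}, which needs only the braid relation for the full dressed gate, gives $\OSR_c(O(t))\leq c_t(R)$, and Lemma~\ref{lem:yb-phased-involutive-centralizer} bounds $c_t(R)$ by the binomial. One small correction to your side remark: the $r_i$ generate a group whether or not $r$ is involutive, since they are bijections of a finite set; involutivity matters because it makes the guitar map intertwine $r_i$ with the plain swap instead of the controlled map $s$, and that is what gives the joint-occupation-number orbit count.
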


\begin{proof}
The braid relation for the full gate allows us to apply
Corollary~\ref{cor:centralizer-osr-bound}, which gives
$\OSR_c(O(t))\leq c_t(R)$.  Lemma~\ref{lem:yb-phased-involutive-centralizer}
then proves Eq.~\eqref{eq:yb-phased-involutive-bound}.  The entropy bound
follows from $S_\alpha^{\rm op}\leq\log\OSR$.
\end{proof}

The estimate in Eq.~\eqref{eq:yb-phased-involutive-bound} is numerically the
same as the bound of Theorem~\ref{thm:inv-polynomial-rank}, but the two
statements have different assumptions.  Here involutivity is required only
for the support, whereas the full gate must still satisfy the braid
relation.  On this intersection, both Theorem~\ref{thm:yb-phase} and the
present theorem apply, and the smaller of their two upper bounds may be
used.  We expect the present estimate to be sharper for many nontrivial
involutive supports.  Indeed, in this case $\kappa_\rho=1$, but the general
bound has polynomial degree $2D\kappa_\lambda-2$, while $\kappa_\lambda$ is
only bounded by $D!$.  The degree $D^2-1$ of the present estimate is
independent of this potentially large permutation group.

\section{Exponential growth of the OSR from a Yang--Baxter gate without dual
unitarity}
\label{sec:braid-only}

In this section we show that the braid relation and involutivity alone do not
prevent exponential operator-Schmidt-rank growth.  We construct
a fixed seven-state involutive Yang--Baxter gate without dual unitarity and a
fixed one-site operator for which
\(\OSR_c(O(t))\geq 2^{t-1}\).

The key idea is to choose the gate, the local
operator, and a special product vector so that the rectangular representation
of \(O(t)\) from Lemma~\ref{lem:yb-rectangle} maps this vector to a state
containing \(t-1\) Bell pairs across the output bipartition.  We construct the
gate in Subsection~\ref{subsec:bo-construction}; in
Subsection~\ref{subsec:bo-growth} we choose the local operator and product
vector and derive the exponential lower bound.  The Schmidt rank produced from
a single product vector lower-bounds the operator-Schmidt rank, so this action
already proves exponential growth.

We do not analyze the normalized
operator-Schmidt weights, and
therefore we do not claim a growth law for the positive-index R\'enyi or von
Neumann operator entropies.

\subsection{The construction of the gate}
\label{subsec:bo-construction}

Split the seven-dimensional local space into two orthogonal colour sectors of
dimensions three and four,
\begin{equation}
 V=H_3\oplus H_4,\qquad
 H_3=\operatorname{span}\{a,m,d\},\qquad
 H_4=\operatorname{span}\{b,h_0,h_1,p\},
 \label{eq:bo-colour-split}
\end{equation}
where the displayed vectors are orthonormal.  The subscripts on \(H_3\) and
\(H_4\) record the dimensions of the two sectors.

We first keep the cross-sector scattering general.  For any unitary map
\(C:H_3\otimes H_4\to H_4\otimes H_3\), we define a two-site gate by
\begin{equation}
 \begin{aligned}
 R_{\rm mark}|_{H_3\otimes H_3}&=\id,\qquad&
 R_{\rm mark}|_{H_4\otimes H_4}&=\id, \\
 R_{\rm mark}|_{H_3\otimes H_4}&=C,&
 R_{\rm mark}|_{H_4\otimes H_3}&=C^\dagger.
 \end{aligned}
 \label{eq:bo-R}
\end{equation}
Thus inputs from the same sector pass unchanged, whereas inputs from different
sectors exchange places and scatter through \(C\) or \(C^\dagger\).  The
following gate properties hold for every such \(C\).  After
Lemma~\ref{lem:bo-gate}, we will specify the matrix elements of \(C\) for the
exponential-rank construction.

\begin{lemma}[Properties of the general construction]
\label{lem:bo-gate}
For every unitary \(C:H_3\otimes H_4\to H_4\otimes H_3\), the gate
\eqref{eq:bo-R} is unitary, Hermitian, involutive, and satisfies the braid
relation.  The resulting gate \(R_{\rm mark}\) is not dual-unitary,
irrespective of the choice of \(C\).
\end{lemma}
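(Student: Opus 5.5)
The plan is to check the four properties directly from the block structure of $V\otimes V$, and then compute $R_{\rm mark}^\Gamma$ to rule out dual unitarity. Write $V\otimes V=(H_3\otimes H_3)\oplus(H_3\otimes H_4)\oplus(H_4\otimes H_3)\oplus(H_4\otimes H_4)$.

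Unitarity and the algebraic properties come first. $R_{\rm mark}$ maps $H_3\otimes H_4$ onto $H_4\otimes H_3$ via $C$, maps $H_4\otimes H_3$ back onto $H_3\otimes H_4$ via $C^\dagger$, and acts as the identity on the two diagonal summands. It is therefore a block matrix whose only off-diagonal blocks are $C$ and $C^\dagger$, which sit in transposed positions. This gives $R_{\rm mark}^\dagger=R_{\rm mark}$ immediately. Since $C$ is unitary, $C^\dagger C=\id$ on $H_3\otimes H_4$ and $CC^\dagger=\id$ on $H_4\otimes H_3$, so $R_{\rm mark}^2=\id$. Hermiticity together with involutivity gives unitarity.

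The braid relation is verified on $V^{\otimes3}$ split into the eight sector words $H_{i}\otimes H_j\otimes H_k$ with $i,j,k\in\{3,4\}$. Both braid words preserve the multiset of sectors, so each side maps each word sector to a fixed permuted sector.
\begin{itemize}
\item For the words $333$ and $444$, both sides are the identity.
\item For words with two equal letters, one of the three factors in each braid word acts within a single sector as the identity. The relation then reduces to an identity between compositions of $C$ and $C^\dagger$ with identities. For example, on $H_3\otimes H_3\otimes H_4$ the left side is $R_{12}R_{23}R_{12}=C_{12}C_{23}$, because the first $R_{12}$ is the identity on $33$. The right side is $R_{23}R_{12}R_{23}$, which applies $C_{23}$, then $C_{12}$, then the identity on $H_3\otimes H_3$ at positions $2,3$. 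Both sides therefore equal $C_{12}C_{23}$.
\item The remaining five mixed words are handled in the same way. In each case exactly one factor on each side is trivial, and the two nontrivial factors appear in the same order acting on the same tensor legs.
\end{itemize}
This is the main bookkeeping step. The thing to be careful about is tracking which physical legs carry which sector after each exchange, so that the two sides are seen to apply the same two maps in the same order.

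For the failure of dual unitarity, I would use the partial-trace norm identity \eqref{eq:inv-DU-partial-trace-norm}: a dual-unitary $R$ has $\|T_R\|_{\mathrm{HS}}^2=D=7$. Compute $T_R=(\operatorname{id}\otimes\Tr)R$. Its diagonal-sector contributions are as follows.
\begin{itemize}
\item For $a\in H_3$, summing over $b\in H_3$ gives $3\,\id_{H_3}$.
\item For $a\in H_4$, summing over $b\in H_4$ gives $4\,\id_{H_4}$.
\end{itemize}
The cross-sector terms $\sum_b\bra{c,b}R\ket{a,b}$ vanish whenever the second input and the second output lie in different sectors. Every such term maps $H_3\otimes H_4$ into $H_4\otimes H_3$, where the second leg is in $H_3\neq H_4$. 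Hence $T_R=3\id_{H_3}\oplus4\id_{H_4}$, so $\|T_R\|_{\mathrm{HS}}^2=27+64=91\neq7$, and $R_{\rm mark}$ is not dual-unitary for any $C$.
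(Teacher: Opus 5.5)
Your treatment of Hermiticity, involutivity and unitarity is fine. Your sector check is also correct for $H_3\otimes H_3\otimes H_4$, $H_4\otimes H_3\otimes H_3$, $H_3\otimes H_4\otimes H_4$ and $H_4\otimes H_4\otimes H_3$.

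The gap is the blanket claim that in every remaining mixed word ``the two nontrivial factors appear in the same order acting on the same tensor legs.'' This is false for the two alternating sectors.
On $H_3\otimes H_4\otimes H_3$, the word $R_{12}R_{23}R_{12}$ acts as $C^\dagger_{12}\,\id\,C_{12}=(C^\dagger C)_{12}$. The word $R_{23}R_{12}R_{23}$ acts as $C_{23}\,\id\,C^\dagger_{23}=(CC^\dagger)_{23}$. So the nontrivial factors sit on different legs and in the opposite order. The sector $H_4\otimes H_3\otimes H_4$ is the mirror case, with $(CC^\dagger)_{12}$ against $(C^\dagger C)_{23}$.

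The two sides agree only because each collapses to the identity, by $C^\dagger C=\id_{H_3\otimes H_4}$ and $CC^\dagger=\id_{H_4\otimes H_3}$. This is the paper's observation that on these sectors a cross-sector scattering is immediately undone by its inverse. It is also the only place where the braid relation uses that the $H_4\otimes H_3$ block is the inverse of the $H_3\otimes H_4$ block.

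Your uniform claim would in fact prove too much. It would give the braid relation with $C^\dagger$ replaced by an arbitrary map $C'$ on $H_4\otimes H_3$. But these two sectors actually require $(C'C)\otimes\id_{H_3}=\id_{H_3}\otimes(CC')$ and $(CC')\otimes\id_{H_4}=\id_{H_4}\otimes(C'C)$, which force $C'$ to be a scalar multiple of $C^{-1}$. The repair is simply to treat these two sectors separately, invoking unitarity of $C$.

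Your proof that dual unitarity fails is correct but differs from the paper's. The paper exhibits a zero row of the reshuffled matrix: $(R_{\rm mark}^\Gamma)_{(a,m),(y,v)}=\bra{m,v}R_{\rm mark}\ket{a,y}=0$ for all $y,v$. With $a$ on the first input leg, the first output is either $a$ (if $y\in H_3$) or lies in $H_4$ (if $y\in H_4$), and is never $m$. So $R_{\rm mark}^\Gamma$ is singular, by a one-line check.

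You instead use $\|T_R\|_{\mathrm{HS}}^2=\|R^\Gamma\iota\|^2=D$, which holds for every dual-unitary gate (involutivity plays no role in it), and compute $T_R=3\,\id_{H_3}\oplus4\,\id_{H_4}$. When you argue that the cross terms vanish, mention both cross sectors: on $H_4\otimes H_3\to H_3\otimes H_4$ the second leg also changes sector.

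Your route is a little longer, but it yields more. It determines the partial-trace spectrum, hence the LPW class of $R_{\rm mark}$ (two positive blocks, of sizes $3$ and $4$). The failure of dual unitarity then becomes an instance of Proposition~\ref{prop:inv-singleton-collapse}.
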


\begin{proof}
The gate is the identity on \(H_3\otimes H_3\) and
\(H_4\otimes H_4\), while its two cross-sector restrictions are paired as
\(C\) and \(C^\dagger\).  Hence
\(R_{\rm mark}=R_{\rm mark}^\dagger\) and
\(R_{\rm mark}^2=\id\).

The braid relation can be checked on the eight tensor sectors of
\(V^{\otimes3}\).  On \(H_3^{\otimes3}\) and \(H_4^{\otimes3}\), both braid
words are trivial.  On a tensor sector with one exceptional factor, that
strand crosses the two equal-sector strands in the same order on both sides
of the braid relation.  On \(H_3\otimes H_4\otimes H_3\) and
\(H_4\otimes H_3\otimes H_4\), a cross-sector scattering is immediately
undone by its inverse.  These cases exhaust \(V^{\otimes3}\), so the braid
relation holds for any unitary \(C\).

Dual unitarity fails already at one row of the reshuffled matrix.  For every
\(y,v\in V\),
\begin{equation}
 (R_{\rm mark}^\Gamma)_{(a,m),(y,v)}
 =\bra{m,v}R_{\rm mark}\ket{a,y}=0.
 \label{eq:bo-zero-row}
\end{equation}
If \(y\in H_3\), the equal-sector identity leaves the first output equal to
\(a\); if \(y\in H_4\), the first output lies in \(H_4\).  Neither case can
produce \(m\in H_3\), so the reshuffling is singular.
\end{proof}

We now choose the cross-sector scattering.  Put
\(e_\pm=(h_0\otimes a\pm h_1\otimes d)/\sqrt2\), and define \(C\) on the
product basis by
\[
\renewcommand{\arraystretch}{1.25}
\begin{array}{c@{\ \mapsto\ }c@{\qquad}c@{\ \mapsto\ }c}
 a\otimes b   & b\otimes a   & d\otimes b   & b\otimes d \\
 m\otimes b   & e_+          & a\otimes h_0 & b\otimes m \\
 a\otimes h_1 & p\otimes m   & a\otimes p   & p\otimes a \\
 \hline
 m\otimes h_0 & h_0\otimes m & m\otimes h_1 & h_0\otimes d \\
 m\otimes p   & h_1\otimes a & d\otimes h_0 & h_1\otimes m \\
 d\otimes h_1 & p\otimes d   & d\otimes p   & e_-
\end{array}
\]
The first three lines contain all transitions used in the rank proof that we
will present below; the last three complete the map to a unitary.  Indeed, the
ten product outputs
are distinct basis vectors outside
\(\operatorname{span}\{h_0\otimes a,h_1\otimes d\}\), while \(e_+\) and
\(e_-\) are an orthonormal basis of that remaining two-dimensional space.

\subsection{The growth of the OSR}
\label{subsec:bo-growth}

We now prove that the gate constructed in
Subsection~\ref{subsec:bo-construction} has exponential OSR growth for a
fixed one-site source.  The precise statement is as follows.

\begin{theorem}[Exponential growth of the OSR]
\label{thm:bo-osr-growth}
For the seven-state gate \(R_{\rm mark}\) constructed in
Subsection~\ref{subsec:bo-construction}, let
\(O=\ket{h_0}\bra b\) be initially placed at site zero.  Then
\begin{equation}
 \OSR_c(O(t))\geq 2^{t-1},\qquad t\geq1.
 \label{eq:bo-osr-theorem}
\end{equation}
\end{theorem}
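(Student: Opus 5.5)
The plan is to work entirely with the rectangular core of Lemma~\ref{lem:yb-rectangle}, which gives $\OSR_c(O(t))=\OSR_{B_{\rm out}\mid A_{\rm out}}(\Phi_t(O))$ with $\Phi_t(O)=W_t^\dagger S_t(O)W_t$. Since $R_{\rm mark}$ is involutive and satisfies the braid relation (Lemma~\ref{lem:bo-gate}), its braid representation factors through the symmetric group. The rectangle $W_t$ represents the involutive permutation exchanging the two blocks, so $W_t^\dagger=W_t$, exactly as in Lemma~\ref{lem:normal-form-single-file}. The lower bound will come from the elementary fact that if $\ket{\psi}$ is a product vector across $B\mid A$ (inputs and outputs identified via the block exchange), then $\SR(\Phi_t(O)\ket{\psi})\leq\OSR(\Phi_t(O))$. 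Indeed, applying $\sum_\mu L_\mu\otimes R_\mu$ to a product vector yields at most $\chi$ product vectors. So it suffices to exhibit one product basis vector $\ket{\psi}$ such that $W_tS_t(O)W_t\ket{\psi}$ contains $t-1$ Bell pairs across the cut.

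First I will choose $\ket{\psi}$ so that the first pass of $W_t$ involves only the permutation-type transitions in the first three lines of the table for $C$, together with equal-sector identities. The aim is to produce a single product basis vector with the letter $b$ sitting at the position where $S_t(O)$ acts, namely $B_0$. A natural candidate is $B$ consisting of $b$ followed by $H_4$ letters and $A$ consisting of $H_3$ letters chosen among $a,d$ (possibly with one $m$). Equal-sector strands pass unchanged, and cross-sector crossings only move labels via $a\otimes b\mapsto b\otimes a$, $d\otimes b\mapsto b\otimes d$, $a\otimes p\mapsto p\otimes a$, and similar rules. Then $O=\ket{h_0}\bra b$ replaces that $b$ by the marker $h_0$.

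Second, I will trace the second pass of $W_t$. The mechanism I intend is that the marker converts neighbouring colours into $m$ through $a\otimes h_0\mapsto b\otimes m$. Each time an $m$ meets a $b$, the transition $m\otimes b\mapsto e_+=(h_0\otimes a+h_1\otimes d)/\sqrt2$ creates a two-branch superposition. In the $h_0$ branch the marker is regenerated and continues. In the $h_1$ branch the transition $a\otimes h_1\mapsto p\otimes m$ and the $p$-transitions let the process continue with a recorded colour. In each branch one output leg lies on the $A_{\rm out}$ side and its correlated partner lies on the $B_{\rm out}$ side. I will organise this as an induction over the $B$ rows. The induction hypothesis is that after row $j$ the state is $2^{-(j-1)/2}\sum_{s\in\{0,1\}^{j-1}}\ket{L_s}\otimes\ket{R_s}$, where the $L_s$ and $R_s$ are computational product vectors, pairwise distinct in $s$ on each side. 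The hypothesis is to be checked against the table row by row, using only the first three lines as the paper indicates.

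The main obstacle is this bookkeeping. I must verify that every branch only ever meets transitions mapping basis vectors to basis vectors, apart from the single $e_+$ splitting per row. I must also verify that the branches never recombine, so that no interference can cancel terms. Finally, I must verify that the two branch labels land on opposite sides of $B_{\rm out}\mid A_{\rm out}$, giving orthogonal families $\{L_s\}$ and $\{R_s\}$. If these hold, the output has Schmidt rank exactly $2^{t-1}$, and the bound \eqref{eq:bo-osr-theorem} follows.

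To make the bookkeeping manageable, I will first do $t=2,3$ explicitly and fix the choice of $\ket{\psi}$. I will then encode each crossing type in a short case list so that the induction step reduces to checking that each row appends one factor $(\ket{x_0}\otimes\ket{y_0}+\ket{x_1}\otimes\ket{y_1})/\sqrt2$ with $x_0\perp x_1$ and $y_0\perp y_1$.
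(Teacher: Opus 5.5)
Your proposal is correct and follows the paper's proof: rectangular core, $W_t^\dagger=W_t$ from involutivity, evaluation of $\Phi_t(O)$ on one product vector, a trivial first pass, and a second pass in which each blank $b$ first meets the marker $m$ via $C(m\otimes b)=e_+$ and then a vacuum via $a\otimes h_0\mapsto b\otimes m$ or $a\otimes h_1\mapsto p\otimes m$, so that every row after $B_0$ leaves one Bell pair $(\ket b\ket a+\ket p\ket d)/\sqrt2$ across the cut. The choice you left open is the paper's $\ket{v_L}=\ket b^{\otimes t}$, $\ket{v_R}=\ket a^{\otimes t}$: the first pass is then the pure block exchange $C^\dagger(b\otimes a)=a\otimes b$, and in the second pass every crossing other than the two marker steps is one of the exchanges $a\otimes b\mapsto b\otimes a$, $d\otimes b\mapsto b\otimes d$, $a\otimes p\mapsto p\otimes a$.
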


\begin{proof}
By Lemma~\ref{lem:yb-rectangle}, the OSR across the spatial cut equals the
OSR of the rectangular core \(\Phi_t(O)\).  We lower-bound this rank by
evaluating \(\Phi_t(O)\) on a suitable product vector.  Indeed, applying
\(\Phi_t(O)\) to a product vector cannot increase its operator Schmidt rank:
\begin{equation}
 \SR\bigl(\Phi_t(O)(\ket{v_L}\otimes\ket{v_R})\bigr)
 \leq\OSR(\Phi_t(O)).
 \label{eq:setup-product-vector}
\end{equation}
We choose the two factors to be
\begin{equation}
 \ket{v_L}=\ket b^{\otimes t},\qquad
 \ket{v_R}=\ket a^{\otimes t}.
 \label{eq:bo-product-vector}
\end{equation}
Equation~\eqref{eq:yb-rectangle-operator} gives
\(\Phi_t(O)=W_t^\dagger S_t(O)W_t\), so the rightmost \(W_t\) acts first.
Every crossing in its action on \(\ket{v_L}\otimes\ket{v_R}\) sees
\(b\otimes a\).  Since
\(C^\dagger(b\otimes a)=a\otimes b\), each crossing exchanges the two
local states without changing them.  Consequently,
\begin{equation}
 W_t(\ket{v_L}\otimes\ket{v_R})
 =\ket a^{\otimes t}\otimes\ket b^{\otimes t}.
 \label{eq:bo-first-block-exchange}
\end{equation}

The source \(S_t(O)\) contains \(O=\ket{h_0}\bra b\) on the first
\(B\)-strand.  It changes that \(b\) into \(h_0\) and leaves every other
strand unchanged.  Moreover, involutivity and unitarity give
\(W_t^\dagger=W_t\).  We therefore obtain
\begin{equation}
 \ket{\Psi_t}:=\Phi_t(O)(\ket{v_L}\otimes\ket{v_R})
 =W_t\bigl(\ket a^{\otimes t}\otimes\ket{h_0}
 \otimes\ket b^{\otimes(t-1)}\bigr).
 \label{eq:bo-Psi-definition}
\end{equation}

Figure~\ref{fig:bo-marker-process} shows for \(t=4\) the process by which
the Bell pairs needed for the lower bound are created.
We now explain the process displayed there in detail and derive its
row-by-row transitions in
Eqs.~\eqref{eq:bo-t4-after-B0}--\eqref{eq:bo-t4-after-B2}.

For \(t=4\), Eq.~\eqref{eq:bo-Psi-definition} instructs us to apply \(W_4\)
to
\[
 \ket a_{A_0}\ket a_{A_1}\ket a_{A_2}\ket a_{A_3}
 \ket{h_0}_{B_0}\ket b_{B_1}\ket b_{B_2}\ket b_{B_3}.
\]
In this notation, we write the strand labels \(A_j\) and \(B_k\)
explicitly as subscripts, while the tensor factors appear in their current
left-to-right order.

We begin with row \(B_0\).  The state \(h_0\) first crosses \(A_3\), and we
apply
\(C(a\otimes h_0)=b\otimes m\).  The resulting \(b\) on \(B_0\) then
crosses \(A_2,A_1,A_0\) by repeated use of
\(C(a\otimes b)=b\otimes a\).  After row \(B_0\), the state is
\begin{equation}
 \ket b_{B_0}\ket a_{A_0}\ket a_{A_1}\ket a_{A_2}
 \ket m_{A_3}\ket b_{B_1}\ket b_{B_2}\ket b_{B_3}.
 \label{eq:bo-t4-after-B0}
\end{equation}

We introduce the following terminology: we call \(m\) on \(A_3\) the marker,
the remaining \(a\)-states on \(A_0,A_1,A_2\) vacua, and the unprocessed
\(b\)-states on \(B_1,B_2,B_3\) blanks.

When executing the crossings associated with \(B_1\), the blank first meets
the marker on \(A_3\), followed by the vacuum on \(A_2\).  Reading the two
gates from right to left gives
\begin{equation}
 \begin{aligned}
  \ket a_{A_2}\ket m_{A_3}\ket b_{B_1}
  &\xrightarrow{(R_{\rm mark})_{23}}
  \frac{\ket a_{A_2}\ket{h_0}_{B_1}\ket a_{A_3}
       +\ket a_{A_2}\ket{h_1}_{B_1}\ket d_{A_3}}{\sqrt2}\\
  &\xrightarrow{(R_{\rm mark})_{12}}
  \frac{\ket b_{B_1}\ket m_{A_2}\ket a_{A_3}
       +\ket p_{B_1}\ket m_{A_2}\ket d_{A_3}}{\sqrt2}.
 \end{aligned}
 \label{eq:bo-first-bell-pair}
\end{equation}
The first crossing creates the two-component superposition, while the second
crossing recreates \(m\) and separates the two components onto \(B_1\) and
\(A_3\).  We now denote their two-strand state by
\begin{equation}
 \ket\beta_{B,A}
 :=\frac{\ket b_B\ket a_A+\ket p_B\ket d_A}{\sqrt2}.
 \label{eq:bo-beta}
\end{equation}
This is a Bell pair between
\(\operatorname{span}\{b,p\}\subset H_4\) and
\(\operatorname{span}\{a,d\}\subset H_3\), and hence carries one unit of
entanglement between an \(A\)-strand and a \(B\)-strand.  In
Eq.~\eqref{eq:bo-first-bell-pair}, \(m\) remains at tensor-product position
\(2\), although it is transferred from \(A_3\) to \(A_2\).

The \(B_1\)-leg of this Bell pair must still cross the vacua on \(A_1\) and
\(A_0\).  Its two components are \(b\) and \(p\), and
\begin{equation}
 C(a\otimes b)=b\otimes a,\qquad
 C(a\otimes p)=p\otimes a.
 \label{eq:bo-bell-B-transport}
\end{equation}
These crossings leave the Bell pair unchanged.  Grouping its two entangled
legs, the strand contents after row \(B_1\) are
\begin{equation}
 \ket b_{B_0}\ket a_{A_0}\ket a_{A_1}\ket m_{A_2}
 \ket\beta_{B_1,A_3}\ket b_{B_2}\ket b_{B_3}.
 \label{eq:bo-t4-after-B1}
\end{equation}

We now apply row \(B_2\), starting from a configuration with exactly one Bell
pair.  The blank on \(B_2\) first crosses its \(A_3\)-leg.  Expanding the
Bell pair makes this crossing explicit:
\begin{equation}
 \begin{aligned}
 &\frac{\ket b_{B_1}\ket a_{A_3}
             +\ket p_{B_1}\ket d_{A_3}}{\sqrt2}\ket b_{B_2}\\
 &\hspace{12mm}\longmapsto
 \frac{\ket b_{B_1}\ket b_{B_2}\ket a_{A_3}
             +\ket p_{B_1}\ket b_{B_2}\ket d_{A_3}}{\sqrt2}.
 \end{aligned}
 \label{eq:bo-preserve-first-bell-pair}
\end{equation}
Here we used \(C(a\otimes b)=b\otimes a\) and
\(C(d\otimes b)=b\otimes d\).  The correlations between \(B_1\) and
\(A_3\) are unchanged, so the first Bell pair is preserved while the blank
passes its \(A\)-leg.

The blank now reaches the marker on \(A_2\), with the vacuum on \(A_1\)
immediately before it.  Repeating the two crossings of
Eq.~\eqref{eq:bo-first-bell-pair} with relabelled strands gives
\begin{equation}
 \ket a_{A_1}\ket m_{A_2}\ket b_{B_2}
 \longmapsto
 \frac{\ket b_{B_2}\ket m_{A_1}\ket a_{A_2}
       +\ket p_{B_2}\ket m_{A_1}\ket d_{A_2}}{\sqrt2}.
 \label{eq:bo-second-bell-pair}
\end{equation}
This creates the second Bell pair, now between \(B_2\) and \(A_2\), and
transfers the marker to \(A_1\).  Its \(B_2\)-leg crosses the remaining
vacuum on \(A_0\) by Eq.~\eqref{eq:bo-bell-B-transport}.  Grouping the
entangled legs, the strand contents after row \(B_2\) are
\begin{equation}
 \ket b_{B_0}\ket a_{A_0}\ket m_{A_1}
 \ket\beta_{B_1,A_3}\ket\beta_{B_2,A_2}\ket b_{B_3}.
 \label{eq:bo-t4-after-B2}
\end{equation}

\begin{figure}[!t]
\centering
\begin{tikzpicture}[
  opent figure,
  x=1.32cm,
  y=0.72cm,
  state/.style={font=\large,minimum size=6mm,inner sep=1pt},
  marker state/.style={
    circle,draw=opentorange,fill=opentlightorange,line width=0.95pt,
    minimum size=6.2mm,inner sep=0pt,font=\large
  },
  bell one leg/.style={
    circle,draw=opentblue,fill=opentlightblue,line width=1.05pt,
    minimum size=5.7mm,inner sep=0pt
  },
  bell two leg/.style={
    circle,draw=green!45!black,fill=green!10,line width=1.05pt,
    minimum size=5.7mm,inner sep=0pt
  },
  bell three leg/.style={
    circle,draw=purple!70!black,fill=purple!9,line width=1.05pt,
    minimum size=5.7mm,inner sep=0pt
  },
  bell one/.style={draw=opentblue,line width=1.25pt},
  bell two/.style={draw=green!45!black,line width=1.25pt},
  bell three/.style={draw=purple!70!black,line width=1.25pt},
  row label/.style={font=\small\bfseries,anchor=east},
  strand tag/.style={font=\scriptsize,text=black!55},
  step label/.style={font=\scriptsize,anchor=east,text=black!65},
  A brace/.style={
    decorate,decoration={brace,amplitude=3pt,mirror},draw=opentgray
  }
]
  \draw[densely dashed,draw=opentgray]
    (3.5,12.68) -- (3.5,-1.22);

  \node[row label] at (-1.10,12) {input};
  \foreach \x/\lab in {0/A_0,1/A_1,2/A_2,3/A_3} {
    \node[state] at (\x,12) {$a$};
    \node[strand tag] at (\x,11.48) {$\lab$};
  }
  \node[state] at (4,12) {$h_0$};
  \node[strand tag] at (4,11.48) {$B_0$};
  \foreach \x/\lab in {5/B_1,6/B_2,7/B_3} {
    \node[state] at (\x,12) {$b$};
    \node[strand tag] at (\x,11.48) {$\lab$};
  }
  \draw[A brace] (-0.30,11.18) -- (3.30,11.18);
  \node[strand tag] at (1.5,10.77) {$A$};

  \node[row label] at (-1.10,9) {after \(B_0\)};
  \node[state] at (0,9) {$b$};
  \node[strand tag] at (0,8.48) {$B_0$};
  \foreach \x/\lab in {1/A_0,2/A_1,3/A_2} {
    \node[state] at (\x,9) {$a$};
    \node[strand tag] at (\x,8.48) {$\lab$};
  }
  \node[marker state] at (4,9) {$m$};
  \node[strand tag] at (4,8.48) {$A_3$};
  \foreach \x/\lab in {5/B_1,6/B_2,7/B_3} {
    \node[state] at (\x,9) {$b$};
    \node[strand tag] at (\x,8.48) {$\lab$};
  }
  \draw[A brace] (0.70,8.18) -- (4.30,8.18);
  \node[strand tag] at (2.5,7.77) {$A$};

  \node[row label] at (-1.10,6) {after \(B_1\)};
  \node[state] at (0,6) {$b$};
  \node[strand tag] at (0,5.48) {$B_0$};
  \node[bell one leg] (r1b1) at (1,6) {};
  \node[strand tag] at (1,5.48) {$B_1$};
  \foreach \x/\lab in {2/A_0,3/A_1} {
    \node[state] at (\x,6) {$a$};
    \node[strand tag] at (\x,5.48) {$\lab$};
  }
  \node[marker state] at (4,6) {$m$};
  \node[strand tag] at (4,5.48) {$A_2$};
  \node[bell one leg] (r1a3) at (5,6) {};
  \node[strand tag] at (5,5.48) {$A_3$};
  \foreach \x/\lab in {6/B_2,7/B_3} {
    \node[state] at (\x,6) {$b$};
    \node[strand tag] at (\x,5.48) {$\lab$};
  }
  \draw[bell one]
    (r1b1.north) .. controls +(0,1.15) and +(0,1.15) .. (r1a3.north);
  \draw[A brace] (1.70,5.18) -- (5.30,5.18);
  \node[strand tag] at (3.5,4.77) {$A$};

  \node[row label] at (-1.10,3) {after \(B_2\)};
  \node[state] at (0,3) {$b$};
  \node[strand tag] at (0,2.48) {$B_0$};
  \node[bell one leg] (r2b1) at (1,3) {};
  \node[strand tag] at (1,2.48) {$B_1$};
  \node[bell two leg] (r2b2) at (2,3) {};
  \node[strand tag] at (2,2.48) {$B_2$};
  \node[state] at (3,3) {$a$};
  \node[strand tag] at (3,2.48) {$A_0$};
  \node[marker state] at (4,3) {$m$};
  \node[strand tag] at (4,2.48) {$A_1$};
  \node[bell two leg] (r2a2) at (5,3) {};
  \node[strand tag] at (5,2.48) {$A_2$};
  \node[bell one leg] (r2a3) at (6,3) {};
  \node[strand tag] at (6,2.48) {$A_3$};
  \node[state] at (7,3) {$b$};
  \node[strand tag] at (7,2.48) {$B_3$};
  \draw[bell two]
    (r2b2.north) .. controls +(0,0.82) and +(0,0.82) .. (r2a2.north);
  \draw[bell one]
    (r2b1.north) .. controls +(0,1.38) and +(0,1.38) .. (r2a3.north);
  \draw[A brace] (2.70,2.18) -- (6.30,2.18);
  \node[strand tag] at (4.5,1.77) {$A$};

  \node[row label] at (-1.10,0) {after \(B_3\)};
  \node[state] at (0,0) {$b$};
  \node[strand tag] at (0,-0.52) {$B_0$};
  \node[bell one leg] (r3b1) at (1,0) {};
  \node[strand tag] at (1,-0.52) {$B_1$};
  \node[bell two leg] (r3b2) at (2,0) {};
  \node[strand tag] at (2,-0.52) {$B_2$};
  \node[bell three leg] (r3b3) at (3,0) {};
  \node[strand tag] at (3,-0.52) {$B_3$};
  \node[marker state] at (4,0) {$m$};
  \node[strand tag] at (4,-0.52) {$A_0$};
  \node[bell three leg] (r3a1) at (5,0) {};
  \node[strand tag] at (5,-0.52) {$A_1$};
  \node[bell two leg] (r3a2) at (6,0) {};
  \node[strand tag] at (6,-0.52) {$A_2$};
  \node[bell one leg] (r3a3) at (7,0) {};
  \node[strand tag] at (7,-0.52) {$A_3$};
  \draw[bell three]
    (r3b3.north) .. controls +(0,0.66) and +(0,0.66) .. (r3a1.north);
  \draw[bell two]
    (r3b2.north) .. controls +(0,1.12) and +(0,1.12) .. (r3a2.north);
  \draw[bell one]
    (r3b1.north) .. controls +(0,1.60) and +(0,1.60) .. (r3a3.north);
  \draw[A brace] (3.70,-0.82) -- (7.30,-0.82);
  \node[strand tag] at (5.5,-1.23) {$A$};

  \draw[opent/flow arrow] (-1.72,11.35) -- (-1.72,9.65);
  \node[step label] at (-1.92,10.50) {row \(B_0\)};
  \draw[opent/flow arrow] (-1.72,8.35) -- (-1.72,6.65);
  \node[step label] at (-1.92,7.50) {row \(B_1\)};
  \draw[opent/flow arrow] (-1.72,5.35) -- (-1.72,3.65);
  \node[step label] at (-1.92,4.50) {row \(B_2\)};
  \draw[opent/flow arrow] (-1.72,2.35) -- (-1.72,0.65);
  \node[step label] at (-1.92,1.50) {row \(B_3\)};

\end{tikzpicture}
\caption{The row process for \(t=4\).  The small labels identify the
strands, while the grey underbrace follows the
four \(A\)-strands as they move one site to the right in every row.  Ordinary
letters denote product states, the orange node denotes the marker, and two
equally coloured circles joined by an arc denote the two legs of one Bell
pair \(\ket\beta\).  The marker remains at the same position, and each row
after \(B_0\) creates one new Bell pair whose two legs lie on opposite sides
of the marker.  After the last row, the three nested Bell pairs give Schmidt
rank \(2^3\) across the fixed spatial cut.}
\label{fig:bo-marker-process}
\end{figure}

The first three rows have now exposed the full repeating process, which is
the same for every \(t\).  These computations exhaust the local events that
can occur.  A new blank first crosses the \(A\)-legs of the existing Bell
pairs without changing them, then uses the marker and the first remaining
vacuum to create one additional Bell pair.  Finally, the \(B\)-leg of the
new Bell pair crosses the remaining vacua without changing the pair.
Figure~\ref{fig:bo-marker-process} displays this calculation at fixed
spin-chain sites and also includes the last row \(B_3\).  The strand labels
move from row to row, but the marker remains at site \(0\), between the two
legs of every existing Bell pair.

After rows \(B_0,\ldots,B_{t-2}\), one vacuum remains on \(A_0\), the
marker is on \(A_1\), and the last blank is on \(B_{t-1}\).  The final row
first crosses the existing Bell pairs and then applies the same two-crossing
process to these three states.  No vacuum or blank remains afterward, and we
obtain
\begin{equation}
 \ket{\Psi_t}
 =\ket b_{B_0}\ket m_{A_0}
  \bigotimes_{j=1}^{t-1}\ket\beta_{B_j,A_{t-j}},
 \qquad t\geq2.
 \label{eq:bo-Psi-product}
\end{equation}

Every Bell pair \(\ket\beta\) has Schmidt rank two across
\(B_{\rm out}\mid A_{\rm out}\).  Therefore the selected product vector already
implies
\begin{equation}
 2^{t-1}=\SR_{B\mid A}(\Psi_t)
 \leq\OSR_{B\mid A}(\Phi_t(O))
 =\OSR_c(O(t)),\qquad t\geq2.
 \label{eq:bo-final}
\end{equation}
The last equality is the rectangular-core identity of
Lemma~\ref{lem:yb-rectangle}.  This proves Eq.~\eqref{eq:bo-osr-theorem} and
hence the exponential lower bound \eqref{eq:results-braid-only}.
\end{proof}

\section{Conclusions}
\label{sec:conclusions}

In this work we studied operator entanglement in circuits with Yang--Baxter
gates, as well as in a broader class of phase-dressed Yang--Baxter-map
circuits.  In several families we proved logarithmic upper bounds, or even
uniform bounds, on the operator entropies; the main results were summarized
earlier in \ref{subsec:main-results}.

The mechanisms for this special behaviour differ from those that appeared earlier in the literature. For
example, we do not directly use conservation of particles, in contrast to earlier proofs in the Rule 54 or hard-core gas
automata.
In our circuits the main mechanism is the existence of a highly non-local transformation that connects the interacting
circuits (with no particle or label conservation) to much more tractable cases with explicit label conservation.
In the case of the dual-unitary involutive family we used 
the main theorem of \cite{LechnerPennigWood2019}, guaranteeing the existence
of the intertwiners $U_n$ in \eqref{eq:inv-LPW-intertwiner},
whereas in the case of the involutive non-degenerate Yang--Baxter maps
the key object is the so-called guitar map  \cite{EtingofSchedlerSoloviev1999}.
For phase-dressed Yang--Baxter-map gates, the resulting polynomial bound
has a broader scope than the integrable setting: the braid relation is needed
only for the classical support, not for the full phase-dressed quantum gate.
When the support is involutive and the full dressed gate does satisfy the
braid relation, the centralizer and guitar-map arguments combine to give the
sharper bound of Eq.~\eqref{eq:yb-phased-involutive-bound}, without requiring
the full gate to be involutive.

The motivation to consider these bounds is both theoretical and practical.  On the
theoretical side, they help clarify the complexity of time evolution in
quantum many-body systems and determine how strongly integrability constrains
this complexity.  On the practical side, a
polynomial bound on the exact OSR implies that an exact MPO with polynomial
bond dimension exists, while a sufficiently compressible operator-Schmidt
spectrum can allow efficient approximate MPO simulation.  This raises the
question whether the braid relation alone is sufficient to guarantee an
efficient MPO description of the Heisenberg evolution.

It is important to distinguish efficient simulability from an explicit
solution.  Knowing that an MPO with manageable bond dimension exists does not
mean that we have explicit formulas for its local tensors.  A separate open
question is therefore when exact time-dependent MPOs can be constructed
explicitly.  For the slowly growing families considered here, our aim was to
establish upper bounds on the OSR and operator entanglement rather than to
derive exact MPO formulas.

At the level of operator entanglement, the most important open question is
whether the braid relation alone is enough to guarantee sublinear growth of
the von Neumann operator entanglement.  We have shown that the exact operator Schmidt rank
can grow exponentially, but the rank does not determine the
operator entropies.  In particular, their growth for the seven-state gate of
Section~\ref{sec:braid-only} remains open.  More generally, it is not known
whether sublinear von Neumann operator entanglement follows from the braid
relation or whether a Yang--Baxter counterexample exists.
Concrete integrable models displaying logarithmic growth or saturation were
analyzed in Refs.~\cite{ProsenPizorn2007,PizornProsen2009,KlobasMedenjakProsenVanicat2019,AlbaDubailMedenjak2019,Medenjak2022,Dubail2017,MuthUnanyanFleischhauer2011,MurcianoDubailCalabrese2024};
the broader expectation that the operator entanglement of local observables
grows at most logarithmically under integrable dynamics was stated, explicitly
or in closely related simulability terms, in
Refs.~\cite{AlbaDubailMedenjak2019,BertiniKosProsenI,Medenjak2022,Dubail2017,ProsenZnidaric2007,Alba2021,CarignanoMarimonTagliacozzo2024,Alba2025,DowlingModiWhite2025,CerezoRoquebrunEtAl2025,JacobyGopalakrishnan2026,Dowling2026}.
Proving such a logarithmic bound from the braid relation would be stronger
than establishing sublinear growth alone.

Finally, an interesting open problem is whether the class of involutive
dual-unitary Yang--Baxter gates is contained, up to a homogeneous onsite
change of basis, in the monomial class.  It would also be worthwhile to
investigate the twisted group-algebra-tower constructions of Galindo and
Rowell \cite{GalindoRowell2026}, including whether they can yield nonmonomial
involutive dual-unitary gates, and to determine the operator-entanglement
growth in the corresponding brickwork circuits.

\bigskip

{\bf Note added:} During the final stages of the preparation of this manuscript we discovered a counterexample to the
expectation that in integrable circuits the entanglement growth of local operators is at most logarithmic in time. Using
ChatGPT 5.6 Sol we 
found a Yang-Baxter gate which is involutive but not dual-unitary, such that the corresponding brickwork circuit
produces a square root 
law for the growth of the von Neumann entanglement entropy. This growth law is proven analytically.
The result does not contradict any statement in
this manuscript: the gate lies outside 
the classes of Yang-Baxter gates studied here.

The counterexample is published in the separate manuscript \cite{sajat-sqrt}.

\section*{Acknowledgments}

The author was supported by the Hungarian National Research, Development
and Innovation Office, NKFIH Grant No.~K-145904.

\appendix

\section{Gliders for general Yang--Baxter gates}
\label{app:general-gliders}

In this appendix we verify the glider relation
\eqref{eq:setup-general-glider-translation} directly.  Write
$R_j=R_{j,j+1}$, so that
\begin{equation}
 R_jR_{j+1}R_j
 =R_{j+1}R_jR_{j+1},
 \qquad
 R_jR_l=R_lR_j
 \quad (|j-l|>1),
 \label{eq:app-glider-braid-relations}
\end{equation}
and $R_j^{-1}=R_j^\dagger$.  For even $i$, the first operator in
\eqref{eq:setup-general-gliders} is the braid-group band generator
\begin{equation}
 b_i=R_i^{-1}R_{i+1}R_i.
 \label{eq:app-glider-band-even}
\end{equation}
Only five gates in the two Floquet layers are relevant when $b_i$ is moved
through the circuit.  Denote their product by
\begin{equation}
 F_i=(R_{i-1}R_{i+1}R_{i+3})
     (R_iR_{i+2}).
 \label{eq:app-glider-local-floquet}
\end{equation}
The gates outside this causal window cancel against their inverses.  Repeatedly
using \eqref{eq:app-glider-braid-relations}, both sides of the desired
intertwining relation reduce to the same word:
\begin{align}
 F_i b_i
 &=R_{i-1}R_{i+3}
   R_{i+2}R_{i+1}R_{i+2}R_i
 \nonumber\\
 &=(R_{i+2}^{-1}R_{i+3}R_{i+2})F_i.
 \label{eq:app-glider-even-computation}
\end{align}
For example, the second equality uses
$R_{i+2}^{-1}R_{i+3}R_{i+2}
=R_{i+3}R_{i+2}R_{i+3}^{-1}$, which is the braid relation
with inverses.  Restoring the commuting gates outside this window gives
\begin{equation}
 U_F b_iU_F^{-1}=b_{i+2},
 \qquad i\ \text{even}.
 \label{eq:app-glider-even-translation}
\end{equation}

On the other sublattice the corresponding band generator has the opposite
orientation,
\begin{equation}
 \widetilde b_i=R_iR_{i+1}R_i^{-1},
 \qquad i\ \text{odd}.
 \label{eq:app-glider-band-odd}
\end{equation}
The same local calculation, with the two layers interchanged, gives
\begin{equation}
 U_F\widetilde b_iU_F^{-1}=\widetilde b_{i-2},
 \qquad i\ \text{odd}.
 \label{eq:app-glider-odd-translation}
\end{equation}
Equations~\eqref{eq:app-glider-even-translation} and
\eqref{eq:app-glider-odd-translation} are precisely
\eqref{eq:setup-general-glider-translation}.  Their adjoints are gliders as
well, and any finite product on one sublattice translates rigidly because
conjugation preserves products.  No involutivity assumption is used.

\section{Integrability of the circuits with involutive Yang--Baxter gates}

\label{app:tms}

In this appendix we establish integrability in the traditional transfer-matrix
sense for circuits generated by involutive Yang--Baxter gates.  We first recall
the staggered transfer-matrix construction of integrable Trotterization from
Ref.~\cite{VanicatZadnikProsen2018} and then specialize it to the Baxterized
$R$-matrix associated with our gate.  This specialization gives the commuting
family $T(u;\theta)$ defined in Eq.~\eqref{eq:int-staggered-transfer}.  We show
that the ratio of its values at the two regular points is the finite-parameter
brickwork circuit and that the braid limit yields commuting transfer matrices
for the circuit $U_F$ studied in the main text.

Consider a periodic chain of even length $L$, choose a real parameter $\theta$,
and introduce the alternating inhomogeneities
\begin{equation}
 \xi_j=(-1)^j\theta,
 \qquad j=1,\ldots,L.
 \label{eq:int-inhomogeneities}
\end{equation}
We use the Baxterized matrix $\check{\mathcal R}(u)$ introduced in
Eq.~\eqref{eq:int-unitary-baxterization} and its vertex form
$\mathcal R(u)=\Swap\check{\mathcal R}(u)$ from Eq.~\eqref{eq:int-vertex-R}.
The corresponding row-to-row transfer matrix is
\begin{equation}
 T(u;\theta)
 =\Tr_a\!
 \left[
 \mathcal R_{aL}(u-\xi_L)
 \mathcal R_{a,L-1}(u-\xi_{L-1})
 \cdots
 \mathcal R_{a1}(u-\xi_1)
 \right].
 \label{eq:int-staggered-transfer}
\end{equation}
The ordinary Yang--Baxter equation gives
\begin{equation}
 [T(u;\theta),T(v;\theta)]=0
 \qquad
 \text{for all }u,v.
 \label{eq:int-transfer-commutativity}
\end{equation}
All commutativity statements in this paragraph and below are at fixed
$\theta$.  Different values of the Trotter parameter define different
transfer matrices and, in general, different conserved charges.

The physical two-site gate at finite $\theta$ is
\begin{equation}
 G(\theta)=\check{\mathcal R}(2\theta).
 \label{eq:int-finite-gate}
\end{equation}
For every finite value of the inhomogeneity $\theta$, we use this gate to
build a brickwork time evolution.  We order the two layers with the odd layer
acting after the even layer and set
\begin{align}
 U_{\rm odd}(\theta)
 &=\prod_{j=1}^{L/2}G_{2j-1,2j}(\theta),
 &
 U_{\rm even}(\theta)
 &=\prod_{j=1}^{L/2}G_{2j,2j+1}(\theta),
 \label{eq:int-finite-layers}\\
 U_\theta&=U_{\rm odd}(\theta)U_{\rm even}(\theta),
 \qquad L+1\equiv1.
 \label{eq:int-finite-floquet}
\end{align}
The evolutions $U_\theta$ corresponding to different values of $\theta$ do
not commute in general.  The factor $2\theta$ is the difference between the
two alternating inhomogeneities.  Equivalently, one could choose
inhomogeneities $\pm\theta/2$ and write the local gate as
$\check{\mathcal R}(\theta)$.

Specializing the staggered transfer matrix to the two regular points
$u=\pm\theta$, and using $G(-\theta)=G(\theta)^{-1}$ from
Eq.~\eqref{eq:int-baxterization-properties}, gives the exact identity
\begin{equation}
 T(-\theta;\theta)^{-1}T(\theta;\theta)
 =U_{\rm odd}(\theta)U_{\rm even}(\theta)
 =U_\theta.
 \label{eq:int-exact-trotter-identity}
\end{equation}
No projective scalar is present in this normalization.  In particular,
$U_\theta$ is a unitary brickwork circuit for every real $\theta$.

Combining Eqs.~\eqref{eq:int-transfer-commutativity} and
\eqref{eq:int-exact-trotter-identity}, we obtain
\begin{equation}
 [U_\theta,T(u;\theta)]=0
 \qquad
 \text{for all }u.
 \label{eq:int-finite-U-commutes}
\end{equation}
The two regular points generate two staggered families of local charges.  A
branch-independent definition is obtained by expanding the transfer matrix
relative to its value at the regular point:
\begin{equation}
 Q_n^{(\sigma)}(\theta)
 =
 -i\left.
 \frac{d^n}{du^n}
 \log\!\left[
 T(u;\theta)T(\sigma\theta;\theta)^{-1}
 \right]
 \right|_{u=\sigma\theta},
 \qquad
 \sigma\in\{+,-\},\quad n\geq1.
 \label{eq:int-finite-charges}
\end{equation}
Here the logarithm is understood through its Taylor series around the
identity.  Standard transfer-matrix arguments show that these charges are
sums of finite-range local terms.  Combining their definition in
Eq.~\eqref{eq:int-finite-charges} with the commutativity relation
\eqref{eq:int-finite-U-commutes}, we obtain
\begin{equation}
 [Q_n^{(\sigma)}(\theta),U_\theta]=0.
 \label{eq:int-finite-charge-conservation}
\end{equation}
With the normalization in Eq.~\eqref{eq:int-finite-charges}, these charges are
Hermitian.  This is most easily seen for the first braid-limit charges in
Eq.~\eqref{eq:int-endpoint-first-charges}.  Up to additive identity terms,
they are sums of the elementary gliders, which for an involutive gate take the
concrete form
\begin{align}
 \mathcal G^{\rm e}_{2k}
 &=R_{2k}R_{2k+1}R_{2k},&
 \mathcal G^{\rm o}_{2k-1}
 &=R_{2k-1}R_{2k}R_{2k-1}.
 \label{eq:int-involutive-glider-densities}
\end{align}
Since $R_j^\dagger=R_j$, both operators are manifestly Hermitian.  The
Hermiticity of the higher charges can also be proved, but we do not discuss it
here.

For small values of the inhomogeneity, these integrable circuits approximate
continuous-time evolution generated by a local Hamiltonian.  To identify this
Hamiltonian, we expand the local gate:
\begin{equation}
 G(\theta)
 =\id+2i\theta(R-\id)+O(\theta^2),
 \label{eq:int-small-theta-local}
\end{equation}
and therefore
\begin{equation}
 U_\theta
 =\id+2i\theta
 \sum_{j=1}^{L}(R_{j,j+1}-\id)
 +O(\theta^2).
 \label{eq:int-small-theta-global}
\end{equation}
Thus, up to the convention for the sign and scale of real time and up to an
additive multiple of the identity, the continuous-time Hamiltonian has local
density $R$,
\begin{equation}
 H_{\rm cont}\propto\sum_{j=1}^{L}R_{j,j+1}.
 \label{eq:int-continuous-Hamiltonian}
\end{equation}

We now turn to the circuit studied in the main text, whose two-site gate $R$
is the infinite-rapidity limit of the finite-parameter gate $G(\theta)$.
Indeed, Eq.~\eqref{eq:int-baxterization-properties} gives
\begin{equation}
 \lim_{\theta\to\infty}G(\theta)=R.
 \label{eq:int-local-braid-limit}
\end{equation}
Consequently,
\begin{equation}
 \lim_{\theta\to\infty}U_\theta
 =
 \left(\prod_{j=1}^{L/2}R_{2j-1,2j}\right)
 \left(\prod_{j=1}^{L/2}R_{2j,2j+1}\right)
 =U_F,
 \label{eq:int-pure-braid-limit}
\end{equation}
where the last equality uses the Floquet ordering of
Eq.~\eqref{eq:int-finite-floquet}.

To retain a nontrivial spectral parameter at this endpoint, we follow the two
moving regular points.  Keeping $u$ fixed while sending
$\theta\to\infty$ would send every local argument to $\pm\infty$ and would
remove the $u$ dependence.  We therefore define the recentered limits
\begin{equation}
 \tau_+(v)
 =\lim_{\theta\to\infty}T(\theta+v;\theta),
 \qquad
 \tau_-(v)
 =\lim_{\theta\to\infty}T(-\theta+v;\theta).
 \label{eq:int-recentered-transfer-limits}
\end{equation}
These limits exist at every finite $L$.  Indeed, let
\begin{equation}
 \mathcal R_\infty
 :=\lim_{u\to\pm\infty}\mathcal R(u)
 =\Swap R.
 \label{eq:int-vertex-infinite-limit2}
\end{equation}
Then
\begin{equation}
 \tau_\sigma(v)
 =\Tr_a\!
 \left[
 \mathcal L_{aL}^{(\sigma)}(v)
 \mathcal L_{a,L-1}^{(\sigma)}(v)
 \cdots
 \mathcal L_{a1}^{(\sigma)}(v)
 \right],
 \label{eq:int-endpoint-transfer-explicit2}
\end{equation}
where
\begin{align}
 \mathcal L_{aj}^{(+)}(v)
 &=
 \begin{cases}
  \mathcal R_{aj}(v),&j\ \text{even},\\
  (\mathcal R_\infty)_{aj},&j\ \text{odd},
 \end{cases}
 &
 \mathcal L_{aj}^{(-)}(v)
 &=
 \begin{cases}
  (\mathcal R_\infty)_{aj},&j\ \text{even},\\
  \mathcal R_{aj}(v),&j\ \text{odd}.
 \end{cases}
 \label{eq:int-endpoint-Lax-operators2}
\end{align}
The finite-$\theta$ transfer matrices commute for arbitrary spectral
parameters.  Taking the finite-dimensional limits in
Eq.~\eqref{eq:int-transfer-commutativity} gives the full endpoint
commutativity relations
\begin{equation}
 [\tau_\sigma(v),\tau_{\sigma'}(w)]=0,
 \qquad
 \sigma,\sigma'\in\{+,-\}.
 \label{eq:int-endpoint-transfer-commutativity2}
\end{equation}
At $v=0$, the two endpoint transfer matrices are the limits of the two
special transfer matrices.  Equation~\eqref{eq:int-exact-trotter-identity}
therefore gives
\begin{equation}
 U_F=\tau_-(0)^{-1}\tau_+(0).
 \label{eq:int-endpoint-Floquet-ratio2}
\end{equation}
Combining this result with
Eq.~\eqref{eq:int-endpoint-transfer-commutativity}, we obtain
\begin{equation}
 [U_F,\tau_\pm(v)]=0
 \qquad
 \text{for all }v.
 \label{eq:int-endpoint-U-commutes2}
\end{equation}
This establishes the Yang--Baxter integrability of the pure braid circuit
directly at the endpoint.

\section{Conjugation property for controlled-swap gates}
\label{app:controlled-swap-conjugation}

In this appendix we prove the conjugation statement used in
Lemma~\ref{lem:controlled-swap-finite-conjugation-memory}.  We first convert
the matrix-element constraint in
Eq.~\eqref{eq:controlled-swap-conjugation-rule} into an action on coordinate
subspaces graded by the unitaries in \(\mathcal S\).  We then show that every
generator, and hence every element of the generated group, permutes this
finite set.  This is the direct form, in our notation and braid convention,
of the group-type braided-vector-space argument in Proposition~4.2 and
Subsection~4.2 of Ref.~\cite{GalindoRowell2014}.

Let \(\mathcal S\) be the set in
Eq.~\eqref{eq:controlled-swap-support-set}.  For any \(v\in U(V)\), define
the coordinate subspace
\begin{equation*}
 V_v=\operatorname{span}\{\ket b:u_b=v\}.
\end{equation*}
These subspaces are mutually orthogonal,
\(\bigoplus_{v\in\mathcal S}V_v=V\), and \(V_v\ne\{0\}\) precisely when
\(v\in\mathcal S\).

We now derive the action on these spaces directly from
Eq.~\eqref{eq:controlled-swap-conjugation-rule}.  Take \(v\in\mathcal S\)
and a basis vector \(\ket b\in V_v\), so that \(u_b=v\), and expand
\[
 u_a\ket b=\sum_x(u_a)_{xb}\ket x.
\]
For every nonzero coefficient, Eq.~\eqref{eq:controlled-swap-conjugation-rule}
gives \(u_x=u_avu_a^{-1}\).  Hence every basis vector appearing in the
expansion belongs to \(V_{u_avu_a^{-1}}\).  Taking the span over all
\(\ket b\in V_v\) gives
\begin{equation}
 u_aV_v\subseteq V_{u_av u_a^{-1}}.
 \label{eq:app-controlled-swap-homogeneous-spaces}
\end{equation}
Conversely, suppose that
Eq.~\eqref{eq:app-controlled-swap-homogeneous-spaces} holds.  If
\((u_a)_{xb}\ne0\) and \(v=u_b\), then the basis vector \(\ket x\) occurs
in \(u_a\ket b\), which belongs to \(V_{u_avu_a^{-1}}\).  The definition of
the coordinate subspaces therefore gives
\(u_x=u_avu_a^{-1}=u_au_bu_a^{-1}\), recovering
Eq.~\eqref{eq:controlled-swap-conjugation-rule}.  The two formulations are
thus equivalent, as stated in Proposition~4.2 of
Ref.~\cite{GalindoRowell2014}.

It remains to verify that the target label belongs to \(\mathcal S\).  Fix
\(v\in\mathcal S\) and choose \(b\) such that \(u_b=v\).  Since \(u_a\) is
unitary,
\[
 \sum_x\lvert(u_a)_{xb}\rvert^2=1,
\]
so there is at least one \(x\) for which \((u_a)_{xb}\ne0\).
Equation~\eqref{eq:controlled-swap-conjugation-rule} then gives
\(u_x=u_au_bu_a^{-1}=u_avu_a^{-1}\).  Since \(u_x\in\mathcal S\) by
definition, this proves \(u_avu_a^{-1}\in\mathcal S\).  Thus conjugation by
\(u_a\) defines a self-map
\[
 C_a:\mathcal S\longrightarrow\mathcal S,
 \qquad C_a(v)=u_avu_a^{-1}.
\]
This map is injective: if \(C_a(v)=C_a(w)\), multiplication by \(u_a^{-1}\)
from the left and by \(u_a\) from the right gives \(v=w\).  Since
\(\mathcal S\) is finite, \(C_a\) is also surjective and therefore a
permutation.  Its inverse is conjugation by \(u_a^{-1}\), which consequently
also permutes \(\mathcal S\).

Finally, let \(G=\langle u_a:a\in X\rangle\), as in
Lemma~\ref{lem:controlled-swap-finite-conjugation-memory}, and write an
arbitrary element as
\[
 g=u_{a_k}^{\epsilon_k}\cdots u_{a_1}^{\epsilon_1},
 \qquad \epsilon_j\in\{1,-1\}.
\]
Conjugation by \(g\) is the composition
\(C_{a_k}^{\epsilon_k}\circ\cdots\circ C_{a_1}^{\epsilon_1}\).  Every
factor is a permutation of \(\mathcal S\), so their composition is also a
permutation.  Therefore
\[
 g\mathcal Sg^{-1}=\mathcal S,
 \qquad g\in G.
\]
This proves Eq.~\eqref{eq:controlled-swap-conjugation-memory} and completes
the proof of Lemma~\ref{lem:controlled-swap-finite-conjugation-memory}.

\section{Proof of the active-block Schmidt decomposition}
\label{app:normal-form-label-grouping}

In this appendix we prove
Theorem~\ref{thm:normal-form-label-grouping}.  We evolve the
computational-basis terms of a single matrix unit, separate the spectator
blocks from the blocks containing its distinguished entries, and collect the
data that still correlate the two outputs into labels.  We then show that
fixing these labels makes the remaining output choices independent and count
the possible labels.

Let us recall the notation needed for the proof.  The signed-block normal
form has a decomposition
\begin{equation*}
 V=\bigoplus_\gamma V_\gamma,
 \qquad
 d_\gamma=\dim V_\gamma,
 \qquad
 \varepsilon_\gamma\in\{+1,-1\}.
\end{equation*}
For each block, $X_\gamma$ is a set of basis labels with
$|X_\gamma|=d_\gamma$, and the sets $X_\gamma$ are mutually disjoint.  We
write $X=\bigsqcup_\gamma X_\gamma$.  For any word $Z$ over $X$, let
$Z_\gamma$ be its ordered subword with letters in $X_\gamma$, and let
$n_\gamma(Z)=|Z_\gamma|$.

We use the rectangular map $\Phi_t$, whose output factors are denoted by
$B_{\rm out}$ and $A_{\rm out}$.  The single-file rule of
Lemma~\ref{lem:normal-form-single-file} states that an input $A\mid B$ gives
output words $L\mid R$ satisfying
\begin{equation*}
 L_\gamma R_\gamma=A_\gamma B_\gamma,
 \qquad
 |L_\gamma|=n_\gamma(B),
 \qquad
 |R_\gamma|=n_\gamma(A)
\end{equation*}
in every block.  The corresponding sign is
$\prod_\gamma\varepsilon_\gamma^{n_\gamma(A)n_\gamma(B)}$.

\begin{proof}[Proof of Theorem~\ref{thm:normal-form-label-grouping}]
Fix $x\in X_\alpha$ and $y\in X_\beta$; when $\alpha=\beta$, the labels
$x$ and $y$ need not be distinct.  Writing $E_{xy}=\ket{x}\bra{y}$, the
source expansion consists of the input matrix units
\begin{equation}
 \ket{A,xC}\bra{A,yC},
 \qquad
 A\in X^t,
 \quad C\in X^{t-1}.
 \label{eq:app-normal-form-source-terms}
\end{equation}
Denote the outputs of the ket word $A\mid xC$ by $L_x\mid R_x$ and those of
the bra word $A\mid yC$ by $L_y\mid R_y$.  Applying the single-file rule to
the two layers of each term in
\eqref{eq:app-normal-form-source-terms} gives the output product
\begin{equation}
 \sigma_x\sigma_y
 \bigl(\ket{L_x}\bra{L_y}\bigr)_{B_{\rm out}}
 \otimes
 \bigl(\ket{R_x}\bra{R_y}\bigr)_{A_{\rm out}},
 \qquad \sigma_x,\sigma_y\in\{\pm1\}.
 \label{eq:normal-form-output-product-unit}
\end{equation}
For every block $\gamma$, write
\begin{equation*}
 a_\gamma=n_\gamma(A)=|A_\gamma|,
 \qquad
 k_\gamma=n_\gamma(C)=|C_\gamma|.
\end{equation*}

We first identify the label data and show that, once these data are fixed,
the left and right output choices vary independently.  Consider a spectator
block $\delta\notin\{\alpha,\beta\}$.  Its ket and bra inputs coincide, and
the single-file rule gives the bijection
\begin{equation}
 (A_\delta,C_\delta)
 \longleftrightarrow(L_\delta,R_\delta),
 \qquad
 L_\delta R_\delta=A_\delta C_\delta,
 \quad |L_\delta|=k_\delta,
 \quad |R_\delta|=a_\delta.
 \label{eq:normal-form-spectator-bijection}
\end{equation}
Hence its internal colours can be summed independently in the two output
factors and require no Schmidt label.

We now specify the data retained by an active block.  Suppose first that
$\alpha=\beta$ and $d_\alpha>1$.  The ket and bra
$\alpha$-strings have the same split.  When $a_\alpha\leq k_\alpha$, the
distinguished entry lies in $B_{\rm out}$ and its position is fixed by
$a_\alpha$.  When $a_\alpha>k_\alpha$, it lies in $A_{\rm out}$ and the
correlation between the two outputs is fixed by $k_\alpha$.  We therefore
use the label
\begin{equation*}
 \ell_{\alpha\alpha}(A,C)=
 \begin{cases}
  (\mathsf B,a_\alpha),&a_\alpha\leq k_\alpha,\\
  (\mathsf A,k_\alpha),&a_\alpha>k_\alpha.
 \end{cases}
\end{equation*}
If $d_\alpha=1$, the two matrix-unit labels coincide and we set
$\ell_{\alpha\alpha}(A,C)=*$.

Now suppose that $\alpha\ne\beta$.  For each
$\gamma\in\{\alpha,\beta\}$ with $d_\gamma>1$, define
\begin{equation}
 \ell_\gamma(A,C)=
 \begin{cases}
  (\mathsf{local},a_\gamma),&a_\gamma\leq k_\gamma,\\
  (\mathsf{transfer},k_\gamma,u_\gamma),&a_\gamma>k_\gamma,
 \end{cases}
 \qquad
 u_\gamma=(A_\gamma)_{k_\gamma+1}\in X_\gamma.
 \label{eq:normal-form-active-label}
\end{equation}
In the first branch the ket--bra difference remains in one output factor, and
fixing $a_\gamma$ fixes its position.  In the second branch, write
$A_\gamma=A_\gamma^{\rm L}u_\gamma A_\gamma^{\rm R}$ with
$|A_\gamma^{\rm L}|=k_\gamma$.  The colour $u_\gamma$ lies at the split: it
belongs to different output factors on the two layers.  Fixing
$(k_\gamma,u_\gamma)$ removes this correlation.  If $d_\gamma=1$, the only
possible transferred colour is fixed, and the ket--bra matrix unit is already
determined by the block-label words on the two sides.  We then set
$\ell_\gamma(A,C)=*$.

Let $\Lambda_{\alpha\alpha}(t)$ be the image of
$\ell_{\alpha\alpha}$ in the same-block case.  In the different-block case,
let $\Lambda_{\alpha\beta}(t)$ be the image of the joint label
$(\ell_\alpha,\ell_\beta)$.

This also accounts for arbitrary interleavings of different blocks.  The
block-label words on $B_{\rm out}$ are those of $xC$ and $yC$, whereas the
block-label word on $A_{\rm out}$ is that of $A$.  They are therefore
side-local data.  Conversely, fix a label $\lambda$, and let
$\mathcal F_\lambda$ and $\mathcal G_\lambda$ be the corresponding sets of
left and right output matrix units.  The single-file reconstruction in each
block combines every $F\in\mathcal F_\lambda$ with every
$G\in\mathcal G_\lambda$ and reconstructs a unique pair $(A,C)$.  Thus the
two sets of choices are independent.

The signs do not obstruct this factorization.  They cancel when
$\alpha=\beta$.  For $\alpha\ne\beta$, their ket--bra ratio is
$\varepsilon_\alpha^{a_\alpha}\varepsilon_\beta^{a_\beta}$, which depends
only on the right output block word and can be absorbed into the right factor.
Consequently, with a suitable sign $\eta_\lambda(G)$,
\begin{equation*}
 L_\lambda=\sum_{F\in\mathcal F_\lambda}F,
 \qquad
 R_\lambda=\sum_{G\in\mathcal G_\lambda}
 \eta_\lambda(G)G
\end{equation*}
gives \eqref{eq:normal-form-grouped-product}.

It remains to count the labels.  In the same-block case there are at most
$t$ values in each of the $\mathsf B$ and $\mathsf A$ branches, while a
singleton has the single label $*$.  For $\alpha\ne\beta$, a non-singleton
block contributes at most $t$ labels of the form
$(\mathsf{local},a_\gamma)$ and at most $td_\gamma$ labels of the form
$(\mathsf{transfer},k_\gamma,u_\gamma)$.  A singleton again contributes one
label.  The two active-block labels form a joint tuple, which proves
\eqref{eq:normal-form-same-block-bound}--\eqref{eq:normal-form-active-label-count}.
\end{proof}

\section*{Conflict of Interest}

The author declares that there is no conflict of interest.

\section*{Data Availability}

No datasets were generated or analysed during this study.


\providecommand{\href}[2]{#2}\begingroup\raggedright\endgroup

\clearpage

\begin{figure}[p]
\centering
\begin{tikzpicture}[
  line cap=round,
  line join=round,
  section/.style={
    ellipse,
    draw=black!65,
    fill=black!3,
    line width=0.85pt,
    align=center,
    font=\footnotesize,
    text width=33mm,
    inner xsep=2pt,
    inner ysep=3pt,
    minimum height=13mm,
    execute at begin node={
      \hyphenpenalty=10000
      \exhyphenpenalty=10000
      \relax
    }
  },
  input section/.style={
    section,
    draw=opentorange,
    fill=opentlightorange
  },
  result section/.style={
    section,
    draw=black!65,
    fill=black!3
  },
  dependency/.style={
    -{Latex[length=2.2mm,width=1.55mm]},
    draw=black!68,
    line width=0.75pt
  },
  merge/.style={
    diamond,
    aspect=1.45,
    draw=black!55,
    fill=white,
    line width=0.7pt,
    inner sep=1.4pt,
    font=\scriptsize\bfseries,
    align=center
  }
]


\node[section] (s2) at (0,0)
  {\textbf{Sec.~\ref{sec:problem-results}}\\
   The setting and main results};


\node[section] (s3) at (-4.10,-2.55)
  {\textbf{Sec.~\ref{sec:rectangular-reduction}}\\
   Rearranging the circuit};

\node[input section] (s7) at (4.10,-2.55)
  {\textbf{Sec.~\ref{sec:involutive-unitary}}\\
   Involutive Yang--Baxter gates};


\node[input section] (s10) at (0,-4.05)
  {\textbf{Sec.~\ref{sec:yb-maps}}\\
   Non-degenerate Yang--Baxter maps\\
   and phase dressings};


\node[section] (s5) at (-4.35,-6.50)
  {\textbf{Sec.~\ref{sec:elementary-bounded-rank}}\\
   Elementary bounded-rank\\
   constructions};

\node[section] (s4) at (1.25,-6.50)
  {\textbf{Sec.~\ref{sec:centralizer-output-space}}\\
   Braid invariance of\\
   the output space};

\node[merge] (both) at (4.85,-6.50) {both};


\node[input section] (s6) at (-4.35,-9.60)
  {\textbf{Sec.~\ref{sec:qubit-classification}}\\
   Yang--Baxter gates on qubits};

\node[result section] (s8) at (1.25,-9.60)
  {\textbf{Sec.~\ref{sec:inv-polynomial-proof}}\\
   Polynomial operator Schmidt rank\\
   for involutive gates with\\
   dual unitarity};

\node[result section] (s9) at (4.85,-12.70)
  {\textbf{Sec.~\ref{sec:normal-form-dynamics}}\\
   Operator entanglement of\\
   LPW normal-form circuits};


\node[result section] (s11) at (-4.35,-12.70)
  {\textbf{Sec.~\ref{sec:braid-only}}\\
   Exponential growth\\
   of the OSR from a\\
   Yang--Baxter gate without\\
   dual unitarity};


\draw[dependency]
  (s2.south west) to[bend right=8] (s3.north);

\draw[dependency]
  (s2.south east) to[bend left=8] (s7.north);

\draw[dependency]
  (s3.south) -- (s5.north);

\draw[dependency]
  (s3.south east) -- (s10.north west);

\draw[dependency]
  (s3.south)
  .. controls ++(0,-1.55) and ++(-1.35,0.75) ..
  (s4.north west);

\draw[dependency,rounded corners=8pt]
  (s3.west) -- ++(-0.75,0) -- ++(0,-10.15) -- (s11.west);

\draw[dependency]
  (s5.south) -- (s6.north);

\draw[dependency]
  (s4.east) -- (both.west);

\draw[dependency]
  (s7.south) to[out=-90,in=65] (both.north east);

\draw[dependency]
  (both.south west) to[out=-115,in=55] (s8.45);

\draw[dependency]
  (both.south) -- (s9.north);

\end{tikzpicture}

\caption{Logical reading paths through the paper. A solid arrow indicates
that the source section supplies definitions, results, or useful context for
the target section. The yellowish nodes indicate sections that import crucial
external input into the paper: the LPW classification in
Section~\ref{sec:involutive-unitary}, the guitar-map technology in
Section~\ref{sec:yb-maps}, and the classification of qubit Yang--Baxter gates
in Section~\ref{sec:qubit-classification}.
}
\label{fig:section-dependencies}
\end{figure}

\end{document}